\documentclass[11pt]{article}

\usepackage[colorlinks,citecolor=blue,bookmarks=true]{hyperref}

\usepackage{amsmath} 
\usepackage{amsthm} 
\usepackage{amssymb}	
\usepackage{graphicx} 
\usepackage{multirow}
\usepackage[dvipsnames]{xcolor}
\usepackage[dvips,letterpaper,margin=1in]{geometry}
\usepackage[capitalize,noabbrev]{cleveref}
\usepackage{subcaption}

\usepackage[utf8]{inputenc}
\usepackage[english]{babel}

\usepackage{algorithm}
\usepackage{algpseudocode}
\makeatletter
\renewcommand{\ALG@name}{Algorithm}
\makeatother

\usepackage{bm}
\usepackage{bbm}
\usepackage{diagbox}
\usepackage{mathtools}
\usepackage{ytableau}

\usepackage{tabularx, booktabs}

\newtheorem{theorem}{Theorem}[section]

\newtheorem{lemma}[theorem]{Lemma}
\newtheorem{corollary}[theorem]{Corollary}
\newtheorem{proposition}[theorem]{Proposition}
\newtheorem{fact}[theorem]{Fact}

\newtheorem{definition}[theorem]{Definition}

\newtheorem{remark}[theorem]{Remark}
\newtheorem{problem}[theorem]{Problem}
\newtheorem{question}[theorem]{Question}
\crefname{question}{Question}{Questions}

\newtheorem{notation}[theorem]{Notation}

\newcommand{\braket}[2]{\left< #1 \vphantom{#2} \middle| #2 \vphantom{#1} \right>} 

\DeclarePairedDelimiter\abs{\lvert}{\rvert}

\DeclarePairedDelimiter\ceil{\lceil}{\rceil}
\DeclarePairedDelimiter\floor{\lfloor}{\rfloor}
\DeclarePairedDelimiter\ket{\lvert}{\rangle}
\DeclarePairedDelimiter\bra{\langle}{\rvert}

\newcommand{\E}{\mathop{\mathbb{E}\/}}

\newcommand{\tr} {\operatorname{tr}}

\newcommand{\supp} {\operatorname{supp}}
\newcommand{\spanspace} {\operatorname{span}}

\newcommand{\Co}{\mathbb C}

\newcommand{\ketbra}[2]{\ensuremath{\ket{#1}\!\bra{#2}}}
\renewcommand{\braket}[2]{\ensuremath{\langle {#1} \vert {#2} \rangle}}
\newcommand{\kett}[1]{|#1\rangle\!\rangle}
\newcommand{\bbra}[1]{\langle\!\langle#1|}
\newcommand{\kettbbra}[2]{\ensuremath{\kett{#1}\!\bbra{#2}}}
\newcommand{\bbrakett}[2]{\ensuremath{\langle\!\langle{#1}\vert{#2}\rangle\!\rangle}}

\allowdisplaybreaks

\DeclarePairedDelimiter\parens{\lparen}{\rparen}
\DeclarePairedDelimiter\norm{\lVert}{\rVert}
\DeclarePairedDelimiter\braces{\lbrace}{\rbrace}
\DeclarePairedDelimiter\bracks{\lbrack}{\rbrack}

\newcommand{\calA}{\mathcal{A}}

\newcommand{\calE}{\mathcal{E}}
\newcommand{\calF}{\mathcal{F}}

\newcommand{\calN}{\mathcal{N}}

\newcommand{\calV}{\mathcal{V}}

\newcommand{\bb}{\boldsymbol{b}}

\newcommand*{\Tr}{\mathrm{tr}}
\newcommand*{\eps}{\varepsilon}

\newcommand*{\ex}[1]{\mathbb{E}\left[#1 \right]}

\newcommand{\qchannel}{\textbf{\textup{QChan}}}
\newcommand{\isochannel}{\textbf{\textup{ISO}}}
\newcommand{\dilation}{\textbf{\textup{Dilation}}}
\newcommand{\contract}{\textbf{\textup{Contract}}}

\renewcommand{\bb}{\begin{equation}\begin{aligned}\hspace{0pt}}
\newcommand{\ee}{\end{aligned}\end{equation}}

\newcommand{\EE}[1]{\underset{\scaleobj{.8}{#1}}{\mathbb{E}\,}}

\usepackage{dsfont}
\usepackage{scalerel}

\renewcommand{\Pr}{\mathbb{P}}

\newcommand{\sr}{\kappa}

\begin{document}

\title{Quantum channel tomography: optimal bounds\\ and a Heisenberg-to-classical phase transition\footnote{This paper subsumes prior papers~\cite{chen2025quantum,oufkir2026improved,chen2026optimal}.}
}
\date{}

\author{
Kean Chen\thanks{University of Pennsylvania, USA. Email: \texttt{keanchen.gan@gmail.com}}\and
Filippo Girardi\thanks{Scuola Normale Superiore, Italy. Email: \texttt{filippo.girardi@sns.it}}\and
Aadil Oufkir\thanks{University Mohammed VI Polytechnic, Morocco. Email: \texttt{aadil.oufkir@gmail.com}}\and
Nengkun Yu\thanks{Stony Brook University, USA. Email: \texttt{nengkun.yu@stonybrook.edu}}\and
Zhicheng Zhang\thanks{Griffith University, Australia. Email: \texttt{iszczhang@gmail.com}}
}

\maketitle


\begin{abstract}  
How many black-box queries to a quantum channel are needed to learn its full classical description? This question lies at the heart of quantum channel tomography (also known as quantum process tomography), a fundamental task in the characterization and validation of quantum hardware. Despite extensive prior work, the optimal query complexity for quantum channel tomography is far from fully understood.

In this paper, we study tomography of an unknown quantum channel with input dimension $d_1$, output dimension $d_2$, and Kraus rank at most $r$, to within error $\varepsilon$. We identify the dilation rate $\tau = r d_2 / d_1$ (which always satisfies $\tau\geq 1$ due to the trace preservation of quantum channels) as a key parameter, and establish that the optimal query complexity of channel tomography exhibits distinct scaling laws across three regimes of $\tau$.
\begin{itemize}
    \item 
    In the boundary regime ($\tau = 1$): we show that the query complexity is $\Theta(r d_1 d_2/\varepsilon)$ for Choi trace norm error $\varepsilon$, and is upper bounded by $O(\min\{r d_1^{1.5} d_2/\varepsilon, r d_1 d_2/\varepsilon^2\})$ and lower bounded by $\Omega(r d_1 d_2/\varepsilon)$ for diamond norm error $\varepsilon$.
    \item 
    In the away-from-boundary regime ($\tau \geq 1+\Omega(1)$):
    we show that the query complexity is $\Theta(r d_1 d_2/\varepsilon^2)$ for both Choi trace norm and diamond norm errors $\varepsilon$.
\end{itemize}
Our results uncover a sharp Heisenberg-to-classical phase transition in the query complexity of quantum channel tomography: at $\tau=1$, the optimal query complexity exhibits Heisenberg scaling $1/\varepsilon$, whereas for $\tau\geq 1+\Omega(1)$, it exhibits classical scaling $1/\varepsilon^2$.
In addition, we show that in the near-boundary regime ($1< \tau < 1+o(1)$), the query complexity exhibits a mixture of Heisenberg and classical scaling behaviors.
\end{abstract}

\tableofcontents
\newpage

\section{Introduction}\label{sec-770121}

Quantum channel tomography (also known as quantum process tomography) asks how to reconstruct an unknown quantum process from experimental data.
    Specifically, given query access to a quantum channel $\calE$, the goal is to learn a full classical description of $\calE$ to within some prescribed error, with high probability (say, at least $2/3$). This task is central to the characterization and validation of quantum devices, and it has been studied extensively for nearly three decades~\cite{chuang1997prescription,poyatos1997complete,leung2000towards,d2001quantum,mohseni2008quantum,knee2018quantum,kliesch2019guaranteed,bouchard2019quantum,surawy2022projected,Oufkir_2023,oufkir2023adaptivity,chen2023unitarity,huang2023learning,chen2023learnability,pmlr-v195-fawzi23a,raza2024online,vasconcelos2024learning,fanizza2024learning,caro2024learning,rosenthal2024quantum,zhao2024learning,chen2024tight,wadhwa2025learning,chen2025top,chen2025efficient,pmlr-v291-chen25c,wadhwa2025agnostic,zambrano2025fast,angrisani2025learning,yoshida2025quantum}. 
    
    A key question is the optimal query complexity of quantum channel tomography.
    We consider an unknown channel with input dimension $d_1$, output dimension $d_2$, and Kraus rank at most $r$.
    Several important special cases are by now well understood:
    \begin{itemize}
        \item 
        When the input dimension $d_1=1$, the task reduces to \textit{quantum state tomography}.
        The optimal query complexity for pure-state tomography was established in~\cite{hayashi1998asymptotic,bruss1999optimal,keyl1999optimal}. The mixed-state case was settled later by Haah, Harrow, Ji, Wu, and Yu~\cite{Haah_2017} and by O'Donnell and Wright~\cite{10.1145/2897518.2897544}.
        Further refinements and extensions appear in~\cite{o2017efficient,GKKT20,yuen2023improved,chen2023does,chen2024adaptivity,scharnhorst2025optimal,pelecanos2025debiased,pelecanos2025mixedstatetomographyreduces}.
        \item 
        When the unknown channel is a \textit{unitary channel} (i.e., $d_1=d_2=d$ and $r=1$), Haah, Kothari, O'Donnell, and Tang~\cite{haah2023query} established that the optimal query complexity is $\Theta(d^2/\varepsilon)$, where $\varepsilon$ is the diamond norm error. Notably, in this case, Heisenberg scaling $1/\varepsilon$ is achievable.
        \item 
        When the unknown channel is an \textit{isometry channel},
        Yoshida, Miyazaki, and Murao~\cite{yoshida2025quantum} established a lower bound of $\widetilde{\Omega}((d_2-d_1)d_1 /\varepsilon^{2})$ for Choi trace norm error $\varepsilon$, where $\widetilde{\Omega}$ suppress the logarithmic factors.\footnote{Since we consider the tomography with success probability at least $2/3$, the lower bound in \cite{yoshida2025quantum} applies to our setting, provided that the success probability is amplified to $1-O(\varepsilon^2)$, which incurs an additional logarithmic factor in $\varepsilon$.}
        \item 
        When only \textit{non-adaptive incoherent queries} are allowed, Oufkir~\cite{Oufkir_2023,oufkir2023adaptivity} established near-optimal bounds of $\widetilde{\Theta}(d_1^3 d_2^3/\varepsilon^2)$ for full-Kraus-rank (i.e., $r=d_1d_2$) channel tomography under diamond norm error $\varepsilon$, generalizing the algorithm in~\cite{surawy2022projected}.
        \item 
        A folklore approach, based on optimal mixed-state tomography in trace norm applied to the Choi state of the unknown channel, yields a query upper bound of $O(rd_1^3d_2/\varepsilon^2)$ for quantum channel tomography under diamond norm error.
    \end{itemize}

Despite this progress, the optimal query complexity of \textit{general} quantum channel tomography is still far from fully understood. In particular, it remains unclear
how the dimension parameters $d_1,d_2,r$ interact with the error parameter $\varepsilon$ in the optimal query complexity.
A central question is to understand when this optimal query complexity has Heisenberg scaling $1/\varepsilon$ (compared to classical scaling $1/\varepsilon^2$), while simultaneously maintaining the optimal dependence on the dimension parameters. In other words:
\begin{question}
    \label{ques:main}
    \textit{When and how does the optimal query complexity of quantum channel tomography exhibit Heisenberg scaling?}
\end{question}
\vspace{1mm}

\subsection{Our results}

To answer the above question, we prove new upper and lower bounds on the query complexity of quantum channel tomography.
In particular, we identify the \textit{dilation rate} $\tau = rd_2/d_1$ as a key parameter. 
Note that $\tau\geq 1$ always holds since quantum channels are trace-preserving.
Depending on the value of $\tau$, this leads to three parameter regimes:
boundary regime ($\tau=1$), near-boundary regime ($1< \tau < 1+o(1)$), and away-from-boundary regime ($\tau \geq 1+\Omega(1)$).
We establish multiple optimal bounds 
and identify when the optimal query complexity exhibits Heisenberg scaling: the dependence $1/\varepsilon$ is achievable in the boundary regime,
but is no longer possible once one moves beyond the boundary regime; and in the away-from-boundary regime, only the classical scaling $1/\varepsilon^2$ is possible.
An illustration of such Heisenberg-to-classical phase transition is shown in \Cref{fig:phase-diagram}.

In the following, we characterize in detail how the query complexity behaves across these regimes.
Our results cover both the Choi trace norm\footnote{By the Choi trace norm, we mean the trace norm of normalized Choi states; see \cref{def-3301809}.} and the diamond norm, which are standard metrics in quantum information theory for quantifying average-case and worst-case errors, respectively.

\begin{figure}[t]
  \centering
  \def\svgwidth{\linewidth}
  \resizebox{.87\textwidth}{!}{
  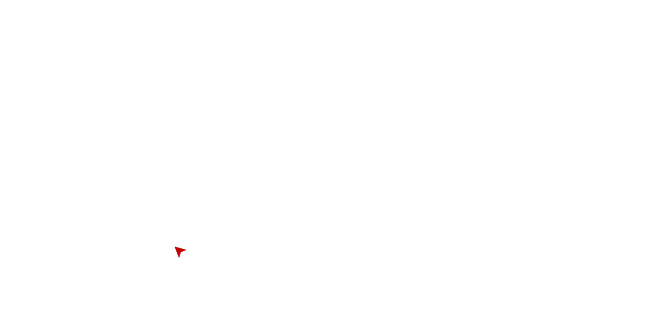}
  \caption{Heisenberg-to-classical phase transition in the query complexity of quantum channel tomography. Here, $c\in (0,1]$ can be an arbitrary constant. For simplicity of presentation, we focus on the scaling of $\varepsilon$, by taking $d_1$ to be a constant and $\tau$ to be an independent parameter. 
  Orange (resp.\ blue) regions indicate where the optimal query complexity of channel tomography should lie, according to our upper and lower bounds, under diamond norm error (resp.\ Choi trace norm error).
  The optimal query complexity exhibits Heisenberg scaling $1/\varepsilon$ in the boundary regime and classical scaling $1/\varepsilon^2$ in the away-from-boundary regime. In the near-boundary regime, it exhibits a mixture of Heisenberg and classical scalings.} 
  \label{fig:phase-diagram}
\end{figure}

\subsubsection{Boundary to away-from-boundary: a phase transition}

For boundary and away-from-boundary regimes, we establish the following query complexity bounds for quantum channel tomography and observe a Heisenberg-to-classical phase transition.
\begin{theorem}[Boundary and away-from-boundary regimes, \Cref{coro-12120135,coro-12120134,thm-390442,thm-390512} restated]\label{thm-2150126}
Consider an unknown quantum channel $\mathcal{E}$ with input dimension $d_1$, output dimension $d_2$, and Kraus rank at most $r$. Let $\tau=rd_2/d_1\geq 1$ be the dilation rate. Then, the query complexity of the tomography of $\calE$ to within error $\varepsilon$ has the following bounds.
\begin{itemize}
    \item In the boundary regime $\tau=1$: we establish matching upper and lower bounds of $\Theta\!\left(\frac{rd_1d_2}{\varepsilon}\right)$ for Choi trace norm error $\varepsilon$; and also establish an upper bound of $O\!\left(\min\!\left\{\frac{rd_1^{1.5}d_2}{\varepsilon},\frac{rd_1d_2}{\varepsilon^2}\right\}\right)$ and a lower bound of $\Omega\!\left(\frac{rd_1d_2}{\varepsilon}\right)$ for diamond norm error $\varepsilon$.
    \item In the away-from-boundary regime $\tau \geq 1+c$, for $c\in (0,1]$ an arbitrary (but fixed) constant: we establish matching upper and lower bounds of $\Theta(\frac{rd_1d_2}{\varepsilon^2})$ for both diamond norm and Choi trace norm errors $\varepsilon$.
\end{itemize}
\end{theorem}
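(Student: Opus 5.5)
This theorem collects four separate results, \Cref{coro-12120135,coro-12120134,thm-390442,thm-390512}, so the plan has four parts: two upper bounds (one per regime) and two lower bounds (one per regime). The common tool is the Stinespring picture. A channel with Kraus rank at most $r$ has an isometric dilation $V:\mathbb{C}^{d_1}\to\mathbb{C}^{d_2}\otimes\mathbb{C}^{r}$, and the normalized Choi state $\Phi_\calE$ is the reduction of a pure state. Its rank is at most $r$ and its input marginal equals $I/d_1$. The dilation rate $\tau=rd_2/d_1$ is exactly the ratio of the dilated output dimension to the input dimension. For this reason $\tau=1$ is the case where $V$ is, up to relabelling, a unitary on $d_1=rd_2$ dimensions.

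\textbf{Upper bound at $\tau=1$.} Here the only freedom in $\calE$ is a unitary $U$ on $\mathbb{C}^{d_1}$, modulo unitaries acting on the environment $\mathbb{C}^r$. I would adapt the Haah--Kothari--O'Donnell--Tang unitary-tomography algorithm to this partially traced unitary. The steps are:
\begin{itemize}
\item Run Heisenberg-limited estimation in stages of doubling precision: apply $\calE$ many times in parallel on halves of maximally entangled states, and use the previous estimate to ``undo'' the known part of the channel between stages.
\item Bound the total cost of all stages by a geometric sum, aiming for $O(rd_1d_2/\varepsilon)=O(d_1^2/\varepsilon)$ in Choi trace norm. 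At $\tau=1$ this target is exactly the parameter count of a unitary.
\end{itemize}
To get diamond-norm guarantees I would use two conversions. First, $\|\cdot\|_\diamond\le d_1\cdot$(Choi trace distance); with a square-root improvement from the Heisenberg structure this should give the $rd_1^{1.5}d_2/\varepsilon$ bound. Second, the $rd_1d_2/\varepsilon^2$ bound comes from the general algorithm below, which also applies at $\tau=1$.

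\textbf{Upper bound for general $\tau$.} Here I would perform state tomography on Choi states, estimating the rank-$r$ Choi operator with the constraint that its input marginal is $I/d_1$. Plain Choi-state tomography costs $rd_1d_2/\varepsilon^2$ in Choi trace norm. For diamond norm, naive conversion loses a factor $d_1^2$. Instead I would estimate the isometry $V$ directly via a projected least-squares or Keyl-type estimator. The key inequality to prove is that the diamond distance between channels is controlled by $\min_W\|V-(I\otimes W)V'\|_{op}$, and that this operator-norm error concentrates at rate $\sqrt{rd_2d_1/N}$. For the operator-norm bound I would use matrix concentration on the empirical estimate of $VV^\dagger$.

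\textbf{Lower bounds.} Both lower bounds go through packing and information arguments.
\begin{itemize}
\item At $\tau=1$, I would embed a packing of unitaries of size $\exp(\Omega(d_1^2))$ with pairwise Choi distance about $\varepsilon$, and apply the adaptive-query lower-bound technique from the unitary case, which gives $\Omega(d_1^2/\varepsilon)=\Omega(rd_1d_2/\varepsilon)$.
\item For $\tau\ge 1+c$, I would build a local packing around a fixed isometry. The perturbations are $V_\theta=\sqrt{1-\varepsilon^2}\,V_0+\varepsilon\,W_\theta$, where $W_\theta$ ranges over a packing of size $\exp(\Omega(rd_1d_2))$ in the complement of the range of $V_0$. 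This complement has dimension $(\tau-1)d_1=\Omega(rd_2)$, which is where $\tau\ge1+c$ is needed. The goal is to show that each query, even adaptively and with ancilla, reveals only $O(\varepsilon^2)$ mutual information about $\theta$, because the perturbation sits in an orthogonal environment leg whose amplitude is $\varepsilon$ rather than a phase.
\end{itemize}

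\textbf{Main obstacle.} I expect the hardest step to be the adaptive lower bound $\Omega(rd_1d_2/\varepsilon^2)$. Bounding the information gain per query by $O(\varepsilon^2)$ against adaptive strategies with arbitrary ancilla needs a careful averaged-channel, Holevo-type argument. I would try first to compare each $\calE_\theta$ to the averaged channel $\E_\theta\calE_\theta$. The twirl over $W_\theta$ should make that average nearly a fixed channel, and the relevant bound is $O(\varepsilon^2)$ on the per-query relative entropy, or chi-squared divergence, between $\calE_\theta$ and this average.
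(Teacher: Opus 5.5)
\textbf{Upper bounds.} Your Choi-state tomography route does give $O(rd_1d_2/\varepsilon^2)$ in Choi trace norm. The other upper bounds, however, have a genuine gap.

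At $\tau=1$ with $r>1$, $\calE$ maps $\mathbb{C}^{d_1}$ to $\mathbb{C}^{d_2}$ with $d_2=d_1/r<d_1$, so HKOT-style bootstrapping cannot be run. Its amplification needs coherent sequential compositions of the unknown unitary. After each use of $\calE$ the environment is traced out, so $\calE$ cannot be composed with itself and nothing can be ``undone''. The phase accumulation behind Heisenberg scaling lives exactly on that lost environment.

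Your diamond-norm route has the same obstruction. $VV^\dagger$ acts on $\mathbb{C}^{d_2}\otimes\mathbb{C}^{r}$, environment included, and you give no way to estimate it (or any dilation in operator norm) from queries to $\calE$. The ``square-root improvement'' is likewise unjustified at the channel level.

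The missing idea is the paper's simulation theorem (\cref{thm-1221159}, \cref{coro-12122150}). Take any parallel tester $\{T_i\}$ on the dilation and twirl it over the environment, $\overline{T}_i=\E_U[U^{\otimes n}T_iU^{\dagger\otimes n}]$. Then use Schur--Weyl duality and $\kett{V}^{\otimes n}=\bigoplus_{\lambda\vdash_s n}\kett{I_{\mathcal{P}_\lambda}}\otimes\ket{V_\lambda}$ to build $\widetilde{T}_i$ on the $\mathrm{A},\mathrm{B}$ registers alone, with $\widetilde{T}_i\star C_\calE^{\otimes n}=\E_{\mathcal{W}\sim\dilation_r(\calE)}[T_i\star C_{\mathcal{W}}^{\otimes n}]$. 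This reduces everything to \emph{parallel} isometry tomography:
\begin{itemize}
\item At $\tau=1$, run Yang--Renner--Chiribella ($O(d_1^2/\varepsilon)$ queries, $\mathrm{F}_{\mathrm{ch}}\ge 1-\varepsilon^2$) on the unitary dilation. The unitary-specific bound $\|\cdot\|_\diamond\le\sqrt{2d_1}\sqrt{1-\mathrm{F}_{\mathrm{ch}}}$ is applied to the dilation and pushed down by contractivity of $\contract_r$; this is where $d_1^{1.5}$ comes from.
\item For the general diamond bound, use an isometry version of HKOT's parallel base algorithm, with $O(d_1D/\varepsilon^2)$ queries for $D=rd_2$.
\end{itemize}

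\textbf{Lower bounds.} Your center-plus-orthogonal-perturbation ansatz for $\tau\ge 1+c$ is the right one, but two ingredients are missing.

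First, a packing of unitaries or isometries is not a packing of channels. $\tr_{\mathrm{anc}}$ identifies dilations that differ by an environment unitary and contracts distances, so you must show the \emph{reduced} channels stay $\Omega(\varepsilon)$ apart. This is the bulk of \cref{sec-3290355}:
\begin{itemize}
\item $V_0$ is chosen with Kraus operators spread evenly over the environment (the row-major map $g$, \cref{lemma-3180158}).
\item Via \cref{fact-3191602}, the separation reduces to the trace norm of the cross term $\tr_{\mathrm{anc}}(\kett{V_0}\bbra{(U_x-U_y)\Delta})$.
\item Weingarten second- and fourth-moment bounds with H\"older lower-bound its expectation, and concentration (\cref{lem:corollary17}) plus a union bound give the net.
\end{itemize}
Also, for $1+c\le\tau<2$ the complement of $\mathrm{range}(V_0)$ has dimension $(\tau-1)d_1<d_1$. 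So your $W_\theta$ cannot be an isometry on all of $\mathbb{C}^{d_1}$; only $\min\{d_1,rd_2-d_1\}$ columns can be perturbed.

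Second, and more seriously, the per-query $O(\varepsilon^2)$ information bound you plan to prove is false. Take the instance with $\min\{d_1,rd_2-d_1\}$ perturbed columns and query once on a maximally entangled input over those columns. The outputs are pure states $\sqrt{1-\varepsilon^2}\ket{v_0}+\varepsilon\ket{w_\theta}$. The twirled average is $(1-\varepsilon^2)\ketbra{v_0}{v_0}+\varepsilon^2 I_Q/Q$, with $Q=(rd_2-d_1)\min\{d_1,rd_2-d_1\}=\Theta(rd_1d_2)$ for $\tau\ge1+c$. Hence:
\begin{itemize}
\item the relative entropy to the average, and the Holevo information, both equal $H(\varepsilon^2)+\varepsilon^2\ln Q=\Theta(\varepsilon^2\log(Q/\varepsilon^2))$, where $H$ is the binary entropy;
\item the $\chi^2$-divergence to the average equals $Q$.
\end{itemize}
Any per-query chain-rule plus Fano argument therefore gives at best $\Omega(rd_1d_2/(\varepsilon^2\log(rd_1d_2/\varepsilon)))$. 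That is the same logarithmic loss seen in earlier state and isometry tomography lower bounds, not the claimed $\Theta$.

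The paper never accounts per query. It splits $\kett{V}^{\otimes n}$ into orthogonal sectors according to how many (type II) or where (type I) the perturbation $\Delta$ occurs, and twirls each sector into a probabilistic $n$-comb $\Gamma_i$. Orthogonality of the images of $V_0$ and $\Delta$ is exactly what makes the ``how many'' sectors combs (\cref{lemma-12270148}). It then proves $\kettbbra{V}{V}^{\otimes n}\le\sum_i\lambda_i\Gamma_i$ for every $V$ in the net (\cref{lemma-12271509}, \cref{fact-5122103}), with $\sum_i\lambda_i\le 4Q^2e^{\sqrt{8n\varepsilon^2Q}}$ (resp.\ $3d^4e^{\sqrt[3]{54n^2\varepsilon^2d^2}}$ for type I). The success probability of any sequential tester is then at most $\sum_i\lambda_i/|\mathcal{N}|$. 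This treats all $n$ adaptive queries jointly and loses no logarithm.
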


\begin{table}[ht]
\centering
\setlength{\tabcolsep}{2.0mm}{{\renewcommand{\arraystretch}{1.0}
\begin{tabular}{|c|c|c|}
\hline
 & \begin{tabular}[c]{@{}c@{}}Boundary\\ $\tau=1$\end{tabular} & \begin{tabular}[c]{@{}c@{}}Away-from-boundary\\ $\tau\geq 1+c$\end{tabular} \\ \hline
Upper bounds &   
\begin{tabular}[c]{@{}c@{}}\noalign{\smallskip}
$O\!\left(\dfrac{rd_1d_2}{\varepsilon}\right)^\dag$, $O\!\left(\min\!\left\{\dfrac{rd_1^{1.5}d_2}{\varepsilon},\dfrac{rd_1d_2}{\varepsilon^2}\right\}\right)^\ddagger$ \\ \noalign{\smallskip}
 \cref{coro-12120135}\end{tabular} 
& \begin{tabular}[c]{@{}c@{}}\noalign{\smallskip}
$O\parens*{\dfrac{rd_1d_2}{\varepsilon^2}}$ \\ \noalign{\smallskip}
\cref{coro-12120134} \end{tabular} \\ \hline
Lower bounds&  \begin{tabular}[c]{@{}c@{}}\noalign{\smallskip}
$\Omega\parens*{\dfrac{rd_1d_2}{\varepsilon}}$ \\ \noalign{\smallskip}
\cref{thm-390442}\end{tabular}   & \begin{tabular}[c]{@{}c@{}}\noalign{\smallskip}
$\Omega\parens*{\dfrac{rd_1d_2}{\varepsilon^2}}$\\ \noalign{\smallskip}
\cref{thm-390512} \end{tabular} \\ \hline
\end{tabular}}
}
\caption{Boundary and away-from-boundary regimes. Here $c\in (0,1]$ can be an arbitrary constant.
All bounds hold for both diamond norm and Choi trace norm errors except: $^\dag$holds for Choi trace norm error; $^\ddagger$holds for diamond norm error.
}
\label{tab:comparison}
\end{table}

\Cref{tab:comparison} summarizes the results in \Cref{thm-2150126}. 
We note that \cref{thm-2150126} is optimal in the following senses:
\begin{itemize}
\item The dependence on all parameters $d_1,d_2,r$ and $\varepsilon$ is optimal in the away-from-boundary regime, for both diamond norm and Choi trace norm errors.
\item The dependence on all parameters $d_1,d_2,r$ and $\varepsilon$ is optimal in the boundary regime for Choi trace norm error.
\item The dependence on $d_1$, $d_2$, and $r$ is optimal in all regimes for both constant diamond norm and Choi trace norm errors.
\end{itemize}
It also shows, from the boundary regime to the away-from-boundary regime,
the optimal query complexity exhibits a phase transition from Heisenberg scaling $1/\varepsilon$ to classical scaling $1/\varepsilon^2$.

As a special case of \Cref{thm-2150126}, we consider quantum channels with equal input and output dimensions, i.e., $d_1 = d_2 = d$, which are simply called $d$-dimensional quantum channels.
In this case, $\tau=r$ is always a positive integer and $\tau=1$ if and only if $\mathcal{E}$ is a unitary channel.
Using \cref{thm-2150126} in conjunction with the unitary channel tomography results by Haah, Kothari, O'Donnell, and Tang~\cite{haah2023query}, we obtain a complete characterization of the query complexity for tomography of 
$d$-dimensional quantum channels.
\begin{corollary}[Tomography of $d$-dimensional quantum channels]\label{equaldimension}
The query complexity of tomography of $d$-dimensional quantum channels with Kraus rank at most $r$, and to within either diamond norm error $\varepsilon$ or Choi trace norm error $\varepsilon$, is
\[
\Theta\!\left(\frac{rd^2}{\varepsilon^{\min\{r,2\}}}\right).
\]
\end{corollary}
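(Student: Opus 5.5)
The corollary follows by specializing \cref{thm-2150126} to $d_1=d_2=d$ and splitting into the cases $r=1$ and $r\ge 2$. With equal dimensions the dilation rate is $\tau = rd_2/d_1 = r$, a positive integer. So either $\tau=1$, which means $r=1$ and $\calE$ is a unitary channel, or $\tau\ge 2 = 1+c$ with $c=1$, which puts us in the away-from-boundary regime.

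\emph{Case $r\ge 2$.} Here $\min\{r,2\}=2$, and the second bullet of \cref{thm-2150126} applies with $c=1$. It gives $\Theta(rd_1d_2/\varepsilon^2)=\Theta(rd^2/\varepsilon^2)$ for both diamond norm and Choi trace norm error, which is exactly the claim.

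\emph{Case $r=1$.} Here $\min\{r,2\}=1$, and we need $\Theta(d^2/\varepsilon)$ for both error measures.
\begin{itemize}
\item The lower bound $\Omega(rd_1d_2/\varepsilon)=\Omega(d^2/\varepsilon)$ comes from the first bullet of \cref{thm-2150126}, which holds for both norms.
\item The Choi trace norm upper bound $O(d^2/\varepsilon)$ is also given directly by \cref{thm-2150126}.
\item The diamond norm upper bound is the one gap. \cref{thm-2150126} only yields $O(\min\{d^{2.5}/\varepsilon, d^2/\varepsilon^2\})$ in this case, so we instead invoke the unitary tomography algorithm of Haah, Kothari, O'Donnell, and Tang~\cite{haah2023query}, which achieves diamond norm error $\varepsilon$ with $O(d^2/\varepsilon)$ queries.
\end{itemize}
Since $r=1$ with $d_1=d_2$ means the channel is exactly a unitary channel, that algorithm applies. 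Their algorithm could alternatively supply the Choi upper bound as well, since the Choi trace norm distance is at most the diamond norm distance.

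The only point needing care is the reduction from ``Kraus rank $1$ with $d_1=d_2$'' to ``unitary channel.'' A single Kraus operator $K$ must satisfy $K^\dagger K=I$ by trace preservation, and a square matrix with this property is unitary. No further obstacle is expected, since every other step is a direct substitution into \cref{thm-2150126}.
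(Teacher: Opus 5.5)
Your proposal is correct and follows the paper's own route. The paper obtains the corollary by specializing \cref{thm-2150126} with $\tau=r$ (the away-from-boundary regime with $c=1$ when $r\ge 2$, the boundary regime when $r=1$), and supplies the diamond-norm upper bound in the unitary case from the $O(d^2/\varepsilon)$ algorithm of \cite{haah2023query}, which applies directly since a Kraus-rank-one channel with $d_1=d_2$ is already unitary and no dilation reduction is needed.
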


As another illuminating and well-studied special case of \Cref{thm-2150126}, we consider tomography of quantum channels with input dimension $d_1=1$, which reduces to quantum state tomography. In this regime, our result recovers the recent optimal sample lower bound~\cite{scharnhorst2025optimal} (without logarithmic factors) for quantum state tomography, matching the upper bound established in~\cite{10.1145/2897518.2897544}.
\begin{corollary}[Optimal lower bound for state tomography]
Tomography of a $d$-dimensional mixed state with rank at most $r$, to within trace norm error $\varepsilon$, requires $\Omega\!\left(\frac{dr}{\varepsilon^2}\right)$ samples.
\end{corollary}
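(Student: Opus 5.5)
The plan is to realize state tomography as a special case of channel tomography with $d_1=1$, and then apply the away-from-boundary lower bound of \cref{thm-2150126} (that is, \cref{thm-390512}).

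\textbf{Encoding states as channels.} A channel with input dimension $d_1=1$ is a map $\mathcal{E}_\rho:\mathbb{C}\to\mathbb{C}^{d\times d}$, $1\mapsto\rho$, with $d_2=d$.
\begin{itemize}
\item Its Kraus rank equals $\operatorname{rank}(\rho)\le r$: writing $\rho=\sum_{i=1}^r\ket{\psi_i}\!\bra{\psi_i}$ with unnormalized vectors gives Kraus operators $K_i=\ket{\psi_i}$.
\item Its normalized Choi state is $\rho$ itself.
\item The diamond norm and the Choi trace norm both reduce to $\norm{\rho-\sigma}_1$, since the input space is trivial.
\item One query to $\mathcal{E}_\rho$ is exactly one copy of $\rho$. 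Coherent or adaptive use of queries gives no extra power here: with a one-dimensional input, each query just appends a fresh $\rho$ register, possibly interleaved with the algorithm's own processing, which is equivalent to having $n$ copies.
\end{itemize}
Hence an algorithm that learns $\rho$ from $n$ samples to trace-norm error $\varepsilon$ is the same thing as an $n$-query tomography algorithm for $\mathcal{E}_\rho$ in the channel class with parameters $(d_1,d_2,r)=(1,d,r)$.

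\textbf{Applying the bound.} The dilation rate is $\tau=rd_2/d_1=rd$. Whenever $rd\ge 2$, we have $\tau\ge 1+c$ with $c=1$, so \cref{thm-390512} gives
\[
n=\Omega\!\left(\frac{rd_1d_2}{\varepsilon^2}\right)=\Omega\!\left(\frac{dr}{\varepsilon^2}\right).
\]
The remaining case is $d=r=1$, where the only state is $\rho=1$ and the task is trivial. The bound is still consistent there if we read it for nontrivial dimension $d\ge2$, as is standard.

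\textbf{Main point to check.} The only step that needs care is confirming that \cref{thm-390512} applies for all $d_1\ge 1$, including $d_1=1$. I expect it does, because the hard instance family there should be built in the Stinespring/Choi picture and stated for arbitrary $d_1$. One should also check that its constant is uniform in $c$ for $c=1$, which holds because $c$ is fixed. If the theorem's hard instances required $d_1\ge2$, I would fall back to rerunning its construction with a trivial input. In that case the family becomes a packing of rank-$r$ states near a rank-$r$ reference with perturbations in the complementary $d-r$ directions, and the same information-theoretic argument yields $\Omega(dr/\varepsilon^2)$.
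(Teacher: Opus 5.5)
Your proposal is correct and follows the paper's route: the corollary is obtained by specializing the away-from-boundary lower bound (\cref{thm-390512}) to $d_1=1$ and $d_2=d$, so that $\tau=rd\ge 2$ and $rd_1d_2=rd$. The check you flagged goes through. With $d_1=1$, regime (i) in the proof of \cref{thm-390512} forces $d_2=1$ and so does not arise, while regime (ii) ($r=1$, pure states) and regime (iii) ($r\ge 2$, using the paper's reduction to $r\le d/2$) are valid constructions at $d_1=1$.
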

Our proof of this lower bound proceeds via a different line of reasoning than that in~\cite{scharnhorst2025optimal} and may offer complementary insights.

The following result underpins our upper bounds in \Cref{thm-2150126} by establishing a connection between the query complexities in two different access models: given access to an unknown channel itself, or given access to one of its Stinespring dilations. We say a quantum algorithm is a tester if it makes queries to an unknown quantum channel and outputs a classical outcome. A tester is a parallel tester if its queries to the unknown channel can be made in parallel. Formal definitions can be found in \Cref{sub:tester}.

\begin{theorem}[Dilation does not help for parallel testers, \cref{coro-12122150} restated]
\label{thm-2150136}
If there exists a parallel (possibly coherent) tester that solves a channel estimation task using $n$ queries to an arbitrary dilation of an unknown quantum channel $\mathcal{E}$, then there exists a parallel tester that solves this task using $n$ queries to $\mathcal{E}$ itself.
\end{theorem}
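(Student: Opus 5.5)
The plan is to simulate $n$ parallel queries to a dilation using only $n$ parallel queries to $\mathcal{E}$, exploiting the freedom in the choice of dilation together with the fact that the tester's success is a statement about the channel $\mathcal{E}$, not about any particular dilation.

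First, formalize the access model. A dilation of $\mathcal{E}$ is an isometry $V:\Co^{d_1}\to\Co^{d_2}\otimes\Co^{r}$ with $\mathcal{E}(\rho)=\tr_{E}(V\rho V^\dagger)$. Any two such isometries differ by $V'=(I\otimes W)V$ for an isometry or unitary $W$ acting on the environment; after padding the environment to a common dimension we may take $W\in U(r)$. "Solving the task with queries to an arbitrary dilation" means that for every $\mathcal{E}$ in the class and every dilation $V$ of it, the parallel tester $T$ applied to $V^{\otimes n}$, tensored with identity on an ancilla, outputs a correct answer with probability at least $2/3$. Since correctness depends only on $\mathcal{E}$, the same guarantee holds for the averaged strategy.

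Second, run the tester on a Haar-random dilation. Sample $W\in U(r)$ from the Haar measure and feed $T$ the dilation $(I\otimes W)V$. The success probability for each fixed $W$ is at least $2/3$, so the average is also at least $2/3$. The output distribution equals that of $T$ applied to the twirled channel
\[
\Phi_V := \int dW\,\big((I\otimes W)V\big)^{\otimes n}(\cdot)\big((I\otimes W)V\big)^{\dagger\otimes n},
\]
acting on the $n$ parallel input registers.

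Third, and this is the main step, show that $\Phi_V$ can be realized by $n$ parallel uses of $\mathcal{E}$ followed by a fixed channel. By Schur--Weyl duality, twirling $E^{\otimes n}$ by $W^{\otimes n}$ is a map $\mathcal{T}$ that dephases across the Schur--Weyl isotypic components of $(\Co^r)^{\otimes n}$ and replaces the $U(r)$-irrep part by the maximally mixed state, keeping only the $S_n$ (permutation) part. Consequently $\Phi_V=(\mathrm{id}_{B^n}\otimes\mathcal{T})\circ V^{\otimes n}$, and this output is determined by the state on $B^n$ together with the permutation-module data of $E^n$.

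One natural route is to purify the $n$ environments using one shared environment: apply $\mathcal{E}^{\otimes n}$, then a Stinespring-independent recovery. A better option uses the fact that $\mathcal{T}$ kills all $U(r)$ information. Let $\mathcal{E}^{\otimes n}$ output $\sigma_{B^n}$, and let a purification be held implicitly. Then $(\mathrm{id}\otimes\mathcal{T})(\psi_{B^nE^n})$ should equal a fixed "Schur purification" channel $\mathcal{R}$ applied to $\sigma_{B^n}$. Concretely, $\mathcal{R}$ appends $\bigoplus_\lambda$ an $S_n$-irrep register, entangled with the $B^n$ side appropriately, together with a maximally mixed $U(r)$-irrep register. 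To verify this, decompose $\psi$ in the Schur basis of $E^n$ and check that the twirled state is a function of $\sigma_{B^n}$ and the joint $S_n$-action, using permutation covariance of $V^{\otimes n}$ (it commutes with simultaneous permutations of inputs, $B$'s, and $E$'s).

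I expect this step to be the obstacle. Making $\mathcal{R}$ well defined likely requires that the reduced operator on $B^n$ restricted to each isotypic sector admits a canonical purification. The first thing I would try is pairing each sector with the Schur--Weyl decomposition of $B^n$ itself, analogous to the purification trick used for state tomography. The resulting tester is (the ancilla-extended) $\mathcal{R}\circ\mathcal{E}^{\otimes n}$ composed with $T$, which is still parallel and uses exactly $n$ queries.
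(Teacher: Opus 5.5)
Your first two steps are correct and match the paper's reduction: averaging the dilation tester over a Haar-random dilation preserves the success guarantee, so it suffices to reproduce the statistics of the twirled channel $\Phi_V=(\mathrm{id}_{B^n}\otimes\mathcal{T})\circ V^{\otimes n}$.

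\textbf{The gap is in step three, and it cannot be repaired by a better choice of purification.} Your claim that $\Phi_V=\mathcal{R}\circ\mathcal{E}^{\otimes n}$ for a fixed post-processing $\mathcal{R}$ is false. Equivalently, it is false that the twirled output is a function of $\sigma_{B^n}$. Take $d_1=d_2=r=2$ and $n=2$, with $\mathcal{E}$ the dephasing channel with dilation $V\ket{k}=\ket{k}_B\ket{k}_E$. Use the two inputs $\ket{\psi^\pm}=(\ket{01}\pm\ket{10})/\sqrt{2}$. Then
\[
\mathcal{E}^{\otimes 2}(\psi^+)=\mathcal{E}^{\otimes 2}(\psi^-)=\tfrac12\big(\ketbra{01}{01}+\ketbra{10}{10}\big).
\]
On the other hand,
\[
\Phi_V(\psi^+)-\Phi_V(\psi^-)=\big(\ketbra{01}{10}+\ketbra{10}{01}\big)_{B_1B_2}\otimes\tau,\qquad \tau=\mathbb{E}_W\big[W^{\otimes 2}\ketbra{01}{10}W^{\dagger\otimes 2}\big]=\tfrac16\Pi_{\mathrm{sym}}-\tfrac12\Pi_{\mathrm{anti}}\neq 0,
\]
where $\Pi_{\mathrm{sym}}$ and $\Pi_{\mathrm{anti}}$ project onto the symmetric and antisymmetric subspaces of $E_1E_2$. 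No map applied to the output of $\mathcal{E}^{\otimes 2}$ can separate the two inputs. So for a suitable POVM, any fixed $\mathcal{R}$ mis-simulates at least one of the two testers (input $\psi^+$ or input $\psi^-$). The underlying reason is that permutation covariance pairs the $\mathfrak{S}_n$-irreps of $E^{\otimes n}$ with those of the joint input--output space $(\mathcal{H}_{\mathrm{A}}\otimes\mathcal{H}_{\mathrm{B}})^{\otimes n}$, not with those of $B^{\otimes n}$. That joint space contains the input legs, which a post-processing never sees.

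\textbf{How the paper uses the correct pairing.} It works in the Heisenberg picture, viewing the Choi vector as bipartite across $(\mathrm{AB})^{\otimes n}$ and $\mathrm{anc}^{\otimes n}$.
\begin{enumerate}
\item \cref{lemma-3262205} gives $\kett{V}^{\otimes n}=\bigoplus_\lambda\kett{I_{\mathcal{P}_\lambda}}\otimes\ket{V_\lambda}$.
\item Hence $C_\mathcal{E}^{\otimes n}=\bigoplus_\lambda I_{\mathcal{P}_{\mathrm{AB},\lambda}}\otimes C_{\mathcal{E},\lambda}$, with $C_{\mathcal{E},\lambda}=\tr_{\mathcal{Q}^r}\ketbra{V_\lambda}{V_\lambda}$.
\item After twirling the tester by $U^{\otimes n}$ on the ancillas, Schur's lemma makes each outcome probability a linear functional of the blocks $C_{\mathcal{E},\lambda}$ alone.
\item The new tester $\widetilde{T}_i$ is \emph{defined} to realize that functional on $C_\mathcal{E}^{\otimes n}$. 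Nothing about the environment is physically reconstructed.
\item The remaining work is normalization: one shows $\sum_i\widetilde{T}_i\le\rho_A^{\mathrm{sym}}\otimes I_B$. So the simulating tester uses the $\mathfrak{S}_n$-symmetrized input state, which is input-side processing and not just post-processing. It also has an extra outcome $\bot$ that absorbs the sectors $\lambda$ with more than $\min\{d_1d_2,r\}$ rows.
\end{enumerate}

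If you want a Schr\"odinger-picture route instead, the random Stinespring dilation constructions cited in the paper do this. They are superchannels with pre-processing and memory, not output-only maps $\mathcal{R}$.
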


We note that \cref{thm-2150136} partially answers a conjecture from Tang, Wright, and Zhandry~\cite{tang2025conjugate}, which asserts that access to channel dilations does not help.

Moreover, we also obtain the following Heisenberg-scaling query upper bound for quantum state tomography with state-preparation channels, by applying \cref{thm-2150136} to the pure-state tomography algorithm due to Chen~\cite{chen2025inverse}, which achieves the Heisenberg scaling using parallel queries without inverses. 

\begin{corollary}[State tomography with state-preparation channels, \Cref{coro-12120136} restated]\label{coro-12131443}
When $\tau =1$, tomography of the mixed state $\mathcal{E}(\ketbra{0}{0})$ to within trace norm error $\varepsilon$ can be done using $O\!\left(\min\!\left\{\frac{d_1^{1.5}}{\varepsilon},\frac{d_1}{\varepsilon^2}\right\}\right)$ queries to $\mathcal{E}$.
\end{corollary}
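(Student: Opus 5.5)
The plan is to reduce the task to pure-state tomography via a dilation, and then apply \cref{thm-2150136} to remove the need for dilation access.

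\textbf{Step 1: the dilation prepares a purification.} When $\tau=1$ we have $rd_2=d_1$. Take a Stinespring isometry $V:\mathbb{C}^{d_1}\to\mathbb{C}^{d_2}\otimes\mathbb{C}^{r}$ for $\calE$. Since the target space has dimension $rd_2=d_1$, $V$ is a unitary on a $d_1$-dimensional space. Hence one query to this dilation applied to $\ket{0}$ prepares the pure state $\ket{\psi}=V\ket{0}\in\mathbb{C}^{d_2}\otimes\mathbb{C}^{r}$. This state purifies $\rho=\calE(\ketbra{0}{0})$, because $\rho=\tr_{r}\ketbra{\psi}{\psi}$.

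\textbf{Step 2: run Chen's parallel algorithm on the dilation.} Apply the parallel, inverse-free pure-state tomography algorithm of \cite{chen2025inverse} to the state-preparation unitary $V$ in the $d_1$-dimensional space. It outputs an estimate $\ket{\hat\psi}$ with $\norm{\ketbra{\psi}{\psi}-\ketbra{\hat\psi}{\hat\psi}}_1\le\varepsilon$ using $O(d_1^{1.5}/\varepsilon)$ queries. Alternatively, one can simply prepare $O(d_1/\varepsilon^2)$ copies of $\ket{\psi}$ and run standard pure-state tomography. In both cases the queries are parallel. We then output $\hat\rho=\tr_r\ketbra{\hat\psi}{\hat\psi}$. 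Partial trace is contractive in trace norm, so $\norm{\rho-\hat\rho}_1\le\varepsilon$. Taking the better of the two algorithms gives a parallel tester using $O(\min\{d_1^{1.5}/\varepsilon,d_1/\varepsilon^2\})$ queries to a dilation of $\calE$.

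\textbf{Step 3: remove the dilation.} Some care is needed here. The estimation target $\rho=\calE(\ketbra{0}{0})$ depends only on $\calE$, not on which dilation is used. The algorithm of Step 2 works for an \emph{arbitrary} dilation, since any dilation, after restricting the ancilla appropriately, is still a unitary preparing some purification of $\rho$. Therefore the hypothesis of \cref{thm-2150136} is met: there is a parallel tester solving a channel estimation task with $n$ dilation queries. \Cref{thm-2150136} converts it into a parallel tester that uses $n$ queries to $\calE$ itself, which gives the stated bound.

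\textbf{Main obstacle.} The hardest part is checking that Chen's algorithm fits the framework of \cref{thm-2150136}, since that algorithm is stated for a state-preparation \emph{unitary}. If the formal dilation is an isometry into a larger space, I would first restrict it to its $d_1$-dimensional range, which is possible when $\tau=1$. I would then verify that the success guarantee is uniform over all such dilations, and that the algorithm's use of controlled queries does not break the parallel-tester structure. If it does rely on controlled queries, I would instead invoke the version of the algorithm that uses only parallel uncontrolled queries.
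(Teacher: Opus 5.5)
Your proposal is correct and is essentially the paper's proof. At $\tau=1$ every element of $\dilation_r(\mathcal{E})$ is a $d_1$-dimensional unitary $U$, and a parallel tester estimates $U\ket{0}$. The paper attributes the full $O(\min\{d_1^{1.5}/\varepsilon,d_1/\varepsilon^2\})$ bound to Chen's algorithm, whereas you take the $d_1/\varepsilon^2$ branch from ordinary pure-state tomography on parallel copies, which is equally valid. Contractivity of the partial trace then transfers the error to $\mathcal{E}(\ketbra{0}{0})$, and \cref{thm-2150136} removes the dilation.

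The concerns in your last paragraph do not arise. Since $rd_2=d_1$, every dilation is already unitary. The paper also uses Chen's algorithm as a parallel tester with plain queries to the unitary channel, with no inverses or controlled calls.
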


\subsubsection{Near-boundary: new upper and lower bounds}

In the near-boundary regime, we prove new upper and lower bounds on the query complexity of quantum channel tomography,
which exhibits a mixture of Heisenberg and classical scalings. 

\begin{theorem}[Near-boundary regime, \Cref{coro-2270246,coro-12120134,thm-390442,coro-3270443,thm-390511} restated]
\label{thm:near-boundary}
Consider an unknown quantum channel $\mathcal{E}$ with input dimension $d_1$, output dimension $d_2$, and Kraus rank at most $r$. Let $\tau=rd_2/d_1\geq 1$ be the dilation rate. Then, for $\tau\in(1,\frac{4}{3})$ where $\tau$ can be arbitrarily close to $1$, the query complexity of the tomography of $\calE$ to within error $\varepsilon$ has the following bounds.
\begin{itemize}
    \item 
     For Choi trace norm error $\varepsilon$: we establish an upper bound of $O\!\left(\frac{d_1^2}{\varepsilon}+\frac{(\tau-1)d_1^2}{\varepsilon^2}\right)$ and a lower bound of $\Omega\parens*{\frac{d_1^2}{\varepsilon}+\frac{(\tau-1)^2}{\varepsilon^2}}$.
     \item 
     For diamond norm error $\varepsilon$: we establish an upper bound of $O\parens*{\frac{d_1^2}{\varepsilon^2}}$ and a lower bound of $\Omega\parens*{\frac{d_1^2}{\varepsilon} +\frac{(\tau-1)^2d_1^2}{\varepsilon^2}}$.
    \end{itemize}
\end{theorem}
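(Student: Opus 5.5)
This theorem collects four separate bounds, so I would prove it bound by bound. The first step is to note that when $\tau<4/3$, all of $r$, $d_2$ and $d_1$ are tied together: $rd_2=\tau d_1$, so $rd_1d_2=\tau d_1^2=\Theta(d_1^2)$. Every term of the form $rd_1d_2/\varepsilon^k$ therefore becomes $d_1^2/\varepsilon^k$ up to constants.

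\textbf{Diamond norm upper bound.} The bound $O(d_1^2/\varepsilon^2)$ follows directly from the general upper bound $O(rd_1d_2/\varepsilon^2)$ of \Cref{coro-12120134}, provided that bound is valid for all $\tau\ge1$ and not only for $\tau\ge 1+c$. I would check that its proof uses no lower bound on $\tau-1$, which I expect since it is a parallel-tester argument built on \cref{thm-2150136}.

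\textbf{Choi trace norm upper bound.} The target $O(d_1^2/\varepsilon+(\tau-1)d_1^2/\varepsilon^2)$ should be a perturbation of the boundary algorithm. A Stinespring isometry $V:\Co^{d_1}\to\Co^{d_2}\otimes\Co^r$ with $rd_2=\tau d_1$ is a unitary in dimension $d_1$ together with $(\tau-1)d_1$ extra output dimensions.
\begin{itemize}
\item The ``unitary-like'' $d_1\times d_1$ block can be learned with Heisenberg scaling, which accounts for the $d_1^2/\varepsilon$ term.
\item The remaining $(\tau-1)d_1\cdot d_1$ parameters behave classically, which accounts for the $(\tau-1)d_1^2/\varepsilon^2$ term.
\end{itemize}
Concretely, I would run the isometry-tomography algorithm underlying \Cref{coro-2270246} in the dilation picture and then transfer it to $\mathcal{E}$ itself via \cref{thm-2150136}. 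The error analysis splits the Choi-state error into a projection onto a $d_1$-dimensional subspace (estimated to Heisenberg precision) and the off-block part of rank about $(\tau-1)d_1$ (estimated classically).

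\textbf{Lower bounds.} The $\Omega(d_1^2/\varepsilon)$ term comes from \cref{thm-390442}, by embedding a unitary-like (boundary) instance inside the near-boundary channel family. The $(\tau-1)^2$ terms should come from \cref{coro-3270443,thm-390511}, which I would prove as follows.
\begin{itemize}
\item Take a fixed boundary channel and perturb it by a small isometric rotation into the extra $(\tau-1)d_1$ output dimensions, with angle $\theta\sim\varepsilon/(\tau-1)$.
\item The rotated directions form a small, effectively mixed component of weight about $\tau-1$. Distinguishing such perturbations is a classical-type hypothesis test, costing $1/(\text{Bures/KL distance})\sim (\tau-1)^2/\varepsilon^2$ per degree of freedom.
\end{itemize}
For diamond norm, worst-case inputs can concentrate the perturbation on a single input vector. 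This multiplies the number of distinguishable directions by $d_1^2$, giving $(\tau-1)^2d_1^2/\varepsilon^2$. For Choi trace norm, averaging over inputs removes that dimension factor.

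\textbf{Main obstacle.} The hardest step is the classical-scaling lower bound for general adaptive, coherent strategies. Here I would bound the growth of query-state fidelity (or a Bures-distance chain rule over the queries) by the channel's purified distance on the perturbation family, rather than by its diamond distance. The key is to show that the purified distance is of order $\varepsilon^2/(\tau-1)^2$ and not of order $\varepsilon$; this quadratic gap is exactly what rules out Heisenberg scaling.
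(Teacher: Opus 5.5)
You handle the upper bounds and the $\Omega(d_1^2/\varepsilon)$ term exactly as the paper does. With $rd_1d_2=\tau d_1^2=\Theta(d_1^2)$ and $rd_2-d_1=(\tau-1)d_1$, the two upper bounds are \cref{coro-12120134} (which indeed needs no lower bound on $\tau-1$) and \cref{coro-2270246}. The Heisenberg-scaling lower bound is \cref{thm-390442}, whose Type~I net is built precisely for $d_1\le rd_2\le\frac43 d_1$. Had you likewise just invoked \cref{thm-390511,coro-3270443}, your proof would coincide with the paper's.

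The gap is in the argument you propose for those two results; its key step cannot hold.
\begin{itemize}
\item Any two channels that an $\varepsilon$-accurate tomography algorithm must tell apart are $\Omega(\varepsilon)$-separated in diamond norm, which dominates the Choi trace norm. The purified distance of channels is at least half their diamond distance. So it cannot be $O(\varepsilon^2/(\tau-1)^2)$ unless $(\tau-1)^2\lesssim\varepsilon$. In that regime $(\tau-1)^2/\varepsilon^2\lesssim 1/\varepsilon$ is dominated by the Heisenberg term anyway.
\item If you meant the infidelity $1-F\sim\varepsilon^2$, that quantity does not add linearly over adaptive queries; only the Bures angle $A$ does. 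The chain rule $A(\rho_n,\sigma_n)\le n\,A(\mathcal{E}_x,\mathcal{E}_y)$ therefore only ever yields $n=\Omega(1/\varepsilon)$.
\item A two-hypothesis argument produces no dimension factor at all.
\item Your dimension counting is also off. Concentrating the perturbation on a single input vector leaves only about $(\tau-1)d_1$ free parameters, not $(\tau-1)^2d_1^2$.
\item The Choi bound does not come from an angle $\varepsilon/(\tau-1)$ plus ``averaging over inputs''.
\end{itemize}

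What is needed, and what the paper does, is a multi-hypothesis argument. It exploits that the perturbation leaks into a subspace orthogonal to the image of the center map, so it cannot be coherently re-amplified.
\begin{itemize}
\item In the Type~II instances of \cref{sec-3230332,sec-3230333}, $\Theta(rd_2-d_1)$ input directions are perturbed at amplitude $\varepsilon$ into an orthogonal block of dimension $\Theta(rd_2-d_1)$, rotated by a Haar unitary $U$.
\item A maximally entangled input on the perturbed directions witnesses diamond separation $\Omega(\varepsilon)$. Concentration of measure gives a net with $\log|\mathcal{N}|=\Omega(Q)$, where $Q=(rd_2-d_1)^2$ (\cref{thm-3231724,thm-3250101}).
\item The hardness argument (\cref{thm-1110122,lemma-12281107}) then shows $\kettbbra{V}{V}^{\otimes n}\le\sum_i\lambda_i\Gamma_i$. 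Here $\Gamma_i$ is the Haar twirl of the sector with exactly $i$ perturbative factors, and $\sum_i\lambda_i\le 4Q^2\exp(\sqrt{8n\varepsilon^2Q})$.
\item Each $\Gamma_i$ is a probabilistic comb precisely because of that orthogonality (\cref{lemma-12270148}).
\item Contracting with the tester bounds the success probability by $\sum_i\lambda_i/|\mathcal{N}|$. This forces $n=\Omega(Q/\varepsilon^2)=\Omega((\tau-1)^2d_1^2/\varepsilon^2)$.
\end{itemize}
The Choi bound is then just this diamond bound transferred through $\|\mathcal{E}-\mathcal{F}\|_\diamond\le\|C_\mathcal{E}-C_\mathcal{F}\|_1$, i.e.\ $\varepsilon\mapsto d_1\varepsilon$ (\cref{coro-3270443}). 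That is why it carries no $d_1^2$ factor.
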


\begin{table}[ht]
\centering
\setlength{\tabcolsep}{2.0mm}{{\renewcommand{\arraystretch}{1.0}
\begin{tabular}{|c|c|c|}
\hline
             & Choi trace norm & Diamond norm \\ \hline
Upper bounds &  \begin{tabular}[c]{@{}c@{}}\noalign{\smallskip}
$O\left(\dfrac{d_1^2}{\varepsilon}+\dfrac{(\tau-1)d_1^2}{\varepsilon^2}\right)$ \\ \noalign{\smallskip}
\cref{coro-2270246} \end{tabular}   & \begin{tabular}[c]{@{}c@{}}\noalign{\smallskip}
$O\parens*{\dfrac{d_1^2}{\varepsilon^2}}$   \\ \noalign{\smallskip}
 \cref{coro-12120134} \end{tabular} \\ \hline
Lower bounds & \begin{tabular}[c]{@{}c@{}}\noalign{\smallskip}
 $\Omega\parens*{\dfrac{d_1^2}{\varepsilon}+\dfrac{(\tau-1)^2}{\varepsilon^2}}$ \\ \noalign{\smallskip}
\cref{thm-390442,coro-3270443} \end{tabular}   & \begin{tabular}[c]{@{}c@{}}\noalign{\smallskip}
$\Omega\parens*{\dfrac{d_1^2}{\varepsilon} +\dfrac{(\tau-1)^2d_1^2}{\varepsilon^2}}$  \\ \noalign{\smallskip}
\cref{thm-390442,thm-390511}\end{tabular}        \\ \hline
\end{tabular}}
}
\caption{Near-boundary regime, i.e., $\tau\in (1,\frac{4}{3})$ can be arbitrarily close to $1$.}
\label{tab:near-boundary}
\end{table}

\Cref{tab:near-boundary} summarizes the results in \Cref{thm:near-boundary}.
We note that the $1/\varepsilon^2$ terms in the lower bounds in the near-boundary regimes imply that the query complexity of quantum channel tomography no longer exhibits Heisenberg scaling $1/\varepsilon$ once it leaves the boundary regime, i.e., even for $\tau$ arbitrarily close to $1$, as long as it is independent of $\varepsilon$.
This transition was illustrated in \Cref{fig:phase-diagram}.

\subsection{Overview of techniques}\label{sec-790210}

\paragraph{Upper bounds.}
Our upper bounds for quantum channel tomography are directly inspired by the recent work of Pelecanos, Spilecki, Tang, and Wright~\cite{pelecanos2025mixedstatetomographyreduces}, who showed that mixed-state tomography can be reduced to pure-state tomography while still achieving optimal performance. 
In this paper, we prove in \cref{thm-1221159} that any channel tomography algorithm that can make parallel queries to a Stinespring dilation of an unknown channel $\mathcal{E}$ can be faithfully simulated by an algorithm that queries only $\mathcal{E}$ itself. 
This enables a reduction from general quantum channel tomography to isometry channel tomography, which is a more tractable problem.
As a consequence, by combining this reduction with existing parallel-query algorithms for isometry tomography~\cite{yang2020optimal,yoshida2025quantum} and with a generalization of the base tomography algorithm from \cite{haah2023query} (see \cref{sec:isometry_channel_tomography}), we obtain the upper bounds presented in this paper.

\cref{thm-1221159} is proved by constructing a ``local'' tester (see \cref{eq-1242355,eq-1242217}) that can faithfully simulate the ``global'' tester with access to a random Stinespring dilation, using representation-theoretic tools together with the quantum comb formalism~\cite{chiribella2009theoretical,bavaresco2022unitary}.
This result can be viewed as a generalization of the earlier \textit{local test} result by \cite{chen2024local}, which studies the local test for quantum states and shows that access to purifications does not help for mixed-state testing.
Related results trace back to~\cite[Theorem 35]{soleimanifar2022testing}, and were recently strengthened in an algorithmic sense by \cite{tang2025conjugate}, which explicitly constructs an algorithm for generating random purifications of a mixed state. 
Intuitively, local test and random purification can be viewed as dual concepts in the Heisenberg and Schr{\"o}dinger pictures, respectively.
More specifically, random purification techniques convert access to a state into access to a corresponding purification.
Conversely, local test techniques convert a global tester (which has access to a purification of a state) into a local tester (which only has access to the state itself), such that the local tester can faithfully simulate the global tester. 
Thus, random purification describes conversions at the level of states, namely in the Schr{\"o}dinger picture, whereas local test describes conversions at the level of testers (i.e., POVMs), namely in the Heisenberg picture. 
This line of work is further extended in \cite{girardi2025random,walter2025random,mele2025random,Girardi2025Dec,yoshida2025random}. 

In particular, \cite{Girardi2025Dec,yoshida2025random} explicitly construct random Stinespring dilation superchannels that convert parallel queries to a quantum channel into parallel queries to its random Stinespring dilation. These results can also be viewed as dual to our local test result for quantum channels (i.e., \cref{thm-1221159}), with respect to the Schr{\"o}dinger and Heisenberg pictures. We remark that, from an algorithmic sense, they additionally provide explicit and efficient circuit implementations of this conversion.

\vspace{5mm}
\begin{figure}[ht]
  \centering
  \def\svgwidth{\linewidth}
  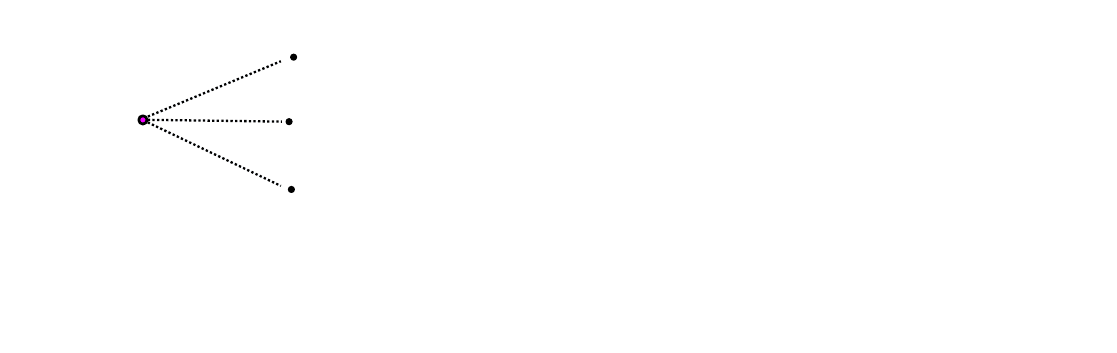
  \caption{A pictorial representation of our constructions of channel packing nets.} 
  \label{fig:construction}
\end{figure}

\paragraph{Lower bounds.} To prove our lower bounds, we proceed using a two-stage method. The first stage consists of constructing sufficiently large packing nets of quantum channels with specific structures and desired properties. 
The second stage consists of proving lower bounds on the query complexity of algorithms that can learn, and thus discriminate between, the quantum channels in the packing nets.

Existing results on nets of the set of isometries \cite{Szarek1997Jan} lead to  packing nets of the set of quantum channels of input dimension $d_1$, output dimension $d_2$ and Kraus rank $r$ of cardinality satisfying $\ln |\calN|=\Omega(rd_1d_2)$. This along with a polynomial method can be used to obtain the general lower bound $\Omega(rd_1d_2)$ \cite{Girardi2025Dec}, which is tight in the dimensions and the Kraus rank but does not capture the dependency on error parameter $\eps$. To make the dependence on $\eps$ precise and optimal, we opt for carefully designed packing nets and a new proof strategy. We distinguish between the boundary and non-boundary regimes.

Roughly speaking, our constructions in the non-boundary regimes share the following idea (see \cref{fig:construction}). To ensure complete-positivity, we construct the channels via their Stinespring isometries. The isometries are chosen of the following  form
\begin{equation}\label{eq:form-construction}
    V = V_0 +\eps \Delta,
\end{equation}
where $V_0$ corresponds to a center map and $\Delta$ is a noisy perturbation. The center supermap $\mathcal{V}_0=V_0(\cdot) V_0^\dag$ admits Kraus operators that are orthogonal and have almost the same Hilbert–Schmidt norm. 
The noisy perturbation is constructed using the Haar randomness applied on certain output space of $\Delta$. The  concentration phenomenon in high-dimensions permits us to prove that two independent channels constructed in this way will be $\eps$-far apart in the Choi trace norm (or diamond norm) with a probability that is at least $1- \exp(-Crd_1d_2)$ with $C$ a universal constant. A probabilistic argument then concludes existence of the packing net of cardinality satisfying $\ln |\calN| = \Omega(rd_1d_2)$. A crucial feature of our construction that leads to classical scaling is that the center and perturbation components have orthogonal images. This can give insights to the phase transition behavior, since at the boundary such construction with the orthogonality  feature is not possible. 

In the boundary regime, the construction has a similar form as in \cref{eq:form-construction}. The center map is close to identity. The perturbation, unlike in the non-boundary regime, is a Haar conjugated anti-Hermitian traceless diagonal matrix with entries $\pm \mathrm{i}$. This construction is inspired by Paninski distribution \cite{Paninski2008Oct} that was used to prove optimality of the mixedness testing problem. 
As stated before, this construction does not exhibit the orthogonality feature. Nevertheless, it has the property that the center and perturbation matrices are orthogonal in Hilbert-Schmidt inner product. This is sufficient to establish the Heisenberg-scaling lower bound.


We now proceed to the second stage of the lower bound proof. We study two structured families of isometries, referred to as type I and type II instances, corresponding to the boundary and non-boundary regimes, respectively.
Both families take the perturbative form in \cref{eq:form-construction}, but with different choices of $V_0$ and $\Delta$. 
Then, we show that discriminating a set of isometries of type I or type II requires at least $\Omega(f/\varepsilon)$ or $\Omega(f/\varepsilon^2)$ queries, respectively, where $f$ depends on both the cardinality of the set and the dimensions of the isometries.
By construction, the channels in the packing nets described above admit Stinespring dilations that are exactly of type I or type II. It follows that the hardness of isometry discrimination transfers directly to channel discrimination: any algorithm that can distinguish among the channels in the packing net can also distinguish among the corresponding dilation isometries, simply by discarding the ancilla system. Since a tomography algorithm with sufficiently high precision can solve the discrimination problem, these query lower bounds also apply to quantum channel tomography.

We now elaborate on the hardness of the isometry discrimination problem.
We formulate the task of discriminating among a set $\mathcal{N}$ of isometries within the quantum comb framework~\cite{chiribella2008quantum,chiribella2009theoretical,bavaresco2022unitary}. In this framework, any discrimination algorithm making $n$ adaptive and coherent queries to an unknown channel can be represented by a sequential tester $\{T_V\}_{V\in\mathcal{N}}$, where $\sum_{V\in\mathcal{N}} T_V$ forms an $(n+1)$-comb with $1$-dimensional input and output spaces. 
Then, the success probability of the discrimination problem is 
$\frac{1}{|\mathcal{N}|}\sum_{V\in\mathcal{N}} T_V\star \kettbbra{V}{V}^{\otimes n}$, where $\star$ denotes the link product (see \cref{def-720255}), representing the contraction between combs and $\kettbbra{V}{V}$ is the Choi operator of the isometry channel $\mathcal{V}$.
The technical heart of the proof is a decomposition of $\kett{V}^{\otimes n}$ into orthogonal sectors indexed by either where (for type I isometry $V$) or how many (for type II isometry $V$) perturbative components (i.e., $\Delta$) appear.
By collecting the Haar average over the orbit of the perturbation for each sector, we can construct an operator $\lambda\cdot \Gamma$ that can universally upper bound (w.r.t. the L\"owner order) $\kettbbra{V}{V}^{\otimes n}$ for all $V\in\mathcal{N}$, where $\lambda>0$ is a scalar depending on $n,d,\varepsilon$ and $\Gamma$ is an $n$-comb. 
Since $\sum_{V\in\mathcal{N}} T_V$ is an $(n+1)$-comb, its contraction with the $n$-comb $\Gamma$ evaluates to $1$.
It follows that the success probability, which must be at least $2/3$, is upper bounded by
\[\frac{1}{|\mathcal{N}|}\sum_{V\in\mathcal{N}} T_V \star (\lambda\cdot \Gamma) = \frac{\lambda}{|\mathcal{N}|}.\]
Solving the inequality $2/3\leq \lambda/|\mathcal{N}|$ yields the desired query lower bounds.

We remark that, when the orthogonal decomposition of $\kett{V}^{\otimes n}$ is based on how many perturbative components appear (the type II case), the property that $\Gamma$ is a comb relies on the orthogonality between the image of the perturbation $\Delta$ and that of the center map $V_0$. 
However, as mentioned above, this is not possible in the boundary regime.
Thus for the type I isometries, the decomposition is based on where perturbative components appear, which is less ``compact'' and ultimately yields a Heisenberg-scaling lower bound of $\Omega(1/\varepsilon)$ rather than the classical-scaling bound $\Omega(1/\varepsilon^2)$.

\subsection{Related work}

\paragraph{Quantum channel tomography.}
The problem of quantum channel tomography (also known as quantum process tomography) has been studied under a variety of models and error metrics. Early works~\cite{chuang1997prescription,poyatos1997complete} consider protocols that prepare input states from a complete set of basis states, perform state tomography on the corresponding output states, and reconstruct the channel using an inversion protocol. Later, \cite{leung2000towards,d2001quantum} use the Choi-Jamio{\l}kowski isomorphism~\cite{choi1975completely,jamiolkowski1972linear} to reduce channel tomography to Choi state tomography. These approaches, together with the direct scheme of \cite{Mohseni2006Oct}, are compared in terms of resource requirements in \cite{mohseni2008quantum}. 

Subsequent works \cite{kahn2007fast,kliesch2019guaranteed,surawy2022projected,haah2023query,Oufkir_2023,rosenthal2024quantum,Girardi2025Dec} focus on obtaining rigorous bounds on the query complexity with explicit dependence on the estimation error. For example, \cite{kliesch2019guaranteed,surawy2022projected} propose randomized protocols for channel tomography and establish query complexity bounds under the Choi-Frobenius norm and the Choi trace norm errors. One of the protocols in \cite{surawy2022projected} is generalized by \cite{Oufkir_2023}, which shows that it achieves near-optimal query complexity for full-Kraus-rank channel tomography under the diamond norm error in the non-adaptive single-copy setting.
Complementing these upper bounds, \cite{rosenthal2024quantum} establishes a query lower bound of $\Omega(d_1^2d_2^2/\log(d_1d_2))$ for full-Kraus-rank (i.e., $r=d_1d_2$) channel tomography under constant Choi trace norm error. Moreover, \cite{Girardi2025Dec} proves a query lower bound of $\Omega(rd_1d_2)$ for channel tomography under constant diamond norm error.

\paragraph{Unitary channel tomography.}
Unitary channel tomography is a special case of general channel tomography where the input and output dimensions are equal and the Kraus rank is $1$. It is studied under different assumptions, models, and error metrics \cite{Acin2001Oct,Peres2002Jul,Bagan2004Sep,Chiribella2004Oct,Hayashi2006May,Chiribella2005Oct,kahn2007fast,yang2020optimal,haah2023query,grewal2025query,yoshida2025quantum}. In particular, \cite{yang2020optimal} and \cite{yoshida2025quantum} study tomography of unitary and isometry channels under average-case channel fidelity; \cite{haah2023query} establishes the optimal query complexity $\Theta(d^2/\varepsilon)$ for unitary channel tomography under the diamond norm.
Moreover, the lower bound in \cite{yoshida2025quantum} with the upper bound in \cite{haah2023query} establishes that pure Heisenberg scaling is impossible for isometry channel tomography, except in the unitary case.

On the upper bound side, our result is based on a reduction from general channel tomography to isometry channel tomography, building on the base unitary tomography algorithm in~\cite{haah2023query} that makes parallel queries.
Through a reduction to the algorithm in~\cite{yang2020optimal}, we further show that Heisenberg scaling can be achieved even for non-unitary channels. 
We also leverage a reduction to the algorithm in~\cite{yoshida2025quantum} to obtain upper bounds for channel tomography under the Choi trace norm error in the near-boundary regime.
On the lower bound side, new packing-net constructions are needed to capture the full dependence on the channel dimensions, Kraus rank, and error. In contrast to \cite{haah2023query}, our packing nets have additional structure and leverage symmetry arguments. 

\paragraph{Quantum state tomography.}
Quantum state tomography (or more generally, quantum state learning~\cite{arunachalam2017guest,anshu2023survey}) is a special case of quantum channel tomography where the input dimension is $1$. The optimal sample complexity of $d$-dimensional rank-$r$ quantum state tomography under the trace norm error is established as $\tilde{\Theta}(dr/\varepsilon^2)$ in \cite{10.1145/2897518.2897544,Haah_2017}, and the logarithmic factor is removed only recently by \cite{scharnhorst2025optimal}. As noted above, quantum state tomography algorithms can be used to perform channel tomography via various reductions. In particular, pure state tomography \cite{CL14,KRT17,GKKT20} is a key ingredient in the optimal algorithm for unitary channel tomography in~\cite{haah2023query}. Moreover, using a different proof strategy, our results recover the upper bound of \cite{10.1145/2897518.2897544,Haah_2017} and the recently established lower bound of \cite{scharnhorst2025optimal} as special cases.

\paragraph{Independent work.}
Mele and Bittel~\cite{mele2025optimallearningquantumchannels} established the same upper bound $O(rd_1d_2/\varepsilon^2)$ for quantum channel tomography under diamond norm error. 
Their upper bound is concurrent and independent to Corollary 3.4 in the earlier paper~\cite{chen2025quantum}, which is subsumed by the present paper and restated here as \cref{coro-12120134}.
They also derived an explicit and non-trivial dependence on the failure probability.
We note that their method and ours differ substantially.
Specifically, they obtain the upper bound through an analysis of the tomography of Choi states, which leads to an explicit tomography algorithm.
In contrast, our approach uses the local test technique to simulate access to dilations of quantum channels, allowing us to reduce general channel tomography to a more tractable problem of isometry channel tomography.
Moreover, we establish additional upper bounds achieving Heisenberg scaling $1/\varepsilon$ for (possibly non-unitary) channels in the boundary regime $\tau=1$.

They also established a query lower bound. When the failure probability is constant, their lower bound is $\Omega\parens*{\frac{rd_1d_2}{\varepsilon^{\beta}} + \frac{m\log (3m)}{\varepsilon^2}}$ for diamond norm error $\varepsilon$, where $\beta=\frac{2rd_1d_2-d_1^2-r^2}{2(rd_1d_2-1)}$ and 
\begin{equation*}
    m=\max\braces*{\min\braces*{d_1, \floor*{\frac{r-\ceil*{d_1/d_2}}{2}}}, \,\,\mathbbm{1}_{r\geq \ceil*{d_1/d_2} + 1}, \,\, d_1 \cdot \mathbbm{1}_{r\geq 2 d_1}}.
\end{equation*}
We can see that $\beta\in (\frac{3}{8}, 1]$ and $m\in [0,d_1]$. 
Moreover, when $r \ge \lceil d_1/d_2 \rceil + 1$,\footnote{That is, just above the minimal Kraus rank $\lceil d_1/d_2 \rceil$ that a valid quantum channel can have, given the input dimension $d_1$ and output dimension $d_2$.} it follows that $m \ge 1$, thereby showing that pure Heisenberg scaling is unattainable for any quantum channel whose Kraus rank is not minimal.
However, when $d_1$ is not a multiple of $d_2$ and $r= \lceil d_1/d_2\rceil$ is minimal, their lower bound does not rule out Heisenberg scaling, although this case still lies in the non-boundary regime ($\tau>1$).
By comparison, our lower bounds rule out pure Heisenberg scaling across the entire non-boundary regime, and 
exhibit sharper parameter dependence
in all regimes (in particular, our lower bounds are optimal in the away-from-boundary regime).



\subsection{Discussion}
\label{sub:discussion}

In this paper, we study the query complexity of quantum channel tomography.
Our results answer \Cref{ques:main} as follows.
We identify the dilation rate $\tau = rd_2/d_1$ as a key parameter.
We show that when $\tau=1$, the optimal query complexity exhibits Heisenberg scaling $1/\varepsilon$; whereas for $\tau>1$, it no longer has Heisenberg scaling due to our lower bounds.
Moreover, we establish multiple optimal bounds in the boundary regime ($\tau=1$) and the away-from-boundary regime ($\tau \geq 1+\Omega(1)$), as well as new upper and lower bounds in the near-boundary regime ($1< \tau < 1+ o(1)$).
We conclude by highlighting several directions for future work.

A natural open problem is to determine the optimal query complexity of quantum channel tomography across all parameter regimes. This includes sharpening the upper and lower bounds in the near-boundary regime, as well as improving the bounds in the boundary regime for diamond norm error.

A second question, closely related to the first, is whether \Cref{thm-2150136} can be extended to sequential testers.
More precisely, suppose a tester that makes an arbitrary sequence of queries, not necessarily in parallel, to an arbitrary dilation of an unknown quantum channel $\calE$. Can one then construct another tester that makes the same number of queries to $\calE$ itself and solves the same estimation task? 

\paragraph{Recent developments.}
After the work in this paper, there have been several recent developments.
The second version of \cite{yoshida2025random} gives an approximate sequential dilation strategy with query overhead and also proves a no-go theorem for exact sequential dilation.
\cite{he2026optimal} gives a unitary tomography algorithm in diamond norm using $O(d^2/\varepsilon)$ parallel queries. This, combined with \cref{thm-2150136}, yields a query upper bound $O(rd_1d_2/\varepsilon)$ for diamond-norm boundary-regime channel tomography, matching our lower bound $\Omega(rd_1d_2/\varepsilon)$ in \cref{thm-2150126}.

\subsection{Organization}

The rest of the paper is organized as follows.
\begin{itemize}
\item In \cref{sec-770122}, we introduce the necessary preliminaries and background of the techniques used in this paper.
\item In \cref{sec-770126}, we use local test techniques to prove that dilation does not help in quantum channel tomography and estimation. 
This can reduce channel tomography to isometry tomography and thus be combined with the isometry tomography algorithms in \cref{sec:isometry_channel_tomography} to obtain our upper bounds.
\item In \cref{sec-3200450}, we define two classes of isometries, called type I and type II isometries, and prove Heisenberg-scaling and classical-scaling hardness results for the corresponding discrimination tasks, respectively.
\item In \cref{sec-3290355}, we construct sufficiently large packing nets of type I and type II isometries such that, even after tracing the ancilla system, the resulting channels remain well separated in Choi trace norm or diamond norm.
\item In \cref{sec-3200555}, we combine the hardness results from \cref{sec-3200450} and the packing nets from \cref{sec-3290355} to obtain our lower bounds.
\item In \cref{sec-770141}, we prove some technical lemmas deferred from earlier sections.
\end{itemize}

\section{Preliminaries}\label{sec-770122}

\paragraph{Notation.}
Given a Hilbert space $\mathcal{H}$, we denote by $\mathcal{L}(\mathcal{H})$ the set of linear operators on $\mathcal{H}$. Similarly, we denote by $\mathcal{L}(\mathcal{H}_0,\mathcal{H}_1)$ the set of linear operators from $\mathcal{H}_0$ to $\mathcal{H}_1$. Given two orthonormal bases for $\mathcal{H}_0$ and $\mathcal{H}_1$, respectively, we can represent every linear operator in $\mathcal{L}(\mathcal{H}_0,\mathcal{H}_1)$ by a $\dim(\mathcal{H}_1)\times \dim(\mathcal{H}_0)$ matrix, and, for such a matrix \(X\), we use \(\kett{X}\in \mathcal{H}_1\otimes\mathcal{H}_0\) to denote the vector obtained by flattening the matrix $X$. It is easy to see the following properties:
\[\kett{\ketbra{\psi}{\phi}}=\ket{\psi}\ket{\phi^*}, \quad\quad\quad \kett{XYZ}=X\otimes Z^\textup{T} \kett{Y},\]
where \(\ket{\phi^*}\) denotes the entry-wise complex conjugate of \(\ket{\phi}\)  and $Z^\textup{T}$ denotes the transpose of the matrix $Z$ with respect to a given orthonormal basis. 
The inner product can be rewritten as \(\bbrakett{X}{Y}=\tr(X^\dag Y)\).
For two  Hermitian operators $X,Y$, we write $X\leq Y$ to denote that $Y-X$ is positive semidefinite.

Let $n, m, d$ be positive integers such that $n\geq m$. Let $\mathcal{H}_1\cong\cdots\cong \mathcal{H}_n\cong\mathbb{C}^{d}$ be $n$ copies of a $d$-dimensional Hilbert space $\mathcal{H}$. 
Let $S\subseteq [n]=\{1,2,\ldots,n\}$ be a set of integers and let $\ket{\psi}\in\mathbb{C}^d$ be a state. We denote by
\[\ket{\psi}^{\otimes S}\]
the state $\ket{\psi}^{\otimes |S|}\in\bigotimes_{i\in S} \mathcal{H}_i$.
Therefore, if $\ket{\varphi}\in \mathbb{C}^d$ is another state, then we call
\[\ket{\psi}^{\otimes S} \otimes\ket{\varphi}^{\otimes [n]\setminus S}\]
the state $\bigotimes_{i=1}^n \ket{x_i}\in\bigotimes_{i=1}^n \mathcal{H}_i$, where $\ket{x_i}=\ket{\psi}$ for $i\in S$, and $\ket{x_i}=\ket{\varphi}$ for $i\notin S$.

\begin{table}[t]
    \begin{tabularx}{\textwidth}{p{0.19\textwidth}X>{\raggedleft\arraybackslash}l} 
    \toprule
    Symbol & Description & Introduced in \\
    \midrule
      $C_{\mathcal{E}}$ & Choi-Jamio{\l}kowski operator of the channel $\mathcal{E}$, (i.e., unnormalized Choi-Jamio{\l}kowski state). & \cref{eq:Choi}\\
      $\qchannel_{d_1,d_2}^r$ & Set of all quantum channels $\mathcal{E}:\mathcal{L}(\mathbb{C}^{d_1})\rightarrow \mathcal{L}(\mathbb{C}^{d_2})$ that have Kraus rank at most $r$. & \cref{not:notationQChan}\\ 
      $\isochannel_{d_1,d_2}$ & Set of isometry channels with input dimension $d_1$ and output dimension $d_2$, which is equivalent to $\qchannel_{d_1,d_2}^1$ & \textit{Ibid.}\\
      $\dilation_{r}(\mathcal{E})$ & Set of all dilations of the quantum channel $\mathcal{E}$ with an ancilla system of dimension $r$. & \cref{not:dilations} \\
      $\contract_{\mathcal{H}_3}(\mathcal{V})$ & Quantum channel defined as the contraction \mbox{$\rho\mapsto \tr_{\mathcal{H}_3}(V\rho V^\dag)$}, where $V:\mathcal{H}_1\rightarrow\mathcal{H}_2\otimes\mathcal{H}_3$ \mbox{is an isometry.} & \textit{Ibid.}\\
      $\mathcal{V}\sim\dilation_r(\mathcal{E})$ & The dilation $\mathcal{V}$ is sampled from Haar distributions on $\dilation_r(\mathcal{E})$. & \cref{not:Haar}\\
      $U\sim \mathbb{U}_d$ & The unitary $U$ is sampled from Haar distributions on $\mathbb{U}_d$. & \textit{Ibid.} \\
      $X\star Y$ & Link product between $X$ and $Y$. & \cref{def-720255} \\
      $\{T_i\}_i$ & Quantum channel tester. & \cref{sub:tester} \\
      $\{A_{\mathcal{E}}\}_{\mathcal{E}\in\qchannel_{d_1,d_2}^r}$ & Estimation task of quantum channels, i.e., set of classical outcomes that are regarded as correct answers when the unknown channel is $\mathcal{E}$. & \cref{not:est_task}\\
      \bottomrule
     \end{tabularx}
         \caption{Summary of the notation used in the paper.}
    \label{table1}
\end{table}

\subsection{Quantum channels}

A \emph{quantum channel} with input dimension $d_1$ and output dimension $d_2$ is a linear, completely positive and trace-preserving map $\mathcal{E}:\mathcal{L}(\mathbb{C}^{d_1})\rightarrow\mathcal{L}(\mathbb{C}^{d_2})$ (see, e.g., \cite{NC10,watrous2018theory,hayashi2017quantum}), which is also called a CPTP map. 

Any quantum channel $\mathcal{E}$, in the Kraus representation~\cite{kraus1983states}, can be written as
\[\mathcal{E}(\rho)=\sum_{i=1}^r E_i \rho E_i^\dag,\]
where $E_i: \mathbb{C}^{d_1}\rightarrow\mathbb{C}^{d_2}$ are non-zero linear operators satisfying
\[\sum_{i=1}^r E_i^\dag E_i=I_{d_1} \qquad\text{and}\qquad \tr(E_i^\dag E_j)=0 \quad \text{for} \quad i\neq j.\]
The integer $r$ is called \emph{Kraus rank} and $\{E_i\}$ are called \emph{Kraus operators}.
Since, for each $i$, $E_i^\dag E_i$ has rank at most $d_2$, and since these operators sum to $I_{d_1}$, one can easily see that $rd_2\geq d_1$.
In particular, a quantum channel having Kraus rank $r=1$ is an isometry $\mathcal{V}(\,\cdot\,)=V\,\cdot\, V^\dag$, where $V:\mathbb{C}^{d_1}\rightarrow\mathbb{C}^{d_2}$ is an isometry operator -- i.e. it satisfies $V^\dag V=I_{d_1}$ -- and it must hold that $d_2\geq d_1$.

\begin{notation}\label{not:notationQChan}
We denote by $\qchannel_{d_1,d_2}^r$ the set of all quantum channels $\mathcal{E}:\mathcal{L}(\mathbb{C}^{d_1})\rightarrow \mathcal{L}(\mathbb{C}^{d_2})$ having Kraus rank at most $r$.
In particular, we denote by $\isochannel_{d_1,d_2}$ the set of all isometry channels with input dimension $d_1$ and output dimension $d_2$, which is corresponds to $\qchannel_{d_1,d_2}^1$. 
\end{notation}

In the Choi-Jamio{\l}kowski representation~\cite{choi1975completely,jamiolkowski1972linear}, any channel $\mathcal{E}$ can be identified by its Choi-Jamio{\l}kowski operator as follows:
\begin{equation}\label{eq:Choi}
    C_\mathcal{E}=(\mathcal{E}\otimes \mathcal{I})(\kettbbra{I}{I})\in\mathcal{L}(\mathbb{C}^{d_2}\otimes\mathbb{C}^{d_1}),
\end{equation}
where we have denoted by $\kett{I}=\sum\limits_{i}\ket{i}\ket{i}\in \mathbb{C}^{d_1}\otimes \mathbb{C}^{d_1}$ an unnormalized maximally entangled state. We may simply call $C_\mathcal{E}$ Choi operator or (unnormalized) Choi state. 
Note that it it possible to write the Choi operator as $C_\mathcal{E}=\sum_{i=1}^r\kettbbra{E_i}{E_i}$, where $E_i$ are orthogonal Kraus operators and thus $\kett{E_i}$ are pairwise orthogonal vectors. Hence, the Kraus rank is equal to the rank of the Choi operator. As a consequence, one can easily see that $r\leq d_1d_2$.

\paragraph{Stinespring dilation.}
Given a quantum channel $\mathcal{E}$ with Kraus operators $\{E_i\}_{i=1}^r$, using its Stinespring dilation~\cite{stinespring1955positive}, we can also write $\mathcal{E}$ as
\begin{equation}\label{eq-1230155}
\mathcal{E}(\rho)=\tr_{\mathcal{H}_\mathrm{anc}}(V\rho V^\dag),
\end{equation}
where $\mathcal{H}_\mathrm{anc}\cong \mathbb{C}^{r}$ is an ancilla system and $V=\sum_{i=1}^r\ket{i}_\mathrm{anc}\otimes E_i$ is an isometry operator. Any isometry channel \mbox{$\mathcal{V}(\,\cdot\,)=V\,\cdot\,V^\dag$} satisfying \cref{eq-1230155} is called a dilation of $\mathcal{E}$. 
Two isometries $\mathcal{V}_1$ and $\mathcal{V}_2$ are two dilations of the same channel $\mathcal{E}$ if and only if they differ by a unitary on $\mathcal{H}_\mathrm{anc}$, namely, $V_2=(U\otimes I_{d_2}) V_1$ for a unitary $U:\mathcal{H}_\mathrm{anc}\rightarrow\mathcal{H}_\mathrm{anc}$.

\begin{notation}\label{not:dilations}
Given a quantum channel $\mathcal{E}$ with Kraus rank at most $r$, we denote by $\dilation_{r}(\mathcal{E})$ the set of all dilations of $\mathcal{E}$ with an ancilla system of dimension $r$. 
Given an isometry channel $\mathcal{V}:\mathcal{L}(\mathcal{H}_1)\rightarrow\mathcal{L}(\mathcal{H}_2\otimes\mathcal{H}_3)$, we denote by $\contract_{\mathcal{H}_3}(\mathcal{V})$ the quantum channel
\[\rho\mapsto \tr_{\mathcal{H}_3}(V\rho V^\dag).\]
\end{notation}

\paragraph{Haar distribution.}


Given any arbitrary quantum channel $\mathcal{E}\in\qchannel_{d_1,d_2}^r$, we define the Haar distribution on $\dilation_r(\mathcal{E})$ as follows: choose an arbitrary dilation $\mathcal{V}\in\dilation_r(\mathcal{E})$, and output $(\mathcal{U}\otimes \mathcal{I}_{d_2})\circ\mathcal{V}$, where $U \in\mathbb{U}_r$ is a Haar random unitary.
This procedure is well defined since the resulting distribution does not depend on the choice of the initial dilation $\mathcal{V}$. It is easy to see that such distribution is invariant under $\mathbb{U}_{r}$, i.e.
\[\Pr[A]=\Pr[\{(\mathcal{U}\otimes \mathcal{I}_{d_2})\circ\mathcal{V}\,|\, \mathcal{V}\in A\}],\]
for any unitary $U\in\mathbb{U}_r$ and any measurable set $A\subseteq \dilation_r(\mathcal{E})$.

\begin{notation}\label{not:Haar}
We denote by $\mathcal{V}\sim\dilation_r(\mathcal{E})$ and $U\sim \mathbb{U}_d$ two random variables $\mathcal{V}$ and $U$ that are sampled from the Haar distributions on $\dilation_r(\mathcal{E})$ and $\mathbb{U}_d$, respectively.
\end{notation}

\paragraph{Distance measures.}
In this paper, we are going to use the Choi trace norm and the diamond norm~\cite{AKN98,watrous2018theory} as the \textit{average-case} and \textit{worst-case} measures between quantum channels, respectively.
\begin{definition}\label{def-3301809}
Let $\mathcal{E},\mathcal{F}\in\qchannel_{d_1,d_2}^r$ be two quantum channels.
We define the \textit{Choi trace norm} between $\mathcal{E}$ and $\mathcal{F}$ as
\[\frac{1}{d_1}\|C_\mathcal{E}-C_\mathcal{F}\|_1,\]
where $\|\cdot\|_1$ denotes the Schatten $1$-norm (i.e. trace norm) and $C_{\mathcal{E}}$ is the (unnormalized) Choi state of $\mathcal{E}$.
We define the \textit{diamond norm} between $\mathcal{E}$ and $\mathcal{F}$ as
\[\|\mathcal{E}-\mathcal{F}\|_\diamond\coloneqq \sup_{\rho}\|(\mathcal{E}\otimes \mathcal{I}_{d_1})(\rho)-(\mathcal{F}\otimes\mathcal{I}_{d_1})(\rho)\|_1,\]
where the supremum is taken over all quantum states $\rho\in \mathbb{C}^{d_1}\otimes\mathbb{C}^{d_1}$.
\end{definition}

A simple relation between the diamond norm of quantum channels and trace norm of the corresponding Choi-Jamio{\l}kowski operators is the following: given two quantum channels $\mathcal{E},\mathcal{F}\in \qchannel_{d_1,d_2}^r$, we have (see, e.g., \cite[Proposition 50]{kliesch2021theory}):
\begin{align}
\frac{1}{d_1}\|C_\mathcal{E}-C_\mathcal{F}\|_1\leq \|\mathcal{E}-\mathcal{F}\|_\diamond 
\leq \|C_\mathcal{E}-C_\mathcal{F}\|_1.\label{eq-3302201}
\end{align}
As an average-case measure, the Choi trace norm is widely used in learning of quantum channels~\cite{kliesch2019guaranteed,surawy2022projected} and also in other tasks in quantum information theory~\cite{gs007,kliesch2021theory,rosenthal2024quantum}.
Moreover, we will also consider the \textit{channel fidelity}~\cite{raginsky2001fidelity} (or \textit{entanglement fidelity}~\cite{nielsen2002simple}). Specifically, given two quantum channels $\mathcal{E},\mathcal{F}\in \qchannel_{d_1,d_2}^r$, their channel fidelity $\mathrm{F}_\textup{ch}$ is defined as the fidelity of their normalized Choi states:
\[\mathrm{F}_\textup{ch}(\mathcal{E},\mathcal{F})\coloneqq \mathrm{F}\!\left(\frac{C_\mathcal{E}}{d_1},\frac{C_\mathcal{F}}{d_1}\right),\]
where $\mathrm{F}$ denotes the fidelity between quantum states.
Note that channel fidelity is closely related to the Choi trace norm by a straightforward lifting of the relation between fidelity and trace norm for quantum states.

\subsection{Quantum combs}\label{sec-6290132}
The framework of quantum combs~\cite{chiribella2008quantum,chiribella2009theoretical} provide a powerful description of higher-order transformations of quantum processes. More specifically, the Choi-Jamio{\l}kowski representation of quantum channels, which describes transformations of quantum states, can be extended to the higher-order setting of transformations of quantum processes. Such higher-order objects are called \textit{quantum combs}. More precisely, deterministic and probabilistic quantum combs are defined as follows.

\begin{definition}[Deterministic comb~\cite{chiribella2009theoretical}]\label{def-681501}
Let $n\geq 1$ be an integer. A deterministic $n$-comb, defined on a sequence of $2n$ Hilbert spaces $(\mathcal{H}_0,\mathcal{H}_1,\ldots,\mathcal{H}_{2n-1})$, is a positive semidefinite operator $X$ on $\bigotimes_{j=0}^{2n-1} \mathcal{H}_{j}$ such that there exists a sequence of operators $X^{(n)}, X^{(n-1)},\ldots, X^{(1)}, X^{(0)}$ satisfying
\begin{equation}\label{eq-681546}
\begin{split}
\tr_{\mathcal{H}_{2j-1}}\!\left(X^{(j)}\right)&=I_{\mathcal{H}_{2j-2}}\otimes X^{(j-1)},\quad 1\leq j \leq n,  
\end{split}
\end{equation}
where $X^{(n)}=X$ and $X^{(0)}=1$.
\end{definition}
\begin{definition}[Probabilistic comb~\cite{chiribella2009theoretical}]
Let $n\geq 1$ be an integer. A probabilistic $n$-comb, defined on a sequence of $2n$ Hilbert spaces $(\mathcal{H}_0,\mathcal{H}_1,\ldots,\mathcal{H}_{2n-1})$, is a positive semidefinite operator $X$ on $\bigotimes_{j=0}^{2n-1}\mathcal{H}_j$ such that $X\leq Y$ for some deterministic $n$-comb $Y$ on $(\mathcal{H}_0,\mathcal{H}_1,\ldots,\mathcal{H}_{2n-1})$.
\end{definition}
We can easily see the following fact.
\begin{proposition}\label{prop-4290313}
Let $n\geq 1$ be an integer. A positive semidefinite operator $X$ is a probabilistic $n$-comb on $(\mathcal{H}_0,\mathcal{H}_1,\ldots,\mathcal{H}_{2n-1})$ if and only if there exists a probabilistic $(n-1)$-comb $Y$ on $(\mathcal{H}_0,\mathcal{H}_1,\ldots,\mathcal{H}_{2n-3})$ such that
$\tr_{\mathcal{H}_{2n-1}}\!\left(X\right)\leq I_{\mathcal{H}_{2n-2}}\otimes Y$.
\end{proposition}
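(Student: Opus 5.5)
Both directions reduce to comparing the partial-trace conditions directly.

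For the forward direction, let $X$ be a probabilistic $n$-comb. By definition there is a deterministic $n$-comb $Z$ on $(\mathcal{H}_0,\ldots,\mathcal{H}_{2n-1})$ with $X\le Z$. Let $Z^{(n)}=Z,Z^{(n-1)},\ldots,Z^{(0)}=1$ be the operators witnessing that $Z$ is deterministic (\cref{def-681501}). Partial trace is positive, so $X\le Z$ gives
\[
\tr_{\mathcal{H}_{2n-1}}(X)\le \tr_{\mathcal{H}_{2n-1}}(Z)=I_{\mathcal{H}_{2n-2}}\otimes Z^{(n-1)} .
\]
The truncated chain $Z^{(n-1)},\ldots,Z^{(0)}$ satisfies \cref{eq-681546} for $1\le j\le n-1$, so $Z^{(n-1)}$ is a deterministic $(n-1)$-comb on $(\mathcal{H}_0,\ldots,\mathcal{H}_{2n-3})$. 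It is therefore also a probabilistic one (take $Y\le Y$). Set $Y=Z^{(n-1)}$. Positivity of $Z^{(n-1)}$ follows from it being a partial trace of the positive operator $Z$ and then factoring off the identity.

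For the converse, suppose $\tr_{\mathcal{H}_{2n-1}}(X)\le I_{\mathcal{H}_{2n-2}}\otimes Y$, where $Y\le W$ for a deterministic $(n-1)$-comb $W$. Then $\tr_{\mathcal{H}_{2n-1}}(X)\le I\otimes W$, so it suffices to find a deterministic $n$-comb $Z\ge X$ with $\tr_{\mathcal{H}_{2n-1}}(Z)=I\otimes W$. This is the only nontrivial step: we must pad $X$ by a positive operator so that its partial trace hits $I\otimes W$ exactly.

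Write $D=I_{\mathcal{H}_{2n-2}}\otimes W-\tr_{\mathcal{H}_{2n-1}}(X)\ge0$ and set
\[
Z=X+D\otimes \frac{I_{\mathcal{H}_{2n-1}}}{\dim\mathcal{H}_{2n-1}},
\]
with the tensor factors reordered appropriately. Then $Z\ge X\ge0$, and $\tr_{\mathcal{H}_{2n-1}}(Z)=\tr_{\mathcal{H}_{2n-1}}(X)+D=I\otimes W$. Combining this with the chain for $W$ shows $Z$ is a deterministic $n$-comb, with $Z^{(n-1)}=W$. Hence $X\le Z$ is a probabilistic $n$-comb.

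For the base case $n=1$, interpret the $0$-comb as the scalar $1$. Then "probabilistic $0$-comb" means $0\le Y\le 1$, and the same two arguments go through verbatim with $W=1$.
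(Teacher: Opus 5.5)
Your proof is correct, and it is exactly the routine verification the paper has in mind when it states this fact without proof. The forward direction uses positivity of the partial trace plus truncation of the witnessing chain, and the converse pads $X$ with $D\otimes I_{\mathcal{H}_{2n-1}}/\dim\mathcal{H}_{2n-1}$ so that its partial trace is exactly $I_{\mathcal{H}_{2n-2}}\otimes W$; your reading of a probabilistic $0$-comb as a scalar in $[0,1]$ also correctly fixes the $n=1$ convention the paper leaves implicit.
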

\begin{remark}
In this paper, quantum comb refers to a deterministic comb by default, and the term probabilistic comb will be used explicitly when needed.
\end{remark}
It is easy to see verify following facts:
\begin{itemize}
    \item a quantum $1$-comb is the Choi-Jamio{\l}kowski operator of a quantum channel;
    \item any convex combination of quantum $n$-combs is also a quantum $n$-comb.
\end{itemize}

Now, let us introduce the link product ``$\star$''.
\begin{definition}[Link product ``$\star$''~\cite{chiribella2008quantum,chiribella2009theoretical}]
\label{def-720255}
Let $X$ be a linear operator on $\mathcal{H}_{\bm{i}}=\mathcal{H}_{i_1}\otimes\mathcal{H}_{i_2}\otimes\cdots\otimes\mathcal{H}_{i_{n}}$ and let $Y$ be a linear operator on $\mathcal{H}_{\bm{j}}=\mathcal{H}_{j_1}\otimes\mathcal{H}_{j_2}\otimes\cdots\otimes\mathcal{H}_{j_{m}}$,
where $\bm{i}=(i_1,\ldots,i_n)$ is a sequence of pairwise distinct indices, and likewise for $\bm{j}=(j_1,\ldots,j_m)$.
Let $\bm{a}=\bm{i}\cap\bm{j}$ be the set of indices appearing in both $\bm{i}$ and $\bm{j}$ and $\bm{b}=\bm{i}\cup\bm{j}$ be the set of indices appearing in either $\bm{i}$ or $\bm{j}$.
Then, the combination of $X$ and $Y$ is
\[X\star Y\coloneqq \tr_{\mathcal{H}_{\bm{a}}}\!\left(X^{\textup{T}_{\mathcal{H}_{\bm{a}}}} \cdot Y\right)=\tr_{\mathcal{H}_{\bm{a}}}\!\left(X\cdot Y^{\textup{T}_{\mathcal{H}_{{\bm{a}}}}}\right),\]
where $\mathcal{H}_{\bm{a}}$ denotes the tensor product of subsystems labeled by the indices in $\bm{a}$, $\textup{T}_{\mathcal{H}_{\bm{a}}}$ denotes the partial transpose on $\mathcal{H}_{\bm{a}}$, both $X$ and $Y$ are treated as linear operators on $\mathcal{H}_{\bm{b}}$, extended by tensoring with the identity operator as needed.
\end{definition}

\begin{figure}[t]
    \centering
    \includegraphics[width=0.6\linewidth]{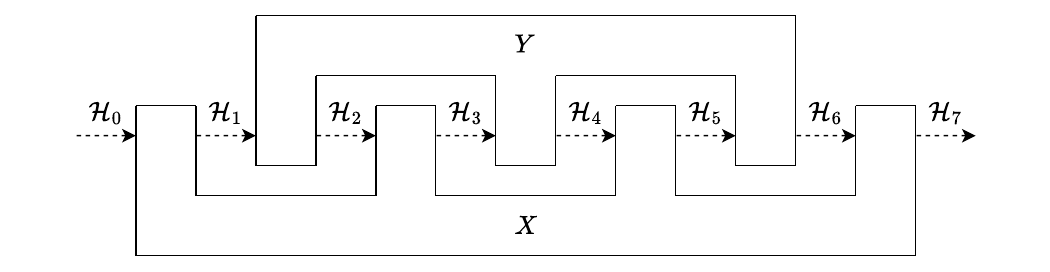}
    \caption{The combination of a $4$-comb $X$ with a $3$-comb $Y$, yielding a $1$-comb $X\star Y$ on $(\mathcal{H}_0,\mathcal{H}_7)$.}
    \label{fig-6210015}
\end{figure}

The link product provides the mathematical description of the combination of quantum combs. For instance, suppose $X$ is an $n$-comb on $(\mathcal{H}_0,\mathcal{H}_1,\ldots,\mathcal{H}_{2n-1})$ and $Y$ is an $(n-1)$-comb on $(\mathcal{H}_1,\mathcal{H}_2\,\ldots,\mathcal{H}_{2n-2})$. Then,
\begin{align}
X\star Y= \tr_{\mathcal{H}_{1:2n-2}}\!\left(X^{\textup{T}_{\mathcal{H}_{1:2n-2}}}\cdot (I_{\mathcal{H}_{2n-1}}\otimes Y \otimes I_{\mathcal{H}_0})\right)=\tr_{\mathcal{H}_{1:2n-2}}\!\left(X\cdot (I_{\mathcal{H}_{2n-1}}\otimes Y^{\textup{T}} \otimes I_{\mathcal{H}_0})\right)\nonumber
\end{align}
turns out to be a $1$-comb on $(\mathcal{H}_0,\mathcal{H}_{2n-1})$, as in the example illustrated in \cref{fig-6210015}.
The link product also has many useful properties:
\begin{itemize}
    \item it preserves the L\"owner order: if $X,Y\ge 0$ then $X\star Y\ge 0$~\cite[Theorem 2]{chiribella2009theoretical};
    \item it is commutative, namely $X\star Y=Y \star X$, and associative, that is, $(X\star Y)\star Z=X\star (Y\star Z)$, whenever $X,Y,Z$ do not share a common subsystem (i.e., there is no subsystem that is a subsystem of all three).
\end{itemize}

Moreover, the link product characterizes the channel concatenation under the Choi representation. Namely, given two quantum channels $\mathcal{E}_1:\mathcal{L}(\mathcal{H}_1)\rightarrow\mathcal{L}(\mathcal{H}_2)$ and $\mathcal{E}_2:\mathcal{L}(\mathcal{H}_2)\rightarrow\mathcal{L}(\mathcal{H}_3)$, we have 
$C_{\mathcal{E}_2\circ \mathcal{E}_1}=C_{\mathcal{E}_2}\star C_{\mathcal{E}_1}$, where $C_\mathcal{E}$ is the Choi operator of $\mathcal{E}$.

\subsection{Formalism of quantum channel testers}
\label{sub:tester}

A \textit{quantum channel tester} is any quantum algorithm that can make multiple queries to an unknown quantum channel in order to produce a classical output.
We are going to use the quantum tester formalism based on Choi-Jamio{\l}kowski representation (see, e.g., \cite{chiribella2009theoretical,bavaresco2021strict,bavaresco2022unitary}), which provides a practical framework in order to study various classes of quantum testers, such as parallel and sequential ones.

Consider a quantum channel tester using $n$ queries to an unknown quantum channel $\mathcal{E}$. 
Let us label the input and output systems of the $i$-th query to $\mathcal{E}$ as $\mathcal{H}_{\mathrm{A},i}$ and $\mathcal{H}_{\mathrm{B},i}$, i.e.\ the $i$-th copy of the unknown channel is a linear map from $\mathcal{L}(\mathcal{H}_{\mathrm{A},i})$ to $\mathcal{L}(\mathcal{H}_{\mathrm{B},i})$.

\paragraph{Parallel testers.}
A parallel tester can only make parallel queries. 
Specifically, it prepares a multipartite input state, possibly including auxiliary systems, and applies the unknown channel in parallel to its subsystems, ensuring that the output of any use never interacts with the inputs of the others. Eventually, after all channel uses, a single joint measurement is performed on the combined output state.

\begin{definition}[Parallel tester]\label{def:parallel_tester}
A parallel tester is defined as any set of linear operators $\{T_i\}_i$, with $T_i\in \mathcal{L}(\bigotimes_{j=1}^n \mathcal{H}_{\mathrm{A},j}\otimes \mathcal{H}_{\mathrm{B},j})$, such that $T_i\geq 0$ and $\sum_{i} T_i=\rho_{\mathrm{A}}\otimes I_{\mathrm{B}}$, where $I_{\mathrm{B}}$ is the identity operator on $\bigotimes_{j=1}^n\mathcal{H}_{\mathrm{B},j}$, and $ \rho_{\mathrm{A}}$ is a positive semidefinite operator on $\bigotimes_{j=1}^n\mathcal{H}_{\mathrm{A},j}$ with $\tr(\rho_\mathrm{A})=1$.
\end{definition}

When applying a parallel tester $\{T_i\}_i$ to a quantum channel $\mathcal{E}$, we get the classical outcome $i$ with probability 
\begin{equation}\label{eq-1211852}
p_i=T_i\star C_\mathcal{E}^{\otimes n}= \tr(T_i (C_{\mathcal{E}}^{\otimes n})^\mathrm{T})=\tr(T_i^\mathrm{T} C_{\mathcal{E}}^{\otimes n}),
\end{equation}
where $C_\mathcal{E}^{\otimes n}\in\mathcal{L}(\bigotimes_{j=1}^n \mathcal{H}_{\mathrm{A},j}\otimes\mathcal{H}_{\mathrm{B},j})$
denotes the Choi operator of all $n$ queries to the channel $\mathcal{E}$ and $(    \,\cdot \,)^{\mathrm{T}}$ is matrix transposition.

To see how the parallel tester $\{T_i\}_i$ can be realized by an algorithm that makes queries in parallel, we consider the following procedure.
\begin{itemize}
    \item Assume $\sum_i T_i=\rho_\mathrm{A}\otimes I_{\mathrm{B}}$; prepare a quantum state $(\sqrt{\rho_\mathrm{A}}^\textup{T}\otimes I_{\mathrm{A}})\kett{I_{\mathrm{A}}}$ in $\bigotimes_{j=1}^n\mathcal{H}_{\mathrm{A},j}\otimes\bigotimes_{j=1}^n\mathcal{H}_{\mathrm{A},j}$. This is indeed a valid quantum state, since
    $\bbra{I_{\mathrm{A}}} (\rho_\mathrm{A}^\mathrm{T}\otimes I_{\mathrm{A}})\kett{I_{\mathrm{A}}}=\tr(\rho_\mathrm{A}^{\textup{T}})=1$.
    
    \item Then, apply the quantum channel $\mathcal{I}_\mathrm{A}\otimes \mathcal{E}^{\otimes n}$ to the prepared state, obtaining the mixed state $(\sqrt{\rho_\mathrm{A}}^\mathrm{T}\otimes I_{\mathrm{B}}) C_\mathcal{E}^{\otimes n} (\sqrt{\rho_\mathrm{A}}^\mathrm{T}\otimes I_{\mathrm{B}})$.

    \item Finally, perform the POVM $\left\{(\sqrt{\rho}_\mathrm{A}^\mathrm{T}\otimes I_{\mathrm{B}})^{-1} \, T_i^\mathrm{T}\, (\sqrt{\rho}_\mathrm{A}^\mathrm{T}\otimes I_{\mathrm{B}})^{-1}\right\}_i$, obtaining the result $i$, where $(\cdot)^{-1}$ is the pseudo-inverse. Then, it is easy to see that the probability of getting result $i$ is exactly that in \cref{eq-1211852}.
\end{itemize}
Conversely, all algorithms that make queries in parallel can be described by a parallel tester. To see this, assume that the algorithm first prepares a state $\rho$ on $(\bigotimes_{j=1}^n \mathcal{H}_{\mathrm{A},j})\otimes \mathcal{H}_\mathrm{anc}$, where $\mathcal{H}_{\mathrm{anc}}$ is an ancilla system, and then apply the channel $\mathcal{E}^{\otimes n}\otimes \mathcal{I}_\mathrm{anc}$ on $\rho$ followed by a POVM $\{E_i\}_i$, where each $E_i \in\mathcal{L}((\bigotimes_{j=1}^n \mathcal{H}_{\mathrm{B},j})\otimes\mathcal{H}_\mathrm{anc})$ is positive semidefinite. 
Then, let $T_i=E_i^\mathrm{T}\star \rho$. We see that $\{T_i\}_i$ is a parallel tester, and the probability of obtain outcome $i$ is given by
\[\tr(E_i\cdot (\mathcal{E}^{\otimes n}\otimes\mathcal{I}_\mathrm{anc})(\rho))=\tr(E_i \cdot (C_\mathcal{E}^{\otimes n}\star \rho))=E_i^\mathrm{T}\star C_\mathcal{E}^{\otimes n} \star \rho=T_i\star C_\mathcal{E}^{\otimes n},\]
which is exactly the same as \cref{eq-1211852}.



\paragraph{Sequential testers.}
A sequential tester can make queries sequentially, adaptively and coherently. 
Specifically, it sends a quantum system through the first use of the channel $\mathcal{E}$ and then it feeds the resulting (quantum) output into subsequent uses, potentially along with auxiliary systems, while allowing arbitrary CPTP maps to act between uses of $\mathcal{E}$. After all $n$ uses of the channel $\mathcal{E}$, a POVM is performed on the final state. In other words, sequential testers represent coherent and adaptive query-access algorithms.

\begin{definition}[Sequential tester]
A sequential tester that uses $n$ queries to an unknown channel is defined as any set of linear operators $\{T_i\}_i$, with $T_i\in\mathcal{L}(\bigotimes_{j=1}^n \mathcal{H}_{\mathrm{A},j}\otimes\mathcal{H}_{\mathrm{B},j})$, such that $T_i\ge 0$ and $\sum_i T_i$ is a quantum $(n+1)$-comb on $(\mathcal{H}_0,\mathcal{H}_{\mathrm{A},1},\mathcal{H}_{\mathrm{B},1},\ldots,\mathcal{H}_{\mathrm{A},n},\mathcal{H}_{\mathrm{B},n},\mathcal{H}_{n+1})$, where $\mathcal{H}_0\cong\mathcal{H}_{n+1}\cong\mathbb{C}$ are one-dimensional.
\end{definition}
It is known that any sequential tester can be obtained by a sequential query-access algorithm and any sequential query-access algorithm can be described by a sequential tester~\cite{chiribella2009theoretical,bavaresco2022unitary}.
When applying a sequential tester $\{T_i\}_i$ to $n$ queries to a quantum channel $\mathcal{E}$, we get the classical outcome $i$ with probability 
\begin{equation*}
p_i=T_i\star C_\mathcal{E}^{\otimes n}= \tr(T_i (C_{\mathcal{E}}^{\otimes n})^\mathrm{T})=\tr(T_i^\mathrm{T} C_{\mathcal{E}}^{\otimes n}),
\end{equation*}
where $C_\mathcal{E}^{\otimes n}$
denotes the Choi operator of all $n$ queries to the channel $\mathcal{E}$ and $(\,\cdot\,)^{\mathrm{T}}$ represents matrix transposition.

\paragraph{Quantum channel discrimination.}
Let $\mathcal{N}$ be a finite set of quantum channels. Then, the discrimination problem for channels in $\mathcal{N}$ can be defined as follows.
\begin{problem}
Suppose the channel $\mathcal{E}$ is uniformly randomly chosen from the set $\mathcal{N}$. The algorithm (or tester) can make $n$ queries to the channel $\mathcal{E}$ in order to identify $\mathcal{E}$.
\end{problem}
Consider a sequential tester $\{T_\mathcal{E}\}_{\mathcal{E}\in\mathcal{N}}$ for this discrimination task, where $T_\mathcal{E}$ corresponds to outputting the label $\mathcal{E}$. Then, the success probability can be computed as
\[\Pr[\textup{success}]=\frac{1}{|\mathcal{N}|} \sum_{\mathcal{E}\in\mathcal{N}} T_\mathcal{E}\star C_{\mathcal{E}}^{\otimes n},\]
where $C_{\mathcal{E}}$ is the (unnormalized) Choi state of $\mathcal{E}$.
We say an algorithm solves the discrimination problem if the success probability is larger than $2/3$.

\subsection{Schur-Weyl duality on bipartite systems}
Let us consider a sequence of Hilbert spaces $\mathcal{H}_1,\mathcal{H}_2,\ldots,\mathcal{H}_n$ such that $\mathcal{H}_i\cong\mathbb{C}^d$ for $1\leq i\leq n$.
The Hilbert space \(\bigotimes_{i=1}^n\mathcal{H}_i\) admits representations of the symmetric group \(\mathfrak{S}_n\) (i.e.\ the group of all permutations on the set $\{1,2,\ldots,n\}$) and unitary group \(\mathbb{U}_d\) (i.e.\ the group of unitaries on $d$-dimensional Hilbert space). The unitary group acts by simultaneous ``rotation'' as \(U^{\otimes n}\) for any \(U\in \mathbb{U}_d\), while the symmetric group acts by permuting tensor factors:
\begin{equation}
\texttt{p}(\pi)\ket{\psi_1}\cdots\ket{\psi_n}=\ket{\psi_{\pi^{-1}(1)}}\cdots\ket{\psi_{\pi^{-1}(n)}},
\end{equation}
where \(\pi\in\mathfrak{S}_n\). 
The two actions \(U^{\otimes n}\) and \(\texttt{p}(\pi)\) commute with each other, hence $\bigotimes_{i=1}^n\mathcal{H}_i$ admits a representation of group \(\mathbb{U}_d\times \mathfrak{S}_n\). In particular, the Schur-Weyl duality (see, e.g., \cite{fulton2013representation}) states that
\begin{equation}\label{eq-1111224}
\bigotimes_{i=1}^n\mathcal{H}_i\stackrel{\mathfrak{S}_n\times\mathbb{U}_d}{\cong}\bigoplus_{\lambda \vdash_d\, n}\mathcal{P}_\lambda\otimes \mathcal{Q}^d_\lambda,
\end{equation}
where \(\mathcal{P}_\lambda\) and \(\mathcal{Q}^d_\lambda\) are irreducible representations of \(\mathfrak{S}_n\) and \(\mathbb{U}_d\) labeled by Young diagram $\lambda$, respectively. We denote by $\texttt{p}_\lambda(\pi)$ and $\texttt{q}_\lambda(U)$ the actions of $\pi\in\mathfrak{S}_n$ and $U\in\mathbb{U}_d$ on $\mathcal{P}_\lambda$ and $\mathcal{Q}_\lambda^d$, respectively.

Now, let us consider two sequences of Hilbert spaces $(\mathcal{H}_{\textup{A},1},\ldots,\mathcal{H}_{\textup{A},n})$ and $(\mathcal{H}_{\textup{B},1},\ldots,\mathcal{H}_{\textup{B},n})$, where $\mathcal{H}_{\textup{A},i}\cong \mathbb{C}^{d_1}$ and $\mathcal{H}_{\textup{B},j} \cong \mathbb{C}^{d_2}$. We define the action of group $\mathfrak{S}_{n}\times\mathfrak{S}_n$ on $\bigotimes_{i=1}^n \mathcal{H}_{\textup{A},i}\otimes \mathcal{H}_{\textup{B},i}$ as $\texttt{p}_\mathrm{A}(\pi_1)\otimes \texttt{p}_{\mathrm{B}}(\pi_2)$ for $(\pi_1,\pi_2)\in\mathfrak{S}_n\times\mathfrak{S}_n$, where $\texttt{p}_\mathrm{A}(\cdot)$ denotes the permutation action on $\bigotimes_{i=1}^n \mathcal{H}_{\mathrm{A},i}$, and similarly for $\texttt{p}_\mathrm{B}(\,\cdot\,)$.
We define the action of \(\mathbb{U}_{d_1}\times \mathbb{U}_{d_2}\) on \(\bigotimes_{i=1}^n \mathcal{H}_{\textup{A},i}\otimes \mathcal{H}_{\textup{B},i}\) as $(U_\mathrm{A}\otimes U_{\textup{B}})^{\otimes n}$ for $(U_{\textup{A}},U_{\textup{B}})\in \mathbb{U}_{d_1}\times\mathbb{U}_{d_2}$. 
Since the tensor product of irreducible representations of two groups $G_1,G_2$ is still irreducible with respect to $G_1\times G_2$, we can see the Schur-Weyl duality on the bipartite system:
\[\bigotimes_{i=1}^n \mathcal{H}_{\textup{A},i}\otimes \mathcal{H}_{\textup{B},i}\,\, \stackrel{\mathfrak{S}_{n}\times\mathfrak{S}_n\times\mathbb{U}_{d_1}\times\mathbb{U}_{d_2}}{\cong}\,\,\bigoplus_{\substack{\lambda\vdash_{d_1} \, n\\ \mu\vdash_{d_2} \, n}} \mathcal{P}_{\lambda}\otimes\mathcal{P}_\mu\otimes \mathcal{Q}^{d_1}_\lambda\otimes\mathcal{Q}^{d_2}_\mu,\]
where $\mathcal{P}_\lambda\otimes\mathcal{P}_\mu\otimes\mathcal{Q}^{d_1}_\lambda\otimes\mathcal{Q}^{d_2}_\mu$ is an irreducible representation of $\mathfrak{S}_n\times\mathfrak{S}_n\times \mathbb{U}_{d_1}\times\mathbb{U}_{d_2}$.

\section{Local test of quantum channels}\label{sec-770126}
Here, we first prove the local test of quantum channels and use it to derive our upper bounds for quantum channel tomography.

\begin{theorem}\label{thm-1221159}
Suppose $d_1,d_2$ and $r$ are positive integers such that $rd_2 \geq d_1$.
If there exists a parallel tester $\{T_i\}_i$ that makes $n$ queries to an unknown isometry $\mathcal{V}\in\isochannel_{d_1,rd_2}$ and produces a classical outcome $i$ with probability $P_i(\mathcal{V})$,
then there exists another parallel tester $\{\widetilde{T}_i\}_i\cup\{\widetilde{T}_\bot\}$ (where $\bot$ is an extra irrelevant label outside the range of $i$) that makes $n$ queries to an unknown channel $\mathcal{E}\in\qchannel_{d_1,d_2}^r$, and produces a classical outcome $i$ with probability $\EE{\mathcal{V}\sim \dilation_r(\mathcal{E})}[P_i(\mathcal{V})]$.
\end{theorem}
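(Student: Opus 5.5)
The plan is to average first and then pull the averaging back onto the tester, working in the Heisenberg picture. Write $\mathcal{H}_{\mathrm{R},j}=\mathcal{H}_{\mathrm{B},j}\otimes\mathcal{H}_{\mathrm{A},j}$ and let $\mathcal{H}_{\mathrm{anc},j}\cong\mathbb{C}^r$ be the ancilla part of the $j$-th output of the isometry. Fix one dilation $V_0=\sum_k\ket{k}_\mathrm{anc}\otimes E_k$ of $\mathcal{E}$. By \cref{not:Haar}, the averaged outcome probability is
\[
\EE{\mathcal{V}\sim\dilation_r(\mathcal{E})}[P_i(\mathcal{V})]=\tr\bigl(T_i^{\mathrm T}\,\Omega_{\mathcal{E}}\bigr),\qquad
\Omega_{\mathcal{E}}\coloneqq\EE{U\sim\mathbb{U}_r}\bigl[(U^{\otimes n}\otimes I)\,\kettbbra{V_0}{V_0}^{\otimes n}\,(U^{\dagger\otimes n}\otimes I)\bigr].
\]
The theorem would then follow from two claims. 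First, $\Omega_{\mathcal{E}}=\Phi(C_{\mathcal{E}}^{\otimes n})$ for a fixed linear CP map $\Phi$ from $\mathcal{L}(\bigotimes_j\mathcal{H}_{\mathrm{R},j})$ to $\mathcal{L}(\bigotimes_j\mathcal{H}_{\mathrm{anc},j}\otimes\mathcal{H}_{\mathrm{R},j})$ that does not depend on $\mathcal{E}$. Second, the operators $\widetilde T_i\coloneqq(\Phi^\dagger(T_i^{\mathrm T}))^{\mathrm T}$ can be completed by some $\widetilde T_\bot\ge0$ to a parallel tester.

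\textbf{Step 1 (structure of the twirl).} I would apply Schur--Weyl duality (\cref{eq-1111224}) separately to $\bigotimes_j\mathcal{H}_{\mathrm{anc},j}$ and to $\bigotimes_j\mathcal{H}_{\mathrm{R},j}$. The vector $\kett{V_0}^{\otimes n}$ is invariant under simultaneous permutations of the pairs $(\mathrm{anc}_j,\mathrm{R}_j)$. This invariance forces it to lie in $\bigoplus_\lambda \mathcal{Q}^r_\lambda\otimes\mathcal{P}_\lambda^{\mathrm{anc}}\otimes(\text{the }\lambda\text{-isotypic part of }\mathrm{R}^{\otimes n})$, with the two copies of $\mathcal{P}_\lambda$ coupled through the maximally entangled $\mathfrak{S}_n$-invariant vector $\ket{\Phi_\lambda}$.

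The $U^{\otimes n}$-twirl acts only on the $\mathcal{Q}^r_\lambda$ factor. It kills the off-diagonal blocks $\lambda\neq\lambda'$ and replaces the $\mathcal{Q}^r_\lambda$ part by $I_{\mathcal{Q}^r_\lambda}/\dim\mathcal{Q}^r_\lambda$. What remains in each block can be read off from $\Pi^{\mathrm R}_\lambda\,\tr_{\mathrm{anc}}(\kettbbra{V_0}{V_0}^{\otimes n})\,\Pi^{\mathrm R}_\lambda=\Pi^{\mathrm R}_\lambda C_{\mathcal{E}}^{\otimes n}\Pi^{\mathrm R}_\lambda$. Here I use that the partial trace of the pure symmetric vector over $\mathcal{Q}^r_\lambda\otimes\mathcal{P}^{\mathrm{anc}}_\lambda$ determines its block, up to re-attaching $\ket{\Phi_\lambda}$ via the $\mathfrak{S}_n$-structure.

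This should give an explicit formula of the form
\[
\Phi(X)=\sum_\lambda\frac{I_{\mathcal{Q}^r_\lambda}}{\dim\mathcal{Q}^r_\lambda}\otimes \mathcal{J}_\lambda\bigl(\Pi^{\mathrm R}_\lambda X\Pi^{\mathrm R}_\lambda\bigr),
\]
where $\mathcal{J}_\lambda$ is a CP map that copies the $\mathfrak{S}_n$-irrep content of $X$ from $\mathrm R^{\otimes n}$ onto the ancilla $\mathcal{P}_\lambda$. This is essentially the channel analogue of the random-purification / local-test formula of \cite{chen2024local,tang2025conjugate}, with $C_{\mathcal{E}}$ playing the role of the mixed state. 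Linearity and complete positivity of $\Phi$ will be evident from the formula. The identity $\Omega_{\mathcal{E}}=\Phi(C_{\mathcal{E}}^{\otimes n})$ only needs to be checked on tensor powers of rank-$r$ positive operators, and that check is exactly the computation above.

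\textbf{Step 2 (defining the local tester).} With $\Phi$ in hand, set $\widetilde T_i$ as above. Then $\widetilde T_i\ge0$ because $\Phi^\dagger$ is CP, and $\widetilde T_i\star C_{\mathcal{E}}^{\otimes n}=\tr(T_i^{\mathrm T}\Phi(C_{\mathcal{E}}^{\otimes n}))$ is the desired probability. It remains to show $\sum_i\widetilde T_i\le\widetilde\rho_{\mathrm A}\otimes I_{\mathrm B}$ for some state $\widetilde\rho_{\mathrm A}$; then $\widetilde T_\bot$ is the slack, and \cref{def:parallel_tester} is met.

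\textbf{Main obstacle.} This last inequality is the step I expect to be hardest. Since $\sum_iT_i=\rho_{\mathrm A}\otimes I_{\mathrm{anc}\,\mathrm B}$, what is needed is that $\Phi^\dagger$ maps $\rho_{\mathrm A}\otimes I_{\mathrm{anc},\mathrm B}$ to something dominated by $\widetilde\rho_{\mathrm A}\otimes I_{\mathrm B}$. The difficulty is that $\Phi$ mixes the $\mathrm A$ and $\mathrm B$ systems through the $\mathfrak{S}_n$ irreps of $\mathrm R^{\otimes n}=(\mathrm B\mathrm A)^{\otimes n}$. So $\rho_{\mathrm A}$, which need not be permutation invariant, is not obviously preserved.

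My first attempt would be to symmetrize. Replacing $T_i$ by its average over simultaneous query permutations $\texttt{p}(\pi)$, acting on all of $\mathrm{anc},\mathrm B,\mathrm A$, does not change $P_i(\mathcal V)$, since $C_{\mathcal V}^{\otimes n}$ is invariant. After that, $\rho_{\mathrm A}$ commutes with $\mathfrak{S}_n$ and the computation can be done blockwise. Tracing out $\mathrm{anc}$, using $\tr\circ\mathcal{J}_\lambda$ and the normalization of $\ket{\Phi_\lambda}$, should then reduce the claim to the identity $\Phi^\dagger(I_{\mathrm{anc}}\otimes Y)=Y$ for permutation-invariant $Y$ of the form $\rho_{\mathrm A}\otimes I_{\mathrm B}$. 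If this only yields an inequality $\le$ rather than equality, that suffices: the gap is absorbed into $\widetilde T_\bot$, which is exactly why the statement allows the extra outcome $\bot$.
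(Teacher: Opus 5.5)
Your proposal is correct and is essentially the paper's proof, written in adjoint form. Take $\mathcal{J}_\lambda(Y)=\frac{1}{\dim\mathcal{P}_\lambda}\kettbbra{I_{\mathcal{P}_\lambda}}{I_{\mathcal{P}_\lambda}}\otimes\tr_{\mathcal{P}_\lambda}(Y)$, i.e.\ trace out and re-prepare rather than a literal ``copy''; then $(\Phi^\dagger(T_i^{\mathrm T}))^{\mathrm T}$ coincides with the paper's $\widetilde T_i$ in \cref{eq-1242217}, your Step~1 is the computation in \cref{lemma-1242352,lemma-1242353}, and your resolution of the main obstacle matches the paper's, since $\Phi^\dagger(Y\otimes I_{\mathrm{anc}})$ is the $\mathfrak{S}_n$-symmetrization of $Y$ restricted to Young diagrams with at most $s$ rows, which gives $\sum_i\widetilde T_i\le\rho^{\mathrm{sym}}_{\mathrm A}\otimes I_{\mathrm B}$ with the discarded blocks forming $\widetilde T_\bot$.
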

The proof of \cref{thm-1221159} is deferred to \cref{sec-4170414}. 
Then, for a more convenient use of \cref{thm-1221159}, we need the following definition.
\begin{definition}[Estimation tasks of quantum channels]\label{not:est_task}
An estimation task of quantum channels in $\qchannel_{d_1,d_2}^r$ is described by a set $\{A_{\mathcal{E}}\}_{\mathcal{E}\in\qchannel_{d_1,d_2}^r}$, where $A_{\mathcal{E}}$ denotes the set of classical outcomes considered to be the ``correct answer'' of the task when the unknown channel is $\mathcal{E}$.
\end{definition}

We can easily obtain the following result, by using \cref{thm-1221159}.
\begin{theorem}\label{coro-12122150}
Suppose $d_1,d_2$ and $r$ are positive integers such that $rd_2 \geq d_1$, and $\{A_{\mathcal{E}}\}_{\mathcal{E}\in\qchannel_{d_1,d_2}^r}$ is an estimation task of quantum channels.
If there exists a parallel tester that makes $n$ queries to an arbitrary dilation $\mathcal{V}\in \dilation_r(\mathcal{E})$ of an unknown channel $\mathcal{E}\in\qchannel_{d_1,d_2}^r$ and produces a classical outcome $i\in A_{\mathcal{E}}$ with probability at least $1-\delta$, then there exists another parallel tester that makes $n$ queries to $\mathcal{E}$ itself and produces a classical outcome $i\in A_{\mathcal{E}}$ with probability at least $1-\delta$.
\end{theorem}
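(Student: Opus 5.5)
The plan is to apply \cref{thm-1221159} directly and then average the success guarantee over a Haar-random dilation. The given parallel tester $\{T_i\}_i$ makes $n$ queries to an isometry $\mathcal{V}\in\isochannel_{d_1,rd_2}$. We may view it as a tester on all of $\isochannel_{d_1,rd_2}$: its input is just an isometry, and it is only guaranteed to succeed when that isometry lies in $\dilation_r(\mathcal{E})$. Write $P_i(\mathcal{V})$ for its outcome probabilities.

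The hypothesis is that for every $\mathcal{E}\in\qchannel_{d_1,d_2}^r$ and every $\mathcal{V}\in\dilation_r(\mathcal{E})$ we have
\[
\sum_{i\in A_{\mathcal{E}}}P_i(\mathcal{V})\geq 1-\delta .
\]
Here ``arbitrary dilation'' is read as a worst-case guarantee over all dilations. This reading is important: an average-case guarantee over dilations would also suffice, but a guarantee for only one fixed dilation would not.

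\cref{thm-1221159} then provides a parallel tester $\{\widetilde{T}_i\}_i\cup\{\widetilde{T}_\bot\}$ making $n$ queries to $\mathcal{E}$. It outputs $i$ with probability $\EE{\mathcal{V}\sim\dilation_r(\mathcal{E})}[P_i(\mathcal{V})]$. The Haar distribution is supported on $\dilation_r(\mathcal{E})$, and $A_{\mathcal{E}}$ is a fixed set depending only on $\mathcal{E}$, not on the dilation. Exchanging the finite (or countable, nonnegative) sum with the expectation therefore gives
\[
\sum_{i\in A_{\mathcal{E}}}\EE{\mathcal{V}\sim\dilation_r(\mathcal{E})}[P_i(\mathcal{V})]=\EE{\mathcal{V}\sim\dilation_r(\mathcal{E})}\Big[\sum_{i\in A_{\mathcal{E}}}P_i(\mathcal{V})\Big]\geq 1-\delta .
\]

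The extra label $\bot$ remains to be handled. We may assume $\bot\notin A_{\mathcal{E}}$, since it lies outside the range of the original outcomes. The simplest fix is to merge $\widetilde{T}_\bot$ into an arbitrary fixed outcome $i_0$, for example by setting $\widetilde{T}_{i_0}'=\widetilde{T}_{i_0}+\widetilde{T}_\bot$. The result is still a parallel tester: positivity is clear, and the sum of the operators is unchanged, so it still equals $\rho_{\mathrm{A}}\otimes I_{\mathrm{B}}$ as in \cref{def:parallel_tester}. Merging can only increase the probability of landing in $A_{\mathcal{E}}$. Alternatively, one can note that the outcome probabilities over $i$ already sum to $1$, so $\bot$ occurs with probability zero and can be ignored.

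There is no real obstacle, since all the work sits in \cref{thm-1221159}. The only points to state carefully are that the success probability is linear in the outcome probabilities and that the success set does not depend on which dilation is sampled.
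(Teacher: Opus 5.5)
Your proposal is correct and matches the paper's proof: apply \cref{thm-1221159}, use the worst-case guarantee $\sum_{i\in A_{\mathcal{E}}}P_i(\mathcal{V})\geq 1-\delta$ for every $\mathcal{V}\in\dilation_r(\mathcal{E})$, and exchange the sum over $A_{\mathcal{E}}$ with the Haar average over dilations. Your handling of the extra outcome $\bot$ is also valid (it occurs with probability zero, or can be merged into another outcome), and it is a detail the paper leaves implicit.
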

\begin{proof}
Suppose $P_i(\mathcal{V})$ is the probability of the parallel tester outputting $i$ when making queries to $\mathcal{V}$.
Therefore, we can see that $\sum_{i\in A_{\mathcal{E}}}P_i(\mathcal{V})\geq 1-\delta$ for any $\mathcal{V}\in\dilation_r(\mathcal{E})$.
Due to \cref{thm-1221159}, there exists a parallel tester that makes $n$ queries to $\mathcal{E}$ and outputs $i$ with probability
\[\widetilde{P}_i(\mathcal{E})=\EE{\mathcal{V}\sim \dilation_r(\mathcal{E})}[P_i(\mathcal{V})].\]
Therefore, we can see that the probability of this tester outputting $i\in A_{\mathcal{E}}$ is lower bounded as
\[\sum_{i\in A_{\mathcal{E}}}\widetilde{P}_i(\mathcal{E})=\EE{\mathcal{V}\sim \dilation_r(\mathcal{E})}\left[\sum_{i\in A_\mathcal{E}} P_i(\mathcal{V})\right]\geq 1-\delta.\]
\end{proof}

\subsection{Quantum channel tomography and estimation}
We can use \cref{coro-12122150} and the isometry tomography algorithm under diamond norm shown in \cref{sec-3102118} to give the following result.
\begin{theorem}\label{coro-12120134}
There exists a parallel tester that makes $O(rd_1d_2/\varepsilon^2)$ queries to an unknown channel $\mathcal{E}\in\qchannel_{d_1,d_2}^r$ and produces an estimate $\mathcal{F}$ satisfying $\|\mathcal{F}-\mathcal{E}\|_{\diamond}\leq \varepsilon$ with probability at least $2/3$. Here, $\|\cdot\|_{\diamond}$ denotes the diamond norm.
\end{theorem}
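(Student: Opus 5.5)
We reduce to isometry tomography through \cref{coro-12122150}. Fix the estimation task $A_{\mathcal{E}}=\{\mathcal{F}: \|\mathcal{F}-\mathcal{E}\|_\diamond\le\varepsilon\}$ on $\qchannel_{d_1,d_2}^r$. By \cref{coro-12122150}, it suffices to give a parallel tester that makes $n=O(rd_1d_2/\varepsilon^2)$ queries to an arbitrary dilation $\mathcal{V}\in\dilation_r(\mathcal{E})$. This is an isometry in $\isochannel_{d_1,rd_2}$. The tester must output, with probability at least $2/3$, a channel $\mathcal{F}\in A_{\mathcal{E}}$.

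\textbf{Step 1: isometry tomography.} We run the parallel isometry tomography algorithm of \cref{sec-3102118} on $\mathcal{V}$. It is the generalization of the base parallel algorithm of \cite{haah2023query} from unitaries $d\times d$ to isometries $d_1\to D$ with $D=rd_2$. We take its guarantee to be: with $O(d_1 D/\varepsilon^2)=O(rd_1d_2/\varepsilon^2)$ parallel queries it outputs an isometry $\widehat V$ with $\|\widehat{\mathcal V}-\mathcal V\|_\diamond\le\varepsilon$, with probability at least $2/3$.

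The dimension count behind this is as follows. The algorithm prepares $\kett{V}$-type pure states by feeding halves of maximally entangled states on $\mathbb{C}^{d_1}$ into the queries. It then applies pure-state tomography in dimension $d_1D$, which costs $O(d_1D/\varepsilon^2)$ copies. Finally it fixes column phases with a small extra number of queries and rounds to the nearest isometry, e.g.\ via the polar decomposition. Operator-norm closeness $\|\widehat V-V\|_\infty\lesssim\varepsilon$ then gives diamond-norm closeness $O(\varepsilon)$. Constants are absorbed by rescaling $\varepsilon$.

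\textbf{Step 2: contraction.} Output $\mathcal{F}=\contract_{\mathcal{H}_\mathrm{anc}}(\widehat{\mathcal V})$, which lies in $\qchannel_{d_1,d_2}^r$. Since partial trace is a channel, the diamond norm is contractive under post-composition:
\[
\|\mathcal F-\mathcal E\|_\diamond=\|\tr_{\mathrm{anc}}\circ(\widehat{\mathcal V}-\mathcal V)\|_\diamond\le\|\widehat{\mathcal V}-\mathcal V\|_\diamond\le\varepsilon .
\]
This holds for every dilation $\mathcal V$ of $\mathcal E$, so the hypothesis of \cref{coro-12122150} is met with $\delta=1/3$. The corollary then produces a parallel tester that queries $\mathcal E$ directly with the same $n$ and the same success probability.

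\textbf{Main obstacle.} The substance lies in Step 1: obtaining the tight $O(d_1D/\varepsilon^2)$ bound in diamond norm, not just in Choi or fidelity distance, for non-square isometries using only parallel queries. Worst-case inputs cause the loss when converting from Choi distance, which is the source of the naive $d_1^3$ factor. To avoid it, I would follow the approach of \cite{haah2023query}. First, obtain a constant-accuracy estimate per column. Then use this estimate to correct phases, so that the pure-state tomography error is controlled in operator norm via the $\ell_2$-to-operator-norm argument. Everything else in the proof is a formal consequence of \cref{coro-12122150}.
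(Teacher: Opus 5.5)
Your proposal is correct and is the paper's proof: you define the diamond-norm estimation task, run the parallel isometry tomography of \cref{thm-1240018} on a dilation in $\isochannel_{d_1,rd_2}$ with $O(d_1\cdot rd_2/\varepsilon^2)$ queries, output the contraction over the ancilla (using contractivity of the diamond norm), and transfer to direct access to $\mathcal{E}$ via \cref{coro-12122150}. Your sketch of how the isometry tomography step works internally does not match the paper's proof of that lemma, which instead does per-column pure-state tomography in dimension $rd_2$, bounds the operator norm of the Haar-random error columns with a random-matrix estimate, and fixes the column phases with a second parallel run on $V$ precomposed with a Fourier transform; since you invoke the lemma as a black box, this does not affect your argument.
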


\begin{proof}

We consider the estimation task $\{A_{\mathcal{E}}\}_{\mathcal{E}\in\qchannel_{d_1,d_2}^r}$ where
\[A_{\mathcal{E}}=\left\{\mathcal{F}\in\qchannel_{d_1,d_2}^r \,\,\big|\,\, \|\mathcal{F}-\mathcal{E}\|_\diamond \leq \varepsilon\right\}.\]
Note that, for $\mathcal{E}\in\qchannel_{d_1,d_2}^r$, any isometry in $\dilation_r(\mathcal{E})$ is in $\isochannel_{d_1,rd_2}$.
Then, due to \cref{thm-1240018}, we have a parallel tester that makes $n=O(rd_1d_2/\varepsilon^2)$ queries to a dilation $\mathcal{V}\in \dilation_r(\mathcal{E})$ and produces $\mathcal{W}$ satisfying 
\[\|\contract_r(\mathcal{W})-\mathcal{E}\|_\diamond \leq \|\mathcal{W}-\mathcal{V}\|_\diamond\leq \varepsilon,\]
with probability at least $2/3$, where the first inequality is by the contractivity of the diamond norm.
We set the tester to finally output $\contract_r(\mathcal{W})$ when producing $\mathcal{W}$. Then, it solves the task $\{A_{\mathcal{E}}\}_{\mathcal{E}\in\qchannel_{d_1,d_2}^r}$ by making $n$ queries to a dilation of $\mathcal{E}$. 
Due to \cref{coro-12122150}, there exists another parallel tester that solves this task by making $n$ queries to $\mathcal{E}$ it self.
\end{proof}

We also provide upper bounds with Heisenberg scaling $O(1/\varepsilon)$ for (possibly non-unitary) channel tomography in the boundary regime ($\tau=1$), by using \cref{coro-12122150} and the unitary tomography algorithm shown in \cite{yang2020optimal}. 
\begin{theorem}\label{coro-12120135}
Suppose $\tau=rd_2/d_1=1$. There exists a parallel tester that makes $O(rd_1d_2/\varepsilon)$ queries to an unknown channel $\mathcal{E}\in\qchannel_{d_1,d_2}^r$ and produces an estimate $\mathcal{F}$ satisfying $\|\frac{1}{d}C_\mathcal{F}-\frac{1}{d}C_\mathcal{E}\|_{1}\leq \varepsilon$ with probability at least $2/3$. Here, $C_{\mathcal{E}}$ is the (unnormalized) Choi state of $\mathcal{E}$ and $\|\cdot\|_1$ denotes the trace norm.

Moreover, there exists a parallel tester that makes $O(\min\{rd_1^{1.5}d_2/\varepsilon,rd_1d_2/\varepsilon^2\})$ queries to an unknown channel $\mathcal{E}\in\qchannel_{d_1,d_2}^r$ and produces an estimate $\mathcal{F}$ satisfying $\|\mathcal{F}-\mathcal{E}\|_{\diamond}\leq \varepsilon$ with probability at least $2/3$, where $\|\cdot\|_\diamond$ denotes the diamond norm.
\end{theorem}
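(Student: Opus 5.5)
The plan mirrors the proof of \cref{coro-12120134}: reduce to tomography of a dilation and then invoke \cref{coro-12122150}. When $\tau=rd_2/d_1=1$, every dilation $\mathcal{V}\in\dilation_r(\mathcal{E})$ lies in $\isochannel_{d_1,rd_2}=\isochannel_{d,d}$ with $d\coloneqq d_1=rd_2$. So each dilation is a $d$-dimensional unitary channel, and $rd_1d_2=d^2$.

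For the Choi trace norm part, I would use the parallel unitary estimation algorithm of \cite{yang2020optimal}. It makes $O(d^2/\eta)$ parallel queries to $\mathcal{V}$ and outputs a unitary $\mathcal{W}$ with channel (entanglement) infidelity $1-\mathrm{F}_\textup{ch}(\mathcal{W},\mathcal{V})\le \eta$ with probability at least $2/3$. This may need a Markov-type step to turn the expected infidelity guarantee into a high-probability one, costing only a constant factor.

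Since the normalized Choi states of $\mathcal{W}$ and $\mathcal{V}$ are pure, Fuchs--van de Graaf gives
\[
\tfrac{1}{d}\|C_\mathcal{W}-C_\mathcal{V}\|_1=2\sqrt{1-\mathrm{F}_\textup{ch}(\mathcal{W},\mathcal{V})}\le 2\sqrt{\eta}.
\]
This square root loss is the main obstacle: it would turn $1/\eta$ into $1/\varepsilon^2$. I therefore expect to track the fidelity convention carefully. If $\mathrm{F}$ is the squared fidelity $|\langle\cdot|\cdot\rangle|^2$, then the Heisenberg-limited guarantee of \cite{yang2020optimal} is really that the infidelity is $O(d^4/n^2)$. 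Choosing $n=O(d^2/\varepsilon)$ then makes $2\sqrt{\eta}\le\varepsilon$. I would check this scaling explicitly against the cited result, since this is exactly where Heisenberg scaling comes from.

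Next comes the contraction step. The normalized Choi states of $\contract_r(\mathcal{W})$ and $\mathcal{E}$ are partial traces over the ancilla of the normalized Choi states of $\mathcal{W}$ and $\mathcal{V}$. The trace norm is contractive under partial trace, so
\[
\tfrac{1}{d_1}\|C_{\contract_r(\mathcal{W})}-C_\mathcal{E}\|_1\le\varepsilon .
\]
The tester outputs $\contract_r(\mathcal{W})$, so the estimation task
\[
A_\mathcal{E}=\Bigl\{\mathcal{F}:\tfrac{1}{d_1}\|C_\mathcal{F}-C_\mathcal{E}\|_1\le\varepsilon\Bigr\}
\]
is solved with queries to a dilation. \cref{coro-12122150} then transfers this to $n$ queries to $\mathcal{E}$ itself.

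For the diamond norm, the $O(rd_1d_2/\varepsilon^2)$ term is simply \cref{coro-12120134}. For the $O(rd_1^{1.5}d_2/\varepsilon)$ term, I would run the Choi-trace-norm algorithm above on the dilation with accuracy $\varepsilon/\sqrt{d}$. For unitaries $W,V$, after optimizing the global phase, the diamond distance is at most $O(\sqrt{d})$ times the normalized Choi trace distance. This holds because the diamond distance is bounded by the operator norm $\|W-e^{i\phi}V\|_{\mathrm{op}}$, which is at most $\sqrt{d}$ times the normalized Frobenius norm, and the latter is controlled by the Choi infidelity. This costs $O(d^2\sqrt d/\varepsilon)=O(rd_1^{1.5}d_2/\varepsilon)$ queries. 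Contractivity of the diamond norm under $\contract_r$ and \cref{coro-12122150} then finish the argument, and taking the better of the two algorithms gives the minimum.
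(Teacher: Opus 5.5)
Your proposal is correct and follows the paper's proof essentially step for step: you apply the Yang--Renner--Chiribella parallel unitary estimator (infidelity $O(d_1^4/n^2)$, hence $n=O(d_1^2/\varepsilon)$) to the dilation, which is unitary because $\tau=1$, then use contractivity under $\contract_r$ and \cref{coro-12122150}, and for the diamond norm you run at accuracy $\varepsilon/\sqrt{d_1}$ and take the minimum with \cref{coro-12120134}. The only cosmetic difference is that you derive the $O(\sqrt{d_1})$ Choi-to-diamond conversion for unitaries yourself via the operator and Frobenius norms, whereas the paper cites \cite[Proposition 1.9]{haah2023query} for $\|\mathcal{W}-\mathcal{U}\|_\diamond\le\sqrt{2d_1}\sqrt{1-\mathrm{F}_{\mathrm{ch}}(\mathcal{W},\mathcal{U})}$.
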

\begin{proof}
Consider the task $\{A_{\mathcal{E}}\}_{\mathcal{E}\in\qchannel_{d_1,d_2}^r}$ where
\[A_{\mathcal{E}}=\left\{\mathcal{F}\in\qchannel_{d_1,d_2}^r\,\,\bigg|\,\, \left\|\frac{1}{d}C_{\mathcal{F}}-\frac{1}{d}C_{\mathcal{E}}\right\|_1\leq\varepsilon\right\}.\]
Note that for $\mathcal{E}\in\qchannel_{d_1,d_2}^r$, any isometry in $\dilation_r(\mathcal{E})$ is a $d_1$-dimensional unitary.
Using Yang-Renner-Chiribella algorithm~\cite{yang2020optimal}, we have a parallel tester that makes $n=O(d_1^2/\varepsilon)=O(rd_1d_2/\varepsilon)$ queries to a unitary $\mathcal{U}\in\dilation_r(\mathcal{E})$ and produces $\mathcal{W}$ satisfying 
\[\left\|\frac{1}{d}C_{\contract_r(\mathcal{W})}-\frac{1}{d}C_{\mathcal{E}}\right\|_1\leq \left\|\frac{1}{d}C_\mathcal{W}-\frac{1}{d}C_{\mathcal{U}}\right\|_1=\sqrt{1-\mathrm{F}_{\mathrm{ch}}(\mathcal{W},\mathcal{U})}\leq \varepsilon,\]
with probability at least $2/3$,
where the first inequality is due to the contractivity of trace norm, and the last inequality is because the Yang-Renner-Chiribella algorithm produces an estimate $\mathcal{W}$ satisfying $\mathrm{F}_{\mathrm{ch}}(\mathcal{W},\mathcal{U})\geq 1-\varepsilon^2$, with probability at least $2/3$. Here, $\mathrm{F}_{\mathrm{ch}}$ denotes the channel fidelity (or equivalently, entanglement fidelity).
We set the tester to finally output $\contract_r(\mathcal{W})$ upon producing $\mathcal{W}$. Then it solves the task $\{A_{\mathcal{E}}\}_{\mathcal{E}\in\qchannel_{d_1,d_2}^r}$ by making $n$ queries to a dilation of $\mathcal{E}$. 
Due to \cref{coro-12122150}, there exists a parallel tester that solves this task by making $n$ queries to $\mathcal{E}$ itself.

Similarly, consider the task $\{A_{\mathcal{E}}\}_{\mathcal{E}\in\qchannel_{d_1,d_2}^r}$ where
\[A_{\mathcal{E}}=\left\{\mathcal{F}\in\qchannel_{d_1,d_2}^r\,\,\big|\,\, \left\|\mathcal{F}-\mathcal{E}\right\|_\diamond\leq\varepsilon\right\}.\]
Using to Yang-Renner-Chiribella algorithm~\cite{yang2020optimal}, we have a parallel tester that makes $n=O(rd_1d_2/(\varepsilon/\sqrt{d_1}))=O(rd_1^{1.5}d_2/\varepsilon)$ queries to a unitary $\mathcal{U}\in\dilation_r(\mathcal{E})$ and produces $\mathcal{W}$ satisfying 
\[\|\contract_r(\mathcal{W})-\mathcal{E}\|_\diamond\leq \|\mathcal{W}-\mathcal{U}\|_\diamond\leq \sqrt{2d_1}\sqrt{1-\mathrm{F}_{\mathrm{ch}}(\mathcal{W},\mathcal{U})}\leq \varepsilon,\]
with probability at least $2/3$,
where the second inequality is by \cite[Proposition 1.9]{haah2023query}.
We set the tester to finally output $\contract_r(\mathcal{W})$ upon producing $\mathcal{W}$. 
Then it solves the task $\{A_{\mathcal{E}}\}_{\mathcal{E}\in\qchannel_{d_1,d_2}^r}$ by making $n$ queries to a dilation of $\mathcal{E}$. 
Due to \cref{coro-12122150}, there exists a parallel tester that solves this task by making $n$ queries to $\mathcal{E}$ itself. 
Combining this with \cref{coro-12120134}, we can see that $O(\min\{rd_1^{1.5}d_2/\varepsilon,rd_1d_2/\varepsilon^2\})$ queries suffice.
\end{proof}

We also demonstrate (mixed) state tomography with Heisenberg scaling, when state-preparation channels are available. This is obtained by combining our \cref{coro-12122150} with the pure state tomography algorithm given in \cite{chen2025inverse}.
\begin{theorem}\label{coro-12120136}
Suppose $\tau=rd_2/d_1=1$. There exists a parallel tester that makes $O(\min\{d_1^{1.5}/\varepsilon,d_1/\varepsilon^2\})$ queries to a  channel $\mathcal{E}\in\qchannel_{d_1,d_2}^r$ and produces an estimate $\rho$ satisfying $\|\rho-\mathcal{E}(\ketbra{0}{0})\|_1\leq \varepsilon$ with probability at least $2/3$, where $\|\cdot\|_1$ is the trace norm.
\end{theorem}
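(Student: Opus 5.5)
The plan is to follow the same template as the proofs of \cref{coro-12120134} and \cref{coro-12120135}, namely reduce to the dilation setting via \cref{coro-12122150}. Take the estimation task $\{A_{\mathcal{E}}\}_{\mathcal{E}\in\qchannel_{d_1,d_2}^r}$ with
\[A_{\mathcal{E}}=\left\{\rho \,\,\big|\,\, \|\rho-\mathcal{E}(\ketbra{0}{0})\|_1\leq\varepsilon\right\}.\]
The outcomes here are density operators rather than channels. This is harmless, because \cref{not:est_task} allows arbitrary sets of classical outcomes, and the proof of \cref{coro-12122150} never uses the nature of the outcomes.

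Since $\tau=rd_2/d_1=1$, every $\mathcal{V}\in\dilation_r(\mathcal{E})$ is a unitary $U$ on $\mathbb{C}^{d_1}\cong\mathbb{C}^r\otimes\mathbb{C}^{d_2}$. One query to $\mathcal{V}$ on the fixed input $\ket{0}$ prepares the pure state $\ket{\psi}=U\ket{0}$, whose reduced state on the output system is exactly
\[\tr_{\mathrm{anc}}(\ketbra{\psi}{\psi})=\mathcal{E}(\ketbra{0}{0}).\]
So a parallel tester with access to $U$ can run the pure-state tomography algorithm of \cite{chen2025inverse}. That algorithm uses the state-preparation unitary $U$ only through parallel forward queries (no inverses or controlled versions), which is exactly the stated property that makes it a parallel tester in the sense of \cref{def:parallel_tester}. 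With $O(\min\{d_1^{1.5}/\varepsilon,d_1/\varepsilon^2\})$ queries, where $d_1$ is the dimension of $\ket{\psi}$, it outputs $\ket{\hat\psi}$ with $\|\ketbra{\hat\psi}{\hat\psi}-\ketbra{\psi}{\psi}\|_1\leq\varepsilon$ with probability at least $2/3$. For the $d_1/\varepsilon^2$ branch one can alternatively use standard pure-state tomography on $n$ copies of $\ket{\psi}$, which is trivially parallel.

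The tester then outputs $\rho=\tr_{\mathrm{anc}}(\ketbra{\hat\psi}{\hat\psi})$. By contractivity of the trace norm under partial trace,
\[\|\rho-\mathcal{E}(\ketbra{0}{0})\|_1\leq \|\ketbra{\hat\psi}{\hat\psi}-\ketbra{\psi}{\psi}\|_1\leq\varepsilon .\]
Hence this tester solves $\{A_{\mathcal{E}}\}$ with probability at least $2/3$ for every dilation of $\mathcal{E}$, and \cref{coro-12122150} gives a parallel tester with the same query count using $\mathcal{E}$ itself.

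The main point to check is that the algorithm of \cite{chen2025inverse}, as it is invoked, is genuinely a parallel tester in the sense of \cref{def:parallel_tester}. That means it must be expressible as: prepare an input state, apply $U^{\otimes n}$ (tensored with identity on an ancilla), then perform one joint POVM. Any intermediate adaptive processing would break this form, and so would a requirement for $U^\dag$ or controlled-$U$.

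The trace-norm guarantee itself needs no extra care. If the algorithm's accuracy is stated in fidelity, say $|\braket{\hat\psi}{\psi}|^2\geq 1-\varepsilon^2/4$, it converts via $\|\ketbra{\hat\psi}{\hat\psi}-\ketbra{\psi}{\psi}\|_1=2\sqrt{1-|\braket{\hat\psi}{\psi}|^2}$, which changes the bound only by constant factors.
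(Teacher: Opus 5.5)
Your proposal is correct and matches the paper's proof essentially step for step. You take $A_{\mathcal{E}}$ to be the set of states $\varepsilon$-close to $\mathcal{E}(\ketbra{0}{0})$, run the parallel, inverse-free pure-state tomography algorithm of \cite{chen2025inverse} on $U\ket{0}$ for a dilation unitary $U$, output the partial trace over the ancilla (using contractivity of the trace norm), and transfer back to $\mathcal{E}$ via \cref{coro-12122150}. Your extra checks, that outcomes may be arbitrary labels and that the algorithm fits \cref{def:parallel_tester}, are points the paper takes for granted.
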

\begin{proof}
Consider the task $\{A_{\mathcal{E}}\}_{\mathcal{E}\in\qchannel_{d_1,d_2}^r}$ where
\[A_{\mathcal{E}}=\left\{\rho\in \mathcal{L}(\mathbb{C}^{d_2}) \,\,\big|\,\, \left\|\rho-\mathcal{E}(\ketbra{0}{0})\right\|_1\leq\varepsilon\right\}.\]
Note that, for $\mathcal{E}\in\qchannel_{d_1,d_2}^r$, any isometry in $\dilation_r(\mathcal{E})$ is a $d_1$-dimensional unitary.
Using Chen's algorithm~\cite{chen2025inverse}, we have a parallel tester that makes $n=O(\min\{d_1^{1.5}/\varepsilon,d_1/\varepsilon^2\})$ queries to a unitary $\mathcal{U}\in\dilation_r(\mathcal{E})$ and produces $\ket{\psi}$ satisfying
\[\|\tr_{r}(\ketbra{\psi}{\psi})-\mathcal{E}(\ketbra{0}{0})\|_1\leq \|\ketbra{\psi}{\psi}-U\ketbra{0}{0}U^\dag\|_1\leq \varepsilon,\]
with probability at least $2/3$,
where the second inequality is because Chen's algorithm produces an estimate $\ket{\psi}$ for $U\ket{0}$\footnote{In \cite{chen2025inverse}, the author considered estimating $U\ket{d}$ for notation convenience, here we consider estimating $U\ket{0}$.} to within trace norm error $\varepsilon$ with probability at least $2/3$.
We set the tester to finally output $\tr_r(\ketbra{\psi}{\psi})$ upon producing $\ket{\psi}$.
Then, it solves the task $\{A_{\mathcal{E}}\}_{\mathcal{E}\in\qchannel_{d_1,d_2}^r}$ by making $n$ queries to a dilation of $\mathcal{E}$. 
Due to \cref{coro-12122150}, there exists a parallel tester that solves this task by making $n$ queries to $\mathcal{E}$ itself.
\end{proof}

Note that \cref{coro-12120135} and \cref{coro-12120136} provide Heisenberg-scaling upper bounds in the boundary regime $\tau=rd_2/d_1=1$. 
We also provide upper bounds with mixing Heisenberg- and classical- scalings in the non-boundary regime $\tau>1$. 
This is obtained by combining our \cref{coro-12122150} with the isometry tomography algorithm provided in \cite{yoshida2025quantum}.
\begin{theorem}\label{coro-2270246}
Suppose $\tau=rd_2/d_1>1$. There exists a parallel tester that makes $O((rd_2-d_1)d_1/\varepsilon^2+d_1^2/\varepsilon)$ queries to an unknown channel $\mathcal{E}\in\qchannel_{d_1,d_2}^r$ and produces an estimate $\mathcal{F}$ satisfying $\|\frac{1}{d}C_{\mathcal{F}}-\frac{1}{d}C_{\mathcal{E}}\|_1\leq \varepsilon$ with probability at least $2/3$. Here, $C_\mathcal{E}$ is the unnormalized Choi state of $\mathcal{E}$ and $\|\cdot\|_1$ denotes the trace norm.
\end{theorem}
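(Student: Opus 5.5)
The plan follows the template of \cref{coro-12120135}: reduce to isometry tomography through \cref{coro-12122150} and then invoke the parallel isometry tomography algorithm of Yoshida, Miyazaki, and Murao~\cite{yoshida2025quantum}.

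\textbf{Step 1: the estimation task.} Take
\[
A_{\mathcal{E}}=\left\{\mathcal{F}\in\qchannel_{d_1,d_2}^r \,\,\bigg|\,\, \left\|\tfrac{1}{d_1}C_{\mathcal{F}}-\tfrac{1}{d_1}C_{\mathcal{E}}\right\|_1\leq\varepsilon\right\}.
\]
For $\mathcal{E}\in\qchannel_{d_1,d_2}^r$, every $\mathcal{V}\in\dilation_r(\mathcal{E})$ lies in $\isochannel_{d_1,D}$ with $D=rd_2$.

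\textbf{Step 2: parallel isometry tomography with good fidelity.} I would use the result of~\cite{yoshida2025quantum}: there is a parallel tester that makes
\[
n=O\!\left((D-d_1)d_1/\delta+d_1^2/\sqrt{\delta}\right)
\]
queries to an unknown isometry $V\colon\mathbb{C}^{d_1}\to\mathbb{C}^D$ and outputs an isometry $W$ with $1-\mathrm{F}_{\mathrm{ch}}(\mathcal{W},\mathcal{V})\leq\delta$, with probability at least $2/3$. Their mean-infidelity guarantee is converted to this high-probability statement by Markov's inequality, at a constant-factor cost. Set $\delta=\varepsilon^2$. By Fuchs--van de Graaf for pure Choi states, as used in the proof of \cref{coro-12120135},
\[
\left\|\tfrac{1}{d_1}C_\mathcal{W}-\tfrac{1}{d_1}C_\mathcal{V}\right\|_1\leq O(\varepsilon).
\]
Substituting $D=rd_2$ gives $n=O((rd_2-d_1)d_1/\varepsilon^2+d_1^2/\varepsilon)$.

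\textbf{Step 3: contraction and transfer.} The tester outputs $\contract_r(\mathcal{W})$. Because partial trace contracts the trace norm,
\[
\left\|\tfrac{1}{d_1}C_{\contract_r(\mathcal{W})}-\tfrac{1}{d_1}C_\mathcal{E}\right\|_1\leq\left\|\tfrac{1}{d_1}C_\mathcal{W}-\tfrac{1}{d_1}C_\mathcal{V}\right\|_1\leq\varepsilon,
\]
after rescaling $\varepsilon$ by a constant. This uses $C_{\contract_r(\mathcal{V})}=\tr_{\mathrm{anc}}C_\mathcal{V}$ and $\contract_r(\mathcal{V})=\mathcal{E}$. So this tester solves $\{A_\mathcal{E}\}$ with $n$ parallel queries to an arbitrary dilation of $\mathcal{E}$. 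Applying \cref{coro-12122150} then yields a parallel tester with $n$ queries to $\mathcal{E}$ itself that succeeds with probability at least $2/3$.

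\textbf{Main obstacle.} The delicate step is Step 2. I need to confirm that the algorithm of~\cite{yoshida2025quantum} is genuinely parallel and uses no inverse or conjugate queries, since \cref{coro-12122150} applies only to parallel testers. I also need its complexity in exactly the stated two-term form, with the $d_1^2/\sqrt{\delta}$ term coming from the unitary-like part. If that paper states only an average-fidelity bound, I would add the Markov argument. If it is phrased as a covariant estimation over $D$-dimensional isometries, I would check that the query count depends on $D-d_1$ as claimed, rather than on $D$.
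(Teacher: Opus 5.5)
Your proposal is correct and follows the paper's proof essentially step for step: the same Choi-trace-norm estimation task, the parallel Yoshida--Miyazaki--Murao isometry tomography, then contraction under partial trace and transfer via \cref{coro-12122150}. The isometry step is the paper's \cref{lemma-2232322}, which gives $1-\mathrm{F}\le 144\bigl(d_1^4/n^2+(rd_2-d_1)d_1/n\bigr)$, followed by Markov and $\|\cdot\|_1=2\sqrt{1-\mathrm{F}_{\mathrm{ch}}}$ for pure Choi states. One detail to make explicit in Step 2: the fidelity in \cite{yoshida2025quantum} is averaged over Haar-random $V$ as well as over the measurement outcomes. The paper uses the covariance of that estimator to turn this into a bound for every fixed $V$ before applying Markov, and your argument needs the same step.
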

\begin{proof}
Consider the estimation task $\{A_{\mathcal{E}}\}_{\mathcal{E}\in\qchannel_{d_1,d_2}^r}$ where
\[A_{\mathcal{E}}=\left\{\mathcal{F}\in\qchannel_{d_1,d_2}^r \,\,\big|\,\, \left\|\frac{1}{d}C_\mathcal{F}-\frac{1}{d}C_\mathcal{E}\right\|_1 \leq \varepsilon\right\}.\]
Note that, for $\mathcal{E}\in\qchannel_{d_1,d_2}^r$, any isometry in $\dilation_r(\mathcal{E})$ is in $\isochannel_{d_1,rd_2}$.
Using Yoshida-Miyazaki-Murao algorithm~\cite{yoshida2025quantum} (see also \cref{lemma-2232322}), we have a parallel tester that makes $n=O((rd_2-d_1)d_1/\varepsilon^2+d_1^2/\varepsilon)$ queries to an isometry $\mathcal{V}\in \dilation_r(\mathcal{E})$ and produces an estimate $\mathcal{W}$ satisfying 
\[\left\|\frac{1}{d}C_{\contract_r(\mathcal{W})}-\frac{1}{d}C_\mathcal{E}\right\|_1 \leq \left\|\frac{1}{d}C_\mathcal{W}-\frac{1}{d}C_\mathcal{V}\right\|_1\leq \varepsilon,\]
with probability at least $2/3$,
where the first inequality is by the contractivity of the trace norm.
We set the tester to finally output $\contract_r(\mathcal{W})$ upon producing $\mathcal{W}$. Then it solves the task $\{A_{\mathcal{E}}\}_{\mathcal{E}\in\qchannel_{d_1,d_2}^r}$ by making $n$ queries to a dilation of $\mathcal{E}$. 
Due to \cref{coro-12122150}, there exists a parallel tester that solves this task by making $n$ queries to $\mathcal{E}$ itself.
\end{proof}

\subsection{Construction of the local testers}\label{sec-4170414}
Now, we prove \cref{thm-1221159}. Our proof draws on an idea similar to the construction of local testers for quantum states in \cite{chen2024local}, and extends it to local testers for quantum channels.

We introduce some notation that will be used in the proof later.
\begin{notation}
For each $i\in [n]$, we define $\mathcal{H}_{\mathrm{A},i}\cong\mathbb{C}^{d_1}$ and $\mathcal{H}_{\mathrm{B},i}\otimes\mathcal{H}_{\mathrm{anc},i}\cong\mathbb{C}^{d_2}\otimes\mathbb{C}^{r}$ to be the input and output systems of the $i$-th query to an isometry channel $\mathcal{V}\in\isochannel_{d_1,rd_2}$.
The following decompositions are due to Schur-Weyl duality:
\[\bigotimes_{i=1}^n \mathcal{H}_{\mathrm{A},i}\otimes\mathcal{H}_{\mathrm{B},i}\stackrel{\mathfrak{S}_n \times \mathbb{U}_{d_1d_2}}{\cong} \bigoplus_{\lambda\vdash_{d_1d_2} n}\mathcal{P}_{\lambda}\otimes\mathcal{Q}^{d_1d_2}_{\lambda}\qquad\textup{and}\qquad \bigotimes_{i=1}^n\mathcal{H}_{\mathrm{anc},i} \stackrel{\mathfrak{S}_n\times \mathbb{U}_{r}}{\cong}\bigoplus_{\lambda\vdash_{r} n}\mathcal{P}_\lambda\otimes\mathcal{Q}^r_\lambda.\]
Hence, we can see that
\[\bigotimes_{i=1}^n \mathcal{H}_{\mathrm{A},i}\otimes\mathcal{H}_{\mathrm{B},i}\otimes\mathcal{H}_{\mathrm{anc},i}\stackrel{\mathfrak{S}_n\times\mathfrak{S}_n\times\mathbb{U}_{d_1d_2}\times\mathbb{U}_r}{\cong}\bigoplus_{\substack{\lambda\vdash_{d_1d_2}n\\ \mu\vdash_r n}}\mathcal{P}_{\mathrm{AB},\lambda}\otimes\mathcal{P}_{\mathrm{anc},\mu}\otimes\mathcal{Q}^{d_1d_2}_{\mathrm{AB},\lambda}\otimes\mathcal{Q}^r_{\mathrm{anc},\mu},\]
where we use $\mathcal{P}_{\mathrm{AB},\lambda}\otimes \mathcal{Q}^{d_1d_2}_{\mathrm{AB},\lambda}$ to denote the subspace $\mathcal{P}_{\lambda}\otimes \mathcal{Q}^{d_1d_2}_{\lambda}$ in $\bigotimes_{i=1}^n\mathcal{H}_{\mathrm{A},i}\otimes\mathcal{H}_{\mathrm{B},i}$, and use $\mathcal{P}_{\mathrm{anc},\mu}\otimes \mathcal{Q}^r_{\mathrm{anc},\mu}$ to denote the subspace $\mathcal{P}_\mu\otimes\mathcal{Q}^r_\mu$ in $\bigotimes_{i=1}^n \mathcal{H}_{\mathrm{anc},i}$.
\end{notation}

We give the proof of \cref{thm-1221159} as follows.

\begin{proof}[Proof of \cref{thm-1221159}]
Define $s\coloneqq \min\{d_1d_2,r\}$. Note that here, we do not assume $r\leq d_1d_2$.
The construction of the parallel tester $\{\widetilde{T}_i\}_i\cup\{\widetilde{T}_\bot\}$ is as follows.
\begin{itemize}
    \item First, we define a new tester $\{\overline{T}_i\}_i$ such that
    \begin{equation}\label{eq-1242355}
    \overline{T}_i\coloneqq \EE{U\sim\mathbb{U}_{r}}[U^{\otimes n} T_iU^{\dag\otimes n}],
    \end{equation}
    where $U^{\otimes n}$ acts on $\bigotimes_{j=1}^n \mathcal{H}_{\mathrm{anc},j}$.
    
    \item We then define the tester $\{\widetilde{T}_i\}_i\cup\{\widetilde{T}_\bot\}$ as follows.
    For each $i$,
    \begin{equation}\label{eq-1242217}
    \widetilde{T}_i\coloneqq \bigoplus_{\lambda\vdash_{s} n} \frac{1}{\dim(\mathcal{P}_\lambda)\dim(\mathcal{Q}^r_\lambda)} \cdot I_{\mathcal{P}_{\mathrm{AB},\lambda}}\otimes \tr_{\mathcal{Q}^r_{\mathrm{anc},\lambda}}\Big(\bbra{I_{\mathcal{P}_\lambda}}\overline{T}_i\kett{I_{\mathcal{P}_\lambda}}\Big),
    \end{equation}
    where $\kett{I_{\mathcal{P}_\lambda}}\in\mathcal{P}_{\mathrm{AB},\lambda}\otimes\mathcal{P}_{\mathrm{anc},\lambda}$ is the (unnormalized) maximally entangled state defined w.r.t. the Young's orthogonal basis (or Young-Yamanouchi basis) on which $\pi\in\mathfrak{S}_n$ acts as a real matrix~\cite{ceccherini2010representation}.   
    Note that $\widetilde{T}_i$ is a linear operator on $\bigotimes_{j=1}^n \mathcal{H}_{\mathrm{A},j}\otimes \mathcal{H}_{\mathrm{B},j}$.
    As we will not explicitly use $\widetilde{T}_\bot$, its definition is deferred to \cref{eq-4172122} for clarity.   
\end{itemize}
In order to verify our construction, we prove in \cref{lemma-1242352} that $\{\overline{T}_i\}_i$ is a parallel tester which makes $n$ queries to an isometry $\mathcal{V}\in\isochannel_{d_1,rd_2}$ and produces a classical outcome $i$ with probability 
\begin{equation*}
    \EE{\mathcal{W}\sim \dilation_r(\contract_r(\mathcal{V}))} [T_i\star C_\mathcal{W}^{\otimes n}],
\end{equation*}
and we also give an explicit expression of this probability. 
Then, by using \cref{lemma-1242352}, we prove in \cref{lemma-1242353} that $\{\widetilde{T}_i\}_i\cup\{\widetilde{T}_\bot\}$ is indeed a parallel tester that makes $n$ queries to a quantum channel $\mathcal{E}\in\qchannel_{d_1,d_2}^r$ and produces outcome $i$ with probability $\EE{\mathcal{W}\sim \dilation_r(\mathcal{E})} [T_i\star C_\mathcal{W}^{\otimes n}]$, as desired. 
\end{proof}

In the following lemma, we prove some properties of the tester $\{\overline{T_i}\}_i$.
\begin{lemma}\label{lemma-1242352}
The tester $\{\overline{T}_i\}_i$ defined in \cref{eq-1242355} satisfies:
\begin{enumerate}
\item $\{\overline{T}_i\}_i$ is a parallel tester. Moreover, for $\mathcal{V}\in\isochannel_{d_1,rd_2}$, the tester produces $i$ with probability
\begin{equation}\label{eq-4180336}
\overline{T}_i\star C_{\mathcal{V}}^{\otimes n}=\EE{\mathcal{W}\sim\dilation_r(\contract_r(\mathcal{V}))}[T_i\star C_{\mathcal{W}}^{\otimes n}].
\end{equation}

\item The probability in \cref{eq-4180336} can also be expressed as
\[\overline{T}_i\star C_{\mathcal{V}}^{\otimes n}=\sum_{\lambda\vdash_s n}\frac{1}{\dim(\mathcal{Q}^r_\lambda)}\tr\left(\tr_{\mathcal{Q}^r_{\mathrm{anc},\lambda}}\Big(\bbra{I_{\mathcal{P}_\lambda}}\overline{T}_i^{\mathrm{T}}\kett{I_{\mathcal{P}_\lambda}}\Big)\cdot \tr_{\mathcal{Q}^r_{\mathrm{anc},\lambda}}\Big(\ketbra{V_\lambda}{V_\lambda}\Big)\right),\]
where $s=\min\{d_1d_2,r\}$, and $\ket{V_\lambda}\in\mathcal{Q}^{d_1d_2}_{\mathrm{AB},\lambda}\otimes\mathcal{Q}^r_{\mathrm{anc},\lambda}$ is a vector appearing in the decomposition $\kett{V}^{\otimes n}=\bigoplus_{\lambda\vdash_{s}n}\kett{I_{\mathcal{P}_\lambda}}\otimes \ket{V_\lambda}$ according to \cref{lemma-3262205}.
\end{enumerate}
\end{lemma}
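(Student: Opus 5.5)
For item 1, I will check the parallel-tester conditions directly and then compute the outcome probabilities. Each $\overline{T}_i$ is an average of positive operators, so $\overline{T}_i\geq 0$. Summing over $i$ and using $\sum_i T_i=\rho_{\mathrm{A}}\otimes I_{\mathrm{B},\mathrm{anc}}$, where the output of each query is $\mathcal{H}_{\mathrm{B},j}\otimes\mathcal{H}_{\mathrm{anc},j}$, gives $\sum_i\overline{T}_i=\EE{U}[U^{\otimes n}(\rho_{\mathrm{A}}\otimes I)U^{\dag\otimes n}]=\rho_{\mathrm{A}}\otimes I$. The conjugation acts only on the output ancilla, which is where the identity sits, so $\{\overline{T}_i\}_i$ is a parallel tester.

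For the probability, I use $\overline{T}_i\star C_{\mathcal{V}}^{\otimes n}=\tr(\overline{T}_i^{\mathrm{T}}C_{\mathcal{V}}^{\otimes n})$ and move the Haar conjugation onto the Choi operator. Since $(U^{\otimes n}XU^{\dag\otimes n})^{\mathrm{T}}=\bar U^{\otimes n}X^{\mathrm{T}}U^{\mathrm{T}\otimes n}$, the cyclicity of the trace gives
\[\tr\big(T_i^{\mathrm{T}}\,U^{\mathrm{T}\otimes n}C_{\mathcal{V}}^{\otimes n}\bar U^{\otimes n}\big).\]
The middle factor is $C_{\mathcal{W}}^{\otimes n}$ with $W=(U^{\mathrm{T}}\otimes I_{d_2})V$. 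When $U$ is Haar, so is $U^{\mathrm{T}}$, and therefore $\mathcal{W}$ is distributed as Haar on $\dilation_r(\contract_r(\mathcal{V}))$ by the definition of that distribution. This proves \cref{eq-4180336}.

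For item 2, I will use the decomposition $\kett{V}^{\otimes n}=\bigoplus_{\lambda\vdash_s n}\kett{I_{\mathcal{P}_\lambda}}\otimes\ket{V_\lambda}$ from \cref{lemma-3262205}. Only $\lambda$ with at most $\min\{d_1d_2,r\}$ rows occur, since the $\mathfrak{S}_n$ action on $\mathrm{AB}$ matches the action on $\mathrm{anc}$ through $\kett{V}^{\otimes n}$; the copies are permutation-symmetric jointly. Then
\[\overline{T}_i\star C_{\mathcal{V}}^{\otimes n}=\bbra{V}^{\otimes n}\overline{T}_i^{\mathrm{T}}\kett{V}^{\otimes n}=\sum_{\lambda,\lambda'}\big(\bbra{I_{\mathcal{P}_\lambda}}\otimes\bra{V_\lambda}\big)\overline{T}_i^{\mathrm{T}}\big(\kett{I_{\mathcal{P}_{\lambda'}}}\otimes\ket{V_{\lambda'}}\big).\]

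Next I use the invariance of $\overline{T}_i^{\mathrm{T}}$. It commutes with $\bar U^{\otimes n}$ on the ancilla for every $U$, so it commutes with every $\mathrm{q}_\mu(\bar U)$ on $\bigoplus_\mu\mathcal{P}_{\mathrm{anc},\mu}\otimes\mathcal{Q}^r_{\mathrm{anc},\mu}$. By Schur's lemma, $\overline{T}_i^{\mathrm{T}}$ is block diagonal in the ancilla label $\mu$ and acts as identity on $\mathcal{Q}^r_{\mathrm{anc},\mu}$. Hence the cross terms $\lambda\neq\lambda'$ vanish, and each diagonal block has the form $(\text{operator on }\mathcal{P}_{\mathrm{AB}}\mathcal{Q}_{\mathrm{AB}}\mathcal{P}_{\mathrm{anc}})\otimes I_{\mathcal{Q}^r_{\mathrm{anc},\lambda}}$. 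For such a block, the contraction against $\ketbra{V_\lambda}{V_\lambda}$ equals
\[\frac{1}{\dim\mathcal{Q}^r_\lambda}\tr\Big(\tr_{\mathcal{Q}^r_{\mathrm{anc},\lambda}}\big(\bbra{I_{\mathcal{P}_\lambda}}\overline{T}_i^{\mathrm{T}}\kett{I_{\mathcal{P}_\lambda}}\big)\cdot\tr_{\mathcal{Q}^r_{\mathrm{anc},\lambda}}\ketbra{V_\lambda}{V_\lambda}\Big),\]
because tracing out a factor of the form $A\otimes I$ returns $\dim(\mathcal{Q}^r_\lambda)\,A$. Summing over $\lambda$ gives the stated formula.

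The main obstacle is the bookkeeping of transposes and complex conjugates. I must check that conjugation by $\bar U$ rather than $U$ still yields the full Schur-lemma structure (it does, since $\bar U$ is also Haar). I must also check that the maximally entangled vectors $\kett{I_{\mathcal{P}_\lambda}}$, defined in the real Young orthogonal basis, are compatible with the partial transpose. Because the permutation matrices in that basis are real, no conjugation issue arises.
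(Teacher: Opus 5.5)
Your proposal is correct and follows the paper's proof. Item 1 is the same Haar-averaging argument, and you are slightly more careful than the paper in noting that the transpose turns the conjugation into one by $U^{\mathrm{T}}$, which is again Haar. Item 2 uses the same decomposition from \cref{lemma-3262205} and Schur's lemma for the ancilla $\mathbb{U}_r$-action; the only difference is that you read off the commutant structure of $\overline{T}_i^{\mathrm{T}}$, whereas the paper twirls $\kettbbra{V}{V}^{\otimes n}$ and applies Schur's lemma to the twirled state, which is the same computation seen from the dual side.
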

\begin{proof}
\textbf{Item 1}. We have
\begin{equation}\label{eq-1260054}
\sum_{i}\overline{T}_i=\EE{U\sim\mathbb{U}_r}\left[U^{\otimes n}\sum_i T_iU^{\dag\otimes n}\right]=\EE{U\sim\mathbb{U}_r}\left[U^{\otimes n}(\rho_\mathrm{A}\otimes I_{\mathrm{B},\mathrm{anc}}) U^{\dag\otimes n}\right]=\rho_\mathrm{A}\otimes I_{\mathrm{B},\mathrm{anc}},
\end{equation}
where we used the fact that $\{T_i\}_i$ is a parallel tester so that $\sum_i T_i=\rho_\mathrm{A}\otimes I_{\mathrm{B},\mathrm{anc}}$ for $\rho_\mathrm{A}$ a density operator on $\bigotimes_{j=1}^n\mathcal{H}_{\mathrm{A},j}$ and $I_{\mathrm{B},\mathrm{anc}}$ the identity operator on $\bigotimes_{j=1}^n\mathcal{H}_{\mathrm{B},j}\otimes\mathcal{H}_{\mathrm{anc},j}$. We further note that $U^{\otimes n}$ acts only non-trivially on $\bigotimes_{j=1}^n \mathcal{H}_{\mathrm{anc},j}$.
Therefore, $\{\overline{T}_i\}_i$ is indeed a parallel tester.

On the other hand, we have
\begin{align}
\overline{T}_i\star C_{\mathcal{V}}^{\otimes n}&=\tr\!\left(\overline{T}_i^{\mathrm{T}} C_{\mathcal{V}}^{\otimes n}\right) \nonumber \\
&= \EE{U\sim \mathbb{U}_r} \left[\tr\!\left(T_i^{\mathrm{T}}U^{\otimes n} C_\mathcal{V}^{\otimes n} U^{\dag\otimes n}\right)\right]\label{eq:use-def-Tbar} \\
&=\EE{U\sim \mathbb{U}_r} \left[\tr\!\left(T_i^{\mathrm{T}}  C_{\mathcal{U}\circ \mathcal{V}}^{\otimes n}\right)\right]\nonumber\\
&=\EE{\mathcal{W}\sim\dilation_r(\contract_r(\mathcal{V}))}\left[\tr\!\left(T_i^\mathrm{T} C_{\mathcal{W}}^{\otimes n}\right)\right]\label{eq-1252216}\\
&=\EE{\mathcal{W}\sim\dilation_r(\contract_r(\mathcal{V}))}\left[T_i \star C_{\mathcal{W}}^{\otimes n}\right] \nonumber
\end{align}
where \Cref{eq:use-def-Tbar} is due to the definition of $\overline{T}_i$ (see \Cref{eq-1242355}),
and \cref{eq-1252216} uses the definition of the Haar distribution on $\dilation_r(\cdot)$ (see \cref{not:Haar}). 

\textbf{Item 2}. We treat $\kett{V}$ as a bipartite vector in $(\mathbb{C}^{d_1}\otimes \mathbb{C}^{d_2})\otimes \mathbb{C}^{r}$. Using \cref{lemma-3262205}, we can write $\kett{V}^{\otimes n}=\bigoplus_{\lambda\vdash_s n}\kett{I_{\mathcal{P}_\lambda}}\otimes\ket{V_\lambda}$, in which $\ket{V_\lambda}\in \mathcal{Q}^{d_1d_2}_{\mathrm{AB},\lambda}\otimes\mathcal{Q}^r_{\mathrm{anc},\lambda}$ and $\kett{I_{\mathcal{P}_\lambda}}\in\mathcal{P}_{\mathrm{AB},\lambda}\otimes\mathcal{P}_{\mathrm{anc},\lambda}$ is an (unnormalized) maximally entangled state. 
Also, we note that $U^{\otimes n}$ commutes with $\overline{T}_i$ for any $U\in\mathbb{U}_r$, in which $U^{\otimes n}$ acts non-trivially on $\bigotimes_{j=1}^n \mathcal{H}_{\mathrm{anc},j}$.
Thus,
\begin{align}
\overline{T}_i\star C_{\mathcal{V}}^{\otimes n}&=\tr\!\left(\overline{T}_i^\mathrm{T} \kettbbra{V}{V}^{\otimes n}\right)\nonumber \\
&=\tr\!\left(\overline{T}_i^\mathrm{T} \EE{U\sim\mathbb{U}_r}[U^{\otimes n}\kettbbra{V}{V}^{\otimes n} U^{\dag\otimes n}]\right) \nonumber  \\
&=\tr\!\left(\overline{T}_i^\mathrm{T} \EE{U\sim\mathbb{U}_r}\left[\bigoplus_{\lambda,\mu\vdash_s n} \kett{I_{\mathcal{P}_\lambda}}\bbra{I_{\mathcal{P}_\mu}}\otimes \texttt{q}_\lambda(U)\ketbra{V_\lambda}{V_\mu}\texttt{q}_\mu(U)^\dag \right]\right)\label{eq-1260033}\\
&=\tr\!\left(\overline{T}_i^{\mathrm{T}} \cdot \left(\bigoplus_{\lambda\vdash_s n}\kett{I_{\mathcal{P}_\lambda}}\bbra{I_{\mathcal{P}_\lambda}}\otimes \tr_{\mathcal{Q}^r_{\mathrm{anc},\lambda}}\Big(\ketbra{V_\lambda}{V_\lambda}\Big)\otimes \frac{1}{\dim(\mathcal{Q}^r_\lambda)}I_{\mathcal{Q}^r_{\mathrm{anc},\lambda}}\right)\right)\label{eq-1260034}\\
&=\sum_{\lambda\vdash_s n} \frac{1}{\dim(\mathcal{Q}^r_\lambda)}\tr\left(\tr_{\mathcal{Q}^r_{\mathrm{anc},\lambda}}\Big(\bbra{I_{\mathcal{P}_\lambda}}\overline{T}_i^{\mathrm{T}} \kett{I_{\mathcal{P}_\lambda}}\Big)\cdot \tr_{\mathcal{Q}^r_{\mathrm{anc},\lambda}}\Big(\ketbra{V_\lambda}{V_\lambda}\Big)\right),\nonumber
\end{align}
where \cref{eq-1260034} uses Schur's lemma~\cite{fulton2013representation}, and in \cref{eq-1260033} $\texttt{q}_\lambda(U)$ acts on $\mathcal{Q}^r_{\mathrm{anc},\lambda}$.
\end{proof}

Next, we prove some properties of $\{\widetilde{T}_i\}_i$ in the following lemma.

\begin{lemma}\label{lemma-1242353}
The operators $\{\widetilde{T}_i\}_i$ defined in \cref{eq-1242217} satisfies:
\begin{enumerate}
\item There exists a positive semidefinite operator $\widetilde{T}_\bot$ such that $\{\widetilde{T}_i\}_i\cup\{\widetilde{T}_\bot\}$ is a parallel tester. In fact, $\widetilde{T}_\bot$ can be explicitly constructed, as shown in \cref{eq-4172122}.
\item For any quantum channel $\mathcal{E}\in\qchannel_{d_1,d_2}^r$, the tester produces outcome $i$ with probability
\[\widetilde{T}_i\star C_{\mathcal{E}}^{\otimes n}=\EE{\mathcal{W}\sim\dilation_r(\mathcal{E})}[T_i\star C_{\mathcal{W}}^{\otimes n}].\]
\end{enumerate}
\end{lemma}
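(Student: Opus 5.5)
Both items rest on the Schur--Weyl structure of $C_{\mathcal{E}}^{\otimes n}$ and on the expression for $\overline{T}_i\star C_{\mathcal{V}}^{\otimes n}$ in item 2 of \cref{lemma-1242352}. We prove item 2 first and then item 1.

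\emph{Item 2.} Fix $\mathcal{E}\in\qchannel_{d_1,d_2}^r$ and a dilation $\mathcal{V}\in\dilation_r(\mathcal{E})$. Then $C_{\mathcal{E}}=\tr_{\mathrm{anc}}\kettbbra{V}{V}$, so $C_{\mathcal{E}}^{\otimes n}=\tr_{\mathrm{anc}^n}\kettbbra{V}{V}^{\otimes n}$. Insert the decomposition $\kett{V}^{\otimes n}=\bigoplus_{\lambda\vdash_s n}\kett{I_{\mathcal{P}_\lambda}}\otimes\ket{V_\lambda}$ from \cref{lemma-3262205}. Tracing out the ancilla means tracing $\mathcal{P}_{\mathrm{anc},\lambda}\otimes\mathcal{Q}^r_{\mathrm{anc},\lambda}$. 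The $\lambda$-blocks are orthogonal on the ancilla side, so cross terms with $\lambda\neq\mu$ vanish. Tracing $\mathcal{P}_{\mathrm{anc},\lambda}$ out of $\kettbbra{I_{\mathcal{P}_\lambda}}{I_{\mathcal{P}_\lambda}}$ gives $I_{\mathcal{P}_{\mathrm{AB},\lambda}}$. Hence
\[
C_{\mathcal{E}}^{\otimes n}=\bigoplus_{\lambda\vdash_s n} I_{\mathcal{P}_{\mathrm{AB},\lambda}}\otimes\tr_{\mathcal{Q}^r_{\mathrm{anc},\lambda}}\bigl(\ketbra{V_\lambda}{V_\lambda}\bigr).
\]
Substitute this and the definition \cref{eq-1242217} into $\widetilde{T}_i\star C_{\mathcal{E}}^{\otimes n}=\tr(\widetilde{T}_i^{\mathrm{T}}C_{\mathcal{E}}^{\otimes n})$. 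The trace over $\mathcal{P}_{\mathrm{AB},\lambda}$ of $I\cdot I$ contributes a factor $\dim\mathcal{P}_\lambda$, which cancels the $1/\dim(\mathcal{P}_\lambda)$ prefactor.

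One technical point must be checked: transposition has to commute with the Schur--Weyl block structure and with the contraction against $\kett{I_{\mathcal{P}_\lambda}}$. This holds because the Young orthogonal basis is real, so $\kett{I_{\mathcal{P}_\lambda}}$ is real and the transpose of $\bbra{I_{\mathcal{P}_\lambda}}\overline{T}_i\kett{I_{\mathcal{P}_\lambda}}$ equals $\bbra{I_{\mathcal{P}_\lambda}}\overline{T}_i^{\mathrm{T}}\kett{I_{\mathcal{P}_\lambda}}$, after fixing compatible bases for the $\mathcal{Q}$ spaces. With this, we obtain exactly the formula of \cref{lemma-1242352}, item 2. That formula equals $\EE{\mathcal{W}\sim\dilation_r(\mathcal{E})}[T_i\star C_{\mathcal{W}}^{\otimes n}]$ by item 1 of the same lemma, since $\contract_r(\mathcal{V})=\mathcal{E}$.

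\emph{Item 1.} Each $\widetilde{T}_i$ is positive semidefinite. Indeed, $\bbra{I}\overline{T}_i\kett{I}$ is a compression of a PSD operator, partial traces preserve positivity, and tensoring with an identity preserves positivity. By linearity,
\[
\sum_i\widetilde{T}_i=\bigoplus_{\lambda\vdash_s n}\frac{1}{\dim\mathcal{P}_\lambda\dim\mathcal{Q}^r_\lambda}\, I_{\mathcal{P}_{\mathrm{AB},\lambda}}\otimes\tr_{\mathcal{Q}^r_{\mathrm{anc},\lambda}}\bbra{I_{\mathcal{P}_\lambda}}(\rho_{\mathrm{A}}\otimes I_{\mathrm{B},\mathrm{anc}})\kett{I_{\mathcal{P}_\lambda}},
\]
using \cref{eq-1260054}. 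Define $\widetilde{T}_\bot\coloneqq\rho_{\mathrm{A}}\otimes I_{\mathrm{B}}-\sum_i\widetilde{T}_i$. It then suffices to show
\[
\sum_i\widetilde{T}_i\le\rho_{\mathrm{A}}\otimes I_{\mathrm{B}}.
\]
Our planned argument is as follows. Write $M\coloneqq\rho_{\mathrm{A}}\otimes I_{\mathrm{B}}$. The map $X\mapsto\bigoplus_\lambda\frac{1}{\dim\mathcal{P}_\lambda\dim\mathcal{Q}^r_\lambda} I\otimes\tr_{\mathcal{Q}^r}\bbra{I_{\mathcal{P}_\lambda}}(X\otimes I_{\mathrm{anc}})\kett{I_{\mathcal{P}_\lambda}}$ is completely positive. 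We will show it is a pinching/twirl-type operation: it equals the composition of averaging over $\mathfrak{S}_n$ acting on $\mathrm{AB}$ with the isotypic projection onto $\lambda\vdash_s n$. Then $M$, which is $\mathfrak{S}_n$-invariant only if $\rho_{\mathrm{A}}$ is, may first be replaced by its symmetrization, since we may assume $\rho_{\mathrm{A}}$ is permutation invariant by pre-symmetrizing the original tester. After that replacement the image is $\Pi_{\le s}M\Pi_{\le s}\le M$, where $\Pi_{\le s}$ projects onto Young diagrams with at most $s$ rows and commutes with the symmetric $M$.

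\textbf{Main obstacle.} The hard step is this last computation: identifying $\frac{1}{\dim\mathcal{P}_\lambda}\bbra{I_{\mathcal{P}_\lambda}}(X_{\mathrm{AB}}\otimes I)\kett{I_{\mathcal{P}_\lambda}}$, tensored back with $I_{\mathcal{P}_{\mathrm{AB},\lambda}}$, as the $\mathcal{P}$-twirl. This follows from Schur's lemma applied to $X$ restricted to the $\lambda$-block. We also need to confirm that the $\dim\mathcal{Q}^r_\lambda$ factor is exactly cancelled by $\tr_{\mathcal{Q}^r}I$. Once this identity is established, $\widetilde{T}_\bot\ge0$ and $\sum_i\widetilde{T}_i+\widetilde{T}_\bot=\rho_{\mathrm{A}}\otimes I_{\mathrm{B}}$, so $\{\widetilde{T}_i\}_i\cup\{\widetilde{T}_\bot\}$ is a parallel tester.
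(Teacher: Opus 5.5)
Your proposal is correct and follows the paper's proof. Item 2 is the same computation $C_{\mathcal{E}}^{\otimes n}=\bigoplus_{\lambda\vdash_s n} I_{\mathcal{P}_{\mathrm{AB},\lambda}}\otimes\tr_{\mathcal{Q}^r_{\mathrm{anc},\lambda}}(\ketbra{V_\lambda}{V_\lambda})$ combined with both items of \cref{lemma-1242352}, and item 1 is the same identification of $\sum_i\widetilde{T}_i$ as the $\mathfrak{S}_n$-twirl of $\rho_{\mathrm{A}}\otimes I_{\mathrm{B}}$ truncated to $\lambda\vdash_s n$.

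The one difference is your pre-symmetrization of the tester. That step tacitly requires that $\widetilde{T}_i$ is unchanged; this holds because $\kett{I_{\mathcal{P}_\lambda}}$ is invariant under $\texttt{p}_\lambda(\pi)\otimes\texttt{p}_\lambda(\pi)$. The paper avoids the step by taking the normalization $\rho^{\mathrm{sym}}_{\mathrm{A}}\otimes I_{\mathrm{B}}$, which the twirl produces automatically and which gives \cref{eq-4172122} directly.
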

\begin{proof}
\textbf{Item 1}.
According to the definition in \Cref{eq-1242217}, we have
\begin{align}
\sum_i \widetilde{T}_i&=\bigoplus_{\lambda\vdash_s n}\frac{1}{\dim(\mathcal{P}_\lambda)\dim(\mathcal{Q}^r_\lambda)}\cdot I_{\mathcal{P}_{\mathrm{AB},\lambda}}\otimes \tr_{\mathcal{Q}^r_{\mathrm{anc},\lambda}}\left(\bbra{I_{\mathcal{P}_\lambda}}\sum_i \overline{T}_i\kett{I_{\mathcal{P}_\lambda}}\right) \nonumber\\
&=\bigoplus_{\lambda\vdash_s n}\frac{1}{\dim(\mathcal{P}_\lambda)\dim(\mathcal{Q}^r_\lambda)}\cdot I_{\mathcal{P}_{\mathrm{AB},\lambda}}\otimes \tr_{\mathcal{Q}^r_{\mathrm{anc},\lambda}}\Big(\bbra{I_{\mathcal{P}_\lambda}} \rho_\mathrm{A}\otimes I_{\mathrm{B},\mathrm{anc}} \kett{I_{\mathcal{P}_\lambda}}\Big),\label{eq-1260117}
\end{align}
where in \cref{eq-1260117} we use \cref{eq-1260054}.
Then, we write $\rho_\mathrm{A}\otimes I_{\mathrm{B},\mathrm{anc}}=(\rho_\mathrm{A}\otimes I_{\mathrm{B}})\otimes I_{\mathrm{anc}}$ and write $\rho_\mathrm{A}\otimes I_{\mathrm{B}}$ according to the subspace decomposition in the Schur-Weyl duality:
\[\rho_\mathrm{A}\otimes I_{\mathrm{B}}=\bigoplus_{\substack{\lambda,\mu\vdash_{d_1d_2}n}} M_{\lambda\rightarrow \mu},\]
where $M_{\lambda\rightarrow\mu}$ is a linear operator from the subspace $\mathcal{P}_{\mathrm{AB},\lambda}\otimes\mathcal{Q}^{d_1d_2}_{\mathrm{AB},\lambda}$ to the subspace $\mathcal{P}_{\mathrm{AB},\mu}\otimes\mathcal{Q}^{d_1d_2}_{\mathrm{AB},\mu}$, and $M_{\lambda\rightarrow\lambda}$ is positive semidefinite for each $\lambda$ since $\rho_\mathrm{A}\otimes I_{\mathrm{B}}$ is positive semidefinite. 
Moreoever, we can see that
\[(\rho_\mathrm{A}\otimes I_{\mathrm{B}})\otimes I_{\mathrm{anc}}=\bigoplus_{\substack{\lambda,\mu\vdash_{d_1d_2}n \\ \nu\vdash_r n}}M_{\lambda\rightarrow\mu}\otimes I_{\mathcal{P}_{\mathrm{anc},\nu}}\otimes I_{\mathcal{Q}^r_{\mathrm{anc},\nu}}.\]
This means, \cref{eq-1260117} is equal to
\begin{align}
&\bigoplus_{\lambda\vdash_s n}\frac{1}{\dim(\mathcal{P}_\lambda)\dim(\mathcal{Q}^r_\lambda)}\cdot I_{\mathcal{P}_{\mathrm{AB},\lambda}}\otimes \tr_{\mathcal{Q}^r_{\mathrm{anc},\lambda}}\left(\bbra{I_{\mathcal{P}_\lambda}}\bigoplus_{\substack{\kappa,\mu\vdash_{d_1d_2}n \\ \nu\vdash_r n}}M_{\kappa\rightarrow\mu}\otimes I_{\mathcal{P}_{\mathrm{anc},\nu}}\otimes I_{\mathcal{Q}^r_{\mathrm{anc},\nu}}\kett{I_{\mathcal{P}_\lambda}}\right) \nonumber \\
=&\bigoplus_{\lambda\vdash_s n}\frac{1}{\dim(\mathcal{P}_\lambda)\dim(\mathcal{Q}^r_\lambda)}\cdot I_{\mathcal{P}_{\mathrm{AB},\lambda}}\otimes \tr_{\mathcal{Q}^r_{\mathrm{anc},\lambda}}\Big(\tr_{\mathcal{P}_{\mathrm{AB},\lambda}}(M_{\lambda\rightarrow\lambda})\otimes I_{\mathcal{Q}^r_{\mathrm{anc},\lambda}}\Big) \label{eq-1261541}\\
=& \bigoplus_{\lambda\vdash_s n}\frac{1}{\dim(\mathcal{P}_\lambda)}\cdot I_{\mathcal{P}_{\mathrm{AB},\lambda}}\otimes \tr_{\mathcal{P}_{\mathrm{AB},\lambda}}(M_{\lambda\rightarrow\lambda})\nonumber \\
\leq & \bigoplus_{\lambda\vdash_{d_1d_2} n}\frac{1}{\dim(\mathcal{P}_\lambda)}\cdot I_{\mathcal{P}_{\mathrm{AB},\lambda}}\otimes \tr_{\mathcal{P}_{\mathrm{AB},\lambda}}(M_{\lambda\rightarrow\lambda}),\label{eq-1261542}
\end{align}
where \cref{eq-1261541} uses that $\kett{I_{\mathcal{P}_\lambda}}$ is an (unnormalized) maximally entangled state on $\mathcal{P}_{\mathrm{AB},\lambda}\otimes\mathcal{P}_{\mathrm{anc},\lambda}$, and \cref{eq-1261542} uses that $s\leq d_1d_2$ and $M_{\lambda\rightarrow\lambda}\geq 0$ for any $\lambda$.
Next, we have
\begin{align}
\frac{1}{n!}\sum_{\pi\in\mathfrak{S}_n}\texttt{p}_{\mathrm{A}}(\pi)\,\rho_\mathrm{A}\, \texttt{p}_{\mathrm{A}}(\pi)^\dag \otimes I_{\mathrm{B}}&=\frac{1}{n!}\sum_{\pi\in \mathfrak{S}_n}\texttt{p}_{\mathrm{AB}}(\pi)(\rho_\mathrm{A}\otimes I_{\mathrm{B}})\texttt{p}_{\mathrm{AB}}(\pi)^\dag \nonumber \\
&=\bigoplus_{\lambda\vdash_{d_1d_2} n}\frac{1}{\dim(\mathcal{P}_\lambda)} \cdot I_{\mathcal{P}_{\mathrm{AB},\lambda}}\otimes \tr_{\mathcal{P}_{\mathrm{AB},\lambda}}(M_{\lambda\rightarrow\lambda}), \label{eq-1261722}
\end{align}
where $\texttt{p}_\mathrm{A}(\pi)$ and $\texttt{p}_{\mathrm{AB}}(\pi)$ denote the actions of $\pi$ (i.e., permuting tensor factors) on $\bigotimes_{j=1}^n \mathcal{H}_{\mathrm{A},j}$ and $\bigotimes_{j=1}^n \mathcal{H}_{\mathrm{A},j}\otimes\mathcal{H}_{\mathrm{B},j}$, respectively, and \cref{eq-1261722} uses Schur's lemma.
We can see that \cref{eq-1261722} equals exactly \cref{eq-1261542}. This means
\[\sum_{i} \widetilde{T}_i\leq \rho^{\mathrm{sym}}_\mathrm{A} \otimes I_{\mathrm{B}},\]
where $\rho^{\mathrm{sym}}_\mathrm{A}\coloneqq \frac{1}{n!}\sum_{\pi\in\mathfrak{S}_n} \texttt{p}_\mathrm{A}(\pi) \,\rho_\mathrm{A}\, \texttt{p}_\mathrm{A}(\pi)^\dag$ is a mixed quantum state. 
Thus there exists a positive semidefinite linear operator $\widetilde{T}_\bot$ satisfying $\sum_i\widetilde{T}_i+\widetilde{T}_\bot=\rho^{\mathrm{sym}}_\mathrm{A}\otimes I_{\mathrm{B}}$.
In fact, we can explicitly define 
\begin{equation}\label{eq-4172122}
\widetilde{T}_\bot\coloneqq \sum_{\substack{\lambda\vdash_{d_1d_2} n\\ l(\lambda)> s}} P_\lambda(\rho^{\mathrm{sym}}_\mathrm{A} \otimes I_\mathrm{B}) P_\lambda=\frac{1}{r^n}\sum_{i} \sum_{\substack{\lambda\vdash_{d_1d_2} n\\ l(\lambda)> s}} P_\lambda \, \tr_{\mathrm{anc}}(T_i)^{\mathrm{sym}} \, P_\lambda,
\end{equation}
where $l(\lambda)$ denotes the number of rows of $\lambda$, $P_\lambda$ denotes the orghotonal projector onto the subspace $\mathcal{P}_{\mathrm{AB},\lambda}\otimes \mathcal{Q}_{\mathrm{AB},\lambda}^{d_1d_2}$, and $\tr_\mathrm{anc}(T_i)^{\mathrm{sym}}= \frac{1}{n!}\sum_{\pi\in\mathfrak{S}_n} \texttt{p}_\mathrm{AB}(\pi) \,\tr_\mathrm{anc}(T_i)\,\, \texttt{p}_\mathrm{AB}(\pi)^\dag$.
Therefore, $\{\widetilde{T}_i\}_i\cup\{\widetilde{T}_\bot\}$ is a parallel tester.

\textbf{Item 2}. Suppose $\mathcal{E}\in\qchannel_{d_1,d_2}^r$ is a quantum channel. Let $\mathcal{V}\in\dilation_r(\mathcal{E})$ be an arbitrary dilation, where $V: \mathcal{H}_{\mathrm{A}} \rightarrow \mathcal{H}_{\mathrm{B}}\otimes\mathcal{H}_\mathrm{anc}$, $\mathcal{H}_\mathrm{A}\cong \mathbb{C}^{d_1}$, $\mathcal{H}_{\mathrm{B}}\cong\mathbb{C}^{d_2}$, and $\mathcal{H}_{\mathrm{anc}}\cong \mathbb{C}^{r}$. 
We can see that 
\[\tr_{\mathcal{H}_\mathrm{anc}}(C_{\mathcal{V}})=\tr_{\mathcal{H}_\mathrm{anc}}(\kettbbra{V}{V})=C_\mathcal{E}.\]
Treating $\kett{V}$ as a bipartite vector in $(\mathcal{H}_\mathrm{A}\otimes\mathcal{H}_\mathrm{B})\otimes\mathcal{H}_{\mathrm{anc}}$ and according to \cref{lemma-3262205}, we have $\kett{V}^{\otimes n}=\bigoplus_{\lambda\vdash_s n}\kett{I_{\mathcal{P}_\lambda}}\otimes \ket{V_\lambda}$ for $\kett{I_{\mathcal{P}_\lambda}}\in\mathcal{P}_{\mathrm{AB},\lambda}\otimes\mathcal{P}_{\mathrm{anc},\lambda}$ and $\ket{V_\lambda}\in\mathcal{Q}^{d_1d_2}_{\mathrm{AB},\lambda}\otimes\mathcal{Q}^r_{\mathrm{anc},\lambda}$.
Thus,
\begin{align}
\tr_{\mathrm{anc}}(\kettbbra{V}{V}^{\otimes n})&=\tr_{\mathrm{anc}}\left(\bigoplus_{\substack{\lambda,\mu\vdash_{s}n}} \kettbbra{I_{\mathcal{P}_\lambda}}{I_{\mathcal{P}_\mu}}\otimes \ketbra{V_\lambda}{V_\mu}\right) \nonumber\\
&=\bigoplus_{\lambda\vdash_s n} \tr_{\mathcal{P}_{\mathrm{anc},\lambda}}\Big(\kettbbra{I_{\mathcal{P}_\lambda}}{I_{\mathcal{P}_\lambda}}\Big)\otimes \tr_{\mathcal{Q}^r_{\mathrm{anc},\lambda}}\Big(\ketbra{V_\lambda}{V_\lambda}\Big)\nonumber \\
&=\bigoplus_{\lambda\vdash_s n}I_{\mathcal{P}_{\mathrm{AB},\lambda}}\otimes \tr_{\mathcal{Q}^r_{\mathrm{anc},\lambda}}\Big(\ketbra{V_\lambda}{V_\lambda}\Big),\label{eq-12151045}
\end{align}
in which $\tr_{\mathrm{anc}}(\cdot)$ denotes the partial trace on all ancilla systems $\bigotimes_{j=1}^n \mathcal{H}_{\mathrm{anc},j}$.
Also, we can see
\begin{align}
C_\mathcal{E}^{\otimes n}=\bigoplus_{\lambda\vdash_{d_1d_2} n} I_{\mathcal{P}_{\mathrm{AB},\lambda}}\otimes C_{\mathcal{E},\lambda},\label{eq-12151044}
\end{align}
for certain $\mathcal{C}_{\mathcal{E},\lambda}\in\mathcal{L}(\mathcal{Q}^{d_1d_2}_{\mathrm{AB},\lambda})$. 
Comparing \cref{eq-12151045} with \cref{eq-12151044}, we find that $\tr_{\mathcal{Q}^r_{\mathrm{anc},\lambda}}(\ketbra{V_\lambda}{V_\lambda})=C_{\mathcal{E},\lambda}$ for $\lambda\vdash_s n$, and also $C_{\mathcal{E},\lambda}=0$ for those $\lambda$ with more than $s$ rows.
This means,
\begin{align}
\widetilde{T}_i\star C_{\mathcal{E}}^{\otimes n}&=\tr\left(\left(\bigoplus_{\lambda\vdash_{s} n} \frac{1}{\dim(\mathcal{P}_\lambda)\dim(\mathcal{Q}^r_\lambda)} \cdot I_{\mathcal{P}_{\mathrm{AB},\lambda}}\otimes \tr_{\mathcal{Q}^r_{\mathrm{anc},\lambda}}\Big(\bbra{I_{\mathcal{P}_\lambda}}\overline{T}_i\kett{I_{\mathcal{P}_\lambda}}\Big)\right)^{\mathrm{T}} \cdot C^{\otimes n}_\mathcal{E}\right)\nonumber \\
&=\sum_{\lambda\vdash_s n}\frac{1}{\dim(\mathcal{Q}^r_\lambda)}\tr\left(\tr_{\mathcal{Q}^r_{\mathrm{anc},\lambda}}\Big(\bbra{I_{\mathcal{P}_\lambda}}\overline{T}_i\kett{I_{\mathcal{P}_\lambda}}\Big)^{\mathrm{T}}\cdot C_{\mathcal{E},\lambda}\right) \nonumber \\
&=\sum_{\lambda\vdash_s n}\frac{1}{\dim(\mathcal{Q}^r_\lambda)}\tr\left(\tr_{\mathcal{Q}^r_{\mathrm{anc},\lambda}}\Big(\bbra{I_{\mathcal{P}_\lambda}}\overline{T}_i^{\mathrm{T}}\kett{I_{\mathcal{P}_\lambda}}\Big)\cdot \tr_{\mathcal{Q}^r_{\mathrm{anc},\lambda}}\Big(\ketbra{V_\lambda}{V_\lambda}\Big)\right) \nonumber\\
&=\overline{T}_i\star C_{\mathcal{V}}^{\otimes n} \label{eq-1270039}\\
&=\EE{\mathcal{W}\sim \dilation_r(\mathcal{E})}[T_i\star C_{\mathcal{W}}^{\otimes n}], \label{eq-1270040}
\end{align}
where \cref{eq-1270039} is due to item 2 of \cref{lemma-1242352} and \cref{eq-1270040} is due to item 1 of \cref{lemma-1242352}.
\end{proof}

The following result about bipartite pure states is widely used in quantum information theory (see, e.g., \cite{matsumoto2007universal}).
\begin{lemma}\label{lemma-3262205}
Suppose \(\ket{\psi}\in \mathcal{H}_{\mathrm{A}}\otimes\mathcal{H}_{\mathrm{B}}\cong \mathbb{C}^{d_1}\otimes\mathbb{C}^{d_2}\) is a vector and define $s=\min\{d_1, d_2\}$. Then, $\ket{\psi}^{\otimes n}$ has the form
\begin{equation*}
\ket{\psi}^{\otimes n}= \bigoplus_{\lambda\vdash_{s} n}\kett{I_{\mathcal{P}_\lambda}}\otimes\ket{\psi_\lambda},
\end{equation*}
where \(\ket{\psi_\lambda}\in\mathcal{Q}^{d_1}_{\mathrm{A},\lambda}\otimes \mathcal{Q}^{d_2}_{\mathrm{B},\lambda}\), and \(\kett{I_{\mathcal{P}_\lambda}}\) is the (unnormalized) maximally entangled state on the bipartite system \(\mathcal{P}_{\mathrm{A},\lambda}\otimes \mathcal{P}_{\mathrm{B},\lambda}\) w.r.t. the Young's orthogonal basis.
\end{lemma}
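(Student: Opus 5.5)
The plan is to combine Schur--Weyl duality on each factor with the operator--vector correspondence. We treat $\ket{\psi}$ as the flattening of a $d_1\times d_2$ matrix $\Psi$, i.e.\ $\ket{\psi}=\kett{\Psi}=(\Psi\otimes I_{d_2})\kett{I_{d_2}}$, viewing $\Psi$ as a map $\mathcal{H}_{\mathrm{B}}\to\mathcal{H}_{\mathrm{A}}$. Then $\ket{\psi}^{\otimes n}=(\Psi^{\otimes n}\otimes I)\kett{I_{d_2}}^{\otimes n}$, and after reordering the tensor factors into $(\bigotimes_i\mathcal{H}_{\mathrm{A},i})\otimes(\bigotimes_i\mathcal{H}_{\mathrm{B},i})$, $\kett{I_{d_2}}^{\otimes n}$ is the unnormalized maximally entangled vector on $(\mathbb{C}^{d_2})^{\otimes n}\otimes(\mathbb{C}^{d_2})^{\otimes n}$.

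\textbf{Step 1.} We decompose $\kett{I_{d_2}}^{\otimes n}$ along the Schur--Weyl decomposition \cref{eq-1111224}. The Schur--Weyl basis is chosen as a product of the Young orthogonal basis of $\mathcal{P}_\lambda$ (real matrices) and a Gelfand--Tsetlin basis of $\mathcal{Q}^{d_2}_\lambda$. The change-of-basis unitary $W$ from the computational basis of $(\mathbb{C}^{d_2})^{\otimes n}$ can then be taken real, since permutation matrices are real and the GT basis vectors have real coefficients in the computational basis. For a real orthogonal $W$, the identity $\kett{I}=(W\otimes W)\kett{I}$ gives
\[
\kett{I_{d_2}}^{\otimes n}=\bigoplus_{\lambda\vdash_{d_2}n}\kett{I_{\mathcal{P}_\lambda}}\otimes\kett{I_{\mathcal{Q}^{d_2}_\lambda}}.
\]
To avoid relying on real GT bases, we could instead take $\overline{W}$ on the second factor and absorb the conjugation into the $\mathcal{Q}$ part only. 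This works because the $\mathcal{P}_\lambda$ part is real in Young's orthogonal basis, which is exactly why that basis is specified.

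\textbf{Step 2.} We apply $\Psi^{\otimes n}$. By Schur--Weyl duality for $GL$ (which extends from $\mathbb{U}_d$ to arbitrary, even rectangular, matrices acting as $\Psi^{\otimes n}$), $\Psi^{\otimes n}$ commutes with permutations. Embedding both spaces in $\mathbb{C}^{\max(d_1,d_2)}$, it therefore acts as $\bigoplus_\lambda I_{\mathcal{P}_\lambda}\otimes \texttt{q}_\lambda(\Psi)$, with $\texttt{q}_\lambda(\Psi):\mathcal{Q}^{d_2}_{\mathrm{B},\lambda}\to\mathcal{Q}^{d_1}_{\mathrm{A},\lambda}$. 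This yields
\[
\ket{\psi}^{\otimes n}=\bigoplus_\lambda \kett{I_{\mathcal{P}_\lambda}}\otimes(\texttt{q}_\lambda(\Psi)\otimes I)\kett{I_{\mathcal{Q}^{d_2}_\lambda}},
\]
and we set $\ket{\psi_\lambda}$ to be the second factor. Alternatively, without the $GL$ extension, it suffices to note that $\ket{\psi}^{\otimes n}$ is invariant under the diagonal action $\texttt{p}_\mathrm{A}(\pi)\otimes\texttt{p}_\mathrm{B}(\pi)$. In $\mathcal{P}_\lambda\otimes\mathcal{P}_\mu$, the invariants of $\mathfrak{S}_n$ are nonzero only when $\lambda=\mu$ (using $\mathcal{P}_\mu^*\cong\mathcal{P}_\mu$), and are then spanned by $\kett{I_{\mathcal{P}_\lambda}}$ in the real basis. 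This gives the form directly with $\ket{\psi_\lambda}\in\mathcal{Q}^{d_1}_\lambda\otimes\mathcal{Q}^{d_2}_\lambda$. I would present this invariance argument as the main proof, since it is cleaner.

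\textbf{Step 3.} We restrict to $\lambda\vdash_s n$. The label $\lambda$ must index a nonzero irrep on both sides, which requires $l(\lambda)\le d_1$ and $l(\lambda)\le d_2$, i.e.\ $l(\lambda)\le s$. Equivalently, $\mathcal{Q}^{d}_\lambda=0$ when $l(\lambda)>d$.

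The main obstacle is Step 2's invariant computation, specifically justifying that the invariant vector is exactly $\kett{I_{\mathcal{P}_\lambda}}$ with respect to the Young orthogonal basis, with no conjugation. This needs $\texttt{p}_\lambda(\pi)$ to be real orthogonal, so that $(\texttt{p}_\lambda(\pi)\otimes\texttt{p}_\lambda(\pi))\kett{I}=\kett{\texttt{p}_\lambda(\pi)\texttt{p}_\lambda(\pi)^{\mathrm{T}}}=\kett{I}$. Uniqueness then follows from Schur's lemma, since the invariants correspond to intertwiners $\mathcal{P}_\mu\to\mathcal{P}_\lambda$.
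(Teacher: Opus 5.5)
Your proposal is correct, and the argument you say you would present as the main proof is essentially the paper's. That argument combines three things: invariance of $\ket{\psi}^{\otimes n}$ under $\texttt{p}_{\mathrm{A}}(\pi)\otimes\texttt{p}_{\mathrm{B}}(\pi)$, Schur's lemma on $\mathcal{P}_{\mathrm{A},\lambda}\otimes\mathcal{P}_{\mathrm{B},\mu}$, and reality of Young's orthogonal basis to pin the invariant vector down as $\kett{I_{\mathcal{P}_\lambda}}$, with $l(\lambda)\le s$ forced by both sides. The paper phrases this via the decomposition of the projector $\frac{1}{n!}\sum_{\pi}\texttt{p}_{\mathrm{A}}(\pi)\otimes\texttt{p}_{\mathrm{B}}(\pi)$, and your operator--vector and $GL$ detour in Steps 1--2 is valid but not needed.
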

\begin{proof}
First, the state $\ket{\psi}^{\otimes n}$ is invariant under the action of $\texttt{p}_{\mathrm{A}}(\pi)\otimes\texttt{p}_{\mathrm{B}}(\pi)$ for any $\pi\in\mathfrak{S}_n$.
Due to the Schur-Weyl duality, we can see that
\begin{align}
\frac{1}{n!}\sum_{\pi\in\mathfrak{S}_n} \texttt{p}_{\mathrm{A}}(\pi)\otimes\texttt{p}_{\mathrm{B}}(\pi)&= \frac{1}{n!}\sum_{\pi\in\mathfrak{S}_n} \bigoplus_{\substack{\lambda\vdash_{d_1} n\\ \mu\vdash_{d_2}n}} \texttt{p}_{\mathrm{A},\lambda}(\pi)\otimes\texttt{p}_{\mathrm{B},\mu}(\pi)\otimes I_{\mathcal{Q}_{\mathrm{A},\lambda}^{d_1}}\otimes I_{\mathcal{Q}_{\mathrm{B},\mu}^{d_2}} \nonumber\\
&=\bigoplus_{\substack{\lambda\vdash_{d_1} n\\ \mu\vdash_{d_2}n}} \frac{1}{n!}\sum_{\pi\in\mathfrak{S}_n} \texttt{p}^*_{\mathrm{A},\lambda}(\pi)\otimes\texttt{p}_{\mathrm{B},\mu}(\pi)\otimes I_{\mathcal{Q}_{\mathrm{A},\lambda}^{d_1}}\otimes I_{\mathcal{Q}_{\mathrm{B},\mu}^{d_2}}\label{eq-12131226}\\
&=\bigoplus_{\lambda\vdash_s n}\frac{1}{\dim(\mathcal{P}_\lambda)} \kettbbra{I_{\mathcal{P}_\lambda}}{I_{\mathcal{P}_\lambda}} \otimes I_{\mathcal{Q}_{\mathrm{A},\lambda}^{d_1}}\otimes I_{\mathcal{Q}_{\mathrm{B},\lambda}^{d_2}}, \label{eq-12131216}
\end{align}
where in \cref{eq-12131226} the $(\cdot)^*$ denotes complex conjugate (w.r.t. the Young's orthogonal basis so that $\texttt{p}_{\lambda}(\pi)$ is a real matrix~\cite{ceccherini2010representation}), \cref{eq-12131216} uses the fact that the only subspace invariant under $\texttt{p}^*_{\mathrm{A},\lambda}(\pi)\otimes\texttt{p}_{\mathrm{B},\mu}(\pi)$ for all $\pi\in\mathfrak{S}_n$ is spanned by $\kett{I_{\mathcal{P}_\lambda}}\in\mathcal{P}_{\mathrm{A},\lambda}\otimes\mathcal{P}_{\mathrm{B},\mu}$ when $\lambda=\mu$, and is the trivial space $\{0\}$ otherwise.\footnote{We can see this by considering the representation isomorphism $\mathcal{P}_{\mathrm{A},\lambda}^*\otimes\mathcal{P}_{\mathrm{B},\mu}\stackrel{\mathfrak{S}_n}{\cong} \mathcal{L}(\mathcal{P}_{\mathrm{A},\lambda},\mathcal{P}_{\mathrm{B},\mu})$. Due to Schur's lemma, any linear operator in $\mathcal{L}(\mathcal{P}_{\mathrm{A},\lambda},\mathcal{P}_{\mathrm{B},\mu})$ that commutes with the action of $\pi$ must be proportional to the identity operator when $\lambda=\mu$ and $0$ otherwise.} 
Thus, the vector $\ket{\psi}^{\otimes n}$ lies in the support of the projector $\frac{1}{n!}\sum_{\pi\in\mathfrak{S}_n}\texttt{p}_\mathrm{A}(\pi)\otimes\texttt{p}_{\mathrm{B}}(\pi)$, which, according to \cref{eq-12131216}, means $\ket{\psi}^{\otimes n}= \bigoplus_{\lambda\vdash_{s} n}\kett{I_{\mathcal{P}_\lambda}}\otimes\ket{\psi_\lambda}$.
\end{proof}

\section{Hardness of isometry channel discrimination}\label{sec-3200450}

In this section, we establish the hardness of distinguishing between isometry channels with specific structures.
In \cref{sec-321632}, 
we present a construction of a family of isometries that are $\varepsilon$-close and prove that 
$\Omega(1/\varepsilon)$ queries are necessary to distinguish between them. Moreover, in  \cref{sec-1110117}, we present a different family of isometries, that are also  $\varepsilon$-close, but $\Omega(1/\varepsilon^2)$ queries are needed to distinguish between them.
For ease of reference, we denote these constructions as Type I and Type II hard instances, respectively.

\subsection{Type I: Heisenberg-scaling hardness}\label{sec-321632}
\subsubsection{Definition}\label{sec:hard-d1=rd2}
Let $D\geq d$ be two positive integers.
We consider isometries with input dimension $d$ and output dimension $D$.

Let $\varepsilon\in (0,1/2)$, $d\ge 2$ and $d'\coloneqq 2\left\lfloor \frac{d}{2}\right\rfloor$. Let $\{\ket{1}_\mathrm{A},\ldots,\ket{d}_\mathrm{A}\}$ and $\{\ket{1}_\mathrm{B},\ldots,\ket{D}_\mathrm{B}\}$ be two orthonormal bases of 
 the Hilbert spaces $\mathcal{H}_{\mathrm{A}}\cong \mathbb{C}^{d}$  and $\mathcal{H}_{\mathrm{B}}\cong \mathbb{C}^{D}$, respectively. 
Define the linear operator $V_{0}:\mathcal{H}_{\mathrm{A}}\rightarrow \mathcal{H}_{\mathrm{B}}$ by
\[V_0\coloneqq
\begin{cases}
\sqrt{1-\varepsilon^2}\sum_{i=1}^{d'}\ket{i}_\mathrm{B}\bra{i}_\mathrm{A}+\ket{d}_\mathrm{B}\bra{d}_\mathrm{A},&\quad \textup{if $d$ is odd}\\
\sqrt{1-\varepsilon^2}\sum_{i=1}^{d'}\ket{i}_\mathrm{B}\bra{i}_\mathrm{A},&\quad \textup{if $d$ is even.}\\
\end{cases}
\]
Define the linear operator $\Delta:\mathcal{H}_{\mathrm{A}}\rightarrow\mathcal{H}_{\mathrm{B}}$  by
\[\Delta\coloneqq \mathrm{i}\left(\sum_{i=1}^{\lfloor d/2\rfloor} \ket{i}_\mathrm{B}\bra{i}_{\mathrm{A}}-\sum_{i= \lfloor d/2\rfloor+1}^{d'} \ket{i}_\mathrm{B}\bra{i}_{\mathrm{A}}\right),\]
where $\mathrm{i}$ denotes the imaginary unit.

Let $U\in\mathbb{U}_{d'}$ be a unitary. 
We define the isometry $V_{\varepsilon,U}:\mathcal{H}_{\mathrm{A}}\rightarrow\mathcal{H}_{\mathrm{B}}$ by
\begin{equation}\label{eq-1111810}
V_{\varepsilon,U} \coloneqq V_0+\varepsilon U\Delta U^\dag.
\end{equation}
Here, we  abuse the notation and assume that $U$ can act either on $\mathcal{H}_\mathrm{A}$ or $\mathcal{H}_\mathrm{B}$ (we identify $\ket{i}_\mathrm{A}$ with $\ket{i}_\mathrm{B}$ for $i\in [d]$ and $U$ acts trivially on $\ket{d'+1}_\mathrm{B},\ldots,\ket{D}_\mathrm{B}$). 

Note that $V_{\varepsilon,U}$ is an isometry as one can verify:
\begin{align}
V_{\varepsilon,U}^\dag V_{\varepsilon,U}&=
V_0^\dag V_0+\varepsilon\left(V_0^\dag U\Delta U^\dag +U\Delta^\dag U^\dag V_0\right)+\varepsilon^2 U \Delta^\dag \Delta U^\dag \nonumber \\
&=V_0^\dag V_0+ \varepsilon^2 U\Delta^\dag \Delta U^\dag \nonumber\\
&=I_\mathrm{A}.\nonumber
\end{align}
Furthermore, we observe that $\bbrakett{V_0}{\Delta}=\tr(V_0^\dag \Delta) = 0$.

\subsubsection{Hardness of the discrimination problem}
We then obtain the following main result of this section.
\begin{theorem}\label{thm-1111809}
 Let $C>0$ be a constant. Suppose $d\geq 8/C$. Let $\mathcal{N}\subseteq\{V_{\varepsilon,U}\,|\, U\in\mathbb{U}_{2\lfloor\frac{d}{2}\rfloor}\}$ with $V_{\varepsilon,U}$ defined in \cref{eq-1111810} of cardinality $|\mathcal{N}|\geq \exp(Cd^2)$.  Then, any algorithm that can distinguish between the isometries in $\mathcal{N}$ with success probability at least $2/3$ must use at least $\min\{\frac{C^{3/2}}{21},\frac{1}{\sqrt{2e^5}}\} \cdot \frac{d^2}{\varepsilon}$ queries. 
\end{theorem}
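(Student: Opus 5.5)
We follow the comb-domination strategy from the overview in \cref{sec-790210}. Represent an arbitrary $n$-query discrimination algorithm by a sequential tester $\{T_V\}_{V\in\mathcal{N}}$ whose sum is an $(n+1)$-comb. The success probability is then $\frac{1}{|\mathcal{N}|}\sum_{V} T_V\star \kettbbra{V}{V}^{\otimes n}$. The key step is to exhibit a scalar $\lambda=\lambda(n,d,\varepsilon)$ and a deterministic $n$-comb $\Gamma$ on $(\mathcal{H}_{\mathrm{A},1},\mathcal{H}_{\mathrm{B},1},\ldots,\mathcal{H}_{\mathrm{A},n},\mathcal{H}_{\mathrm{B},n})$ such that
\[
\kettbbra{V_{\varepsilon,U}}{V_{\varepsilon,U}}^{\otimes n}\leq \lambda\,\Gamma \quad\text{for every } U\in\mathbb{U}_{d'} .
\]
Since the link product preserves the L\"owner order and $\sum_V T_V\star\Gamma=1$, we would get $2/3\leq \lambda/|\mathcal{N}|\leq \lambda e^{-Cd^2}$. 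It then remains to show that $\lambda\leq e^{Cd^2}/2$ whenever $n<c\,d^2/\varepsilon$, with $c=\min\{C^{3/2}/21,(2e^5)^{-1/2}\}$.

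To build $\Gamma$, expand $\kett{V_{\varepsilon,U}}^{\otimes n}=\sum_{S\subseteq[n]}\varepsilon^{|S|}\kett{U\Delta U^\dag}^{\otimes S}\otimes\kett{V_0}^{\otimes [n]\setminus S}$, indexing sectors by \emph{where} the perturbations sit. Group by $k=|S|$. For each $k$, apply Cauchy--Schwarz with weights $w_k$, so that $\ketbra{\sum_k x_k}{\sum_k x_k}\leq (\sum_k w_k^{-1})\sum_k w_k\ketbra{x_k}{x_k}$. Inside the $k$-sector we cannot exploit orthogonality of images, because $V_0$ and $\Delta$ share their image; this is the Type I obstacle. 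Instead, bound $\ketbra{\sum_{|S|=k}y_S}{\sum_{|S|=k}y_S}\leq \binom{n}{k}\sum_{|S|=k}\ketbra{y_S}{y_S}$. This loss of $\binom{n}{k}$ rather than something smaller is what yields $1/\varepsilon$ instead of $1/\varepsilon^2$.

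Each term $\kettbbra{U\Delta U^\dag}{U\Delta U^\dag}^{\otimes S}\otimes\kettbbra{V_0}{V_0}^{\otimes S^c}$ is then dominated by its Haar average over $U\in\mathbb{U}_{d'}$, multiplied by a factor $K_k$. This is the step I expect to be hardest. The average is a comb up to normalization: $V_0$ is, up to $\sqrt{1-\varepsilon^2}$, a partial isometry, and $\mathbb{E}_U[(U\Delta U^\dag)\text{-Choi}^{\otimes k}]$ is $U\otimes\bar U$-twirl invariant, so it can be dominated by a multiple of an isometry-type comb. To bound $K_k$, I would first try to use that the orbit $\{U\Delta U^\dag\}$ is the Grassmannian of $d'/2$-planes, of real dimension about $d^2/2$. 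The twirled $k$-fold tensor power then has support in a space whose dimension is at most the number of degree-$k$ polynomials on that manifold, and the operator $\ketbra{x}{x}$ is bounded by $\dim\cdot$ average. This gives $K_k\leq \binom{k+d^2}{k}\leq (e(k+d^2)/d^2)^{d^2}$ roughly.

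Normalize using $\|\Delta\|_2^2=d'$ and $\|V_0\|_2^2\le d$, since $\Gamma$ must be built from normalized Choi operators of channels. Because the $U\Delta U^\dag$ and $V_0$ pieces act on identical spaces, each sector yields a product of trace-preserving-comparable maps. Then
\[
\lambda\lesssim \sum_k \binom{n}{k}^2\varepsilon^{2k}\Big(\tfrac{e(k+d^2)}{d^2}\Big)^{d^2}\cdot(\text{weights}).
\]
Choose $w_k$ to balance the terms. With $n\le c d^2/\varepsilon$, use $\binom{n}{k}\varepsilon^k\le (en\varepsilon/k)^k$, which is bounded by $(ecd^2/k)^k$. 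The sum is dominated near $k\approx d^2$, and the terms are at most $e^{O(\sqrt{c})d^2}$-type quantities. Taking $c\le C^{3/2}/21$ makes this $<e^{Cd^2}/2$. The condition $c\le (2e^5)^{-1/2}$ and $d\ge 8/C$ are used to absorb the small-$k$ and tail terms, via geometric decay for $k\ge d^2$ and polynomial prefactors being $\le e^{Cd^2/8}$. The constant bookkeeping here is routine once the domination $K_k$ bound is in place.
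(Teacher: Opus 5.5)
Your architecture matches the paper's, but the final estimate, which you call routine, is wrong as written and does not yield the theorem.

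\textbf{Where you agree with the paper.} The matching steps are:
\begin{itemize}
\item sectors indexed by where the perturbations sit;
\item a $\binom{n}{k}$ Cauchy--Schwarz loss inside each $k$-sector (your diagnosis that shared images, i.e.\ $\tr_{\mathrm{B}}\kettbbra{V_0}{\Delta}\neq 0$, block the Type~II symmetrized treatment is right);
\item domination of each term by $K_k$ times its $(U\otimes U^*)^{\otimes n}$-twirl $\Gamma_S$;
\item weights across $k$.
\end{itemize}
The paper uses pairwise orthogonality of the supports of the $\Gamma_S$ plus \cref{fact-5122103} instead of Cauchy--Schwarz. With optimal weights both give the same quantity $\lambda\approx\bigl(\sum_k\binom{n}{k}\varepsilon^k\sqrt{K_k}\bigr)^2$.

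\textbf{Two supporting steps are misjustified, though their conclusions hold.}
\begin{itemize}
\item The comb property is the easy step, not the hardest one, and Hilbert--Schmidt norms are the wrong criterion: $\sqrt{d'}\ketbra{1}{1}$ has the same Hilbert--Schmidt norm as $\Delta$ but is not a sub-isometry. What is actually used is $\tr_{\mathrm{B}}\kettbbra{V_0}{V_0}\le I_{\mathrm{A}}$ and $\tr_{\mathrm{B}}\kettbbra{\Delta}{\Delta}\le I_{\mathrm{A}}$. Hence each $\ketbra{\gamma_S}{\gamma_S}$ and its twirl is a probabilistic $n$-comb with constant $1$ (\cref{lemma-1230037}), and its contraction with $\sum_V T_V$ is at most $1$. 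You only get $\le 1$ here, not $=1$.
\item The Grassmannian dimension does not control $K_k$. The right reason is that the orbit of $\ket{\gamma_S}$ stays in the invariant space $\kett{V_0}^{\otimes[n]\setminus S}\otimes\vee^k\bigl(\kett{V_0}^{\perp}\bigr)$, with the complement taken in the span of $\ket{i}_{\mathrm{B}}\ket{j}_{\mathrm{A}}$, $i,j\le d$. This space has dimension $\binom{d^2+k-2}{k}$, and \cref{lemma-12271509} then gives $K_k\le\binom{d^2+k-2}{k}$.
\end{itemize}

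\textbf{The gap: the final estimate.}

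First, you bound $K_k$ by $(e(k+d^2)/d^2)^{d^2}$. This is at least $e^{d^2}$ for every $k$, including $k=0$ where $K_0=1$. So your displayed bound forces $\lambda\ge e^{d^2}$, which $|\mathcal{N}|\ge e^{Cd^2}$ can never beat when $C<1$. The theorem is applied with $C\approx 10^{-12}$ in \cref{thm-390442}, so this case matters. You need the other form, $\binom{d^2+k-2}{k}\le(2ed^2/k)^k$ for $k<d^2$.

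Second, the growth is not $e^{O(\sqrt c)d^2}$, and the dominant $k$ is not near $d^2$. Using $\binom{n}{k}\le(en/k)^k$,
\[
\binom{n}{k}^2\varepsilon^{2k}\binom{d^2+k-2}{k}\le\Bigl(\frac{2e^3n^2\varepsilon^2d^2}{k^3}\Bigr)^k=\exp\Bigl(3k\ln\frac{\sqrt[3]{2e^3n^2\varepsilon^2d^2}}{k}\Bigr)\le\exp\bigl(\sqrt[3]{54\,n^2\varepsilon^2d^2}\bigr)
\]
by \cref{lamma-12281043}. The maximum sits at $k\approx(n\varepsilon d)^{2/3}\approx c^{2/3}d^2$.
\begin{itemize}
\item For $k\ge d^2$, the assumption $n\le d^2/(\sqrt{2e^5}\varepsilon)$ makes the terms at most $e^{-2k}$.
\item With the weights of \cref{lemma-1200312}, this gives $\lambda\le 3d^4\exp\bigl(\sqrt[3]{54n^2\varepsilon^2d^2}\bigr)=d^4e^{O(c^{2/3})d^2}$.
\end{itemize}

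The cube root is exactly what produces $C^{3/2}$. From $2/3\le e^{-Cd^2}\lambda$ and $d\ge 8/C$ one gets $54n^2\varepsilon^2d^2\ge(Cd^2/2)^3$, i.e.\ $n\ge C^{3/2}d^2/(12\sqrt{3}\,\varepsilon)$.

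Your $e^{O(\sqrt c)d^2}$ is the Type~II scaling. Combined with $c\le C^{3/2}/21$ it would only give $e^{O(C^{3/4})d^2}$, which does not beat $e^{Cd^2}$ for small $C$.
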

\begin{proof}
Suppose there is an algorithm that distinguishes between the isometries in $\mathcal{N}$  using $n$ queries. 
We may assume $n\leq \frac{1}{\sqrt{2e^5}}d^2/\varepsilon$ as otherwise the theorem is proven.

Recall that each isometry $V$ in $\mathcal{N}$ can be written as 
\begin{align}
V &=  V_0 + \varepsilon U\Delta U^\dag\nonumber
\end{align}
for some unitary $U\in \mathbb{U}_{d'}$. 

Let $\{T_V\}_{V\in\mathcal{N}}$ be a sequential tester that describes the algorithm for distinguishing between the isometries in the set $\mathcal{N}$.
Using the bound on the cardinality  of the set $\mathcal{N}$, we can bound the success probability as follows
\begin{align}
\Pr[\textup{success}]&=\frac{1}{|\mathcal{N}|}\cdot \sum_{V\in\mathcal{N}} T_V \star \kettbbra{V}{V}^{\otimes n} \nonumber \\
&\leq \exp(-Cd^2)\cdot \sum_{V\in\mathcal{N}} T_V \star \kettbbra{V}{V}^{\otimes n}, \nonumber
\end{align}
where $\kettbbra{V}{V}^{\otimes n}$ is an $n$-comb on $(\mathcal{H}_{\mathrm{A},1},\mathcal{H}_{\mathrm{B},1},\ldots,\mathcal{H}_{\mathrm{A},n},\mathcal{H}_{\mathrm{B},n})$, and $\mathcal{H}_{\mathrm{A},i}$ and $\mathcal{H}_{\mathrm{B},i}$ denote the input and output spaces of the $i$-th query to $V$, respectively.

Observe that for $V\in\mathcal{N}$, we can express $\kett{V}^{\otimes n}$ as follows
\begin{align}
    \kett{V}^{\otimes n}&=\left(  \kett{V_0} +\eps \kett{U\Delta U^\dagger }  \right)^{\otimes n} \nonumber
    \\&=\sum_{i=0}^n \varepsilon^i  \sum_{\substack{S\subseteq[n]\\ |S|=i}} \kett{V_{0}}^{\otimes [n]\setminus S}\otimes \kett{U\Delta U^\dag}^{\otimes S}\nonumber\\
    &= \sum_{i=0}^n\varepsilon^i \left(U\otimes U^*\right)^{\otimes n} \sum_{\substack{S\subseteq[n]\\ |S|=i}}\ket{\gamma_S},\label{eq-1111819}
\end{align}
where $U^*$ acts on on $\mathcal{H}_{\mathrm{A},i}$ and $U$ acts on $\mathcal{H}_{\mathrm{B},i}$ and  for any  $S \subseteq [n]$, we define the vector
\begin{equation}\label{eq-1111818}
\ket{\gamma_S}\coloneqq \kett{V_{0}}^{\otimes [n]\setminus S}\otimes \kett{\Delta}^{\otimes S}.
\end{equation}
Note that, since $\bbrakett{V_0}{\Delta}=\tr(V_0^\dag \Delta)=0$, the vectors $\{\ket{\gamma_S}\}_{S\subseteq[n]}$ are pairwise orthogonal.  

Next, for $S\subseteq[n]$,  we introduce the operator $\Gamma_S$ 
\begin{equation}\label{eq-1111820}
\Gamma_S\coloneqq \EE{\substack{U\sim\mathbb{U}_{d'}}}\left[\Big(U\otimes U^*\Big)^{\otimes n}\ketbra{\gamma_S}{\gamma_S} \left(U^\dag\otimes U^\textup{T}\right)^{\otimes n}\right].
\end{equation}
Similarly, $\{\supp(\Gamma_S)\}_{S\subseteq[n]}$ are  pairwise orthogonal. 
To see this, we observe that $U\otimes U^*$ fixes $\kett{V_0}$ and $U\otimes U^* \kett{\Delta}$ remains orthogonal to $\kett{V_0}$ for any $U\in\mathbb{U}_{d'}$.

By \cref{lemma-1200312}, there exist a set of positive numbers $\{\lambda_S\,|\, S\subseteq[n]\}$ satisfying for any $V\in\mathcal{N}$,
\begin{align}
\sum_{S\subseteq[n]} \lambda_S &\leq 3d^4\exp\!\left(\sqrt[3]{54n^2\varepsilon^2d^2}\right),\nonumber
    \\\kettbbra{V}{V}^{\otimes n}&\leq \sum_{S\subseteq [n]} \lambda_S \Gamma_S.\nonumber
\end{align}
Hence the success probability can be  bounded as
\begin{align}
\Pr[\textup{success}]&\leq \exp(-Cd^2)\cdot \sum_{V\in \mathcal{N}} T_V\star \kettbbra{V}{V}^{\otimes n} \nonumber \\
&\leq \exp(-Cd^2)\cdot \sum_{V\in \mathcal{N}} T_V\star \sum_{S\subseteq [n]} \lambda_S\Gamma_S \nonumber \\
&= \exp(-Cd^2)\cdot \sum_{T\subseteq[n]} \lambda_T \cdot \sum_{V\in \mathcal{N}} T_V\star \sum_{S\subseteq[n]} \frac{\lambda_S}{\sum_{T\subseteq[n]}\lambda_T}\Gamma_S \nonumber \\
&\leq \exp(-Cd^2)\cdot \sum_{S\subseteq[n]} \lambda_S, \label{eq-1191729}
\\&\leq \exp(-Cd^2)\cdot 3d^4\exp\!\left(\sqrt[3]{54n^2\varepsilon^2d^2}\right), \nonumber
\end{align}
where \cref{eq-1191729} uses the fact that
\begin{itemize}
\item $\sum_{V\in\mathcal{N}} T_V$ forms an $(n+1)$-comb with input and output dimensions $1$, and
\item $\sum_{S\subseteq[n]} \frac{\lambda_S}{\sum_{T\subseteq[n]}\lambda_T}\Gamma_S$ forms a probabilistic $n$-comb. Here, we use the fact that each $\Gamma_S$ is a probabilistic comb (see \cref{lemma-1230037}) and their convex combination is also a probabilistic comb,
\end{itemize}
and the contraction of an $(n+1)$-comb with input and output dimensions $1$ and a probabilistic $n$-comb is at most $1$.  

Since the algorithm succeeds with probability $\Pr[\textup{success}]\geq 2/3$, we obtain that
\begin{align}
54n^2\varepsilon^2d^2&\geq \left(Cd^2-4\ln(d)+\ln(2/9) \right)^3\nonumber \\
&\geq \left(Cd^2-4d\right)^3 \geq \frac{C^3}{8}d^6, \nonumber 
\end{align}
where we used $\ln(d)\leq d-1$, $\ln(9/2)\leq 4$ and $d\geq 8/C$.
Finally, we deduce
\[n\geq \frac{C^{3/2}}{12\sqrt{3}}\cdot \frac{d^2}{\varepsilon}> \frac{C^{3/2}}{21}\cdot \frac{d^2}{\varepsilon}.\]
\end{proof}

\paragraph{Technical lemmas.}
For $S\subseteq[n]$, recall the definition of $\ket{\gamma_S}$ (see \cref{eq-1111818}) and $\Gamma_S$ (see \cref{eq-1111820}). 
\begin{lemma}\label{lemma-1230037}
 $\ketbra{\gamma_S}{\gamma_S}$ and $\Gamma_S$  are  probabilistic $n$-combs. 
\end{lemma}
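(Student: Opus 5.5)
The plan is to show that $\ketbra{\gamma_S}{\gamma_S}$ is a product of operators, each of which is bounded above by a deterministic $1$-comb (a Choi operator of a channel) on $(\mathcal{H}_{\mathrm{A},j},\mathcal{H}_{\mathrm{B},j})$. Since $\ket{\gamma_S}=\bigotimes_{j\notin S}\kett{V_0}\otimes\bigotimes_{j\in S}\kett{\Delta}$, we have
\[
\ketbra{\gamma_S}{\gamma_S}=\bigotimes_{j\notin S}\kettbbra{V_0}{V_0}_{\mathrm{A}_j\mathrm{B}_j}\otimes\bigotimes_{j\in S}\kettbbra{\Delta}{\Delta}_{\mathrm{A}_j\mathrm{B}_j}.
\]
We have $V_0^\dag V_0\le I_\mathrm{A}$: it is diagonal with entries $1-\varepsilon^2$ or $1$. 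We also have $\Delta^\dag\Delta\le I_\mathrm{A}$: it is diagonal with entries $1$ on the first $d'$ basis vectors and $0$ elsewhere. Both are therefore contractions. For a contraction $K:\mathcal{H}_\mathrm{A}\to\mathcal{H}_\mathrm{B}$, the map $\rho\mapsto K\rho K^\dag$ is completely positive and trace non-increasing. It can be completed to a channel by adding the Kraus operator $\ket{1}_\mathrm{B}\bra{\cdot}\sqrt{I-K^\dag K}$, written as $\sum_k\ket{1}_\mathrm{B}\bra{k}_\mathrm{A}\sqrt{I-K^\dag K}$. This gives $\kettbbra{K}{K}\le C_{\mathcal{F}_K}$ for a channel $\mathcal{F}_K$, i.e., a deterministic $1$-comb.

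Next, the tensor product over $j$ of the Choi operators $C_{\mathcal{F}_j}$ is a deterministic $n$-comb on $(\mathcal{H}_{\mathrm{A},1},\mathcal{H}_{\mathrm{B},1},\ldots,\mathcal{H}_{\mathrm{A},n},\mathcal{H}_{\mathrm{B},n})$. This is the comb of the parallel/sequential use of independent channels. One checks \cref{def-681501} directly: $\tr_{\mathrm{B}_n}$ of the product gives $I_{\mathrm{A}_n}\otimes\bigotimes_{j<n}C_{\mathcal{F}_j}$, and we induct. Since $0\le X_j\le Y_j$ for all $j$ implies $\bigotimes_j X_j\le\bigotimes_j Y_j$ (telescoping with positive factors), we obtain $\ketbra{\gamma_S}{\gamma_S}\le\bigotimes_j C_{\mathcal{F}_j}$. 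Hence $\ketbra{\gamma_S}{\gamma_S}$ is a probabilistic $n$-comb.

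For $\Gamma_S$, the key observation is that conjugation by $(U\otimes U^*)^{\otimes n}$ maps $\kett{K}$ to $\kett{UKU^\dag}$ on each slot. So
\[
(U\otimes U^*)^{\otimes n}\ketbra{\gamma_S}{\gamma_S}(U^\dag\otimes U^{\mathrm{T}})^{\otimes n}\le\bigotimes_j C_{\mathcal{U}\circ\mathcal{F}_j\circ\mathcal{U}^\dag},
\]
where $\mathcal{U}$ is the unitary channel of $U$, extended trivially outside the first $d'$ basis vectors. The right-hand side is again a product of channel Choi operators, hence a deterministic $n$-comb. Averaging over Haar $U$ preserves the Löwner inequality. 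A convex combination (integral) of deterministic $n$-combs is a deterministic $n$-comb, since the linear constraints in \cref{def-681501} are preserved by averaging with the intermediate $X^{(j)}$ averaged as well. Therefore $\Gamma_S\le\EE{U}\big[\bigotimes_j C_{\mathcal{U}\circ\mathcal{F}_j\circ\mathcal{U}^\dag}\big]$, which is a deterministic $n$-comb, and $\Gamma_S$ is a probabilistic $n$-comb.

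The only mildly delicate points are bookkeeping. First, the conventions $\kett{XYZ}=X\otimes Z^{\mathrm{T}}\kett{Y}$ must line up so that $U$ acts on $\mathrm{B}$ and $U^*$ on $\mathrm{A}$, consistent with \cref{eq-1111819}. Second, the averaged comb must satisfy the comb normalization with averaged intermediate operators; this holds since each $X^{(j)}$ depends continuously and linearly on the product structure.
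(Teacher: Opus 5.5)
Your proof is correct and is essentially the paper's argument. Both rest on the same three ingredients: $V_0^\dagger V_0\le I_\mathrm{A}$ and $\Delta^\dagger\Delta\le I_\mathrm{A}$, invariance under conjugation by $(U\otimes U^*)^{\otimes n}$, and closure under convex combinations. The only difference is that you exhibit the dominating deterministic comb explicitly, as a product of completed-channel Choi operators. The paper instead goes from $\tr_{\mathcal{H}_\mathrm{B}}(\kettbbra{K}{K})\le I_\mathrm{A}$ to the comb property, implicitly through the recursion in \cref{prop-4290313}.

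One small slip: read as a single Kraus operator, $\sum_k\ket{1}_\mathrm{B}\bra{k}_\mathrm{A}\sqrt{I-K^\dagger K}$ has rank one and does not satisfy $L^\dagger L=I-K^\dagger K$. Take the $\ket{1}_\mathrm{B}\bra{k}_\mathrm{A}\sqrt{I-K^\dagger K}$ as separate Kraus operators instead, which gives the channel $\rho\mapsto K\rho K^\dagger+\tr\big((I-K^\dagger K)\rho\big)\ketbra{1}{1}_\mathrm{B}$.
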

\begin{proof}
Observe that
\[\tr_{\mathcal{H}_\mathrm{B}}(\kettbbra{V_0}{V_0})=
\begin{cases}
(1-\varepsilon^2)\sum_{i=1}^{d'}\ketbra{i}{i}_\mathrm{A}+\ketbra{d}{d}_\mathrm{A},&\quad\textup{if $d$ is odd}\\
(1-\varepsilon^2)\sum_{i=1}^{d'}\ketbra{i}{i}_\mathrm{A},&\quad\textup{if $d$ is even}
\end{cases},\]
so $\tr_{\mathcal{H}_\mathrm{B}}(\kettbbra{V_0}{V_0})\leq I_{\mathrm{A}}$. Moreover, we have that  
\[\tr_{\mathcal{H}_\mathrm{B}}(\ketbra{\Delta}{\Delta})=\sum_{i=1}^{d'}\ketbra{i}{i}_\mathrm{A}\leq I_{\mathrm{A}}.\]
hence $\ket{\gamma_S}$ is a probabilistic $n$-comb.

The similar reasoning shows that $(U\otimes U^*)^{\otimes n}\ketbra{\gamma_S}{\gamma_S} (U^\dag\otimes U^\textup{T})^{\otimes n}$ is also a probabilistic $n$-comb for any $U\in \mathbb{U}_{d'}$. Finally, we deduce that  $\Gamma_S$  is also a probabilistic $n$-comb since it is a convex combination of probabilistic combs.
\end{proof}

\begin{lemma}\label{lemma-1200312}
Suppose $d\geq 2$ and $n\leq \frac{1}{\sqrt{2e^5}}d^2/\varepsilon$. There exists a set of positive numbers $\{\lambda_S\,|\, S\subseteq[n]\}$ such that  for any $V\in\mathcal{N}$,
\begin{align}
\sum_{S\subseteq[n]} \lambda_S &\leq 3d^4\exp\!\left(\sqrt[3]{54n^2\varepsilon^2d^2}\right),\nonumber
    \\\kettbbra{V}{V}^{\otimes n}&\leq \sum_{S\subseteq [n]} \lambda_S \Gamma_S.\label{eq-11917281}
\end{align}
\end{lemma}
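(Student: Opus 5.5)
The plan is to combine a weighted Cauchy--Schwarz (operator convexity) step across the orthogonal sectors $S$ with a per-sector ``twirl domination'' bound, and then to optimize the weights.

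\textbf{Step 1: split across sectors.} Fix $V=V_{\varepsilon,U}\in\mathcal{N}$. By \cref{eq-1111819}, $\kett{V}^{\otimes n}=\sum_{S}\varepsilon^{|S|}\ket{\gamma_S^U}$, where $\ket{\gamma_S^U}\coloneqq (U\otimes U^*)^{\otimes n}\ket{\gamma_S}$. For any positive weights $p_S$ with $\sum_S p_S=1$, the operator Cauchy--Schwarz inequality gives
\[\Big(\sum_S a_S\Big)\Big(\sum_S a_S\Big)^\dagger\leq\sum_S p_S^{-1}a_Sa_S^\dagger .\]
Applied with $a_S=\varepsilon^{|S|}\ket{\gamma_S^U}$, this yields
\[\kettbbra{V}{V}^{\otimes n}\leq \sum_S \frac{\varepsilon^{2|S|}}{p_S}\ketbra{\gamma_S^U}{\gamma_S^U}.\]
The weights $p_S$ will depend only on $|S|=k$, and so they are independent of $U$.

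\textbf{Step 2: dominate each rotated vector by its twirl.} We need a constant $\mu_k$, independent of $U$, such that $\ketbra{\gamma_S^U}{\gamma_S^U}\leq \mu_k\Gamma_S$ for $|S|=k$. Since $U\otimes U^*$ fixes $\kett{V_0}$, both $\ket{\gamma_S^U}$ and $\Gamma_S$ are $\kett{V_0}^{\otimes[n]\setminus S}$ tensored with an object on the $k$ factors in $S$. On those factors, $\kett{\Delta}^{\otimes k}$ lies in the symmetric subspace $\mathrm{Sym}^k(\mathbb{C}^{d'}\otimes\mathbb{C}^{d'})$, which is invariant under $(U\otimes U^*)^{\otimes k}$.

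Decompose $\ket{\psi}=\kett{\Delta}^{\otimes k}$ into its isotypic components $\ket{\psi_j}$ under $\mathbb{U}_{d'}$. By Schur's lemma, the twirl $\Gamma$ is block diagonal, and on each isotypic block the orbit of $\ket{\psi_j}$ spans a single irreducible copy of dimension $d_j$, on which $\Gamma$ acts as $\|\psi_j\|^2/d_j$ times the projector onto that copy. Another Cauchy--Schwarz step across the components $j$ then gives
\[\ketbra{\psi^U}{\psi^U}\leq m\,\sum_j d_j\cdot\Big(\tfrac{\|\psi_j\|^2}{d_j}P_j\Big)\leq m\,D\,\Gamma,\]
where $m$ is the number of components and $\sum_j d_j\leq D$. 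It therefore suffices to take
\[\mu_k\leq D_k^2,\qquad D_k\coloneqq\dim \mathrm{Sym}^k(\mathbb{C}^{d'^2})=\binom{k+d'^2-1}{k}.\]
This is crude, but it is only polynomial in $k$ of degree about $d^2$, which is what matters. Reducing to $\mathrm{Sym}^k$ of the $(d'^2-1)$-dimensional traceless part is a possible refinement to recover the stated constants. I expect this to be the main obstacle: getting a clean $U$-uniform $\mu_k$ whose growth is $\exp(O(d^2\log k))$ and not worse, while handling multiplicities of irreps inside $\mathrm{Sym}^k$ correctly.

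\textbf{Step 3: optimize the weights.} Set $\lambda_S=\varepsilon^{2|S|}\mu_{|S|}/p_S$. Optimizing $p_S\propto\varepsilon^{|S|}\sqrt{\mu_{|S|}}$ gives
\[\sum_S\lambda_S=\Big(\sum_{k=0}^n\binom{n}{k}\varepsilon^k\sqrt{\mu_k}\Big)^2\leq\Big(\sum_k \frac{(n\varepsilon)^k}{k!}D_k\Big)^2 .\]
Using $D_k\leq (e(k+d^2)/d^2)^{d^2}$ and $k!\geq(k/e)^k$, split the sum at $k\approx d^2$ and $k\approx (n\varepsilon d)^{2/3}$. Maximizing the exponent $k\log(n\varepsilon e/k)+d^2\log(1+k/d^2)$ balances to a largest term $\exp(O((n\varepsilon d)^{2/3}))$. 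The hypothesis $n\leq d^2/(\sqrt{2e^5}\,\varepsilon)$ keeps the maximizer in the regime $k\lesssim d^2$ where this estimate is valid. Counting the polynomially many terms, and squaring, then gives a bound of the form $3d^4\exp(\sqrt[3]{54n^2\varepsilon^2d^2})$. The remaining work is bookkeeping of constants in this optimization.
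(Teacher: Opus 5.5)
Your framework is fine: since the $\Gamma_S$ have orthogonal supports, the Cauchy--Schwarz split of Step~1 with $\lambda_S=\varepsilon^{2|S|}\mu_{|S|}/p_S$ is equivalent to the paper's criterion $\sum_S\lambda_S^{-1}\varepsilon^{2|S|}\bra{\gamma_S}\Gamma_S^{-1}\ket{\gamma_S}\le 1$ from \cref{fact-5122103}. The best admissible $\mu_k$ is exactly $\bra{\gamma_S}\Gamma_S^{-1}\ket{\gamma_S}$.

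The genuine gap is in Step~2: the bound $\mu_k\le D_k^2$ loses a full square, and this is not a constant-factor loss. Once $\mu_k=D_k^2$, the optimum over $p_S$ is $Z^2$ with $Z=\sum_k\binom{n}{k}\varepsilon^k D_k$, so no choice of weights helps. For $1\le k\le n$ the terms of $Z$ are at least $(n\varepsilon d'^2/k^2)^k$. At $k\approx d'\sqrt{n\varepsilon}/e$ they already equal $e^{2k}=\exp(\Omega(d\sqrt{n\varepsilon}))$. Since $d\sqrt{n\varepsilon}/(n\varepsilon d)^{2/3}=(d^2/(n\varepsilon))^{1/6}$ is unbounded, you cannot reach $3d^4\exp(\sqrt[3]{54n^2\varepsilon^2d^2})$. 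For example, $n\varepsilon=1$, $\varepsilon\le 1/d$ and $d$ large give $e^{\Omega(d)}$ against the target $e^{O(d^{2/3})}$. Such a bound would still give an $\Omega(d^2/\varepsilon)$ lower bound in \cref{thm-1111809}, with worse dependence on $C$, but it does not prove the lemma as stated.

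Your Step~3 asymptotics conceal this problem. The exponent $k\log(n\varepsilon e/k)+d^2\log(1+k/d^2)$ drops the dominant part $k\log(1+d^2/k)$ of $\log D_k$. Also, the estimate $D_k\le(e(k+d^2)/d^2)^{d^2}$ carries a factor $e^{d^2}$, which on its own would be fatal against the $e^{-Cd^2}$ in \cref{thm-1111809}.

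The square comes from your second Cauchy--Schwarz across isotypic components combined with $m\le D$. Separately, the claim that each $\psi_j$ generates a single irreducible copy is not true in general. It generates $s_j$ copies, where $s_j$ is its Schmidt rank across $V_j\otimes\mathbb{C}^{m_j}$, and the twirl on that block is $\frac{1}{d_j}I_{V_j}\otimes\tr_{V_j}\ketbra{\psi_j}{\psi_j}$.

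Using block orthogonality instead of Cauchy--Schwarz gives $\bra{\psi}\Gamma^{-1}\ket{\psi}=\sum_j d_js_j=\mathrm{rank}\,\Gamma$. This is exactly the one-line twirl bound of \cref{lemma-12271509}: since $\Gamma_S$ commutes with $(U\otimes U^*)^{\otimes n}$, averaging gives $\bra{\gamma_S}\Gamma_S^{-1}\ket{\gamma_S}=\tr(\Gamma_S^{-1}\Gamma_S)\le\binom{d^2+k-2}{k}$. No square and no irrep bookkeeping are needed.

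With $\mu_k=\binom{d^2+k-2}{k}$, your Step~3 gives $\sum_S\lambda_S=\big(\sum_k\binom{n}{k}\varepsilon^k\sqrt{\mu_k}\big)^2$. Take square roots of the paper's termwise estimates, writing $X=\sqrt[3]{54n^2\varepsilon^2d^2}$: each term is at most $e^{X/2}$ for $k<d^2$ and at most $e^{-k}$ for $k\ge d^2$. This yields $\sum_S\lambda_S\le(d^2e^{X/2}+1)^2\le 3d^4e^{X}$. It is precisely $\sqrt{\mu_k}\approx(d^2/k)^{k/2}$ against $\binom{n}{k}\varepsilon^k\approx(n\varepsilon/k)^k$ that puts the balance point at $k\sim(n\varepsilon d)^{2/3}$.
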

\begin{proof}
We remark that, because of \cref{eq-1111819},  $\kett{V}^{\otimes n}$ is contained in $\bigoplus_{S\subseteq [n]} \supp(\Gamma_S)$. 
Moreover, by \cref{fact-5122103}, to establish \cref{eq-11917281}, it is sufficient to show that 
\begin{equation}\label{eq-1191750}
\sum_{S\subseteq [n]} \frac{1}{\lambda_S} \tr\!\left(\Gamma_S^{-1}\kettbbra{V}{V}^{\otimes n}\right)\leq 1.
\end{equation}
To this end, we bound the left hand side of \cref{eq-1191750} as
\begin{align}
\sum_{S\subseteq [n]} \frac{1}{\lambda_S} \tr\!\left(\Gamma_S^{-1}\kettbbra{V}{V}^{\otimes n}\right)&=\sum_{S\subseteq [n]} \frac{1}{\lambda_S} \varepsilon^{2|S|}\tr\!\left(\Gamma_S^{-1}(U\otimes U^*)^{\otimes n}\ketbra{\gamma_S}{\gamma_S}(U^\dag \otimes U^\mathrm{T})^{\otimes n}\right) \label{eq-1192352} \\
&=\sum_{S\subseteq [n]} \frac{1}{\lambda_S} \varepsilon^{2|S|}\tr\!\left(\Gamma_S^{-1}\ketbra{\gamma_S}{\gamma_S}\right) \label{eq-1192351} \\
&\leq \sum_{S\subseteq [n]} \frac{1}{\lambda_S} \varepsilon^{2|S|} \binom{d^2+|S|-2}{|S|}.\label{eq-1192350}
\end{align}
Here,  \cref{eq-1192352} follows from \cref{eq-1111819} and the fact that $(U\otimes U^*)^{\otimes n}\ket{\gamma_S}\in \supp(\Gamma_S)$, \cref{eq-1192351} uses the fact that $\Gamma_S$ commutes with $(U\otimes U^*)^{\otimes n}$, and \cref{eq-1192350} is an application of  \cref{lemma-12271509} where  $\ket{\gamma_S}$ is viewed as a vector in the space
\[\spanspace\left(\left\{\kett{V_0}^{\otimes [n]\setminus S}\otimes \ket{\psi}^{\otimes S} \,\, \bigg|\,\,  \ket{\psi}\in\spanspace\{\ket{1}_\mathrm{A},\ldots,\ket{d}_\mathrm{A}\}\otimes  \spanspace\{\ket{1}_\mathrm{B},\ldots,\ket{d}_\mathrm{B}\},\,\, \ket{\psi}\in \kett{V_0}^{\perp}\right\}\right),\]
which is invariant under the action $(U\otimes U^*)^{\otimes n}$. 
This space has dimension $\binom{d^2+|S|-2}{|S|}$ since it is isomorphic to the symmetric space $\lor^{|S|} \mathbb{C}^{d^2-1}$~\cite{harrow2013church}.

For  $S\subseteq [n]$, we set $\lambda_S=\mu_i/\binom{n}{i}$ where $i=|S|$ for some positive number $\mu_i$.
Then, we observe $\sum_{S\subseteq[n]}\lambda_S=\sum_{i=0}^n \mu_i$ and \cref{eq-1192350} can be written as 
\begin{equation}\label{eq-1200020}
\sum_{i=0}^n \frac{1}{\mu_i} \binom{n}{i}^2 \varepsilon^{2i}\binom{d^2+i-2}{i}.
\end{equation}
Our goal is to look for positive numbers $\mu_0,\ldots,\mu_n$ such that  $\sum_{i=0}^n\mu_i$ is small and \cref{eq-1200020} is upper bounded by $1$. For this, \cref{eq-1200020} can be bounded using \cref{lemma-12272204} as
\begin{align}
\eqref{eq-1200020} &\leq \sum_{i=0}^n\frac{1}{\mu_i} \exp\!\left(2nH\!\left(\frac{i}{n}\right) +2i\ln(\varepsilon) + (d^2+i) H\!\left(\frac{i}{d^2+i}\right)\right) \nonumber\\
&=\sum_{i=0}^n\frac{1}{\mu_i}\exp\!\left(-2i \ln\!\left(\frac{i}{n\varepsilon}\right) -2(n-i)\ln\!\left(\frac{n-i}{n}\right)+ i \ln \!\left(1+\frac{d^2}{i}\right)+d^2 \ln\!\left(1+\frac{i}{d^2}\right)\right)\nonumber \\
&\leq \sum_{i=0}^n \frac{1}{\mu_i} \exp\!\left(-i\ln\!\left(\frac{i^2}{n^2\varepsilon^2}\right)+i\ln\!\left(1+\frac{d^2}{i}\right)+3i\right),\label{eq-1200031}
\end{align}
where the last inequality uses
\begin{align}
2(n-i)\ln\!\left(\frac{n}{n-i}\right)&\leq 2(n-i)\left(\frac{n}{n-i}-1\right)= 2i \;\; \text{ and } \;\; d^2\ln\!\left(1+\frac{i}{d^2}\right)\leq d^2\frac{i}{d^2}=i.\nonumber
\end{align}
To further bound \cref{eq-1200031}, we consider two cases depending on $i$:
\begin{itemize}
\item When $i< d^2$, we have
\begin{align}
-i\ln\!\left(\frac{i^2}{n^2\varepsilon^2}\right)+i\ln\!\left(1+\frac{d^2}{i}\right)+3i &\leq -i\ln\!\left(\frac{i^2}{n^2\varepsilon^2}\right)+i\ln\!\left(\frac{2d^2}{i}\right)+3i \nonumber\\
&=3i\ln\!\left(\frac{\sqrt[3]{2e^3n^2\varepsilon^2d^2}}{i}\right) \nonumber \\
&\leq \sqrt[3]{54n^2\varepsilon^2d^2}, \nonumber
\end{align}
where we used  \cref{lamma-12281043} in the last inequality. 

\item When $i\geq d^2$, since we assumed that $n\leq \frac{1}{\sqrt{2e^5}}d^2/\varepsilon$ we have $i\geq \sqrt{2e^5}n\varepsilon$, thus
\begin{align}
-i\ln\!\left(\frac{i^2}{n^2\varepsilon^2}\right)+i\ln\!\left(1+\frac{d^2}{i}\right)+3i&\leq -i \ln(2e^5)+i\ln(2)+3i=-i\ln\!\left(2e^5/2e^3\right)=-2i.\nonumber
\end{align}
\end{itemize}

Hence, the choice 
\begin{align*}
    \mu_i = \begin{cases}
        2d^2\exp\!\left(\sqrt[3]{54n^2\varepsilon^2d^2}\right), &\quad\textup{if $i<d^2$} \\
        \exp(-i), &\quad\textup{if $i\geq d^2$,}
    \end{cases}
\end{align*}
ensures that  \cref{eq-1200031} can be upper bounded by
\begin{align}
\eqref{eq-1200031} &\leq \sum_{i<d^2} \frac{1}{\mu_i} \exp\!\left(\sqrt[3]{54n^2\varepsilon^2d^2}\right)+\sum_{i\geq d^2} \frac{1}{\mu_i}\exp(-2i) \nonumber \\
&\leq \frac{1}{2}+\sum_{i\geq d^2} \exp(-i) 
\leq\frac{1}{2}+ \exp(-d^2) \frac{e}{e-1} < 1,\nonumber
\end{align}
where we used that $d^2\geq 2$. Finally, we have that 
\begin{align}
\sum_{i=0}^n\mu_i &\leq 2d^4\exp\!\left(\sqrt[3]{54n^2\varepsilon^2d^2}\right)+\exp(-d^2)\frac{e}{e-1} \nonumber \\
& < 2d^4\exp\!\left(\sqrt[3]{54n^2\varepsilon^2d^2}\right)+\frac{1}{2} < 3d^4\exp\!\left(\sqrt[3]{54n^2\varepsilon^2d^2}\right),\nonumber
\end{align}
which concludes the proof.
\end{proof}

\subsection{Type II: classical-scaling hardness}\label{sec-1110117}

\subsubsection{Definition}\label{sec-1110250}
Let $D> d$ be two positive integers.
We consider isometries with input dimension $d$ and output dimension $D$.

Let  $\varepsilon\in (0,1/2)$ and $d'\leq  \min\{d,D-d\}$ be a positive number.
Let $\{\ket{1}_\mathrm{A},\ldots,\ket{d}_\mathrm{A}\}$ and $\{\ket{1}_\mathrm{B},\ldots,\ket{D}_\mathrm{B}\}$ be two (arbitrary) orthonormal bases of 
 the Hilbert spaces $\mathcal{H}_{\mathrm{A}}\cong \mathbb{C}^{d}$  and $\mathcal{H}_{\mathrm{B}}\cong \mathbb{C}^{D}$, respectively. 
Define the linear operator $V_0:\mathcal{H}_{\mathrm{A}}\rightarrow \mathcal{H}_{\mathrm{B}}$ as
\[V_0\coloneqq \sqrt{1-\varepsilon^2}\sum_{i=1}^{d'}\ket{i}_\mathrm{B}\bra{i}_\mathrm{A}+\sum_{i=d'+1}^{d} \ket{i}_\mathrm{B}\bra{i}_{\mathrm{A}},\]
and the linear operator $\Delta:\mathcal{H}_{\mathrm{A}}\rightarrow \mathcal{H}_{\mathrm{B}}$ as
\[\Delta\coloneqq \sum_{i=1}^{d'}\ket{d+i}_\mathrm{B}\bra{i}_\mathrm{A}.\]
Let $U\in\mathbb{U}_{D-d}$ be a unitary. We define the isometry $V_{\varepsilon,U}:\mathcal{H}_{\mathrm{A}}\rightarrow\mathcal{H}_{\mathrm{B}}$ as
\begin{equation}\label{eq-2141418}
V_{\varepsilon,U} \coloneqq  V_0 + \varepsilon U\Delta. 
\end{equation}
Here, $U\in\mathbb{U}_{D-d}$ acts on the subspace spanned by $\{\ket{d+1}_\mathrm{B},\ldots,\ket{D}_\mathrm{B}\}$.

It can be easily checked that $V_{\varepsilon,U}$ is an isometry:
\[V_{\varepsilon,U}^\dag V_{\varepsilon,U}=(1-\varepsilon^2)\sum_{i=1}^{d'}\ket{i}_\mathrm{A}\bra{i}_\mathrm{A}+\sum_{i=d'+1}^{d} \ket{i}_\mathrm{A}\bra{i}_{\mathrm{A}}+\varepsilon^2\sum_{i=1}^{d'}\ket{i}_\mathrm{A}\bra{i}_\mathrm{A}=I_{\mathrm{A}}.\]
Moreover, note that the images of $V_0$ and  $U\Delta$  are orthogonal.

\subsubsection{Hardness of the discrimination problem}
We then obtain the following main result of this section.
\begin{theorem}\label{thm-1110122}
Let $C>0$ be a constant. Suppose $D-d\geq 64/C^2$. Let $\mathcal{N} \subseteq \{V_{\varepsilon,U}\,|\, U\in \mathbb{U}_{D-d}\}$ with $V_{\varepsilon,U}$ defined in \cref{eq-2141418}
of cardinality 
$|\mathcal{N}|\geq \exp(C(D-d) \min\{d,D-d\})$. Then, any algorithm that can distinguish between the  isometries in $\mathcal{N}$ with success probability at least $2/3$ must use at least $\min\{\frac{C^2}{32},\frac{1}{2e^4}\}\cdot \frac{(D-d) \min\{d,D-d\}}{\varepsilon^2}$ queries.
\end{theorem}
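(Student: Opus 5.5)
I will mirror the proof of \cref{thm-1111809}, but group the expansion of $\kett{V}^{\otimes n}$ by \emph{how many} perturbative factors appear rather than \emph{where}. Writing $V=V_0+\varepsilon U\Delta$ with $U$ acting on $\spanspace\{\ket{d+1}_\mathrm{B},\ldots,\ket{D}_\mathrm{B}\}$, we have $\kett{V}^{\otimes n}=\sum_{i=0}^n\varepsilon^i (I_{\mathrm{A}}\otimes U)^{\otimes n}\ket{\gamma_i}$ with $\ket{\gamma_i}\coloneqq\sum_{|S|=i}\kett{V_0}^{\otimes[n]\setminus S}\otimes\kett{\Delta}^{\otimes S}$, using that $U$ fixes the image of $V_0$. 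Set $\Gamma_i\coloneqq\EE{U\sim\mathbb{U}_{D-d}}[(I\otimes U)^{\otimes n}\ketbra{\gamma_i}{\gamma_i}(I\otimes U^\dag)^{\otimes n}]$. The supports of the $\Gamma_i$ are pairwise orthogonal, because they are distinguished by the number of output tensor factors lying in the image of $U\Delta$, which is orthogonal to the image of $V_0$. The success probability is then bounded exactly as in \cref{thm-1111809}: find $\lambda_i>0$ with $\kettbbra{V}{V}^{\otimes n}\le\sum_i\lambda_i\Gamma_i$, show that $\sum_i\frac{\lambda_i}{\sum_j\lambda_j}\Gamma_i$ is a probabilistic $n$-comb, and conclude $\Pr[\textup{success}]\le e^{-C(D-d)\min\{d,D-d\}}\sum_i\lambda_i$.

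The comb property is where orthogonality of images is essential, and I expect it to be the main obstacle. The summed vector $\ket{\gamma_i}$ is no longer a product, so I will argue directly. Consider the operator $\sum_i \ketbra{\gamma_i}{\gamma_i}/\binom{n}{i}$, or more naturally the dephased version $\sum_S \ketbra{\gamma_S}{\gamma_S}$. By Cauchy--Schwarz, $\ketbra{\gamma_i}{\gamma_i}\le\binom{n}{i}\sum_{|S|=i}\ketbra{\gamma_S}{\gamma_S}$, but this loses $\binom{n}{i}$, which is exactly what we want to avoid. Instead I will verify the comb condition (\cref{prop-4290313}) recursively by tracing out $\mathcal{H}_{\mathrm{B},n}$. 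Because $V_0$ and $\Delta$ have orthogonal output images, the cross terms between ``$n\in S$'' and ``$n\notin S$'' vanish under $\tr_{\mathrm{B},n}$. This gives
\[
\tr_{\mathrm{B},n}\ketbra{\gamma_i}{\gamma_i}=\tr_\mathrm{B}(\kettbbra{V_0}{V_0})\otimes\ketbra{\gamma_i^{(n-1)}}{\gamma_i^{(n-1)}}+\tr_\mathrm{B}(\kettbbra{\Delta}{\Delta})\otimes\ketbra{\gamma_{i-1}^{(n-1)}}{\gamma_{i-1}^{(n-1)}},
\]
where $\ket{\gamma_j^{(n-1)}}$ is the analogous vector on the first $n-1$ queries. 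Since $\tr_\mathrm{B}\kettbbra{V_0}{V_0}+\tr_\mathrm{B}\kettbbra{\Delta}{\Delta}$ is the identity only on the first $d'$ inputs (the term $\kettbbra{V_0}{V_0}$ is $(1-\varepsilon^2)$-weighted there), I will use the weighting $\varepsilon^{2i}$: the operator $\sum_i\varepsilon^{2i}\Gamma_i$ is just the Haar average of $\kettbbra{V}{V}^{\otimes n}$, which is a genuine comb. This suggests taking $\lambda_i\Gamma_i$ with $\lambda_i\approx\varepsilon^{2i}\cdot(\text{factor})$ and proving by induction that $\sum_i w_i\Gamma_i$ is a probabilistic comb whenever the weights satisfy $w_i^{(n)}\le$ an appropriate combination of $w_i^{(n-1)}$ and $w_{i-1}^{(n-1)}$. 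Concretely, I would aim to show that $\frac{1}{\max_i c_i}\sum_i c_i\varepsilon^{2i}(1-\varepsilon^2)^{n-i}\cdot(\ldots)\Gamma_i$ is a comb, by dominating it with $\max_i c_i$ times the Haar-averaged $\kettbbra{V}{V}^{\otimes n}$ after dephasing in $i$. The dephasing is legitimate because it is implemented by a twirl over the phase $e^{\mathrm{i}\theta}$ on the $\Delta$-image, which commutes with the comb structure.

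For the domination, \cref{fact-5122103} reduces $\kettbbra{V}{V}^{\otimes n}\le\sum_i\lambda_i\Gamma_i$ to $\sum_i\lambda_i^{-1}\varepsilon^{2i}\tr(\Gamma_i^{-1}\ketbra{\gamma_i}{\gamma_i})\le1$. Applying \cref{lemma-12271509}, $\tr(\Gamma_i^{-1}\ketbra{\gamma_i}{\gamma_i})$ is at most the dimension of the $U$-invariant subspace containing $\ket{\gamma_i}$. This dimension should be bounded by that of $\lor^i(\mathbb{C}^{d'}\otimes\mathbb{C}^{D-d})$, roughly $\binom{d'(D-d)+i-1}{i}$, where the permutation symmetry of $\ket{\gamma_i}$ kills the $\binom{n}{i}$ positional factor (this is the gain over Type I). Normalized relative to the comb $\sum_i\binom{n}{i}\varepsilon^{2i}(1-\varepsilon^2)^{n-i}$-type weights, the sum becomes $\sum_i\frac{1}{\mu_i}\binom{n}{i}\varepsilon^{2i}\binom{m+i}{i}$ with $m=(D-d)\min\{d,D-d\}$. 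The entropy bound (\cref{lemma-12272204}) gives a dominant exponent of about $i\ln(e^2n\varepsilon^2 m/i^2)$, maximized near $i\sim\sqrt{n\varepsilon^2m}$ with value $O(\sqrt{n\varepsilon^2 m})$, provided $i<m$. I will split into the cases $i<m$ and $i\ge m$ as before, using $n\le m/(2e^4\varepsilon^2)$ for the tail. This yields $\sum\lambda\lesssim m^2e^{O(\sqrt{n\varepsilon^2m})}$, and requiring this to be at least $\frac23e^{Cm}$ gives $n\gtrsim C^2m/\varepsilon^2$, with the hypothesis $D-d\ge64/C^2$ absorbing the $\ln m$ term.
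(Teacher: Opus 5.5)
The overall plan matches the paper's: sectors indexed by the number $i$ of perturbative factors, the twirled operators $\Gamma_i$, \cref{fact-5122103} with \cref{lemma-12271509} and the dimension $\binom{(D-d)d'+i-1}{i}$, and the entropy estimates split at $i<Q$ versus $i\ge Q$. The gap is at the step you yourself flag. The per-sector comb normalization is never established, and your proposed route cannot establish it.

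Dominating by $\max_i c_i$ times $\mathbb{E}_U[\kettbbra{V}{V}^{\otimes n}]=\sum_i\varepsilon^{2i}\Gamma_i$ only certifies that $\varepsilon^{2i}\Gamma_i$ is a probabilistic comb, i.e.\ $\Gamma_i\le\varepsilon^{-2i}\cdot(\text{comb})$. A mixture $\sum_i p_i\hat\Gamma_i$ being a comb says nothing about the comb normalization of the individual $\hat\Gamma_i$. So the ``binomial'' picture with weights $\binom{n}{i}\varepsilon^{2i}(1-\varepsilon^2)^{n-i}$ does not follow.

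With this weaker normalization, the condition from \cref{fact-5122103} becomes $\sum_i\binom{(D-d)d'+i-1}{i}/c_i\le 1$. The factor $\binom{n}{i}\varepsilon^{2i}\le(en\varepsilon^2/i)^i$, which is what makes the sectors $i\ge Q$ negligible in the paper's sum, is gone. This forces $\max_i c_i\ge\binom{(D-d)d'+n-1}{n}$, a polynomial in $n$ of degree $(D-d)d'-1$. For $d'=\min\{d,D-d\}$ and $n$ of order $m/\varepsilon^2$, it vastly exceeds $e^{Cm}$ once $\varepsilon$ is small. Your final sum $\sum_i\mu_i^{-1}\binom{n}{i}\varepsilon^{2i}\binom{m+i-1}{i}$ silently uses the much stronger statement that $\Gamma_i/\binom{n}{i}$ (with your unnormalized $\ket{\gamma_i}$) is a probabilistic comb. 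Nothing in your argument proves that.

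The missing idea is already in your recursion. You do not need $\tr_{\mathrm{B}}\kettbbra{V_0}{V_0}+\tr_{\mathrm{B}}\kettbbra{\Delta}{\Delta}$ to be the identity. It is not: it equals $2-\varepsilon^2$ on the first $d'$ inputs and $1$ on the rest. You only need each term to be $\le I_{\mathrm{A}}$ separately. This gives $\tr_{\mathcal{H}_{\mathrm{B},n}}\ketbra{\gamma_i^{(n)}}{\gamma_i^{(n)}}\le I_{\mathrm{A}}\otimes\bigl(\ketbra{\gamma_i^{(n-1)}}{\gamma_i^{(n-1)}}+\ketbra{\gamma_{i-1}^{(n-1)}}{\gamma_{i-1}^{(n-1)}}\bigr)$.

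Now induct on $n$ using Pascal's rule $\binom{n-1}{i}+\binom{n-1}{i-1}=\binom{n}{i}$. The right-hand side is $\binom{n}{i}$ times $I_{\mathrm{A}}$ tensored with a convex combination of probabilistic $(n-1)$-combs. Hence $\ketbra{\gamma_i}{\gamma_i}/\binom{n}{i}$ is a probabilistic $n$-comb by \cref{prop-4290313}. Twirling by $U$, which acts only on outputs, preserves this.

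This is the paper's \cref{lemma-12270148}, stated there for normalized $\ket{\gamma_i}$. It yields exactly the weights $\binom{n}{i}\varepsilon^{2i}$ in your final sum, after which your computation goes through as in the paper.
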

\begin{proof}
Let $d_{\textup{min}}\coloneqq \min\{d,D-d\}$. 
Suppose there is an algorithm that distinguishes between the isometries in $\mathcal{N}$  using $n$ queries. 
We may assume $n\leq  \frac{1}{2e^4}(D-d)d_{\textup{min}}/\varepsilon^2$ as otherwise the theorem is proven.

Recall that each isometry $V$ in $\mathcal{N}$ is of the form
\begin{align}
V&= V_0 + \varepsilon U\Delta,\nonumber 
\end{align}
for some unitary $U\in \mathbb{U}_{D-d}$.

Let $\{T_V\}_{V\in\mathcal{N}}$ be a sequential tester that describes the algorithm for distinguishing between the isometries in the set $\mathcal{N}$.
Using the bound on the cardinality  of the set $\mathcal{N}$, we can bound the success probability as follows
\begin{align}
\Pr[\textup{success}]&=\frac{1}{|\mathcal{N}|}\cdot \sum_{V\in\mathcal{N}} T_V \star \kettbbra{V}{V}^{\otimes n} \nonumber \\
&\leq \exp\!\left(-C(D-d)d_{\textup{min}}\right)\cdot \sum_{V\in\mathcal{N}} T_V \star \kettbbra{V}{V}^{\otimes n}. \nonumber
\end{align}
where $\kettbbra{V}{V}^{\otimes n}$ is an $n$-comb on $(\mathcal{H}_{\mathrm{A},1},\mathcal{H}_{\mathrm{B},1},\ldots,\mathcal{H}_{\mathrm{A},n},\mathcal{H}_{\mathrm{B},n})$, and $\mathcal{H}_{\mathrm{A},i}$ and $\mathcal{H}_{\mathrm{B},i}$ denote the input and output spaces of the $i$-th query to $V$, respectively.

Observe that for $V\in\mathcal{N}$, we can express $\kett{V}^{\otimes n}$ as follows
\begin{align}
\kett{V}^{\otimes n}&=\left(\kett{V_0}+\eps \kett{U\Delta}\right)^{\otimes n}\nonumber
\\&=\sum_{i=0}^n \varepsilon^i  \sum_{\substack{S\subseteq[n]\\ |S|=i}} \kett{V_0}^{\otimes [n]\setminus S}\otimes \kett{U\Delta}^{\otimes S} \nonumber 
    \\&=\sum_{i=0}^n\varepsilon^i \sqrt{\binom{n}{i}} U^{\otimes n}\ket{\gamma_i}\label{eq-12271523}
\end{align}
where  $U^{\otimes n}$ acts as $(I_{d}\oplus U)^{\otimes n}$ on $\bigotimes_{i=1}^n \mathcal{H}_{\mathrm{B},i}$ and 
for  $i\in \{0,1,\ldots,n\}$, we define the vector
\begin{equation}\label{eq-12270131}
\ket{\gamma_i}\coloneqq \frac{1}{\sqrt{\binom{n}{i}}} \sum_{\substack{S\subseteq [n]\\ |S|=i}}\kett{V_0}^{\otimes [n]\setminus S}\otimes \kett{\Delta}^{\otimes S}.
\end{equation}
Observe that, since $\bbrakett{V_0}{\Delta}=\tr(V_0^\dag \Delta)=0$, the vectors  $\{\ket{\gamma_i}\}_{i\in [n]}$ are pairwise orthogonal. 
Next, for $i\in [n]$, we introduce the operator $\Gamma_i$ 
\begin{equation}\label{eq-12272104}
\Gamma_i\coloneqq \EE{U\sim\mathbb{U}_{D-d}}\left[U^{\otimes n}\ketbra{\gamma_i}{\gamma_i} U^{\dag\otimes n}\right].
\end{equation}
We observe that $\{\supp(\Gamma_i)\}_{i\in [n]}$ are also pairwise orthogonal.  To see this, we note that the number of $\kett{V_0}$ in  $\ket{\gamma_i}$ is exactly $n-i$, and $\kett{V_0}$ is orthogonal to $U\kett{\Delta}$ for any $U\in\mathbb{U}_{D-d}$. 

By \cref{lemma-12281107}, there exists a set of positive numbers $\{\lambda_i\}_{i=0}^n$ satisfying for any $V\in\mathcal{N}$, 
\begin{align}
\sum_{i=0}^n\lambda_i&\leq 4(D-d)^2d_{\textup{min}}^2\exp\!\left(\sqrt{8n\varepsilon^2(D-d)d_{\textup{min}}}\right), \nonumber \\
    \kettbbra{V}{V}^{\otimes n}&\leq \sum_{i=0}^n \lambda_i \Gamma_i. \nonumber 
\end{align}
Hence, the success probability can be  bounded as
\begin{align}
\Pr[\textup{success}]&\leq \exp\!\left(-C (D-d)d_{\textup{min}}\right)\cdot \sum_{V\in \mathcal{N}} T_V\star \kettbbra{V}{V}^{\otimes n} \nonumber \\
&\leq \exp\!\left(-C (D-d)d_{\textup{min}}\right)\cdot \sum_{V\in \mathcal{N}} T_V\star \sum_{j=0}^n \lambda_j\Gamma_j \nonumber \\
&= \exp\!\left(-C (D-d)d_{\textup{min}}\right)\cdot \sum_{k=0}^n \lambda_k \cdot \sum_{V\in \mathcal{N}} T_V\star \sum_{j=0}^n \frac{\lambda_j}{\sum_{k=0}^n\lambda_k}\Gamma_j \nonumber \\
&\leq \exp\!\left(-C (D-d)d_{\textup{min}}\right)\cdot \sum_{i=0}^n \lambda_i\label{eq-12272057} \\
&\leq  \exp\!\left(-C (D-d)d_{\textup{min}}\right)\cdot 4(D-d)^2d_{\textup{min}}^2\exp\!\left(\sqrt{8n\varepsilon^2(D-d)d_{\textup{min}}}\right),\nonumber
\end{align}
where \cref{eq-12272057} uses the fact that 
\begin{itemize}
\item $\sum_{V\in\mathcal{N}} T_V$ forms an $(n+1)$-comb with input and output dimensions $1$, and
\item $\sum_{j=0}^n \frac{\lambda_j}{\sum_{k=0}^n\lambda_k}\Gamma_j$ forms  a probabilistic $n$-comb. Here, we use the facts that each $\Gamma_i$ are probabilistic combs (see \cref{lemma-12270148}) and their convex combination is also a probabilistic comb,
\end{itemize}
and the contraction of an $(n+1)$-comb with input and output dimensions $1$ and a probabilistic $n$-comb is at most $1$.

For ease of notation, we denote by  $Q=(D-d)d_{\textup{min}}$. Since the algorithm succeeds with probability $\Pr[\textup{success}]\geq 2/3$, we obtain that
\begin{align}
    \sqrt{8n\varepsilon^2Q}&\geq CQ-2\ln(Q)+\ln(1/6) \nonumber \\
& \geq CQ- 4\sqrt{Q}\geq \frac{C}{2}Q, \nonumber
\end{align}
where we used $\ln(\sqrt{Q})\leq \sqrt{Q}-1$, $\ln(6)\leq 2$ and $Q=(D-d)d_{\textup{min}}\geq D-d\geq 64/C^2$.
Finally, we conclude
\[n\geq \frac{C^2}{32}\cdot\frac{Q}{\varepsilon^2}=\frac{C^2}{32}\cdot\frac{(D-d)d_{\textup{min}}}{\varepsilon^2}.\]
\end{proof}

\paragraph{Technical lemmas.}
For $i\in[n]$, recall the definition of  $\ket{\gamma_i}$ (see \cref{eq-12270131}) and $\Gamma_i$ (see \cref{eq-12272104}). 
\begin{lemma}\label{lemma-12270148}
 $\ketbra{\gamma_i}{\gamma_i}$ and $\Gamma_i$ are  probabilistic $n$-combs. 
\end{lemma}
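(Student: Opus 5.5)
We will mirror the proof of \cref{lemma-1230037}, using the recursive characterization in \cref{prop-4290313}. Since probabilistic combs are closed under convex combinations and under conjugation by unitaries acting only on output systems, it suffices to treat $\ketbra{\gamma_i}{\gamma_i}$. Conjugating by $U^{\otimes n}=(I_d\oplus U)^{\otimes n}$, which acts only on the $\mathcal{H}_{\mathrm{B},j}$, preserves all the partial-trace conditions: the traced-out output systems are exactly where $U$ acts, so $\tr_{\mathcal{H}_{\mathrm{B},j}}$ is invariant, and the $\mathrm{A}$ systems are untouched. The Haar average $\Gamma_i$ then follows because it is a convex combination of probabilistic combs.

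For $\ketbra{\gamma_i}{\gamma_i}$ I would not work with the symmetrized vector directly. Instead I would exhibit a dominating deterministic comb, namely $C_{\mathcal{W}}^{\otimes n}$ for the isometry $W=V_0+\varepsilon\Delta$ (the case $U=I$ of \cref{eq-2141418}). The difficulty is that $\ket{\gamma_i}$ is a normalized superposition of the pairwise orthogonal terms $\kett{V_0}^{\otimes [n]\setminus S}\otimes\kett{\Delta}^{\otimes S}$ with $|S|=i$, so showing $\ketbra{\gamma_i}{\gamma_i}\le Y$ for a deterministic comb $Y$ is not immediate.

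The key observation is the following. Let $\Pi_B$ be the projector onto $\spanspace\{\ket{1}_\mathrm{B},\dots,\ket{d}_\mathrm{B}\}$ and $\Pi_B^\perp=I-\Pi_B$. Then $\kett{V_0}=(\Pi_B\otimes I)\kett{W}$ and $\varepsilon\kett{\Delta}=(\Pi_B^\perp\otimes I)\kett{W}$, using that the images of $V_0$ and $\Delta$ are orthogonal. Hence
\[
\varepsilon^{i}\sqrt{\binom{n}{i}}\ket{\gamma_i}=P_i\,\kett{W}^{\otimes n},
\]
where $P_i$ is the projector on $\bigotimes_j\mathcal{H}_{\mathrm{B},j}$ onto the span of the product subspaces having exactly $i$ factors in the range of $\Pi_B^\perp$. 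This $P_i$ is a measurement on the outputs that commutes with tracing out nothing in between. Equivalently, $\varepsilon^{2i}\binom{n}{i}\ketbra{\gamma_i}{\gamma_i}$ is the outcome-$i$ branch of the instrument obtained by applying $W$ $n$ times and then measuring $\sum_j\mathbb{1}[\text{output } j\in\operatorname{im}\Pi_B^\perp]$.

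This branch is not a per-query instrument, however, so it is not a priori a probabilistic comb in the sequential sense. Sequential structure fails because $P_i$ is a global projector.

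The step I expect to be the main obstacle is therefore the normalization: $\ketbra{\gamma_i}{\gamma_i}$ carries the factor $1/(\varepsilon^{2i}\binom{n}{i})$ relative to the branch, which can be large, so domination by $C_\mathcal{W}^{\otimes n}$ alone fails. To handle this I would directly verify the recursive partial-trace inequalities of \cref{prop-4290313}. Tracing $\mathcal{H}_{\mathrm{B},n}$ from $\ketbra{\gamma_i}{\gamma_i}$ splits by whether $n\in S$, using the orthogonality of images of $V_0$ and $\Delta$. This gives
\[
\tr_{\mathcal{H}_{\mathrm{B},n}}\ketbra{\gamma_i}{\gamma_i}=\tfrac{n-i}{n}\,\ketbra{\gamma^{(n-1)}_i}{\gamma^{(n-1)}_i}\otimes\tr_\mathrm{B}\kettbbra{V_0}{V_0}+\tfrac{i}{n}\,\ketbra{\gamma^{(n-1)}_{i-1}}{\gamma^{(n-1)}_{i-1}}\otimes\tr_\mathrm{B}\kettbbra{\Delta}{\Delta}.
\]
Both single-copy factors are at most $I_\mathrm{A}$. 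Induction on $n$ then shows that the mixture $\sum_i p_i\ketbra{\gamma^{(n)}_i}{\gamma^{(n)}_i}$ with binomial weights is handled jointly. For a single $i$, I would use the bound $\le I_{\mathrm{A},n}\otimes Y_{n-1}$ with $Y_{n-1}=\tfrac{n-i}{n}\ketbra{\gamma^{(n-1)}_i}{\gamma^{(n-1)}_i}+\tfrac{i}{n}\ketbra{\gamma^{(n-1)}_{i-1}}{\gamma^{(n-1)}_{i-1}}$. This is a convex combination of $(n-1)$-step objects of the same type, so the induction closes on the convex hull of $\{\ketbra{\gamma^{(m)}_j}{\gamma^{(m)}_j}\}_j$. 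The base case $m=1$ follows from $\tr_\mathrm{B}\kettbbra{V_0}{V_0},\ \tr_\mathrm{B}\kettbbra{\Delta}{\Delta}\le I_\mathrm{A}$.
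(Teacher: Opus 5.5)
Your final argument is correct and is essentially the paper's proof: you induct on $n$ via \cref{prop-4290313} and split $\ket{\gamma^{(n)}_i}$ according to whether slot $n$ carries $\kett{V_0}$ or $\kett{\Delta}$, with weights $\frac{n-i}{n}$ and $\frac{i}{n}$. You then kill the cross terms using the orthogonality of the images of $V_0$ and $\Delta$, bound by $I_{\mathrm{A},n}$ tensored with a convex combination of $(n-1)$-step objects, and pass to $\Gamma_i$ because $U$ acts only on outputs and Haar averaging is a convex combination. The preliminary detour through $C_{\mathcal{W}}^{\otimes n}$ and the global projector $P_i$ is unnecessary, as you yourself conclude, and can be dropped.
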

\begin{proof}
Denote by $\ket{\gamma_i^n}$ the state $\ket{\gamma_i}$ defined in \cref{eq-12270131}. 
We prove, by induction on $n$, that for all $i\le n$,  $\ketbra{\gamma_i^n}{\gamma_i^n}$ is a probabilistic $n$-comb.

First, we observe that 
\begin{equation}\label{eq-1120121}
\begin{gathered}
\tr_{\mathcal{H}_{\mathrm{B}}}(\kettbbra{V_0}{V_0})=V_0^\textup{T} V_0\leq I_\mathrm{A}
,\quad\quad \tr_{\mathcal{H}_{\mathrm{B}}}(\kettbbra{\Delta}{V_0})=\Delta^{\textup{T}} V_0=0, \\\tr_{\mathcal{H}_\mathrm{B}}(\kettbbra{V_0}{\Delta})=V_0^{\textup{T}}\Delta=0
,\quad\quad \tr_{\mathcal{H}_{\mathrm{B}}}(\kettbbra{\Delta}{\Delta})=\Delta^\textup{T}\Delta \leq I_\mathrm{A}. 
\end{gathered}
\end{equation}
For $i=n$, we have that $\ketbra{\gamma_n^n}{\gamma_n^n}=\kettbbra{\Delta}{\Delta}^{\otimes n}$ is a probabilistic $n$-comb because  $\tr_{\mathcal{H}_\mathrm{B}}(\kettbbra{\Delta}{\Delta})\leq I_{\mathrm{A}}$. Similarly, for $i=0$, we have that $\ketbra{\gamma_0^n}{\gamma_0^n}=\kettbbra{V_0}{V_0}^{\otimes n}$ is a probabilistic $n$-comb because  $\tr_{\mathcal{H}_\mathrm{B}}(\kettbbra{V_0}{V_0})\leq I_{\mathrm{A}}$.
Thus the hypothesis holds automatically for the case $n=1$.

Let $n\ge 2$ and suppose that $\ketbra{\gamma_i^{n-1}}{\gamma_i^{n-1}}$ is a probabilistic $(n-1)$-comb for all $i\le n-1$.
Now we prove $\ketbra{\gamma_i^{n}}{\gamma_i^n}$ is a probabilistic $n$-comb for any $i\leq n$.

Let $0< i < n$ (since the cases $i=0$ and $i=n$ are already proved).
We can express $\ket{\gamma_i^n}$ as follows
\[\ket{\gamma_i^n}=\sqrt{\frac{\binom{n-1}{i}}{\binom{n}{i}}}\kett{V_0}\otimes \ket{\gamma_i^{n-1}}+\sqrt{\frac{\binom{n-1}{i-1}}{\binom{n}{i}}}\kett{\Delta}\otimes\ket{\gamma_{i-1}^{n-1}}.\]
Hence, 
\begin{align}
\tr_{\mathcal{H}_{\mathrm{B},n}}(\ketbra{\gamma_i^n}{\gamma_i^n})&= \frac{\binom{n-1}{i}}{\binom{n}{i}}\tr_{\mathcal{H}_{\mathrm{B}}}(\kettbbra{V_0}{V_0})\otimes \ketbra{\gamma_{i}^{n-1}}{\gamma_i^{n-1}}+\frac{\binom{n-1}{i-1}}{\binom{n}{i}}\tr_{\mathcal{H}_{\mathrm{B}}}(\kettbbra{\Delta}{\Delta})\otimes \ketbra{\gamma_{i-1}^{n-1}}{\gamma_{i-1}^{n-1}}\nonumber\\
&\,\,\,\,+ \frac{\sqrt{\binom{n-1}{i}\binom{n-1}{i-1}}}{\binom{n}{i}}\Big(\tr_{\mathcal{H}_{\mathrm{B}}}(\kettbbra{V_0}{\Delta})\otimes \ketbra{\gamma_{i}^{n-1}}{\gamma_{i-1}^{n-1}}+\tr_{\mathcal{H}_{\mathrm{B}}}(\kettbbra{\Delta}{V_0})\otimes \ketbra{\gamma_{i-1}^{n-1}}{\gamma_{i}^{n-1}}\Big)\nonumber\\
&\leq\frac{\binom{n-1}{i}}{\binom{n}{i}}I_\mathrm{A}\otimes \ketbra{\gamma_i^{n-1}}{\gamma_i^{n-1}}+\frac{\binom{n-1}{i-1}}{\binom{n}{i}}I_{\mathrm{A}}\otimes \ketbra{\gamma_{i-1}^{n-1}}{\gamma_{i-1}^{n-1}},\nonumber
\end{align}
where we use  \cref{eq-1120121} in the last inequality.
By induction hypothesis, we have that $\ketbra{\gamma_i^{n-1}}{\gamma_i^{n-1}}$ and $\ketbra{\gamma_{i-1}^{n-1}}{\gamma_{i-1}^{n-1}}$ are probabilistic $(n-1)$-combs. 
Since $\binom{n-1}{i}+\binom{n-1}{i-1}=\binom{n}{i}$, we deduce that
\[\frac{\binom{n-1}{i}}{\binom{n}{i}}\ketbra{\gamma_i^{n-1}}{\gamma_i^{n-1}}+\frac{\binom{n-1}{i-1}}{\binom{n}{i}}\ketbra{\gamma_{i-1}^{n-1}}{\gamma_{i-1}^{n-1}}\]
is a probabilistic $(n-1)$-comb.
Therefore, $\ketbra{\gamma_i^n}{\gamma_{i}^n}$ is a probabilistic $n$-comb due to \cref{prop-4290313}.

Since $U$ is applied only on $\mathcal{H}_\mathrm{B}$, it is easy to see that $U^{\otimes n}\ketbra{\gamma_i}{\gamma_i}U^{\dag\otimes n}$ is also a probabilistic $n$-comb. Finally, $\Gamma_i$ is also a probabilistic $n$-comb because it is a convex combination of probabilistic $n$-combs.
\end{proof}

\begin{lemma}\label{lemma-12281107}
Recall that $d_{\textup{min}}=\min\{d,D-d\}$. Suppose $n\leq \frac{1}{2e^4}(D-d)d_{\textup{min}}/\varepsilon^2$. There exists positive numbers $\lambda_0,\ldots,\lambda_n$ such that for any $V\in \mathcal{N}$
\begin{align}
\sum_{i=0}^n\lambda_i&\leq 4(D-d)^2d_{\textup{min}}^2\exp\!\left(\sqrt{8n\varepsilon^2(D-d)d_{\textup{min}}}\right), \nonumber \\
    \kettbbra{V}{V}^{\otimes n}&\leq \sum_{i=0}^n \lambda_i \Gamma_i. \label{eq-12271213}
\end{align}
\end{lemma}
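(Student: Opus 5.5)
I will follow the Type I template of \cref{lemma-1200312}. By \cref{eq-12271523}, $\kett{V}^{\otimes n}=\sum_i \varepsilon^i\sqrt{\binom{n}{i}}\,U^{\otimes n}\ket{\gamma_i}$, and $U^{\otimes n}\ket{\gamma_i}\in\supp(\Gamma_i)$. The supports of the $\Gamma_i$ are pairwise orthogonal. So by \cref{fact-5122103} it suffices to choose $\lambda_i>0$ with
\[\sum_{i=0}^n \frac{1}{\lambda_i}\,\varepsilon^{2i}\binom{n}{i}\tr\!\left(\Gamma_i^{-1}\ketbra{\gamma_i}{\gamma_i}\right)\leq 1 .\]
Here I have used that $\Gamma_i$ commutes with $U^{\otimes n}$, so the $U$-dependence disappears and the same $\lambda_i$ work for every $V\in\mathcal{N}$.

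\textbf{Step 1: bound $\tr(\Gamma_i^{-1}\ketbra{\gamma_i}{\gamma_i})$.} I will invoke \cref{lemma-12271509}, which bounds this quantity by the dimension of a $U^{\otimes n}$-invariant subspace containing $\ket{\gamma_i}$. For the subspace, let $W=\mathrm{span}\{\ket{a}_\mathrm{A}\ket{d+b}_\mathrm{B}: a\le d',\,b\le D-d\}\cong\mathbb{C}^{d'}\otimes\mathbb{C}^{D-d}$. Then $\ket{\gamma_i}$ lies in the span of the permutation-symmetrized vectors $\kett{V_0}^{\otimes [n]\setminus S}\otimes\ket{\psi}^{\otimes S}$ with $\ket{\psi}\in W$. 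This span is invariant under $(I_d\oplus U)^{\otimes n}$, since $U$ fixes $\kett{V_0}$ and preserves $W$. It is isomorphic to $\vee^i W$, of dimension $\binom{d'(D-d)+i-1}{i}\le\binom{Q+i-1}{i}$, where $Q=(D-d)d_{\textup{min}}$ and $d'\le d_{\textup{min}}$. A sharper route is to write $\ket{\gamma_i}$ as a symmetrization of $\kett{\Delta}^{\otimes i}$, a fixed vector in $\vee^i W$ acted on only by $U^{\otimes i}$ on the $B$ side; I would use whichever is cleaner. The crude dimension bound already suffices.

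\textbf{Step 2: choose $\lambda_i$.} Set $m=\lceil Q\rceil$. The summand is at most $\lambda_i^{-1}(n\varepsilon^2)^i\binom{Q+i-1}{i}/i!$, using $\binom{n}{i}\le n^i/i!$.
\begin{itemize}
\item For $i< Q$: $\binom{Q+i-1}{i}\le (2Q)^i/i!$, so the term is at most $(2n\varepsilon^2Q)^i/(i!)^2\le \exp(2\sqrt{2n\varepsilon^2 Q})=\exp(\sqrt{8n\varepsilon^2Q})$, by $x^i/(i!)^2\le e^{2\sqrt x}$.
\item For $i\ge Q$: $\binom{Q+i-1}{i}\le (2e)^i$ and $i!\ge (i/e)^i$, so the term is at most $(2e^2n\varepsilon^2/i)^i\le (2e^2 n\varepsilon^2/Q)^i\le e^{-2i}$, using the hypothesis $n\le Q/(2e^4\varepsilon^2)$.
\end{itemize}
I then take $\lambda_i=2(Q+1)\exp(\sqrt{8n\varepsilon^2Q})$ for $i<Q$ and $\lambda_i=e^{-i}$ for $i\ge Q$. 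The constraint sum is then at most $\tfrac12+\sum_{i\ge Q}e^{-i}<1$. The total is $\sum_i\lambda_i\le 2(Q+1)^2\exp(\cdot)+1\le 4Q^2\exp(\sqrt{8n\varepsilon^2 Q})$, which is the claimed bound since $Q^2=(D-d)^2d_{\textup{min}}^2$.

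\textbf{Main obstacle.} The delicate step is Step 1: verifying that the chosen invariant subspace really contains $\ket{\gamma_i}$ and that \cref{lemma-12271509} applies with dimension polynomial in $Q$ rather than in $n$. The combinatorial estimates in Step 2 are routine.
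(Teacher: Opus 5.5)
Your argument is essentially the paper's proof: the same reduction via \cref{fact-5122103} with $U$ removed by commutation, the same invariant subspace and $\binom{(D-d)d'+i-1}{i}$ dimension count via \cref{lemma-12271509} and \cref{lemma-12271712}, and the same split at $i=Q$ with $\lambda_i=e^{-i}$ in the tail. Your factorial estimates ($\binom{n}{i}\le n^i/i!$ and $x^i/(i!)^2\le e^{2\sqrt{x}}$) replace the paper's entropy bound and \cref{lamma-12281043}, and give the same exponent $s=\sqrt{8n\varepsilon^2Q}$.

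There is one small slip in the final count. The step $2(Q+1)^2e^{s}+1\le 4Q^2e^{s}$ fails for $Q\le 2$. At $Q=1$ your choice genuinely overshoots: the total is $4e^{s}+\sum_{i=1}^n e^{-i}>4Q^2e^{s}$. The fix is to take $\lambda_i=3Q\exp(s)$ for $i<Q$, as the paper does. Then the constraint sum is at most $\tfrac13+\tfrac{1}{e-1}<1$, and the total is at most $3Q^2e^{s}+\tfrac{1}{e-1}<4Q^2e^{s}$.
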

\begin{proof}
We remark that, because of  \cref{eq-12271523},  $\kett{V}^{\otimes n}$ is contained in $\bigoplus_{j=0}^n \supp(\Gamma_j)$. 
Moreover, by \cref{fact-5122103}, to establish \cref{eq-12271213}, it is sufficient to show that
\begin{equation}\label{eq-12271634}
\sum_{i=0}^n \frac{1}{\lambda_i} \tr\!\left(\Gamma_i^{-1}\kettbbra{V}{V}^{\otimes n}\right)\leq 1.
\end{equation}
To this end, we bound the left hand side of \cref{eq-12271634} as
\begin{align}
\sum_{i=0}^n \frac{1}{\lambda_i} \tr\!\left(\Gamma_i^{-1}\kettbbra{V}{V}^{\otimes n}\right)&=\sum_{i=0}^n \frac{1}{\lambda_i}\binom{n}{i} \varepsilon^{2i} \tr\!\left(\Gamma_i^{-1} U^{\otimes n}\ketbra{\gamma_i}{\gamma_i} U^{\dag\otimes n}\right) \label{eq-12271624} \\
&=\sum_{i=0}^n \frac{1}{\lambda_i}\binom{n}{i} \varepsilon^{2i} \tr\!\left(\Gamma_i^{-1}\ketbra{\gamma_i}{\gamma_i} \right) \label{eq-12271632}\\
&\leq \sum_{i=0}^n \frac{1}{\lambda_i}\binom{n}{i} \varepsilon^{2i} \binom{(D-d)d'+i-1}{i}\label{eq-12271633}\\
&\leq \sum_{i=0}^n \frac{1}{\lambda_i}\binom{n}{i} \varepsilon^{2i} \binom{(D-d)d_{\textup{min}}+i-1}{i}. \label{eq-3252109}
\end{align}
Here \cref{eq-12271624} follows from \cref{eq-12271523} and the fact that $U^{\otimes n}\ket{\gamma_i}\in \supp(\Gamma_i)$, \cref{eq-12271632} uses the fact that $\Gamma_i$ commutes with $U^{\otimes n}$, and  
\cref{eq-12271633} is an application of \cref{lemma-12271509} where $\ket{\gamma_i}$ is viewed as a vector in the space
\begin{equation}\label{eq-1301405}
\spanspace\left(\left\{\sum_{\substack{S\subseteq [n]\\ |S|=i }}\ket{\psi}^{\otimes S}\otimes \kett{V_0}^{\otimes [n]\setminus S}\,\, \bigg|\,\, \ket{\psi}\in \Pi\right\}\right),
\end{equation}
where 
\[\Pi=\spanspace\{\ket{d+1}_\mathrm{B},\ldots,\ket{D}_\mathrm{B}\}\otimes\spanspace\{\ket{1}_\mathrm{A},\ldots,\ket{d'}_\mathrm{A}\}.\]
The space defined in \cref{eq-1301405} is invariant  under the action $U^{\otimes n}$ for any $U\in\mathbb{U}_{D-d}$ since $U$ fixes $\kett{V_0}$ and $\Pi$ is an invariant space under $U$ (recall that $U$ acts as $I_d\oplus U$ on $\mathcal{H}_\mathrm{B}$). By \cref{lemma-12271712}, it  has dimension $\binom{(D-d)d'+i-1}{i}$.
Therefore, we obtain  \cref{eq-12271633}.
\cref{eq-3252109} uses  $d'\leq \min\{d,D-d\}=d_{\textup{min}}$.

Our goal is look for  positive numbers $\lambda_0,\ldots,\lambda_n$ such that $\sum_{i=0}^n\lambda_i$ is small and \cref{eq-3252109} is upper bounded by $1$.
For ease of notation, we denote $Q=(D-d)d_{\textup{min}}$. 
 \cref{eq-3252109} can be upper bounded using \cref{lemma-12272204} as follows
\begin{align}
\eqref{eq-3252109} &\leq \sum_{i=0}^n\frac{1}{\lambda_i} \exp\!\left(n H\!\left(\frac{i}{n}\right) + 2i \ln(\varepsilon) + (Q+i) H\!\left(\frac{i}{Q+i}\right)\right) \nonumber\\
&=\sum_{i=0}^n\frac{1}{\lambda_i}\exp\!\left(-i \ln\!\left(\frac{i}{n\varepsilon^2}\right) -(n-i)\ln\!\left(\frac{n-i}{n}\right)+ i \ln \!\left(1+\frac{Q}{i}\right)+Q \ln\!\left(1+\frac{i}{Q}\right)\right)\nonumber \\
&\leq \sum_{i=0}^n \frac{1}{\lambda_i} \exp\!\left(-i\ln\!\left(\frac{i}{n\varepsilon^2}\right)+i\ln\!\left(1+\frac{Q}{i}\right)+2i\right),\label{eq-12280023}
\end{align}
where the last inequality uses 
\[(n-i)\ln\!\left(\frac{n}{n-i}\right)\leq (n-i)\left(\frac{n}{n-i}-1\right)= i \quad \textup{and} \quad Q\ln\!\left(1+\frac{i}{Q}\right)\leq Q\frac{i}{Q}=i. \]
To further bound \cref{eq-12280023}, we consider two cases depending on the value of $i$: 
\begin{itemize}
\item When $i< Q$, we have
\begin{align}
-i\ln\!\left(\frac{i}{n\varepsilon^2}\right)+i\ln\!\left(1+\frac{Q}{i}\right)+2i &\leq -i\ln\!\left(\frac{i}{n\varepsilon^2}\right)+i\ln\!\left(\frac{2Q}{i}\right)+2i \nonumber\\
&=2i\ln\!\left(\frac{\sqrt{2e^2n\varepsilon^2Q}}{i}\right) \nonumber \\
&\leq \sqrt{8n\varepsilon^2Q}, \label{eq-12280237}
\end{align}
where the last inequality follows from \cref{lamma-12281043}. 

\item When $i\geq Q$, since we assumed that $n\leq \frac{1}{2e^4}(D-d)d_{\textup{min}}/\varepsilon^2=\frac{1}{2e^4}Q/\varepsilon^2$  we have $i\geq 2e^4n\varepsilon^2$, thus
\begin{align}
-i\ln\!\left(\frac{i}{n\varepsilon^2}\right)+i\ln\!\left(1+\frac{Q}{i}\right)+2i&\leq -i \ln(2e^4)+i\ln(2)+2i=-2i.\nonumber
\end{align}
\end{itemize}
Therefore, the choice 
\begin{align*}
    \lambda_i = \begin{cases}
        3Q\exp\!\left(\sqrt{8n\varepsilon^2Q}\right), &\quad\textup{if $i< Q$} \\
        \exp(-i), &\quad\textup{if $i \ge Q$,}
    \end{cases}
\end{align*}
ensures that  \cref{eq-12280023} can be  upper bounded by
\begin{align}
\eqref{eq-12280023} &\leq \sum_{i<Q} \frac{1}{\lambda_i} \exp\!\left(\sqrt{8n\varepsilon^2Q}\right)+\sum_{i\geq Q} \frac{1}{\lambda_i}\exp(-2i) \nonumber \\
&\leq \frac{1}{3}+\sum_{i\geq Q} \exp(-i)\nonumber 
\leq\frac{1}{3}+ \exp(-Q) \frac{e}{e-1}\nonumber 
< 1,\nonumber
\end{align}
where  we use that $Q=(D-d)d_{\textup{min}}\geq 1$. 
Finally, we have that
\begin{align}
\sum_{i=0}^n\lambda_i &\leq 3Q^2\exp\!\left(\sqrt{8n\varepsilon^2Q}\right)+\exp(-Q)\frac{e}{e-1} \nonumber \\
& < 3Q^2\exp\!\left(\sqrt{8n\varepsilon^2Q}\right)+\frac{1}{2}\nonumber
 < 4Q^2\exp\!\left(\sqrt{8n\varepsilon^2Q}\right)\nonumber
\\&= 4(D-d)^2d_{\textup{min}}^2\exp\!\left(\sqrt{8n\varepsilon^2(D-d)d_{\textup{min}}}\right), \nonumber
\end{align}
which concludes the proof.
\end{proof}

\section{Packing nets of quantum channels}\label{sec-3290355}

\begin{table}[ht]
\centering
\setlength{\tabcolsep}{1.5mm}{{\renewcommand{\arraystretch}{1.5}
\begin{tabular}{|c|c|c|c|}
\hline
Type of hardness & Assumption  & Distance & Logarithm of cardinality \\ \hline
\shortstack{\\ \\ Type I \\ \\ \cref{sec:hard-instance-d1=rd2}} &  \shortstack{$d_1\leq rd_2\leq \frac{4}{3}d_1$\\ \\ \,} & \shortstack{Choi: $\Omega(\varepsilon)$\\ \\ \,}  &        \shortstack{$\Omega(d_1^2)$\\ \\ \,}  \\ \hline
\multirow{2}{*}{\shortstack{\\ Type II\\ \\ \cref{sec-3230332}}} & \multirow{2}{*}{\shortstack{\\ $rd_2>d_1$\\ \\ $rd_2<d_1+r$}} &  Choi: $\Omega(\sr^{3/2}\varepsilon)$   &  $\Omega(\sr^2 (rd_2-d_1)^2)$  \\ \cline{3-4} 
 &      & Diamond: $\Omega(\varepsilon)$ &  $\Omega((rd_2-d_1)^2)$   \\ \hline
\multirow{2}{*}{\shortstack{\\ Type II \\ \\ \cref{sec-3230333}}} & \multirow{2}{*}{\shortstack{\\ $rd_2>d_1$\\ \\ $d_1+r\leq rd_2$, $r\leq d_1$}} & Choi: $\Omega(\sr^{3/2}\varepsilon)$  &  $\Omega(\sr^3 d_1(rd_2-d_1))$  \\ \cline{3-4} 
  &       & Diamond: $\Omega(\varepsilon)$  &  $\Omega(\sr d_1(rd_2-d_1))$ \\ \hline
\multirow{2}{*}{\shortstack{\\ Type II \\ \\ \cref{sec-3260100}}}  & \multirow{2}{*}{\shortstack{\\ $rd_2>d_1$\\ \\ $d_1+r\leq rd_2$, $d_1<r$}} & \multirow{2}{*}{Choi: $\Omega(\varepsilon)$} &  \multirow{2}{*}{$\Omega(rd_1d_2)$} \\
& & & \\ \hline
\end{tabular}
}}
\caption{Construction of packing nets. Here, $\sr\coloneqq\min\{(rd_2-d_1)/d_1,1\}=\min\{\tau-1,1\}$.
Note that for the non-boundary case ($rd_2>d_1$), $\sr$ can be close to $0$ while $\sr d_1=\min\{rd_2-d_1,d_1\}\geq 1$.}\label{tab-3290336}
\end{table}

In this section, we construct packing nets of quantum channels whose Stinespring dilation isometries correspond to the type I and type II hard instances defined in \cref{sec-3200450}.
The packing nets have different distances and cardinality depending on the parameter regimes. 
We summarize the results in \cref{tab-3290336}.

Let $d_1,d_2$ and $r$ denote the input dimension, output dimension and upper bound of Kraus rank of the quantum channels. 
In this section, we further assume 
\begin{equation}\label{eq-3200605}
r\leq d_1d_2/2,
\end{equation}
instead of the more general $r\leq d_1d_2$. 
Note that adding this constraint here (in the packing net construction) still allows us to derive general lower bounds for quantum channel tomography, since a larger Kraus rank only makes the tomography task more difficult.
Then, the following notation will be used in this section.
\begin{notation}
Let $\mathcal{H}_\mathrm{A}\cong \mathbb{C}^{d_1}$, $\mathcal{H}_{\mathrm{B}}\cong\mathbb{C}^{d_2}$ and $\mathcal{H}_\mathrm{anc}\cong\mathbb{C}^r$ be the input, output and ancilla systems. 
Let $\{\ket{1}_\mathrm{A},\ldots,\ket{d_1}_\mathrm{A}\}$ be an orthonormal basis of $\mathcal{H}_\mathrm{A}$, and similarly let $\{\ket{1}_\mathrm{B},\ldots,\ket{d_2}_\mathrm{B}\}$ and $\{\ket{1}_\mathrm{anc},\ldots,\ket{r}_\mathrm{anc}\}$ be orthonormal bases of $\mathcal{H}_\mathrm{B}$ and $\mathcal{H}_{\mathrm{anc}}$, respectively.
Moreover, for $\mathrm{X}\in\{\mathrm{A},\mathrm{B},\mathrm{anc}\}$, we use $\mathcal{H}_\mathrm{X}[i:j]$ to denote the subspace of a Hilbert space $\mathcal{H}_\mathrm{X}$ spanned by $\{\ket{i}_X,\ldots,\ket{j}_X\}$, and we use $\mathcal{H}[i]$ to denote $\mathcal{H}[i:i]$.
\end{notation}

\subsection{Type I instance: $d_1\leq rd_2 \leq \frac{4}{3} d_1 $}\label{sec:hard-instance-d1=rd2}

\subsubsection{Construction}
In this subsection, we will use the definition in \cref{sec:hard-d1=rd2}, where the parameter $d, D$ in \cref{sec:hard-d1=rd2} correspond to $d_1$ and $rd_2$ here.

We define $d_1'=d_1$ when $d_1$ is even and $d_1'=d_1-1$ when $d_1$ is odd. 
Let $g:[d_1]\rightarrow [d_2]\times [r]$ be the function $g(i)=(\lfloor \frac{i-1}{r}\rfloor+1,i-r\lfloor\frac{i-1}{r}\rfloor)$, i.e., $g$ maps the integers in $[d_1]$ to $[d_2]\times [r]$ in row-major order.
Let $\mathcal{H}_{\mathrm{B}'}$ be a $rd_2$-dimensional space with an orthonormal basis $\{\ket{1}_{\mathrm{B}'},\ldots,\ket{d_2r}_{\mathrm{B}'}\}$.
We identify $\mathcal{H}_\mathrm{B'}$ with $\mathcal{H}_\mathrm{B}\otimes\mathcal{H}_\mathrm{anc}$ by the map $\ket{i}_\mathrm{B'}\mapsto \ket{g(i)_1}_\mathrm{B}\otimes\ket{g(i)_2}_\mathrm{anc}$.
In other words, the basis vectors are identified as follows:
\[\ket{1}_\mathrm{B'},\, \ket{2}_{\mathrm{B'}},\, \ldots,\, \ket{d_2r}_\mathrm{B'} \longleftrightarrow \ket{1}_\mathrm{B}\otimes \ket{1}_\mathrm{anc},\,\ket{1}_\mathrm{B}\otimes \ket{2}_\mathrm{anc},\,\ldots,\ket{d_2}_\mathrm{B}\otimes \ket{r}_\mathrm{anc}.\]
We also identify $\mathcal{H}_\mathrm{A}$ with the subspace $\mathcal{H}_{\mathrm{B'}}[1:d_1]$ of $\mathcal{H}_\mathrm{B'}$ using the map $\ket{i}_\mathrm{A}\mapsto \ket{i}_\mathrm{B'}$ for $i\in[d_1]$. In other words, we have the following correspondence:
\begin{equation}\label{eq-3261637}
\ket{1}_\mathrm{A},\,\ket{2}_\mathrm{A},\,\ldots,\,\ket{d_1}_\mathrm{A}\longleftrightarrow \ket{1}_\mathrm{B'},\,\ket{2}_\mathrm{B'},\,\ldots,\, \ket{d_1}_\mathrm{B'}.
\end{equation}
Then, define the linear operator $V_{0}:\mathcal{H}_{\mathrm{A}}\rightarrow \mathcal{H}_{\mathrm{B}}\otimes\mathcal{H}_\mathrm{anc}$ as
\[V_0\coloneqq
\begin{cases}
\sqrt{1-\varepsilon^2}\sum_{i=1}^{d'}\ket{i}_{\mathrm{B}'}\bra{i}_\mathrm{A},&\quad \textup{if $d_1$ is even}\\
\sqrt{1-\varepsilon^2}\sum_{i=1}^{d'}\ket{i}_{\mathrm{B}'}\bra{i}_\mathrm{A}+\ket{d}_{\mathrm{B}'}\bra{d}_\mathrm{A},&\quad \textup{if $d_1$ is odd}.\\
\end{cases}
\]
Define the linear operator $\Delta:\mathcal{H}_{\mathrm{A}}\rightarrow\mathcal{H}_{\mathrm{B}'}$ as
\begin{equation}\label{eq-3212346}
\Delta\coloneqq \mathrm{i}\left(\sum_{i=1}^{\lfloor d_1/2\rfloor} \ket{i}_{\mathrm{B}'}\bra{i}_{\mathrm{A}}-\sum_{i= \lfloor d_1/2\rfloor+1}^{d_1'} \ket{i}_{\mathrm{B}'}\bra{i}_{\mathrm{A}}\right),
\end{equation}
where $\mathrm{i}$ is the imaginary unit.
Note that under the identification in \cref{eq-3261637}, we can simply discard the subscript $\mathrm{A}$ and $\mathrm{B'}$ in the definition of $V_0$ and $\Delta$.

Suppose $U\in\mathbb{U}_{d_1'}$ is a unitary acting on $\spanspace\{\ket{1},\ldots,\ket{d_1'}\}$, and $U$ fixes $\ket{d_1}$ if $d_1>d_1'$. 
Define the isometry $V_{\varepsilon,U}:\mathcal{H}_{\mathrm{A}}\rightarrow\mathcal{H}_{\mathrm{B}'}$ as
\begin{equation}\label{eq-3211722}
\begin{split}
V_{\varepsilon,U}& \coloneqq U\left(V_0 +\varepsilon \Delta\right) U^\dag = V_0+\varepsilon U\Delta U^\dag,
\end{split}
\end{equation}
where with a little abuse of notation, $U$ can act either on $\mathcal{H}_\mathrm{A}$ or $\mathcal{H}_\mathrm{B'}$ (using the identification in \cref{eq-3261637}, and we assume that $U$ acts trivially on $\spanspace\{\ket{d_1+1}_\mathrm{B'},\ldots,\ket{d_2r}_\mathrm{B'}\}$), which should not cause confusion given the context. 
For clarity, we illustrate our construction in \cref{fig-3290108}.

\begin{figure}[t]
    \centering
    \includegraphics[width=0.75\linewidth]{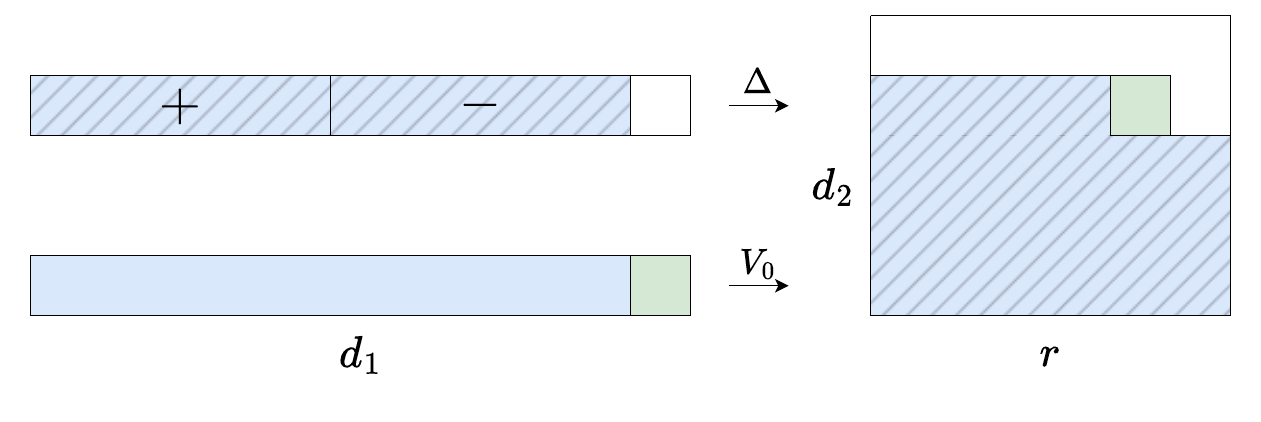}
    \caption{Illustration of our construction. We define linear operators $V_0$ and $\Delta$ from the $d_1$-dimensional space $\mathcal{H}_\mathrm{A}$ to the $d_2r$-dimensional space $\mathcal{H}_\mathrm{B}\otimes\mathcal{H}_\mathrm{anc}$. The Haar randomness is applied on the hatched area.}
    \label{fig-3290108}
\end{figure}

\subsubsection{Existence}
    \begin{theorem}\label{thm:existence-d1=rd2}
    Suppose $\varepsilon\leq 1/39660$.
    There exists a finite subset $\mathcal{N}$ of $\{V_{\varepsilon,U}\,|\, U\in\mathbb{U}_{d_1'}\}$ for $V_{\varepsilon,U}$ defined in \cref{eq-3211722} with cardinality $|\mathcal{N}|\geq \exp(d_1^2/750197760001)$, such that for any $V_x\neq V_y\in \mathcal{N}$, if we set $\mathcal{E}_x=\tr_{\mathcal{H}_\mathrm{anc}}(V_x(\cdot) V_x^\dag)$ and $\mathcal{E}_y=\tr_{\mathcal{H}_\mathrm{anc}}(V_y(\cdot) V_y^\dag)$, then
\begin{align}
     \frac{1}{d_1}\left\|C_{\mathcal{E}_{x}} - C_{\mathcal{E}_y}\right\|_{1}   \ge \frac{\eps}{1983000}.\label{eq-3230249}
\end{align}
\end{theorem}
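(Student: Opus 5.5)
The plan is a probabilistic (Gilbert–Varshamov style) argument. We show that for two independent Haar unitaries $U,W\sim\mathbb{U}_{d_1'}$, the induced channels satisfy \cref{eq-3230249} with probability at least $1-\exp(-c d_1^2)$ for a universal constant $c$. Sampling $N=\exp(c d_1^2/2)$ unitaries independently and applying a union bound over the $\binom{N}{2}$ pairs then shows that with positive probability all pairs are separated. This yields the net $\mathcal{N}$ with the stated cardinality.

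\textbf{Step 1: reduce the distance to a quantity linear in $\varepsilon$.}
Write $C_{\mathcal{E}_U}=\tr_{\mathrm{anc}}\kettbbra{V_{\varepsilon,U}}{V_{\varepsilon,U}}$. Expanding gives
\[
C_{\mathcal{E}_U}-C_{\mathcal{E}_W}=\varepsilon\,\tr_{\mathrm{anc}}\!\big(\kett{V_0}\bbra{X_U-X_W}+\kett{X_U-X_W}\bbra{V_0}\big)+O(\varepsilon^2),
\]
where $X_U=U\Delta U^\dag$. The $O(\varepsilon^2)$ term has trace norm at most $2\varepsilon^2 d_1$, so it is negligible because $\varepsilon\le 1/39660$. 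For the linear term we lower bound the trace norm by pairing against a test operator: $\|Y\|_1\ge |\tr(YZ)|/\|Z\|_\infty$. The natural choice of $Z$ is built from $\kett{V_0}$ and $\kett{X_U-X_W}$ themselves. Since $\rho\approx V_0$ is close to the embedding $\ket{i}\mapsto\ket{g(i)}$, the linear term is essentially $\mathrm{Re}$ of the operator $\tr_{\mathrm{anc}}\kett{I'}\bbra{M}$ with $M=X_U-X_W$ anti-Hermitian on $\mathcal{H}_{\mathrm{A}}[1:d_1']$. Here $\kett{I'}$ is the partial identity embedded into $\mathcal{H}_\mathrm{B}\otimes\mathcal{H}_\mathrm{anc}$ via $g$.

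\textbf{Step 2: compute the partial trace and control its norm.}
We must show that $\frac1{d_1}\|\tr_{\mathrm{anc}}(\kett{I'}\bbra{M}+\text{h.c.})\|_1\gtrsim \frac{1}{d_1}\|M\|_2\cdot(\text{something of order }\sqrt{d_1})$. This is the main obstacle, because tracing out the ancilla can cancel contributions. Indeed, the identity part is killed (this is why $\Delta$ is traceless and orthogonal to $V_0$), and the condition $rd_2\le\frac43 d_1$ is exactly what is needed to keep a constant fraction of $M$ visible.

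Concretely, grouping basis indices by their $\mathrm{B}$-label (blocks of size $r$ under $g$), the partial trace sums over the ancilla indices $a$ of entries $M_{(b,a),(b',a)}$. It therefore captures the "ancilla-diagonal" sub-blocks of $M$. I would lower bound the trace norm by the Hilbert–Schmidt norm divided by $\sqrt{\mathrm{rank}}$, with rank at most $d_1 d_2\cdot$(small). Alternatively I would use the Hölder pairing $\|Y\|_1\ge \|Y\|_2^2/\|Y\|_\infty$, aiming for $\|Y\|_2^2\gtrsim d_1\|M\|_2^2/r$-type bounds and $\|Y\|_\infty\lesssim \|M\|_\infty$.

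\textbf{Step 3: concentration.}
For Haar $U,W$, the quantities $\|M\|_2^2$ and the relevant ancilla-diagonal Frobenius mass have expectation of order $d_1$, computed by Weingarten/second-moment formulas using $\tr\Delta=0$ and $\Delta^\dag\Delta=I$. Meanwhile $\|M\|_\infty\le 2$ always. These functions are Lipschitz in $(U,W)$ with constant $O(1)$ in Frobenius norm (after normalizing by $\sqrt{d_1}$). Concentration of measure on $\mathbb{U}_{d_1'}\times\mathbb{U}_{d_1'}$ (Gromov–Milman, constant of order $d_1$) then gives deviations of order $\sqrt{d_1}$ with probability $1-\exp(-c d_1^2)$. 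Combining the three steps gives $\frac1{d_1}\|C_{\mathcal{E}_U}-C_{\mathcal{E}_W}\|_1\ge c'\varepsilon - 2\varepsilon^2\ge \varepsilon/1983000$ for the stated small $\varepsilon$. The explicit constants in the statement come from tracking these numeric factors.
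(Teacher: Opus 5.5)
Your architecture matches the paper's: a first-order expansion whose $\varepsilon^2$ term costs at most $2\varepsilon^2$ after normalization, a pairwise tail bound $\exp(-cd_1^2)$ from Haar concentration, and a union bound over $\exp(cd_1^2/2)$ samples. The gap is in Step~2, the step you yourself call the main obstacle. You do not resolve it, and the mechanism you propose rests on a false estimate.

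Write the linear term as $Y=A+A^\dagger$ with $A=\tr_{\mathrm{anc}}(\kett{M}\bbra{V_0})$ and $M=U\Delta U^\dagger-W\Delta W^\dagger$. The bound $\|Y\|_\infty\lesssim\|M\|_\infty=O(1)$ is off by a factor $\Theta(d_1/r)$. Already for $r=1$ the ancilla is trivial and $\bbrakett{V_0}{M}=0$. Then $Y=\kett{V_0}\bbra{M}+\kett{M}\bbra{V_0}$ has eigenvalues $\pm\|\kett{V_0}\|\,\|\kett{M}\|\approx\pm\sqrt2\,d_1$. In general, combining your bound with your own target $\|Y\|_2^2\gtrsim d_1\|M\|_2^2/r\approx d_1^2/r$ would give $\|Y\|_1\gtrsim d_1^2/r$. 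This contradicts the trivial bound $\|Y\|_1\le 2\|\kett{V_0}\|\,\|\kett{M}\|\le 4d_1$.

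Your Step~3 also puts this Frobenius mass at order $d_1$, which is inconsistent with Step~2; the true size is $\Theta(d_1^2/r)$. The rank-based alternative does not help either. Since $\mathrm{rank}(Y)\le 2r$, rank comparisons only give upper bounds on $\|Y\|_1$. As a lower bound you are left with $\|Y\|_1\ge\|Y\|_2=\Theta(d_1/\sqrt r)$, a separation of order $\varepsilon/\sqrt r$, which is too weak unless $r=O(1)$. What is actually needed is that the $\le 2r$ nonzero eigenvalues of $Y$ be roughly flat.

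\emph{How to repair your route.} Set $K_a=\bra{a}_{\mathrm{anc}}V_0$ and $M_a=\bra{a}_{\mathrm{anc}}M$, so that $A=\sum_a\kett{M_a}\bbra{K_a}$.
\begin{itemize}
\item The $\kett{K_a}$ are pairwise orthogonal with $\|K_a\|_2^2\le\lceil d_1/r\rceil$, so $AA^\dagger=\sum_a\|K_a\|_2^2\,\kett{M_a}\bbra{M_a}$.
\item This has norm at most $\lceil d_1/r\rceil\cdot\|\tr_{\mathrm{B}}(MM^\dagger)\|_\infty\le 4d_2\lceil d_1/r\rceil$.
\item Using $rd_2\le\tfrac43 d_1$ and $r\le d_1$, this is $O((d_1/r)^2)$, so deterministically $\|Y\|_\infty=O(d_1/r)$.
\item Then $\|Y\|_1\ge\|Y\|_2^2/\|Y\|_\infty\gtrsim d_1$, provided $\|Y\|_2^2\gtrsim d_1^2/r$, which you can get with high probability by concentration of $\|Y\|_2$.
\end{itemize}
That second-moment bound is real work, not just ``order $d_1$ by Weingarten''. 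You must show that the negative contribution of order $r$ from $\E\,\tr(A^2)$ is dominated by $\E\,\tr(AA^\dagger)\approx 2d_1^2/r$. That negative term reflects the perturbation directions $\mathrm{i}H$ on the ancilla, which leave the Choi state unchanged to first order. Dominating it is where the paper uses $2r\le rd_2\le\tfrac43 d_1$.

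\emph{How the paper does it.} The paper never needs a pointwise spectral bound. It applies concentration directly to $f=\frac1{d_1}\|C_{\mathcal{E}_U}-C_{\mathcal{E}_W}\|_1$, which is $\varepsilon\sqrt{32/d_1}$-Lipschitz. It lower-bounds the mean by $\E\|D\|_1\ge(\E\,\tr D^2)^{3/2}(\E\,\tr D^4)^{-1/2}$, using $\E\,\tr D^2\ge d_1^2/(20r)$ and the fourth-moment Weingarten bound $\E\,\tr D^4\le 12288\,d_1^4/r^3$. The deterministic operator-norm bound above could replace that fourth-moment computation.
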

\begin{proof}
First, we need the following lemma.
\begin{lemma}\label{lem:ppts-Phi-eq}
Suppose $\varepsilon\leq 1/39660$.
Given any unitary $U\in \mathbb{U}_{d_1'}$, let $\mathcal{E}_{U}= \tr_{\mathcal{H}_{\mathrm{anc}}}(V_{\varepsilon,U} (\cdot) V_{\varepsilon,U}^\dagger)$ where $V_{\varepsilon,U}$ is the isometry defined in \cref{eq-3211722}. Then, the function 
 \begin{equation*}
     f(U_x, U_y)= \frac{1}{d_1} \left\|C_{\mathcal{E}_{U_x}} - C_{\mathcal{E}_{U_y}} \right\|_1
 \end{equation*}
is $\varepsilon\sqrt{\frac{32}{d_1}}$-Lipschitz with respect to the $\ell_2$-sum of the $2$-norms (Frobenius norm), namely
\begin{equation*}
    |f(U_x,U_y)-f(U'_x,U'_y)|\leq  \varepsilon\sqrt{\frac{32}{d_1}}  \|(U_x,U_y) -(U_x',U_y')\|_F
\end{equation*}
for all $U_x,U_x',U_y,U_y'\in \mathbb{U}_{d'}$, where $\|(A,B)\|_F\coloneqq\sqrt{\|A\|_F^2+\|B\|_F^2}$ and $\|\cdot\|_F$ denotes the Frobenius norm. 
Furthermore, if we consider independent random unitaries $U_x,U_y \sim \mathbb{U}_{d_1'}$, we have
\begin{equation*}
\E\!\left[f(U_x,U_y)\right]\ge  \frac{\varepsilon}{19830}.
\end{equation*}
\end{lemma}
The proof of \cref{lem:ppts-Phi-eq} is deferred to \cref{app:first-moment-and-lip-eq}.

We now have all the ingredients to prove \cref{thm:existence-d1=rd2}.
Since the function $f(U_x,U_y)$ is 
$\varepsilon\sqrt{\frac{32}{d_1}}$-Lipschitz, when we sample two independent unitaries $U_x,U_u \sim \mathbb{U}_{d_1'}$, by Theorem \ref{lem:corollary17}, we have\footnote{To be precise, we are applying Theorem \ref{lem:corollary17} to $-f$.}
\begin{align}
    \mathbb{P}\!\left[f(U_i,U_j) \leq \frac{\varepsilon}{1983000} \right] \le  \exp\left(-\frac{d_1'\varepsilon^2}{12\cdot 22100^2}\cdot \frac{d_1}{32\varepsilon^2}\right)  \le  \exp\!\left(- \frac{d_1^2}{768\cdot 22100^2}\right).\nonumber
\end{align}
Then, we independently sample $\exp(d_1^2/(1536\cdot 22100^2+1))=\exp(d_1^2/750197760001)$ Haar random unitaries in $\mathbb{U}_{d_1'}$ and the union bound shows that there exists a non-zero probability that for any pair $U_x, U_y$, we have $f(U_x,U_y)\geq \varepsilon/1983000$. 
Therefore, there exists a subset $\mathcal{M}\subseteq\mathbb{U}_{d_1'}$ with cardinality $|\mathcal{M}|\geq \exp(d_1^2/750197760001)$ such that for any pair $U_x,U_y\in\mathcal{M}$, we have $f(U_x,U_y)\geq \varepsilon/1983000$. 
Then, the set $\{V_{\varepsilon,U}| U\in\mathcal{M}\}$ is as desired.
\end{proof}

\subsubsection{Lipschitz continuity and expectation: proof of Lemma \ref{lem:ppts-Phi-eq}} \label{app:first-moment-and-lip-eq}

\paragraph{Proof that $f$ is Lipschitz.}
Recall that   $V_{\varepsilon,U}  = V_0+\varepsilon U\Delta U^\dag$. 
We have that by the triangle inequality:
\begin{align}
    &|f(U_x, U_y) - f(U_x', U_y')| \nonumber
    \\
    = &\frac{1}{d_1}\left|\Big\| \tr_{\mathcal{H}_{\mathrm{anc}}}({\kettbbra{V_{\eps,U_x}}{ V_{\eps,U_x}}} - \kettbbra{V_{\eps,U_y}}{V_{\eps,U_y}})\Big\|_1-\left\| \tr_{\mathcal{H}_{\mathrm{anc}}}({\kettbbra{V_{\eps,U_x'}}{ V_{\eps,U_x'}}} - \kettbbra{V_{\eps,U_y'}}{V_{\eps,U_y'}})\right\|_1\right| \nonumber\\
    \leq & \frac{1}{d_1}\left\| \tr_{\mathcal{H}_{\mathrm{anc}}}({\kettbbra{V_{\eps,U_x}}{ V_{\eps,U_x}}} - \kettbbra{V_{\eps,U_x'}}{V_{\eps,U_x'}})-\tr_{\mathcal{H}_{\mathrm{anc}}}({\kettbbra{V_{\eps,U_y}}{ V_{\eps,U_y}}} - \kettbbra{V_{\eps,U_y'}}{V_{\eps,U_y'}})\right\|_1 \nonumber \\
    \le & \frac{1}{d_1}\Big\| \tr_{\mathcal{H}_{\mathrm{anc}}}({\kettbbra{V_{\eps,U_x}}{ V_{\eps,U_x}}} - \kettbbra{V_{\eps,U_x'}}{V_{\eps,U_x'}}) \Big\|_1 + \frac{1}{d_1}\left\| \tr_{\mathcal{H}_{\mathrm{anc}}}({\kettbbra{V_{\eps,U_y}}{ V_{\eps,U_y}}} - \kettbbra{V_{\eps,U_y'}}{V_{\eps,U_y'}})\right\|_1. \label{eq-3212339}
\end{align}
Note that
\begin{align}
    &\Big\| \tr_{\mathcal{H}_{\mathrm{anc}}}\Big(\big(\kett{V_{\eps,U_x}}-\kett{V_{\eps,U_x'}}\big)\bbra{V_{\eps,U_x}}\Big) \Big\|_{1} \leq \eps \left\|\left(\kett{ U_x \Delta U_x^\dagger} - \kett{U_x' \Delta U_x'^\dagger}\right) \bbra{V_{\varepsilon,U_x}} \right\|_{1} \nonumber \\
    =&\varepsilon\sqrt{d_1} \left\|\kett{ U_x \Delta U_x^\dagger} - \kett{U_x' \Delta U_x'^\dagger}\right\| 
   \le \eps \sqrt{d_1}\left(\left\|\kett{( U_x-U_x') \Delta U_x^\dagger }  \right\| + \left\|\kett{U_x' \Delta( U_x-U_x')^\dagger}  \right\| \right) \nonumber \\
   \leq &\varepsilon \sqrt{d_1} \sqrt{\tr\!\left((U_x-U'_x)^\dag (U_x-U_x')\right)}+\varepsilon \sqrt{d_1} \sqrt{\tr\!\left((U_x-U'_x)^\dag (U_x-U_x')\right)} \label{eq-3212318}  \\
= & 2\varepsilon \sqrt{d_1} \|{U}_x-{U}_x'\|_F,\nonumber
\end{align}
where \cref{eq-3212318} is by using $\Delta^\dag \Delta\leq I$.
Hence by the triangle inequality 
\begin{align}
    &\Big\| \tr_{\mathcal{H}_{\mathrm{anc}}}({\kettbbra{V_{\eps,U_x}}{ V_{\eps,U_x}}} - \kettbbra{V_{\eps,U_x'}}{V_{\eps,U_x'}}) \Big\|_1 \nonumber \\
\le & \Big\| \tr_{\mathcal{H}_{\mathrm{anc}}}\Big(\big(\kett{V_{\eps,U_x}}-\kett{V_{\eps,U_x'}}\big)\bbra{V_{\eps,U_x}}\Big) \Big\|_{1} +  \Big\| \tr_{\mathcal{H}_{\mathrm{anc}}}\Big(\kett{V_{\eps,U_x'}}\big(\bbra{V_{\eps,U_x}}-\bbra{V_{\varepsilon,U'_x}}\big)\Big) \Big\|_{1} \nonumber \\
\le & 4\varepsilon \sqrt{d_1} \|{U}_x-{U}_x'\|_F.\label{eq-3212340}
\end{align}
Therefore, combining \cref{eq-3212339} and \cref{eq-3212340}, we have
\begin{align}
    |f(U_x, U_y) - f(U_x', U_y')|& \leq 4\varepsilon\sqrt{\frac{1}{d_1}}\|U_x - U_x'\|_F + 4\varepsilon\sqrt{\frac{1}{d_1}}\|U_y-U_y'\|_F \nonumber \\
    &\le  4 \varepsilon\sqrt{\frac{2}{d_1}} \sqrt{\|{U}_x-{U}_x'\|_F^2+ \|{U}_y-{U}_y'\|_F^2}. \nonumber
\end{align}

\paragraph{Proof of the lower bound on the the expectation of $f$.}

Recall the definition of the $d_1\times d_1$ matrix $\Delta$ in \cref{eq-3212346}.
This construction ensures the following properties, which will be used later:
\begin{itemize}
    \item $\tr(\Delta)=0$,
    \item $\|\Delta\|_{\rm op}=1$,
    \item $\tr(\Delta \Delta^\dagger \Delta\Delta^\dagger)=\tr(\Delta \Delta^\dagger)= d_1'$. 
\end{itemize}
Note that, for unitaries $U_x,U_y\in\mathbb{U}_{d_1'}$, $f(U_x,U_y)$ equals
\begin{align}
    \frac{1}{d_1}\left\|C_{\mathcal{E}_{U_x}} -C_{\mathcal{E}_{U_y}}\right\|_{1}&\geq \frac{\eps}{d_1} \left\|
    \tr_{\mathcal{H}_{\mathrm{anc}}}\!\left(\kettbbra{U_x\Delta U_x^\dagger}{V_0}  + \kettbbra{V_0}{U_x\Delta U_x^\dagger} -\kettbbra{U_y\Delta U_y^\dagger}{V_0}  - \kettbbra{V_0}{U_y\Delta U_y^\dagger}\right)\right\|_1 \nonumber \\
    &\qquad\qquad -\frac{\eps^2}{d_1}\left\|
    \tr_{\mathcal{H}_{\mathrm{anc}}}\!\left(\kettbbra{U_x\Delta U_x^\dagger}{U_x\Delta U_x^\dagger} -\kettbbra{U_y\Delta U_y^\dagger}{U_y\Delta U_y^\dagger}\right)\right\|_1 \nonumber \\
    &\geq  \frac{\eps}{d_1} \| D(U_x,U_y)\|_1 -2\eps^2, \label{eq-3230212}
\end{align}
where we define
\begin{equation*}
D(U_x,U_y)\coloneqq A_x + A_x^\dagger -A_y -A_y^\dagger,\quad A_x \coloneqq \tr_{\mathcal{H}_{\mathrm{anc}}}\big(\kett{U_x\Delta U_x^\dagger}\bbra{V_0}\big),\quad A_y \coloneqq \tr_{\mathcal{H}_{\mathrm{anc}}}\big(\kett{U_y\Delta U_y^\dagger}\bbra{V_0}\big),
\end{equation*} and used that $ \left\|
    \tr_{\mathcal{H}_{\mathrm{anc}}}\!\left(\kettbbra{U_x\Delta U_x^\dagger}{U_x\Delta U_x^\dagger}\right)\right\|_1 =\tr  \left(
    \kettbbra{U_x\Delta U_x^\dagger}{U_x\Delta U_x^\dagger}\right) = \tr\!\left(\Delta \Delta^\dagger\right)\le d_1$.

Now, we are interested in proving the bound
\begin{align}\label{eq:second-D}
    \E\!\left[\tr\!\left(|D(U_x,U_y)|^2\right)\right] \geq  \frac{d_1^2}{20r}, 
\end{align}
where the expectation is over the Haar random $U_x,U_y\sim \mathbb{U}_{d_1'}$.
    We have that  
    \begin{align}
    \E\!\left[\tr\!\left(|D(U_x,U_y)|^2\right)\right] = 2\E\!\left[\tr(A_x^2)+ 2\tr(A_x^\dagger A_x) + \tr(A_x^{\dagger 2}) \right] -  2\Re \E\!\left[\tr(A_xA_y)+ \tr(A_x^\dagger A_y)\right].\label{eq-3220326}
    \end{align}
    Since $\tr(\Delta) = 0$, we know that $\E\!\left[U_x\Delta U_x^\dagger\right] = \tr(\Delta) \frac{I}{d_1} = 0$ by Schur's lemma. This means 
    \begin{equation}\label{eq-3220324}
         \E\!\left[\tr(A_xA_y)\right] =\E\!\left[\tr(A_x^\dagger A_y)\right] = 0.
    \end{equation}
   Moreover,  we have that 
    \begin{align}
    \E\!\left[\tr(A_xA_x^\dagger)\right] &= \E\!\left[\tr\!\left(\tr_{\mathcal{H}_{\mathrm{anc}}}\!\left(\kettbbra{U_x\Delta U_x^\dagger}{V_0}\right)
    \tr_{\mathcal{H}_{\mathrm{anc}}}\!\left(\kettbbra{V_0}{U_x\Delta U_x^\dagger}\right)\right)\right] \nonumber \\
    &=\sum_{i,j=1}^r  \E\!\left[\tr\!\left(\kettbbra{U_x\Delta U_x^\dagger}{V_0} \cdot \ketbra{i}{j}_{\mathrm{anc}}\cdot  \kettbbra{V_0}{U_x\Delta U_x^\dagger}\cdot \ketbra{j}{i}_{\mathrm{anc}}\right)\right]    \nonumber \\
    &=\sum_{i,j=1}^r  \bbra{V_0} \cdot  \ketbra{i}{j}_{\mathrm{anc}} \cdot \kett{V_0} \cdot \E\!\left[\tr\!\left(\kettbbra{U_x\Delta U_x^\dagger}{ U_x\Delta U_x^\dagger}\cdot \ketbra{j}{i}_{\mathrm{anc}} \right)\right] \nonumber \\
    &= \sum_{i=1}^r  \bbra{V_0} \cdot  \ketbra{i}{i}_{\mathrm{anc}} \cdot \kett{V_0} \cdot \E\!\left[\tr\!\left(\kettbbra{U_x\Delta U_x^\dagger}{U_x\Delta U_x^\dagger}\cdot \ketbra{i}{i}_{\mathrm{anc}} \right)\right] \nonumber\\
    &\geq(1-\eps^2)\left\lfloor\frac{d_1}{r}\right\rfloor \cdot \sum_{i=1}^r \E\!\left[\tr\!\left(
    \kettbbra{U_x\Delta U_x^\dagger}{U_x\Delta^\dagger U_x^\dagger}\cdot \ketbra{i}{i}_{\mathrm{anc}}\right)\right] \label{eq-3261544} \\
    &=(1-\eps^2)\left\lfloor\frac{d_1}{r}\right\rfloor\cdot \E\!\left[\tr\!\left(\kettbbra{U_x\Delta U_x^\dagger}{U_x\Delta U_x^\dagger} \right)\right] \nonumber \\
    &= (1-\eps^2)\left\lfloor\frac{d_1}{r}\right\rfloor\cdot \E\!\left[\tr\!\left(\Delta^\dag \Delta  \right)\right] \nonumber \\
    &=(1-\eps^2)\left\lfloor\frac{d_1}{r}\right\rfloor d_1',\label{eq-3220323}
    \end{align}
where in \cref{eq-3261544} the inequality we used that $\bbra{V_0} \cdot  \ketbra{i}{i}_{\mathrm{anc}} \cdot \kett{V_0} \ge (1-\eps^2) \sum_{j=1}^{d_1}\mathbbm{1}_{g_2(j)=i} \geq (1-\eps^2)\lfloor\frac{d_1}{r}\rfloor$. 
We also have  
\begin{align}
    &\E\!\left[\tr(A_x^2)\right]\nonumber \\ 
     =& \E\!\left[\tr\!\left(\tr_{\mathcal{H}_\mathrm{anc}}\!\left(\kettbbra{U_x\Delta U_x^\dagger}{V_0}\right) \tr_{\mathcal{H}_\mathrm{anc}}\!\left(\kettbbra{U_x\Delta U_x^\dagger}{V_0}\right)\right)\right] \nonumber \\
    =& \sum_{i,j=1}^r  \E\!\left[\tr\!\left(\kettbbra{U_x\Delta U_x^\dagger}{V_0}  \cdot \ketbra{i}{j}_{{\mathrm{anc}}} \cdot \kettbbra{U_x\Delta U_x^\dagger}{V_0}  \cdot \ketbra{j}{i}_{{\mathrm{anc}}} \right)\right] \nonumber \\
    =& \sum_{i,j=1}^r \E\!\left[\tr\!\left(V_0^\dagger \cdot \ketbra{i}{j}_{{\mathrm{anc}}}\cdot   U_x\Delta U_x^\dagger \right)\tr\!\left(V_0^\dagger\cdot \ketbra{j}{i}_{\mathrm{anc}}  \cdot U_x\Delta U_x^\dagger\right)\right] \nonumber \\
    =& \sum_{i,j=1}^r \sum_{a,b=1}^{d_1}\E\!\left[\tr\left(V_0^\dagger\cdot\ketbra{i}{j}_{\mathrm{anc}} \cdot U_x\Delta U_x^\dagger\cdot \ketbra{a}{b}\cdot V_0^\dagger \cdot \ketbra{j}{i}_{\mathrm{anc}} \cdot U_x\Delta U_x^\dagger \cdot \ketbra{b}{a}\right)\right] \nonumber \\
    =&\sum_{i,j=1}^r \sum_{a,b=1}^{d_1}\E\!\left[\tr\left( U_x\Delta U_x^\dagger P\cdot \ketbra{a}{b} \cdot V_0^\dagger  \cdot \ketbra{j}{i}_{\mathrm{anc}}\cdot P  U_x\Delta U_x^\dagger P \cdot \ketbra{b}{a} \cdot V_0^\dagger\cdot\ketbra{i}{j}_{\mathrm{anc}}\cdot P\right)\right] \label{eq-3221728}\\
    =&  \sum_{i,j=1}^r \sum_{a,b=1}^{d_1'} \bigg[\frac{1}{d_1'^2-1}\left( \tr(\Delta^2) \tr(P\ketbra{a}{b}V_0^\dagger\ketbra{j}{i}_\mathrm{anc} P) \tr(P\ketbra{b}{a}V_0^\dagger\ketbra{i}{j}_\mathrm{anc} P) \right) \nonumber \\
    &\qquad\qquad\qquad\qquad - \frac{1}{d_1'(d_1'^2-1)}\left( \tr(\Delta^2) \tr\big(P\ketbra{a}{b}V_0^\dagger\ketbra{j}{i}_{\mathrm{anc}} P \ketbra{b}{a}V_0^\dagger \ketbra{i}{j}_\mathrm{anc}P\big) \right) \bigg]\label{eq-3221729} \\
    =& \sum_{i,j=1}^r \sum_{a,b=1}^{d_1'}\left[ -\frac{d_1'(1-\varepsilon^2)}{d_1'^2-1} \big|\tr( \ketbra{a}{b}\cdot \ketbra{j}{i}_\mathrm{anc})\big|^2+\frac{1-\varepsilon^2}{d_1'^2-1}\tr\big(\ketbra{a}{b}\cdot \ketbra{j}{i}_\mathrm{anc}\cdot \ketbra{b}{a}\cdot \ketbra{i}{j}_\mathrm{anc}\big)\right] \nonumber \\
    \geq &-\frac{d_1'(1-\varepsilon^2)}{d_1'^2-1} \cdot \Big(r^2\cdot \left\lceil\frac{d_1'}{r}\right\rceil\Big)+\frac{1-\varepsilon^2}{d_1'^2-1}\cdot\Big(r\cdot \left\lfloor\frac{d_1'}{r}\right\rfloor\Big)\label{eq-3222206} \\
    \ge & - \frac{(1-\varepsilon^2)r^2d_1'}{d_1'^2-1}\left\lceil\frac{d_1'}{r}\right\rceil, \label{eq-3222222}
    \end{align}
where in \cref{eq-3221728} we introduce the projector  $P\coloneqq \sum_{i=1}^{d_1'}\ketbra{i}{i}$ and note that $P U_x\Delta U_x^\dag P= U_x\Delta U_x^\dag$, and \cref{eq-3221729} is due to \cref{coro-3221651} and the fact that $\tr(\Delta)=0$, \cref{eq-3222206} is because $\sum_{a,b=1}^{d_1'}|\tr(\ketbra{a}{b}\cdot\ketbra{j}{i}_\mathrm{anc})|^2$ equals either $\lfloor \frac{d_1'}{r}\rfloor$ or $\lceil \frac{d_1'}{r}\rceil$ depending on $i,j$ and thus is no more than $\lceil\frac{d_1'}{r}\rceil$;
and $\sum_{a,b=1}^{d_1'}\tr(\ketbra{a}{b}\ketbra{j}{i}_\mathrm{anc}\ketbra{b}{a}\ketbra{i}{j}_\mathrm{anc})$ is non-zero only when $i=j$ and equals either $\lfloor \frac{d_1'}{r}\rfloor$ or $\lceil \frac{d_1'}{r}\rceil$ depending on $i$ (which equals $j$) and thus is no less than $\lfloor\frac{d_1'}{r}\rfloor$, and \cref{eq-3222222} is obtained by simply discarding the positive term.
Therefore, combining \cref{eq-3220326} with \cref{eq-3220324}, \cref{eq-3220323} and \cref{eq-3222222}, we have
\begin{align}
\E\!\left[\tr\!\left(|D(U_x,U_y)|^2\right)\right] &\geq 4(1-\varepsilon^2)\left(d_1'\left\lfloor \frac{d_1}{r}\right\rfloor-\frac{r^2d_1'}{d_1'^2-1}\left\lceil\frac{d_1'}{r}\right\rceil\right)= 4(1-\varepsilon^2)d_1'\left\lfloor \frac{d_1}{r}\right\rfloor \left(1-\frac{r^2}{d_1'^2-1}\cdot\frac{\lceil d_1'/r\rceil}{\lfloor d_1/r\rfloor}\right) \nonumber\\
&\geq4(1-\varepsilon^2)d_1'\left\lfloor \frac{d_1}{r}\right\rfloor \cdot \frac{1}{10} \label{eq-3262003}\\
&\geq (1-\varepsilon^2)\frac{d_1'd_1}{5r}\geq \frac{d_1'd_1}{10r}\geq \frac{d_1^2}{20r}, \nonumber
\end{align}
where in \cref{eq-3262003} we used $\lceil d_1'/r \rceil \leq \lceil d_1/r\rceil \leq 2 \lfloor d_1/r\rfloor$ and
\[2r^2\leq \frac{8}{9}d_1^2\leq \frac{9}{10}(d_1'^2-1),\]
since $2r\leq rd_2\leq \frac{4}{3}\cdot d_1$ and $d_1\geq 162$ by assumption. Therefore, \cref{eq:second-D} is proved.

Finally, let us prove that
    \begin{align}\label{eq:fourth-D}
    \E\!\left[\tr(|D(U_x,U_y)|^4) \right] \le 12288 \cdot \frac{d_1^4}{r^3}.
    \end{align}
    Recall that $D(U_x,U_y) = A_x + A_x^\dagger -A_y -A_y^\dagger$ so by the triangle inequality and Hölder inequality:
    \begin{align}
     \E\!\left[\tr(|D(U_x,U_y)|^4)\right] &= \E\!\left[\|A_x + A_x^\dagger -A_y -A_y^\dagger\|_4^4\right]
     \le \E\!\left[ \left(\|A_x\|_4 + \|A_x^\dagger\|_4 + \|A_y\|_4 +\|A_y^\dagger\|_4\right)^4\right] \nonumber \\
     &\le 4^3 \cdot \E\!\left[ \|A_x\|_4^4 + \|A_x^\dagger\|_4^4 + \|A_y\|_4^4 +\|A_y^\dagger\|_4^4\right]\nonumber  \\
     &= 4^4\cdot \ex{ \|A_x\|_4^4},\label{eq-3230159}
    \end{align}
where $\|\cdot\|_4$ denotes the Schatten $4$-norm.
   Moreover,
\begin{align}
&\E\!\left[\|A_x\|_4^4\right] = \E\!\left[\tr(A_x A_x^\dagger A_x A_x^\dagger )\right] \nonumber \\
= & \E\!\left[\tr\!\left(\tr_{\mathcal{H}_\mathrm{anc}}\!\left(\kettbbra{U_x\Delta U_x^\dagger}{V_0}\right) \tr_{\mathcal{H}_\mathrm{anc}}\!\left(\kettbbra{V_0}{U_x \Delta  U_x^\dagger} \right) \tr_{\mathcal{H}_\mathrm{anc}}\!\left(\kettbbra{U_x\Delta U_x^\dagger}{V_0}\right) \tr_{\mathcal{H}_\mathrm{anc}}\!\left(\kettbbra{V_0}{U_x \Delta  U_x^\dagger} \right)\right)\right] \nonumber \\
=&\sum_{i,j,k,l=1}^r  \mathbb{E}\bigg[\tr\!\Big(\kettbbra{U_x\Delta U_x^\dagger}{V_0} \cdot \ketbra{i}{j}_\mathrm{anc}\cdot \kettbbra{V_0}{U_x\Delta U_x^\dagger } \cdot \ketbra{j}{k}_\mathrm{anc} \nonumber \\[-1em]
&\qquad\qquad\qquad\qquad\qquad \cdot  \kettbbra{U_x\Delta U_x^\dagger}{V_0} \cdot \ketbra{k}{l}_\mathrm{anc} \cdot \kettbbra{ V_0}{U_x\Delta U_x^\dagger }\cdot \ketbra{l}{i}_\mathrm{anc}\Big)\bigg] \nonumber  \\
\le & \sum_{i,k=1}^r  \left\lceil\frac{d_1}{r}\right\rceil^2\cdot  \E\!\left[\tr\!\Big(\kettbbra{U_x\Delta U_x^\dagger}{U_x\Delta U_x^\dagger}  \cdot \ketbra{i}{k}_\mathrm{anc} \cdot \kettbbra{U_x\Delta U_x^\dagger}{U_x\Delta U_x^\dagger} \cdot  \ketbra{k}{i}_\mathrm{anc} \Big)\right] \nonumber \\
=&\sum_{i,k=1}^r  \left\lceil\frac{d_1}{r}\right\rceil^2\cdot  \E\!\left[\tr\!\left(U_x\Delta^\dagger U_x^\dagger \cdot \ketbra{i}{k}_\mathrm{anc}\cdot U_x\Delta U_x^\dagger \right)\tr\!\left(U_x\Delta^\dagger U_x^\dagger  \cdot \ketbra{k}{i}_\mathrm{anc} \cdot U_x\Delta U_x^\dagger\right)\right] \nonumber \\
=&\sum_{i,k=1}^r \left\lceil\frac{d_1}{r}\right\rceil^2\cdot  \sum_{a,b=1}^{rd_2} \E\!\left[\tr\!\left(U_x\Delta\Delta^\dagger U_x^\dagger  \cdot \ketbra{i}{k}_\mathrm{anc}\cdot \ketbra{a}{b} \cdot U_x\Delta\Delta^\dagger U_x^\dagger  \cdot \ketbra{k}{i}_\mathrm{anc}\cdot      \ketbra{b}{a}\right)\right] \nonumber \\
=&\sum_{i,k=1}^r \left\lceil\frac{d_1}{r}\right\rceil^2\cdot  \sum_{a,b=1}^{d_1'} \E\!\left[\tr\!\left(U_x\Delta\Delta^\dagger U_x^\dagger P  \cdot \ketbra{i}{k}_\mathrm{anc}\cdot \ketbra{a}{b} \cdot P U_x\Delta\Delta^\dagger U_x^\dagger P \cdot \ketbra{k}{i}_\mathrm{anc}\cdot      \ketbra{b}{a} \cdot P\right)\right] \label{eq-3230111}\\
=&\!\sum_{i,k=1}^r \left\lceil\frac{d_1}{r}\right\rceil^2\cdot \sum_{a,b=1}^{d_1'}\! \bigg( \frac{1}{d_1'^2\!-\!1}\!\Big[ \tr(\Delta\Delta^\dagger)^2 \!\cdot\! \bra{a}\ketbra{i}{k}_\mathrm{anc}\ket{a} \!\cdot\! \bra{b}\ketbra{k}{i}_\mathrm{anc}\ket{b} + \tr(\Delta\Delta^\dagger\Delta\Delta^\dagger) \!\cdot\! |\bra{b}\ketbra{i}{k}_\mathrm{anc}\ket{a}|^2\Big]  \nonumber \\[-0.5em]
&\,\, -\frac{1}{d_1'(d_1'^2-1)}\Big[ \tr(\Delta\Delta^\dagger)^2 \!\cdot\! |\bra{b}\ketbra{i}{k}_\mathrm{anc}\ket{a}|^2 + \tr(\Delta\Delta^\dagger\Delta\Delta^\dagger)\! \cdot\! \bra{a}\ketbra{i}{k}_\mathrm{anc}\ket{a}\!\cdot\! \bra{b}\ketbra{k}{i}_\mathrm{anc}\ket{b} \Big]\bigg) \label{eq-3230108} \\
\leq &\sum_{i,k=1}^r  \left\lceil\frac{d_1}{r}\right\rceil^2 \cdot \frac{1}{d_1'^2-1} \cdot \bigg(d_1'^2\cdot\mathbbm{1}_{i=k}\cdot  \left\lceil \frac{d_1'}{r}\right\rceil^2 + d_1'\cdot  \left\lceil \frac{d_1'}{r}\right\rceil\bigg) \label{eq-3230113}\\
=& \left\lceil\frac{d_1}{r}\right\rceil^2\cdot \frac{1}{d_1'^2-1}\cdot \left(d_1'^2 r \left\lceil \frac{d_1'}{r}\right\rceil^2 + r^2 d_1'\left\lceil \frac{d_1'}{r}\right\rceil \right)\leq \frac{48d_1^4}{r^3},\label{eq-3230139} 
\end{align}
where in \cref{eq-3230111} we recall the projector $P=\sum_{i=1}^{d'}\ketbra{i}{i}$ and note that $PU_x\Delta\Delta^\dag U_x^\dag P=U_x\Delta\Delta^\dag U_x^\dag$, \cref{eq-3230108} is due to \cref{coro-3221651}, \cref{eq-3230113} is obtained by directly discarding the negative terms, and in \cref{eq-3230139} we used that $d_1'\leq d_1$, $\lceil d_1/r\rceil\leq 2d_1/r$, $d_1'^2-1\geq d_1^2/2$.
Therefore, combining \cref{eq-3230159} with \cref{eq-3230139}, \cref{eq:fourth-D} is proved.

By H\"{o}lder's inequality, we have
\begin{align}
    \E\!\left[\tr\!\big(|D(U_x,U_y)|\big)\right]^{2/3} \E\!\left[\tr\big(|D(U_x,U_y)|^4\big)\right]^{1/3} \ge \E\!\left[\tr\!\big(|D(U_x,U_y)|^2\big)\right].\nonumber
\end{align}
Then, using \cref{eq:second-D} and \cref{eq:fourth-D}, we know that
\[\E\!\left[\tr\!\big(|D(U_x,U_y)|\big)\right]\geq \sqrt{\frac{d_1^6}{8000\cdot r^3}\cdot \frac{r^3}{12288\cdot d_1^4}}\geq  \frac{d_1}{9915}.\]
Finally, using \cref{eq-3230212}, we conclude that
\begin{align}
   \frac{1}{d_1}\E\!\left[\left\|C_{\mathcal{E}_{U_x}} -C_{\mathcal{E}_{U_y}}\right\|_{1}\right]&\ge \frac{\eps}{d_1} \E\!\big[\| D(U_x,U_y)\|_1\big] -2\eps^2 \ge  \frac{\varepsilon}{9915}-2\varepsilon^2\geq \frac{\varepsilon}{19830}, \nonumber
\end{align}
where we used $\varepsilon\leq 1/39660$.

\subsection{Type II instance: $d_1< rd_2$ with $rd_2<d_1+r$}\label{sec-3230332}

\subsubsection{Construction}\label{sec-3232114}
In this subsection, we will use the definition in \cref{sec-1110250}, where the parameter $d, D$ in \cref{sec-1110250} correspond to $d_1$ and $rd_2$ here.

\begin{figure}[ht]
    \centering
    \includegraphics[width=0.7\linewidth]{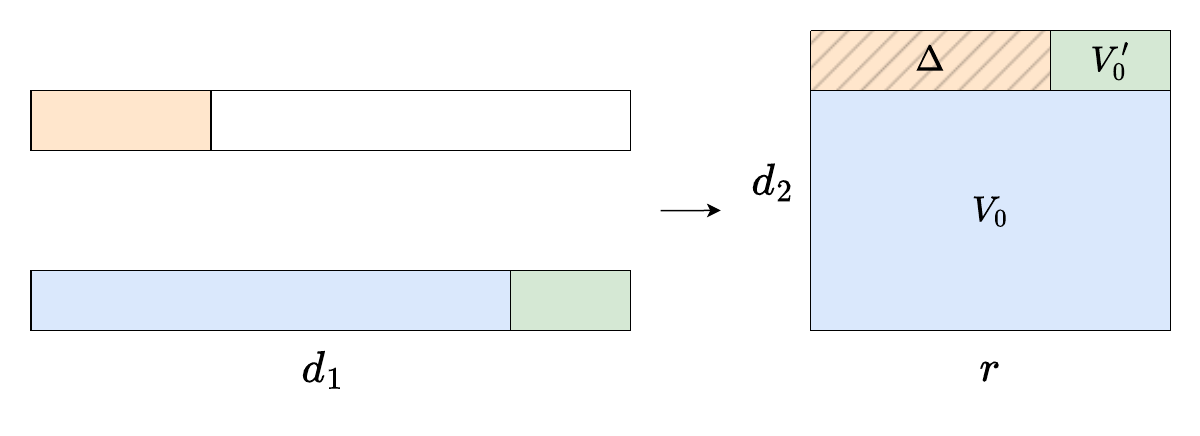}
    \caption{Illustration of our construction. We define linear operators $V_0, V_0'$ and $\Delta$ from the $d_1$-dimensional space $\mathcal{H}_\mathrm{A}$ to the $d_2r$-dimensional space $\mathcal{H}_\mathrm{B}\otimes\mathcal{H}_\mathrm{anc}$. The Haar randomness is applied on the hatched area.}
    \label{fig-3290202}
\end{figure}

Suppose $d_2>1$ and $rd_2< d_1+r$. 
Let $\sr \coloneqq \min\{(rd_2-d_1)/d_1,1\}>0$ and $\eta\coloneqq d_1 -r\lfloor \tfrac{d_1}{r}\rfloor$.
One can see that:
\begin{itemize}
    \item $d_1+r>rd_2\geq 2r$ and thus $d_1>r$, which means $rd_2<d_1+r<2d_1$. Thus $\sr = (rd_2-d_1)/d_1 < 1$.
    \item $d_2-1< d_1/r< d_2$, thus $\lfloor \tfrac{d_1}{r}\rfloor=d_2-1\geq 1$ and $1\leq \eta\leq r-1$. 
    \item $\sr d_1=rd_2-d_1\leq r$.
\end{itemize}
Let $\varepsilon\in(0,1/2)$ and $\Sigma_\varepsilon\coloneqq \sqrt{1-\varepsilon^2}\sum_{i=1}^{r-\eta}\ketbra{i}{i}_\mathrm{A}+\sum_{i=r-\eta+1}^{r(d_2-1)}\ketbra{i}{i}_\mathrm{A}$ be a diagonal matrix.
Let $g:[d_1]\rightarrow [d_2]\times [r]$ be the function $g(i)=(\lfloor \frac{i-1}{r}\rfloor+1,i-r\lfloor\frac{i-1}{r}\rfloor)$, i.e., $g$ maps the integers in $[d_1]$ to $[d_2]\times [r]$ in row-major order. Then, we define the linear operator $V_0:\mathcal{H}_\mathrm{A}[1:r(d_2-1)]\rightarrow\mathcal{H}_\mathrm{B}[1:d_2-1]\otimes \mathcal{H}_\mathrm{anc}$ as 
\begin{equation}\label{eq-3180233}
V_0\coloneqq \sum_{i=1}^{r(d_2-1)} \Big(\ket{g(i)_1}_\mathrm{B}\otimes \ket{g(i)_2}_\mathrm{anc}\bra{i}_\mathrm{A}\Big)\cdot \Sigma_\varepsilon.
\end{equation}
One can easily check that $V_0$ can be written as
\begin{equation}\label{eq-1131347}
V_0= \sum_{i=1}^r \ket{i}_\mathrm{anc}\otimes K_i,
\end{equation}
for $K_i:\mathcal{H}_{\mathrm{A}}[1:r(d_2-1)]\rightarrow\mathcal{H}_\mathrm{B}[1:d_2-1]$ satisfying
\begin{equation}\label{eq-3240522}
\sum_{i=1}^r K_i^\dag K_i=\Sigma_\varepsilon^2 \quad \textup{and}\quad \frac{d_2-1}{2}\cdot\mathbbm{1}_{i=j} \leq \left|\tr\!\left(K_i^\dag K_j\right)\right|\leq   (d_2-1)\cdot\mathbbm{1}_{i=j}.
\end{equation}
Define isometry $V_0':\mathcal{H}_\mathrm{A}[r(d_2-1)+1:d_1]\rightarrow\mathcal{H}_{\mathrm{B}}[d_2]\otimes \mathcal{H}_{\mathrm{anc}}[r-\eta+1:r]$ as
\[V_0'\coloneqq \ket{d_2}_\mathrm{B}\otimes \sum_{i=1}^{\eta}\ket{i+r-\eta}_{\mathrm{anc}}\bra{i+r(d_2-1)}_\mathrm{A}.\]
Define isometry $\Delta:\mathcal{H}_{\mathrm{A}}[1:r-\eta]\rightarrow \mathcal{H}_{\mathrm{B}}[d_2]\otimes \mathcal{H}_\mathrm{anc}[1:r-\eta]$ as
\[\Delta\coloneqq \ket{d_2}_\mathrm{B}\otimes \sum_{i=1}^{r-\eta} \ket{i}_\mathrm{anc}\bra{i}_\mathrm{A}\]
Then, for $U\in\mathbb{U}_{r-\eta}$, we define the isometry $V_{\varepsilon,U}:\mathcal{H}_\mathrm{A}\rightarrow\mathcal{H}_\mathrm{B}\otimes\mathcal{H}_{\mathrm{anc}}$ as
\begin{equation}\label{eq-1140331}
V_{\varepsilon,U}\coloneqq V_0+V_0'+\varepsilon U\Delta,
\end{equation}
where $U$ acts on $\mathcal{H}_{\mathrm{B}}[d_2]\otimes\mathcal{H}_\mathrm{anc}[1:r-\eta]$.
We can verify that $V_{\varepsilon,U}$ is indeed an isometry from \cref{eq-3180233} and the definition of $U\Delta$.
Furthermore, we note that the image of $U\Delta$ is orthogonal to the image of $V_0+V_0'$. 
For clarity, we also illustrate our construction in \cref{fig-3290202}.

\subsubsection{Existence for Choi trace norm}
Then, we prove that there exists a large set of isometries $V_{\varepsilon,U}$ with good separation property.
\begin{theorem}\label{thm-1140332}
Recall that $\sr=(rd_2-d_1)/d_1$ in this regime.
Suppose $\varepsilon\leq \sr^{3/2}/9200$.
There exists a finite subset $\mathcal{N}$ of $\{V_{\varepsilon,U}\,|\, U\in\mathbb{U}_{r-\eta}\}$ for $V_{\varepsilon,U}$ defined in \cref{eq-1140331} with cardinality $|\mathcal{N}|\geq \exp(\sr^2 (rd_2-d_1)^2/480001)$, such that for any $V_x\neq V_y\in\mathcal{N}$, if we set $\mathcal{E}_x=\tr_{\mathcal{H}_\mathrm{anc}}(V_x(\cdot)V_x^\dag)$  and $\mathcal{E}_y=\tr_{\mathcal{H}_\mathrm{anc}}(V_y(\cdot)V_y^\dag)$, then
\[\frac{1}{d_1}\|C_{\mathcal{E}_x}-C_{\mathcal{E}_y}\|_1\geq \frac{\sr^{3/2}}{4600}\varepsilon.\]
\end{theorem}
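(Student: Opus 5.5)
The plan mirrors the proof of \cref{thm:existence-d1=rd2}: a Lipschitz estimate, a first-moment lower bound, concentration on the unitary group, and a union bound. Set $m\coloneqq r-\eta$ and note that $m=rd_2-d_1=\sr d_1$. For $U_x,U_y\in\mathbb{U}_{m}$ define
\[
f(U_x,U_y)\coloneqq \frac{1}{d_1}\big\|C_{\mathcal{E}_{U_x}}-C_{\mathcal{E}_{U_y}}\big\|_1,\qquad \mathcal{E}_U\coloneqq\tr_{\mathcal{H}_\mathrm{anc}}\big(V_{\varepsilon,U}(\cdot)V_{\varepsilon,U}^\dag\big).
\]
\textbf{Step 1 (Lipschitz).} Since $V_{\varepsilon,U}=V_0+V_0'+\varepsilon U\Delta$, the argument behind \cref{eq-3212340} applies almost verbatim. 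Using $\Delta^\dag\Delta\le I$ and $\|\kett{V_{\varepsilon,U}}\|=\sqrt{d_1}$, it should give that $f$ is $O(\varepsilon/\sqrt{d_1})$-Lipschitz with respect to $\|(U_x,U_y)\|_F$. This bound is in fact simpler than in the type I case, because $U$ appears only once rather than being conjugated.

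\textbf{Step 2 (first moment).} Expand $C_{\mathcal{E}_U}=\tr_\mathrm{anc}\kettbbra{V_{\varepsilon,U}}{V_{\varepsilon,U}}$. The term that is first order in $\varepsilon$ is $A_U+A_U^\dag$, where
\[
A_U=\tr_\mathrm{anc}\big(\kett{U\Delta}\bbra{V_0+V_0'}\big).
\]
The images of $U\Delta$ (inside $\mathcal{H}_\mathrm{B}[d_2]\otimes\mathcal{H}_\mathrm{anc}[1:m]$) and of $V_0$ (inside $\mathcal{H}_\mathrm{B}[1:d_2-1]$) lie in different $\mathcal{H}_\mathrm{B}$ blocks, so after the partial trace over the ancilla the cross term $\tr_\mathrm{anc}\kett{U\Delta}\bbra{V_0}$ survives as an off-diagonal block of the Choi matrix. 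It pairs the $\mathrm{anc}$-index of $U\Delta$ with the Kraus operators $K_i$ of $V_0$, which is where \cref{eq-3240522} enters. The cross term with $V_0'$ also survives, but $V_0'$ lives on disjoint ancilla indices $[m+1:r]$, and I will restrict (pinch) to a block where it is absent. The $O(\varepsilon^2)$ term is bounded trivially by $2\varepsilon^2$ in normalized trace norm, as in \cref{eq-3230212}. I then need
\[
\E\|A_{U_x}+A_{U_x}^\dag-A_{U_y}-A_{U_y}^\dag\|_1\gtrsim \sr^{3/2}d_1 .
\]
As before, I would obtain this from the Hölder interpolation $\E\|D\|_1\ge (\E\|D\|_2^2)^{3/2}/(\E\|D\|_4^4)^{1/2}$. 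Because $\E[U]=0$ on $\mathbb{U}_m$ (first moment of Haar), the cross terms between $x$ and $y$ vanish, leaving $\E\|D\|_2^2\approx 4\E\tr(A^\dag A)$. Via \cref{eq-3240522} I expect
\[
\tr(A^\dag A)=(1-\varepsilon^2)\sum_{i\le m}\tr(K_i^\dag K_i)\asymp m(d_2-1)\asymp m\,d_1/r .
\]
The fourth moment needs Weingarten-type second-moment formulas (\cref{coro-3221651}), giving roughly $\E\|A\|_4^4\lesssim m\,(d_1/r)^2$.

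\textbf{Step 3 (main obstacle).} Getting exactly the factor $\sr^{3/2}$, and not a worse power, is where I expect the difficulty. It comes from the scaling $m=\sr d_1\le r$ combined with $d_1/r\approx d_2$. A quick heuristic gives $\E\|D\|_1\gtrsim \sqrt{m\,d_1/r}\cdot\sqrt{m}$, and I need to relate this to $\sr^{3/2}d_1$. If the naive Hölder ratio loses powers of $r/d_1$, I will first try computing $\|D\|_1$ directly. $D$ is the off-diagonal block $X\oplus X^\dag$ with $X=\sum_{i\le m}(\text{row }i\text{ of }U)\otimes K_i$-type structure, so its trace norm equals $2\|X\|_1$, and $X$ has rank at most $m(d_2-1)$. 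A Frobenius-to-trace bound $\|X\|_1\ge \|X\|_2^2/\|X\|_{\mathrm{op}}$ together with a bound on the operator norm may suffice. The final constant $\sr^{3/2}/4600$ and the hypothesis $\varepsilon\le\sr^{3/2}/9200$ are exactly what is needed to absorb the $2\varepsilon^2$ loss, so this step should yield $\E f\ge \sr^{3/2}\varepsilon/2300$.

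\textbf{Step 4 (concentration and union bound).} Apply \cref{lem:corollary17} to $-f$ on $\mathbb{U}_m\times\mathbb{U}_m$ with Lipschitz constant $L=O(\varepsilon/\sqrt{d_1})$. This gives
\[
\Pr\big[f\le \tfrac12\E f\big]\le\exp\!\big(-c\,m(\E f)^2/L^2\big)=\exp\!\big(-c'\,\sr d_1\cdot\sr^3\varepsilon^2\cdot d_1/\varepsilon^2\big)=\exp\!\big(-c'\sr^2(rd_2-d_1)^2\big).
\]
Sampling $\exp(\sr^2(rd_2-d_1)^2/480001)$ independent Haar unitaries and taking a union bound over pairs then yields the set $\mathcal{N}$ with the stated separation.
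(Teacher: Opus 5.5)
Your proposal is correct and is essentially the paper's argument: a Lipschitz estimate, second and fourth moments combined via H\"older, \cref{lem:corollary17}, and a union bound. The only organizational difference is that the paper concentrates the $\varepsilon$-free first-order quantity $\frac{1}{d_1}\|\tr_{\mathcal{H}_\mathrm{anc}}(\kett{V_0+V_0'}(\bbra{U_x\Delta}-\bbra{U_y\Delta}))\|_1$ (\cref{lemma-1131330}) and then passes to the channels deterministically via \cref{fact-3191602}, whereas you concentrate the full Choi distance as in the type~I proof of \cref{thm:existence-d1=rd2}. Your Step~3 is not actually an obstacle: your own moment estimates give $\E\|D\|_1\gtrsim m\sqrt{d_1/r}\geq m=\sr d_1\geq \sr^{3/2}d_1$, since $r<d_1$ and $\sr<1$ here. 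Also, the cross term with $V_0'$ vanishes identically under $\tr_{\mathcal{H}_\mathrm{anc}}$, because it sits on ancilla indices disjoint from those of $U\Delta$, so no pinching is needed.
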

\begin{proof}
First, we need the following lemma.
\begin{lemma}\label{lemma-1131330}
    There exists a finite subset $\mathcal{M}\subseteq \mathbb{U}_{r-\eta}$ with cardinality $|\mathcal{M}|\geq \exp(\sr^2(rd_2-d_1)^2/480001)$ such that for any $U_x\neq U_y\in\mathcal{M}$, 
\begin{equation}\label{eq-1130328}
\frac{1}{d_1}\left\|\tr_{\mathcal{H}_\mathrm{anc}}\!\left((\kett{V_0}+\kett{V_0'})\Big(\bbra{U_x\Delta}-\bbra{U_y\Delta}\Big)\right)\right\|_1\geq \frac{\sr^{3/2}}{4600}.
\end{equation}
\end{lemma}
The proof of \cref{lemma-1131330} is deferred.

Now, we are able to prove \cref{thm-1140332}.
Let $\mathcal{M}$ be the set given in \cref{lemma-1131330}.
For any $U_x\neq U_y\in\mathcal{M}$, if we set $\mathcal{E}_x=\tr_{\mathcal{H}_\mathrm{anc}}(V_{\varepsilon,U_x}(\cdot)V_{\varepsilon,U_x}^\dag)$  and $\mathcal{E}_y=\tr_{\mathcal{H}_\mathrm{anc}}(V_{\varepsilon,U_y}(\cdot)V_{\varepsilon,U_y}^\dag)$, then we have
\begin{align}
&\|C_{\mathcal{E}_x}-C_{\mathcal{E}_y}\|_1 \nonumber \\
=&\Big\|\tr_{\mathcal{H}_{\mathrm{anc}}}\!\left(\kettbbra{V_{\varepsilon,U_x}}{V_{\varepsilon,U_x}}\right)-\tr_{\mathcal{H}_{\mathrm{anc}}}\!\left(\kettbbra{V_{\varepsilon,U_y}}{V_{\varepsilon,U_y}}\right)\Big\|_1 \nonumber\\
=&\bigg\|\varepsilon^2\Big(\tr_{\mathcal{H}_{\mathrm{anc}}}\!\left(\kettbbra{U_x\Delta}{U_x\Delta}\right)-\tr_{\mathcal{H}_{\mathrm{anc}}}\!\left(\kettbbra{U_y\Delta}{U_y\Delta}\right)\Big) \nonumber\\
&\quad\quad\quad +\varepsilon \tr_{\mathcal{H}_{\mathrm{anc}}}\!\Big(\kettbbra{V_0+V_0'}{(U_x-U_y)\Delta}\Big) + \varepsilon\tr_{\mathcal{H}_{\mathrm{anc}}}\!\Big(\kettbbra{(U_x-U_y)\Delta}{V_0+V_0'}\Big) \bigg\|_1 \nonumber \\
\geq & \varepsilon\bigg\|\tr_{\mathcal{H}_{\mathrm{anc}}}\!\Big(\kettbbra{V_0+V_0'}{(U_x-U_y)\Delta}\Big) + \tr_{\mathcal{H}_{\mathrm{anc}}}\!\Big(\kettbbra{(U_x-U_y)\Delta}{V_0+V_0'}\Big)\bigg\|_1 - 2 \varepsilon^2 d_1 \label{eq-1130259} \\
= & 2\varepsilon \bigg\|\tr_{\mathcal{H}_{\mathrm{anc}}}\!\Big(\kettbbra{V_0+V_0'}{(U_x-U_y)\Delta}\Big)\bigg\|_1-2\varepsilon^2 d_1 \label{eq-1130304}\\
\geq &\frac{2\sr^{3/2}}{4600}\varepsilon d_1 -2\varepsilon^2d_1 \label{eq-1130327} \\
\geq &\frac{\sr^{3/2}}{4600}\varepsilon d_1. \label{eq-1130422}
\end{align}
In \cref{eq-1130259} we used 
\[\|\tr_{\mathcal{H}_{\mathrm{anc}}}(\kettbbra{U\Delta}{U\Delta})\|_1 = \tr(\kettbbra{U\Delta}{U\Delta})=r-\eta\leq r< d_1. \]
In \cref{eq-1130304} we used the fact that $\tr_{\mathcal{H}_{\mathrm{anc}}}\!\Big(\kettbbra{V_0+V_0'}{(U_x-U_y)\Delta}\Big)$ is a linear operator with support in  $\mathcal{\mathcal{H}_\mathrm{A}}[1:r-\eta]\otimes\mathcal{H}_{\mathrm{B}}[d_2]$ and image in 
\[\big(\mathcal{H}_\mathrm{A}[1:r(d_2-1)]\otimes \mathcal{H}_{\mathrm{B}}[1:d_2-1]\big)\oplus \big(\mathcal{H}_\mathrm{A}[r(d_2-1)+1:d_1]\otimes \mathcal{H}_\mathrm{B}[d_2]\big),\] 
which is orthogonal to its support (noting that $r-\eta\leq r(d_2-1)$); and then we used \cref{fact-3191602}.
In \cref{eq-1130327} we used \cref{eq-1130328}.
In \cref{eq-1130422} we used that $\varepsilon\leq \sr^{3/2}/9200$.
Therefore, we can lower bound the Choi trace norm 
\[\frac{1}{d_1}\left\|C_{\mathcal{E}_{x}}-C_{\mathcal{E}_{y}}\right\|_1\geq \frac{\sr^{3/2}}{4600}\varepsilon.\]
Thus, the set $\mathcal{N}=\{V_{\varepsilon,U} \,|\, U\in\mathcal{M}\}$ is the desired set.
\end{proof}

Now, we give the proof of \cref{lemma-1131330}.
\begin{proof}[Proof of \cref{lemma-1131330}]
Let $\hat{V}_0\coloneqq V_0+V_0'$.
Then, we need the following lemma.
\begin{lemma}\label{lemma-1140127}
For $U_x,U_y\in\mathbb{U}_{r-\eta}$, let us define 
\[F(U_x,U_y)=\frac{1}{d_1}\tr_{\mathcal{H}_\mathrm{anc}}\!\left(\kett{\hat{V}_0}\Big(\bbra{U_x\Delta}-\bbra{U_y\Delta}\Big)\right),\]
then the function $f(U_x,U_y)=\|F(U_x,U_y)\|_1=\tr(|F(U_x,U_y)|)$ is $\sqrt{\frac{2}{d_1}}$-Lipschitz with respect to the $\ell_2$-sum of the $2$-norms (Frobenius norm). 
Furthermore, for independent random $U_x,U_y\sim \mathbb{U}_{r-\eta}$, we have $\E\!\left[\tr(|F(U_x,U_y)|^2)\right]\geq\frac{\sr}{2r}$, and $\E\!\left[\tr(|F(U_x,U_y)|^4)\right]\leq \frac{256}{r^3}$.
\end{lemma}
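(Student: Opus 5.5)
We will work directly with the explicit block structure of $\hat V_0=V_0+V_0'$ and $U\Delta$. Write $\hat V_0=\sum_{i=1}^r\ket{i}_\mathrm{anc}\otimes \hat K_i$, where $\hat K_i=K_i$ plus the contribution of $V_0'$ (present only for $i>r-\eta$). Similarly, $U\Delta=\sum_{i=1}^{r-\eta}\ket{i}_\mathrm{anc}\otimes \ket{d_2}_\mathrm{B}\bra{w_i(U)}_\mathrm{A}$, where $w_i(U)\in\mathcal{H}_\mathrm{A}[1:r-\eta]$ is the (conjugated) $i$-th row of $U$. Using $\tr_{\mathcal{H}_\mathrm{anc}}(\kett{X}\bbra{Y})=\sum_i\kett{X_i}\bbra{Y_i}$ for $X=\sum_i\ket{i}\otimes X_i$, we get
\[
F(U_x,U_y)=\frac{1}{d_1}\sum_{i=1}^{r-\eta}\kett{K_i}\bbra{b_i},\qquad \kett{b_i}\coloneqq \ket{d_2}_\mathrm{B}\otimes\big(\ket{w_i(U_x)}-\ket{w_i(U_y)}\big)^{*},
\]
since only indices $i\le r-\eta$ survive and there $\hat K_i=K_i$. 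By \cref{eq-3240522} the vectors $\kett{K_i}$ are pairwise orthogonal, with $\frac{d_2-1}{2}(1-\varepsilon^2)\le\|K_i\|_F^2\le d_2-1$ (the factor $\sqrt{1-\varepsilon^2}$ from $\Sigma_\varepsilon$ only matters for the lower bound). Hence
\[
F^\dag F=\frac{1}{d_1^2}\sum_{i=1}^{r-\eta} \|K_i\|_F^2\,\kett{b_i}\bbra{b_i}.
\]
This formula drives both moment estimates.

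\textbf{Lipschitz bound.} First, $|f(U_x,U_y)-f(U_x',U_y')|\le \|F(U_x,U_y)-F(U_x',U_y')\|_1$. By linearity in the second slot this is at most $\frac1{d_1}\|\tr_{\mathcal{H}_\mathrm{anc}}(\kett{\hat V_0}\bbra{(U_x-U_x')\Delta})\|_1$ plus the analogous term in $y$. Next we use that the partial trace is trace-norm contractive, so each term is at most $\frac1{d_1}\|\hat V_0\|_F\|(U_x-U_x')\Delta\|_F\le \frac{1}{d_1}\sqrt{d_1}\,\|U_x-U_x'\|_F$, using $\|\hat V_0\|_F^2\le d_1$ and $\|\Delta\|_{\rm op}\le1$. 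Cauchy--Schwarz over the two terms then gives the constant $\sqrt{2/d_1}$.

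\textbf{Second moment.} We have $\tr|F|^2=\frac{1}{d_1^2}\sum_i\|K_i\|_F^2\,\|w_i(U_x)-w_i(U_y)\|^2$. For independent Haar $U_x,U_y$ the rows are unit vectors with mean zero, so $\E\|w_i(U_x)-w_i(U_y)\|^2=2$. Therefore
\[
\E[\tr|F|^2]\ge \frac{(r-\eta)(d_2-1)(1-\varepsilon^2)}{d_1^2}.
\]
Now use $r-\eta=rd_2-d_1=\sr d_1$ and $d_2-1=\lfloor d_1/r\rfloor\ge d_1/(2r)$; together with $1-\varepsilon^2\ge 3/4$ (since $\varepsilon<1/2$) this gives at least $\frac{3\sr}{8r}$. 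That is comparable to, but slightly below, the stated $\frac{\sr}{2r}$. The constant $\tfrac12$ is recovered by using the exact value $\lfloor d_1/r\rfloor/(d_1/r)$ more carefully, or by noting that the later argument is robust to this constant.

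\textbf{Fourth moment.} We have $\tr|F|^4=\tr((F^\dag F)^2)$. Since each $\|K_i\|_F^2/d_1^2\le (d_2-1)/d_1^2\le 1/(rd_1)$ and the terms are positive, this is at most $\frac{1}{r^2d_1^2}\tr\big((\sum_i\kett{b_i}\bbra{b_i})^2\big)$. The sum is $W^\dag W$ with $W$ the $(r-\eta)\times(r-\eta)$ matrix $\overline{U_x-U_y}$, so the trace is $\|U_x-U_y\|_4^4\le (2)^4(r-\eta)=16\sr d_1$ deterministically. Hence $\tr|F|^4\le 16\sr/(r^2d_1)\le 16/r^3$ using $\sr d_1\le r$, well within $256/r^3$. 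The only delicate step is the bookkeeping in the first display: identifying the rows $w_i$ and checking that the $V_0'$ part never pairs with $U\Delta$, which follows because $U\Delta$ only lives on ancilla indices $1,\dots,r-\eta$ while $V_0'$ lives on $r-\eta+1,\dots,r$.
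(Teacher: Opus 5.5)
Your Lipschitz argument matches the paper's (\cref{lemma-3192212}). Your identity $\E[\tr|F|^2]=\frac{2}{d_1^2}\sum_{i=1}^{r-\eta}\|K_i\|_F^2$ is exactly \cref{eq-1131627}.

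The gap is in the second-moment constant: you only reach $\frac{3\sr}{8r}$, and neither repair you offer works. The ratio $\lfloor d_1/r\rfloor/(d_1/r)$ cannot be improved uniformly. For $d_2=2$ and $d_1=2r-1$, which is allowed in this regime, it equals $r/(2r-1)\to\tfrac12$. Saying the later argument is robust to the constant does not prove the lemma as stated.

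The loss actually comes from a spurious factor $1-\varepsilon^2$. The lower bound $\tr(K_i^\dag K_i)\ge\frac{d_2-1}{2}$ in \cref{eq-3240522} already accounts for the damping by $\Sigma_\varepsilon$. Concretely, for $i\le r-\eta$ the $d_2-1$ nonzero entries of $K_i$ come from $\ket{i}_\mathrm{A},\ket{r+i}_\mathrm{A},\ldots$, and only the first is scaled by $\sqrt{1-\varepsilon^2}$. Hence $\|K_i\|_F^2=d_2-1-\varepsilon^2\ge\frac34(d_2-1)$. With $r-\eta=\sr d_1$ and $d_2-1\ge d_1/(2r)$ this gives $\E[\tr|F|^2]\ge\frac{3\sr}{4r}\ge\frac{\sr}{2r}$. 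Using the paper's bound $\frac{d_2-1}{2}$ instead gives exactly $\frac{\sr}{2r}$.

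Your fourth-moment bound is correct and takes a genuinely different route. The paper bounds the cross terms using $|\tr(K_i'^\dag K_j')|\le d_2\mathbbm{1}_{i=j}$. It then separates the $x$/$y$ cross terms via the Cauchy--Schwarz step \cref{eq-3200006} and evaluates $\sum_{i,j}\E|\tr(K_{x,i}^\dag K_{x,j})|^2$ by Weingarten calculus (\cref{lemma-3190121}), ending at $256/r^3$.

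You instead dominate $F^\dag F$ by $\frac{1}{rd_1}\ketbra{d_2}{d_2}\otimes W^\dag W$ and use $\|W\|_4^4\le 16(r-\eta)$ pointwise. This gives $16/r^3$ with no averaging at all; the last step also needs $d_1>r$, which holds in this regime.

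Your route relies on structure special to this construction. The image of $\Delta$ has a one-dimensional $\mathrm{B}$-part, so the $K_{x,i}$ are simply rows of $U_x$, and everything is controlled by $\|U_x-U_y\|_{\mathrm{op}}\le 2$. The paper's Weingarten estimate holds for an arbitrary isometry $\Delta$ and is reused unchanged in \cref{sec-3230333,sec-3260100}. There the $K_{x,i}$ are no longer rows of a unitary, and the paper controls $\sum_{i,j}|\tr(K_{x,i}^\dag K_{x,j})|^2$ only in expectation.
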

The proof of \cref{lemma-1140127} is deferred to \cref{sec-3211806}.

By the H\"older's inequality we have
\[\E\!\left[\tr\!\left(|F(U_x,U_y)|^2\right)\right]\leq \E\!\left[\tr\!\left(|F(U_x,U_y)|^4\right)\right]^{1/3}\E\!\left[\tr\!\left(|F(U_x,U_y)|\right)\right]^{2/3},\]
which, combined with \cref{lemma-1140127}, implies
\[\E\!\left[\tr\!\left(|F(U_x,U_y)|\right)\right]^2\geq \frac{\sr^3}{2048}.\]
Thus $\E\!\left[\tr\!\left(|F(U_x,U_y)|\right)\right]> \sr^{3/2}/46$.
Then, since the function $f(U_x,U_y)=\tr(|F(U_x,U_y)|)$ is $\sqrt{\frac{2}{d_1}}$-Lipschitz, we can use \cref{lem:corollary17} to prove the concentration result:
\[\Pr\!\left[\tr\!\left(|F(U_x,U_y)|\right)\leq \frac{\sr^{3/2}}{4600}\right]\leq \exp\!\left(-\frac{d_1(r-\eta)}{2}\cdot \frac{\sr^3}{10000\cdot 12}\right)\leq \exp\!\left(-\frac{\sr^2 (rd_2-d_1)^2}{240000}\right),\]
where we used $r-\eta\geq \sr d_1=rd_2-d_1$.
Then, we independently sample $\exp(\sr^2(rd_2-d_1)^2/480001)$ Haar random unitaries in $\mathbb{U}_{r-\eta}$ and the union bound shows that there exists a non-zero probability that for any pair $U_x, U_y$, we have $\tr(|F(U_x,U_y)|)\geq \sr^{3/2}/4600$.
Thus, there exists a set with cardinality $\geq \exp(\sr^2(rd_2-d_1)^2/480001)$ such that \cref{eq-1130328} holds.
\end{proof}

\subsubsection{Second and fourth moments: proof of Lemma \ref{lemma-1140127}}\label{sec-3211806}
The Lipschitz continuity can be seen from \cref{lemma-3192212}.

Note that $\hat{V}_0$ can also be written as
\[\hat{V}_0=\sum_{i=1}^r \ket{i}_{\mathrm{anc}}\otimes \big(K_i\oplus \ket{d_2}_\mathrm{B}\bra{z_i}_\mathrm{A}\big),\]
where $K_i$ are defined in \cref{eq-1131347}, and we set $\ket{z_i}_\mathrm{A}=\ket{i-r+\eta+r(d_2-1)}_\mathrm{A}$ for $r-\eta+1\leq i\leq r$ and $\ket{z_i}_\mathrm{A}=0$ for $1\leq i\leq r-\eta$.
Let us define 
\[K_i'=K_i\oplus \ket{d_2}_\mathrm{B}\bra{z_i}_\mathrm{A},\]
then $K_i'$ satisfy
\begin{equation}\label{eq-1132146}
\left|\tr\!\left(K_i'^\dag K'_j\right)\right|=\left| \tr\!\left(K_i^\dag K_j\right)+\mathbbm{1}_{r-\eta+1\leq i=j\leq r}\right|\leq d_2\cdot\mathbbm{1}_{i=j},
\end{equation}
and
\begin{equation}\label{eq-3182343}
\tr\!\left(K_i'^\dag K_i'\right)=\tr\!\left(K_i^\dag K_i\right)+\mathbbm{1}_{r-\eta+1\leq i\leq r}\geq \tr\!\left(K_i^\dag K_i\right)\geq \frac{d_2-1}{2},
\end{equation}
due to \cref{eq-3240522}.

Define $K_{x,i}=\bra{i}_\mathrm{anc} U_x\Delta$, $K_{y,i}=\bra{i}_\mathrm{anc} U_y\Delta$. This means 
\[F(U_x,U_y)=\frac{1}{d_1}\sum_{i=1}^r\kett{K_i'}\Big(\bbra{K_{x,i}}-\bbra{K_{y,i}}\Big).\]
Then, we note that
\begin{align}
\E\!\left[\tr(|F(U_x,U_y)|^2)\right]&=\frac{1}{d_1^2}\E\!\left[\tr\!\left(\sum_{i,j=1}^r \kett{K_i'}\Big(\bbra{K_{x,i}}-\bbra{K_{y,i}}\Big)\Big(\kett{K_{x,j}}-\kett{K_{y,j}}\Big)\bbra{K_j'}\right)\right] \nonumber\\
&=\frac{1}{d_1^2}\E\!\left[\sum_{i=1}^r \bbrakett{K_i'}{K_i'}\Big(\bbra{K_{x,i}}-\bbra{K_{y,i}}\Big)\Big(\kett{K_{x,i}}-\kett{K_{y,i}}\Big)\right]\nonumber \\
&=\frac{2}{d_1^2}\sum_{i=1}^{r-\eta}\bbrakett{K_i'}{K_i'}  \label{eq-1131627} \\
&\geq \frac{2}{d_1^2} (r-\eta) \frac{d_2-1}{2} \label{eq-3182341}\\
&\geq \frac{\sr(d_2-1)}{d_1}\geq \frac{\sr}{2r} \label{eq-3180253} 
\end{align}
where \cref{eq-1131627} is because for $t_1,t_2\in\{x,y\}$, we have
\begin{align}
\E\!\left[\bbrakett{K_{t_1,i}}{K_{t_2,i}}\right]&=\E\!\left[\tr\!\left(K_{t_1,i}^\dag K_{t_2,i}\right)\right] =\tr\!\left(\Delta^\dag \E\!\left[U_{t_1}^\dag \ketbra{i}{i}_\mathrm{anc}U_{t_2} \right]\Delta\right) \nonumber\\
&=\mathbbm{1}_{t_1=t_2}\mathbbm{1}_{1\leq i\leq r-\eta}\cdot \frac{1}{r-\eta}  \tr(\Delta^\dag \Delta) \label{eq-1131652}\\
&=\mathbbm{1}_{t_1=t_2}\mathbbm{1}_{1\leq i\leq r-\eta}, \nonumber
\end{align}
where \cref{eq-1131652} is due to Schur's lemma (noting that $U_{z_1}$ and $U_{z_2}$ are unitaries acting on the $(r-\eta)$-dimensional space $\mathcal{H}_{\mathrm{anc}}[1:r-\eta]\otimes \mathcal{H}_{\mathrm{B}}[d_2]$) and also the fact that $\bra{i}_{\mathrm{anc}}U_x\Delta=0$ for $i\geq r-\eta+1$, 
\cref{eq-3182341} is by \cref{eq-3182343},
and \cref{eq-3180253} is because $r-\eta\geq \sr d_1$ and $d_2-1\geq d_2/2\geq d_1/(2r)$.

Then, we have
\begin{align}
\E\!\left[\tr(|F(U_x,U_y)|^4)\right]&=\frac{1}{d_1^4}\E\!\Bigg[\sum_{i,j,k,l=1}^r \tr\!\bigg(\kett{K_i'}\Big(\bbra{K_{x,i}}-\bbra{K_{y,i}}\Big)\Big(\kett{K_{x,j}}-\kett{K_{y,j}}\Big)\bbra{K'_j} \nonumber \\
&\qquad\qquad\qquad\qquad\qquad \kett{K'_{k}}\Big(\bbra{K_{x,k}}-\bbra{K_{y,k}}\Big)\Big(\kett{K_{x,l}}-\kett{K_{y,l}}\Big)\bbra{K'_{l}}\bigg)\Bigg]  \nonumber \\ 
&\leq \frac{d_2^2}{d_1^4}\sum_{i,j=1}^r\E\!\bigg[\left|\Big(\bbra{K_{x,i}}-\bbra{K_{y,i}}\Big)\Big(\kett{K_{x,j}}-\kett{K_{y,j}}\Big)\right|^2\bigg]\label{eq-1132143} \\
&\leq \frac{4d_2^2}{d_1^4}\sum_{i,j=1}^r\E\!\left[|\bbrakett{K_{x,i}}{K_{x,j}}|^2+|\bbrakett{K_{y,i}}{K_{y,j}}|^2+|\bbrakett{K_{x,i}}{K_{y,j}}|^2+|\bbrakett{K_{y,i}}{K_{x,j}}|^2\right] \nonumber\\
&\leq \frac{8d_2^2}{d_1^4}\sum_{i,j=1}^r\E\!\left[|\bbrakett{K_{x,i}}{K_{x,j}}|^2+|\bbrakett{K_{y,i}}{K_{y,j}}|^2\right]\label{eq-1132347}\\
&=\frac{16d_2^2}{d_1^4}\sum_{i,j=1}^r \E\!\left[|\bbrakett{K_{x,i}}{K_{x,j}}|^2\right]=\frac{16d_2^2}{d_1^4}\sum_{i,j=1}^r\E\!\left[\left|\tr(K_{x,i}^\dag K_{x,j})\right|^2\right],\nonumber\\
&\leq \frac{16d_2^2}{d_1^4}\cdot\frac{4(r-\eta)^2}{r-\eta}\leq \frac{256}{rd_1^2}\leq \frac{256}{r^3},\label{eq-1140112}
\end{align}
where \cref{eq-1132143} is because \cref{eq-1132146}, \cref{eq-1132347} is because 
\begin{align}
&\sum_{i,j=1}^r |\bbrakett{K_{x,i}}{K_{y,j}}|^2= \|K_x^\dag K_y\|_F^2 =\tr\!\left(K_x^\dag K_y K_y^\dag K_x\right)\leq \|K_xK_x^\dag\|_F \cdot \|K_yK_y^\dag\|_F \nonumber \\
\leq& \frac{1}{2}\left(\|K_x^\dag K_x\|_F^2 + \|K_y^\dag K_y\|_F^2\right)=\frac{1}{2}\left(\sum_{i,j=1}^r |\bbrakett{K_{x,i}}{K_{x,j}}|^2 +\sum_{i,j=1}^r |\bbrakett{K_{y,i}}{K_{y,j}}|^2 \right)\label{eq-3200006},
\end{align}
where $K_x$ denotes the matrix with columns $\kett{K_{x,i}}$, 
and in \cref{eq-1140112}, the first inequality is due to \cref{lemma-3190121} and that $\Delta$ is an isometry from $\mathcal{H}_\mathrm{A}[1:r-\eta]$ to $\mathcal{H}_\mathrm{B}[d_2]\otimes\mathcal{H}_{\mathrm{anc}}[1:r-\eta]$ (further note that $\dim(\mathcal{H}_{\mathrm{anc}}[1:r-\eta])\leq \dim(\mathcal{H}_\mathrm{A}[1:r-\eta])\dim(\mathcal{H}_\mathrm{B}[d_2])$ so we choose $k=1$ in \cref{lemma-3190121}), 
the second and third inequalities is due to $d_2\leq d_1/r+1\leq 2d_1/r$ and $r\leq d_1$.

\subsubsection{Existence for diamond norm}

Note that in \cref{thm-1140332}, the cardinality and distance both depend on $\sr$, which can be arbitrarily close to $0$. 
Here, we give another packing nets w.r.t. diamond norm that does not depend on $\sr$.
We will use the same construction as that given in \cref{sec-3232114}.
Specifically, recall that $\eta\coloneqq d_1-r\lfloor\frac{d_1}{r}\rfloor=d_1-rd_2+r$.
Note that $r-\eta=rd_2-d_1$.

\begin{theorem}\label{thm-3231724}
Suppose $\varepsilon\leq 1/160$.
There exists a finite subset $\mathcal{N}$ of $\{V_{\varepsilon,U}\,|\, U\in\mathbb{U}_{r-\eta}\}$ for $V_{\varepsilon,U}$ defined in \cref{eq-1140331} with cardinality $|\mathcal{N}|\geq \exp((rd_2-d_1)^2/4801)$, such that for any $V_x\neq V_y\in\mathcal{N}$, if we set $\mathcal{E}_x=\tr_{\mathcal{H}_\mathrm{anc}}(V_x(\cdot)V_x^\dag)$  and $\mathcal{E}_y=\tr_{\mathcal{H}_\mathrm{anc}}(V_y(\cdot)V_y^\dag)$, then
\[\|\mathcal{E}_x-\mathcal{E}_y\|_\diamond\geq \frac{1}{80}\varepsilon.\]
\end{theorem}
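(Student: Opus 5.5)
The plan follows the template of \cref{thm-1140332}, with one change: in the diamond norm we may choose a well-adapted input state instead of the maximally entangled one. That choice removes the $1/d_1$ normalization, which is what produced the $\sr$-dependence. The input we use is supported on the perturbed block $\mathcal{H}_\mathrm{A}[1:r-\eta]$, where $r-\eta=rd_2-d_1$. Let $P$ be the projector onto this block, and feed the maximally entangled state $\Phi_P=\frac{1}{r-\eta}\kettbbra{P}{P}$ on $\mathcal{H}_\mathrm{A}[1:r-\eta]\otimes\mathcal{H}_\mathrm{A}[1:r-\eta]$. Then
\[
\|\mathcal{E}_x-\mathcal{E}_y\|_\diamond\geq \frac{1}{r-\eta}\big\|(I\otimes P)\,(C_{\mathcal{E}_x}-C_{\mathcal{E}_y})\,(I\otimes P)\big\|_1 .
\]
On this block, $V_0'$ vanishes and $V_0$ equals $\sqrt{1-\varepsilon^2}$ times the embedding into $\mathcal{H}_\mathrm{B}[1]\otimes\mathcal{H}_\mathrm{anc}[1:r-\eta]$. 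This uses the row-major map $g$ and the fact that $r-\eta\le r$, so each $\ket{i}_\mathrm{A}$ with $i\le r-\eta$ is sent to $\ket{1}_\mathrm{B}\ket{i}_\mathrm{anc}$. In the same way, $\Delta$ is the embedding into $\ket{d_2}_\mathrm{B}\otimes\mathcal{H}_\mathrm{anc}[1:r-\eta]$, and $d_2\neq 1$ because $d_2>1$.

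Next we expand the restricted Choi difference, exactly as in \cref{eq-1130259}. The $\varepsilon^2$ terms have trace norm at most $2\varepsilon^2(r-\eta)$. The cross term is $\varepsilon(G+G^\dagger)$, where
\[
G=\tr_{\mathrm{anc}}\big(\kett{V_0P}\bbra{(U_x-U_y)\Delta}\big).
\]
This operator maps the $\ket{d_2}_\mathrm{B}$ sector to the $\ket{1}_\mathrm{B}$ sector, and these sectors are orthogonal, so \cref{fact-3191602} gives $\|G+G^\dagger\|_1=2\|G\|_1$. A direct computation identifies $G$ with $\sqrt{1-\varepsilon^2}$ times the partial transpose of $(W_x-W_y)$, embedded as $\ket{1}_\mathrm{B}\bra{d_2}_\mathrm{B}\otimes M$, where $W_t\in\mathbb{U}_{r-\eta}$ is $U_t$ written in the natural basis. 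Under this identification, $\|G\|_1$ is (up to the $\sqrt{1-\varepsilon^2}$ factor) the trace norm of $\kett{\cdot}$-reshaped matrices, which gives
\[
\|G\|_1\gtrsim \frac{1}{\sqrt{r-\eta}}\,\|W_x-W_y\|_F .
\]
I would verify this carefully. It may turn out that $G$ is essentially the rank-type object $\sum_i \ket{i}\bra{\cdot}$ arising from the operator $W_x-W_y$ itself, in which case $\|G\|_1=\|W_x-W_y\|_1$. That is in fact the better case, since the Schatten-1 norm is then directly available.

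Assume the favourable identification, $\|G\|_1\approx\|W_x^{\mathrm{T}}-W_y^{\mathrm{T}}\|_1=\|W_x-W_y\|_1$. For independent Haar unitaries, $\E\|W_x-W_y\|_1\geq c(r-\eta)$ with $c\approx 1$. This follows from Hölder's inequality as in \cref{lemma-1140127}: $\E\|W_x-W_y\|_2^2=2(r-\eta)$, and the fourth moment is $O(r-\eta)$ because $\|W_x-W_y\|_{\mathrm{op}}\le 2$. The function $\|W_x-W_y\|_1/(r-\eta)$ is $O(1/\sqrt{r-\eta})$-Lipschitz in Frobenius norm, so \cref{lem:corollary17} yields a failure probability of $\exp(-\Omega((r-\eta)^2))$. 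A union bound over $\exp((rd_2-d_1)^2/4801)$ samples then gives the net. Combining the pieces, the diamond distance is at least $2\varepsilon\cdot\frac{c}{2}-2\varepsilon^2\geq\varepsilon/80$ once $\varepsilon\le 1/160$.

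The main obstacle is the exact algebraic form of $G$ after the partial trace over the ancilla. Since both $V_0P$ and $\Delta$ are copies of the identity on the ancilla index, the contraction over $\mathrm{anc}$ could in principle collapse $G$ to a rank-one object of the form $\kett{\cdot}\bbra{\cdot}$ rather than to an operator whose trace norm is $\|W_x-W_y\|_1$. If it does collapse, I would instead choose a non-maximally-entangled input, or a product input $\ket{\psi}$ inside the block. I would pick this input to maximize the cross-term norm and obtain a $\Theta(1)$ separation from $|\langle\cdot\rangle|$-type quantities such as $\|(W_x-W_y)\ket{\psi}\|$, which concentrate at $\sqrt2$. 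Concentration of $\|(W_x-W_y)\ket\psi\|$ alone only gives cardinality $\exp(\Omega(r-\eta))$, so to reach $\exp(\Omega((r-\eta)^2))$ the lower bound must be taken through a maximum over inputs, such as the operator norm $\|W_x-W_y\|_{\mathrm{op}}$. For that I would use the maximally entangled version, which is the route I would try first.
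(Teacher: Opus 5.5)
Your plan is correct and has the same skeleton as the paper's proof. It uses the maximally entangled input on $\mathcal{H}_\mathrm{A}[1:r-\eta]$, where $V_0'$ vanishes and $V_0$ reduces to $W_0$, then the expansion with an $O(\varepsilon^2)$ remainder, \cref{fact-3191602}, and finally H\"older, \cref{lem:corollary17} and a union bound.

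Where you differ is the step you flagged as the main obstacle, and it resolves favourably. The ancilla index of $\kett{W_0}$ is tied to the input index. Contracting it against the output index of $U_t\Delta$ gives exactly
\[
G=\sqrt{1-\varepsilon^2}\,\ket{1}\bra{d_2}_\mathrm{B}\otimes\overline{(W_x-W_y)},
\]
where $\overline{\,\cdot\,}$ is the entrywise conjugate on $\mathcal{H}_\mathrm{A}[1:r-\eta]$. Hence $\|G\|_1=\sqrt{1-\varepsilon^2}\,\|W_x-W_y\|_1$. There is no rank-one collapse, and the product-input fallback is unnecessary. You should also drop the hedge $\|G\|_1\gtrsim\|W_x-W_y\|_F/\sqrt{r-\eta}$, since it would only give an $O(\varepsilon/(r-\eta))$ separation.

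This $G$ is exactly $(r-\eta)F(U_x,U_y)$ from \cref{lemma-3240011}, but the paper never writes it in closed form. Instead it bounds $\E\tr|F|^2$ via Schur's lemma on the $K_{x,i}$, and $\E\tr|F|^4\le 64/(r-\eta)^3$ via \cref{lemma-3190121}, i.e.\ Weingarten calculus.

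Your identification makes both moments elementary:
\begin{itemize}
\item $\E\|W_x-W_y\|_F^2=2(r-\eta)$;
\item $\tr|W_x-W_y|^4\le 4\tr|W_x-W_y|^2$, since $\|W_x-W_y\|_{\mathrm{op}}\le 2$.
\end{itemize}
Together these give $\E\|W_x-W_y\|_1\ge r-\eta$ with better constants. In fact $\|W_x-W_y\|_1\ge\frac12\|W_x-W_y\|_F^2$ holds pointwise, so H\"older is not even needed.

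Your Lipschitz constant $\sqrt{2/(r-\eta)}$ coincides with the paper's. Taking threshold $1/2$ for $\|W_x-W_y\|_1/(r-\eta)$ gives failure probability $\exp(-(r-\eta)^2/96)$. This yields $\|F\|_1\ge 1/80$ and the stated cardinality $\exp((rd_2-d_1)^2/4801)$ with room to spare.

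What the paper's route buys is uniformity. The same $K_{x,i}$/Weingarten template is reused for the other packing nets (\cref{thm-1140332,thm-3191542,thm-3250101,thm-3200309}). There $W_0$ and $\Delta$ are not aligned block identities, and no such closed form for the cross term is immediate.
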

\begin{proof}
We define 
$\ket{\Psi}\coloneqq \frac{1}{\sqrt{r-\eta}}\sum_{i=1}^{r-\eta}\ket{i}_\mathrm{A}\ket{i}_\mathrm{A}$
be an entangled state on $\mathcal{H}_\mathrm{A}\otimes\mathcal{H}_\mathrm{A}$ and $\Psi=\ketbra{\Psi}{\Psi}$.
We also define $W_0:\mathcal{H}_\mathrm{A}[1:r-\eta]\rightarrow\mathcal{H}_\mathrm{B}[1]\otimes\mathcal{H}_{\mathrm{anc}}[1:r-\eta]$ as:
\[W_0\coloneqq \sqrt{1-\varepsilon^2} \sum_{i=1}^{r-\eta} \ket{i}_\mathrm{anc}\otimes \ket{1}_{\mathrm{B}}\bra{i}_{\mathrm{A}}.\]
We can easily see that
\begin{equation}\label{eq-3240609}
V_0\ket{\Psi}=W_0\ket{\Psi}=\frac{1}{\sqrt{r-\eta}}\kett{W_0}, \quad V_0'\ket{\Psi}=0.
\end{equation}
Then, we need the following lemma.
\begin{lemma}\label{lemma-3231724}
    There exists a finite subset $\mathcal{M}\subseteq \mathbb{U}_{r-\eta}$ with cardinality $|\mathcal{M}|\geq \exp((rd_2-d_1)^2/4801)$ such that for any $U_x\neq U_y\in\mathcal{M}$, 
\begin{equation}\label{eq-3231724}
\frac{1}{r-\eta}\left\|\tr_{\mathcal{H}_\mathrm{anc}}\!\left(\kett{W_0}\Big(\bbra{U_x\Delta}-\bbra{U_y\Delta}\Big)\right)\right\|_1\geq \frac{1}{80}.
\end{equation}
\end{lemma}
The proof of \cref{lemma-3231724} is deferred.

Now, we are able to prove \cref{thm-3231724}.
Let $\mathcal{M}$ be the set given in \cref{lemma-3231724}.
For any $U_x\neq U_y\in\mathcal{M}$, if we set $\mathcal{E}_x=\tr_{\mathcal{H}_\mathrm{anc}}(V_{\varepsilon,U_x}(\cdot)V_{\varepsilon,U_x}^\dag)$  and $\mathcal{E}_y=\tr_{\mathcal{H}_\mathrm{anc}}(V_{\varepsilon,U_y}(\cdot)V_{\varepsilon,U_y}^\dag)$, then we have
\begin{align}
&\|\mathcal{E}_x-\mathcal{E}_y\|_\diamond\geq \|\mathcal{E}_x(\Psi) - \mathcal{E}_y(\Psi)\|_1 \nonumber \\
=&\Big\|\tr_{\mathcal{H}_{\mathrm{anc}}}\!\left((W_0+\varepsilon U_x\Delta)\Psi (W_0+\varepsilon U_x\Delta)^\dag \right)-\tr_{\mathcal{H}_{\mathrm{anc}}}\!\left((W_0+\varepsilon U_y\Delta)\Psi (W_0+\varepsilon U_y\Delta)^\dag\right)\Big\|_1 \label{eq-3232311} \\
=&\frac{1}{r-\eta}\bigg\|\varepsilon^2\Big(\tr_{\mathcal{H}_{\mathrm{anc}}}\!\left(\kettbbra{U_x\Delta}{U_x\Delta}\right)-\tr_{\mathcal{H}_{\mathrm{anc}}}\!\left(\kettbbra{U_y\Delta}{U_y\Delta}\right)\Big) \nonumber\\
&\quad\quad\quad\qquad +\varepsilon \tr_{\mathcal{H}_{\mathrm{anc}}}\!\Big(\kettbbra{W_0}{(U_x-U_y)\Delta}\Big) + \varepsilon\tr_{\mathcal{H}_{\mathrm{anc}}}\!\Big(\kettbbra{(U_x-U_y)\Delta}{W_0}\Big) \bigg\|_1 \label{eq-3232316} \\
\geq & \frac{\varepsilon}{r-\eta}\bigg\|\tr_{\mathcal{H}_{\mathrm{anc}}}\!\Big(\kettbbra{W_0}{(U_x-U_y)\Delta}\Big) + \tr_{\mathcal{H}_{\mathrm{anc}}}\!\Big(\kettbbra{(U_x-U_y)\Delta}{W_0}\Big)\bigg\|_1 - 2 \varepsilon^2 \label{eq-3232317} \\
= & \frac{2\varepsilon}{r-\eta} \bigg\|\tr_{\mathcal{H}_{\mathrm{anc}}}\!\Big(\kettbbra{W_0}{(U_x-U_y)\Delta}\Big)\bigg\|_1-2\varepsilon^2 \label{eq-3232321}\\
\geq &\frac{1}{40}\varepsilon -2\varepsilon^2 \label{eq-3232322} \\
\geq &\frac{1}{80}\varepsilon. \label{eq-3232323}
\end{align}
In \cref{eq-3232311}, we used \cref{eq-3240609}.
In \cref{eq-3232316}, we used $U\Delta\ket{\Psi}=\frac{1}{\sqrt{r-\eta}}\kett{U\Delta}$, $W_0\ket{\Psi}=\frac{1}{\sqrt{r-\eta}}\kett{W_0}$.
In \cref{eq-3232317}, we used 
\[\|\tr_{\mathcal{H}_{\mathrm{anc}}}(\kettbbra{U\Delta}{U\Delta})\|_1 = \tr(\kettbbra{U\Delta}{U\Delta})=r-\eta. \]
In \cref{eq-3232321} we used the fact that $\tr_{\mathcal{H}_{\mathrm{anc}}}\!\Big(\kettbbra{W_0}{(U_x-U_y)\Delta}\Big)$ is a linear operator with support in  $\mathcal{\mathcal{H}_\mathrm{A}}[1:r-\eta]\otimes\mathcal{H}_{\mathrm{B}}[d_2]$ and image in 
$\mathcal{H}_\mathrm{A}[1:r-\eta]\otimes \mathcal{H}_{\mathrm{B}}[1]$,
which is orthogonal to its support; and then we used \cref{fact-3191602}.
In \cref{eq-3232322} we used \cref{eq-3231724}.
In \cref{eq-3232323} we used that $\varepsilon\leq 1/160$.
Therefore, we can lower bound the diamond norm
\[\|\mathcal{E}_x-\mathcal{E}_y\|_\diamond\geq \frac{1}{80}\varepsilon.\]
Thus, the set $\mathcal{N}=\{V_{\varepsilon,U} \,|\, U\in\mathcal{M}\}$ is the desired set.
\end{proof}

Now, we give the proof of \cref{lemma-3231724}.
\begin{proof}[Proof of \cref{lemma-3231724}]
We need the following lemma.
\begin{lemma}\label{lemma-3240011}
For $U_x,U_y\in\mathbb{U}_{r-\eta}$, let us define 
\[F(U_x,U_y)=\frac{1}{r-\eta}\tr_{\mathcal{H}_\mathrm{anc}}\!\left(\kett{W_0}\Big(\bbra{U_x\Delta}-\bbra{U_y\Delta}\Big)\right),\]
then the function $f(U_x,U_y)=\|F(U_x,U_y)\|_1=\tr(|F(U_x,U_y)|)$ is $\sqrt{\frac{2}{r-\eta}}$-Lipschitz with respect to the $\ell_2$-sum of the $2$-norms (Frobenius norm). 
Furthermore, for independent random $U_x,U_y\sim \mathbb{U}_{r-\eta}$, we have $\E\!\left[\tr(|F(U_x,U_y)|^2)\right]\geq\frac{1}{r-\eta}$, and $\E\!\left[\tr(|F(U_x,U_y)|^4)\right]\leq \frac{64}{(r-\eta)^3}$.
\end{lemma}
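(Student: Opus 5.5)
The plan is to make everything explicit by reducing $F(U_x,U_y)$ to a rescaled copy of $U_x-U_y$. Write $m\coloneqq r-\eta$. By definition $W_0=\sum_{i=1}^{m}\ket{i}_\mathrm{anc}\otimes K_i$ with $K_i=\sqrt{1-\varepsilon^2}\,\ket{1}_\mathrm{B}\bra{i}_\mathrm{A}$. Since $U$ acts only on $\mathcal{H}_\mathrm{B}[d_2]\otimes\mathcal{H}_\mathrm{anc}[1:m]\cong\mathbb{C}^m$, we can write $U_x\Delta=\sum_{i=1}^m\ket{i}_\mathrm{anc}\otimes K_{x,i}$ with $K_{x,i}=\ket{d_2}_\mathrm{B}\sum_{k}(U_x)_{ik}\bra{k}_\mathrm{A}$, and similarly for $U_y$. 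Exactly as in \cref{sec-3211806},
\[F(U_x,U_y)=\frac{1}{m}\sum_{i=1}^m\kett{K_i}\big(\bbra{K_{x,i}}-\bbra{K_{y,i}}\big).\]
The vectors $\kett{K_i}$ are pairwise orthogonal with $\bbrakett{K_i}{K_i}=1-\varepsilon^2$. Hence
\[F^\dag F=\frac{1-\varepsilon^2}{m^2}\sum_{i=1}^m\ket{a_i}\!\bra{a_i},\qquad \ket{a_i}\coloneqq\kett{K_{x,i}}-\kett{K_{y,i}}.\]
Here $\ket{a_i}$ is, up to the fixed tensor factor $\ket{d_2}_\mathrm{B}$ and entrywise complex conjugation, the $i$-th row of $U_x-U_y$. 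So $\sum_i\ket{a_i}\!\bra{a_i}$ is unitarily equivalent to $(U_x-U_y)^\dag(U_x-U_y)$ (after a transpose/conjugate, which preserves the spectrum). The key structural fact is therefore that the singular values of $F$ are exactly $\frac{\sqrt{1-\varepsilon^2}}{m}$ times those of $U_x-U_y$. I expect verifying this identification, i.e.\ tracking the conjugation from the $\kett{\cdot}$ convention, to be the only real obstacle. Everything after it is elementary.

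With this identification, the three claims reduce to matrix-norm facts about $U_x-U_y$.

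Lipschitz bound: we have $f(U_x,U_y)=\frac{\sqrt{1-\varepsilon^2}}{m}\|U_x-U_y\|_1$. By the triangle inequality and $\|X\|_1\le\sqrt{m}\|X\|_F$ for $m\times m$ matrices,
\[|f(U_x,U_y)-f(U_x',U_y')|\le\frac{1}{\sqrt m}\big(\|U_x-U_x'\|_F+\|U_y-U_y'\|_F\big)\le\sqrt{\tfrac{2}{m}}\,\|(U_x,U_y)-(U_x',U_y')\|_F.\]
Alternatively this follows directly from \cref{lemma-3192212}.

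Second moment: $\tr|F|^2=\frac{1-\varepsilon^2}{m^2}\|U_x-U_y\|_F^2=\frac{1-\varepsilon^2}{m^2}\big(2m-2\Re\tr(U_x^\dag U_y)\big)$. Since $\E[U_y]=0$ for a Haar unitary on $\mathbb{C}^m$, the expectation is $\frac{2(1-\varepsilon^2)}{m}\ge\frac{1}{m}$, using $\varepsilon<1/2$.

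Fourth moment: this even holds deterministically. We have $\tr|F|^4\le\frac{1}{m^4}\tr\big(((U_x-U_y)^\dag(U_x-U_y))^2\big)\le\frac{1}{m^4}\|U_x-U_y\|_{\mathrm{op}}^2\|U_x-U_y\|_F^2$. Using $\|U_x-U_y\|_{\mathrm{op}}\le2$ and $\|U_x-U_y\|_F^2\le4m$, this gives $\tr|F|^4\le\frac{16}{m^3}\le\frac{64}{m^3}$, as required.
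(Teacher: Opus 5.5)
Your proof is correct, and for the moment bounds it takes a genuinely different and more elementary route than the paper.

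\textbf{Common starting point.} Both arguments begin from the same decomposition $F=\frac{1}{m}\sum_i \kett{K_i}\big(\bbra{K_{x,i}}-\bbra{K_{y,i}}\big)$, with your $m=r-\eta$.

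\textbf{The paper's route.} The paper then runs the generic template it uses for all the packing-net lemmas:
\begin{itemize}
\item Lipschitz continuity comes from \cref{lemma-3192212}.
\item The second moment comes from Schur's lemma, via $\E[\bbrakett{K_{z_1,i}}{K_{z_2,i}}]=\mathbbm{1}_{z_1=z_2}$.
\item For the fourth moment, the quadruple sum is bounded by $\sum_{i,j}\E|\tr(K_{x,i}^\dag K_{x,j})|^2$, after the Cauchy--Schwarz-type step of \eqref{eq-3200006}. This is then evaluated with second-order Weingarten calculus in \cref{lemma-3190121}.
\end{itemize}

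\textbf{Your route.} You use the special structure of this lemma. The $\kett{K_i}$ are orthogonal with equal norms, and $\Delta$ is the canonical embedding with a one-dimensional $\mathcal{H}_\mathrm{B}$-part. Hence $F$ is just $U_x-U_y$, up to the scalar $\sqrt{1-\varepsilon^2}/m$ and isometries.

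The step you flagged as the main obstacle is fine. Since $\kett{\cdot}$ is linear, $\kett{K_{x,i}}=\ket{d_2}_\mathrm{B}\otimes\sum_k (U_x)_{ik}\ket{k}_\mathrm{A}$, with no conjugation. Therefore $\sum_i\ket{a_i}\!\bra{a_i}=\ketbra{d_2}{d_2}\otimes\big((U_x-U_y)^\dag(U_x-U_y)\big)^{\mathrm T}$, which has the same spectrum as $(U_x-U_y)^\dag(U_x-U_y)$.

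This reduces all three claims to norm inequalities for a difference of two unitaries. It needs only $\E[U]=0$ rather than Weingarten integrals. It even gives the fourth-moment bound deterministically, with constant $16$ instead of $64$.

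\textbf{What the paper's template buys.} Its advantage is uniformity. The same computation goes through verbatim in \cref{lemma-1140127,lemma-3191720,lemma-3200327,lemma-3250133}. There the $K_i'$ have unequal norms, or $\Delta$ is an arbitrary isometry. In that case $\sum_i\ket{a_i}\!\bra{a_i}$ becomes the ancilla partial trace of $\kett{(U_x-U_y)\Delta}\bbra{(U_x-U_y)\Delta}$ rather than exactly $(U_x-U_y)^\dag(U_x-U_y)$, so your exact identification is unavailable.
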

The proof of \cref{lemma-3240011} is deferred to \cref{sec-3240011}.

By the H\"older's inequality we have
\[\E\!\left[\tr\!\left(|F(U_x,U_y)|^2\right)\right]\leq \E\!\left[\tr\!\left(|F(U_x,U_y)|^4\right)\right]^{1/3}\E\!\left[\tr\!\left(|F(U_x,U_y)|\right)\right]^{2/3},\]
which, combined with \cref{lemma-3240011}, implies
\[\E\!\left[\tr\!\left(|F(U_x,U_y)|\right)\right]^2\geq \frac{1}{64}.\]
Thus $\E\!\left[\tr\!\left(|F(U_x,U_y)|\right)\right]> 1/8$.
Then, since the function $f(U_x,U_y)=\tr(|F(U_x,U_y)|)$ is $\sqrt{\frac{2}{r-\eta}}$-Lipschitz, we can use \cref{lem:corollary17} to prove the concentration result:
\[\Pr\!\left[\tr\!\left(|F(U_x,U_y)|\right)\leq \frac{1}{80}\right]\leq \exp\!\left(-\frac{(r-\eta)^2}{2}\cdot \frac{1}{100\cdot 12}\right)\leq \exp\!\left(-\frac{(r-\eta)^2}{2400}\right).\]
Then, we independently sample $\exp((r-\eta)^2/4801)$ Haar random unitaries in $\mathbb{U}_{r-\eta}$ and the union bound shows that there exists a non-zero probability that for any pair $U_x, U_y$, we have $\tr(|F(U_x,U_y)|)\geq 1/80$.
Thus, there exists a set with cardinality $\geq \exp((r-\eta)^2/4801)=\exp((rd_2-d_1)^2/4801)$ such that \cref{eq-3231724} holds.
\end{proof}

\subsubsection{Second and fourth moments: proof of Lemma \ref{lemma-3240011}}\label{sec-3240011}
The Lipschitz continuity can be seen from \cref{lemma-3192212}, where we treat $W_0$ and $U\Delta$ as linear operators acting on input space with dimension $r-\eta$.

Define $K_{x,i}=\bra{i}_\mathrm{anc} U_x\Delta$, $K_{y,i}=\bra{i}_\mathrm{anc} U_y\Delta$, and $K'_i=\ket{1}_\mathrm{B}\bra{i}_\mathrm{A}=\frac{1}{\sqrt{1-\varepsilon^2}}\bra{i}_\mathrm{anc}W_0$. This means 
\[F(U_x,U_y)=\frac{\sqrt{1-\varepsilon^2}}{r-\eta}\sum_{i=1}^{r-\eta} \kett{K_i'}\Big(\bbra{K_{x,i}}-\bbra{K_{y,i}}\Big).\]
Then, we note that
\begin{align}
\E\!\left[\tr(|F(U_x,U_y)|^2)\right]&=\frac{1-\varepsilon^2}{(r-\eta)^2}\E\!\left[\tr\!\left(\sum_{i,j=1}^{r-\eta} \kett{K_i'}\Big(\bbra{K_{x,i}}-\bbra{K_{y,i}}\Big)\Big(\kett{K_{x,j}}-\kett{K_{y,j}}\Big)\bbra{K_j'}\right)\right] \nonumber\\
&=\frac{1-\varepsilon^2}{(r-\eta)^2}\E\!\left[\sum_{i=1}^{r-\eta} \Big(\bbra{K_{x,i}}-\bbra{K_{y,i}}\Big)\Big(\kett{K_{x,i}}-\kett{K_{y,i}}\Big)\right]\nonumber \\
&=\frac{2(1-\varepsilon^2)}{(r-\eta)^2} (r-\eta)=\frac{2(1-\varepsilon^2)}{r-\eta}\geq \frac{1}{r-\eta} \label{eq-3240226}
\end{align}
where \cref{eq-3240226} is because for $z_1,z_2\in\{x,y\}$, we have
\begin{align}
\E\!\left[\bbrakett{K_{z_1,i}}{K_{z_2,i}}\right]&=\E\!\left[\tr\!\left(K_{z_1,i}^\dag K_{z_2,i}\right)\right] =\tr\!\left(\Delta^\dag \E\!\left[U_{z_1}^\dag \ketbra{i}{i}_\mathrm{anc}U_{z_2} \right]\Delta\right) \nonumber\\
&=\mathbbm{1}_{z_1=z_2}\mathbbm{1}_{1\leq i\leq r-\eta}\cdot \frac{1}{r-\eta}  \tr(\Delta^\dag \Delta) \label{eq-3240239}\\
&=\mathbbm{1}_{z_1=z_2}\mathbbm{1}_{1\leq i\leq r-\eta}, \nonumber
\end{align}
where \cref{eq-3240239} is due to Schur's lemma and also the fact that $\bra{i}_{\mathrm{anc}}U_x\Delta=0$ for $i> r-\eta$.

Then, we also have
\begin{align}
\E\!\left[\tr(|F(U_x,U_y)|^4)\right]&=\frac{1}{(r-\eta)^4}\E\!\Bigg[\sum_{i,j,k,l=1}^{r-\eta} \tr\!\bigg(\kett{K_i'}\Big(\bbra{K_{x,i}}-\bbra{K_{y,i}}\Big)\Big(\kett{K_{x,j}}-\kett{K_{y,j}}\Big)\bbra{K_j'} \nonumber \\
&\qquad\qquad\qquad\qquad\kett{K_{k}'}\Big(\bbra{K_{x,k}}-\bbra{K_{y,k}}\Big)\Big(\kett{K_{x,l}}-\kett{K_{y,l}}\Big)\bbra{K_{l}'}\bigg)\Bigg]  \nonumber \\ 
&\leq \frac{1}{(r-\eta)^4}\sum_{i,j=1}^{r-\eta}\E\!\bigg[\left|\Big(\bbra{K_{x,i}}-\bbra{K_{y,i}}\Big)\Big(\kett{K_{x,j}}-\kett{K_{y,j}}\Big)\right|^2\bigg] \nonumber \\
&\leq \frac{4}{(r-\eta)^4}\sum_{i,j=1}^{r-\eta}\E\!\left[|\bbrakett{K_{x,i}}{K_{x,j}}|^2+|\bbrakett{K_{y,i}}{K_{y,j}}|^2+|\bbrakett{K_{x,i}}{K_{y,j}}|^2+|\bbrakett{K_{y,i}}{K_{x,j}}|^2\right] \nonumber\\
&\leq \frac{8}{(r-\eta)^4}\sum_{i,j=1}^{r-\eta}\E\!\left[|\bbrakett{K_{x,i}}{K_{x,j}}|^2+|\bbrakett{K_{y,i}}{K_{y,j}}|^2\right]\label{eq-3240259}\\
&=\frac{16}{(r-\eta)^4}\sum_{i,j=1}^{r-\eta} \E\!\left[|\bbrakett{K_{x,i}}{K_{x,j}}|^2\right]=\frac{16}{(r-\eta)^4}\sum_{i,j=1}^r\E\!\left[\left|\tr(K_{x,i}^\dag K_{x,j})\right|^2\right],\nonumber\\
&\leq \frac{16}{(r-\eta)^4}\cdot\frac{4(r-\eta)^2}{r-\eta}\leq \frac{64}{(r-\eta)^3},\label{eq-3240303}
\end{align}
where \cref{eq-3240259} is due to a similar argument as that in \cref{eq-3200006}, and in \cref{eq-3240303}, the first inequality is due to \cref{lemma-3190121} and that $\Delta$ is an isometry from $\mathcal{H}_\mathrm{A}[1:r-\eta]$ to $\mathcal{H}_\mathrm{B}[d_2]\otimes\mathcal{H}_{\mathrm{anc}}[1:r-\eta]$ (further note that $\dim(\mathcal{H}_{\mathrm{anc}}[1:r-\eta])\leq \dim(\mathcal{H}_\mathrm{A}[1:r-\eta])\dim(\mathcal{H}_\mathrm{B}[d_2])$ so we choose $k=1$ in \cref{lemma-3190121}).

\subsection{Type II instance: $d_1< rd_2$ with $d_1+r\leq rd_2$, $r\leq d_1$}\label{sec-3230333}

\subsubsection{Construction}\label{sec-3250054}
In this subsection, we will use the definition in \cref{sec-1110250}, where the parameter $d, D$ in \cref{sec-1110250} correspond to $d_1$ and $rd_2$ here.

Suppose $d_2>1$, $d_1+r\leq rd_2$ and $r\leq d_1$. Let $\sr\coloneqq \min\{(rd_2-d_1)/d_1,1\}\in(0,1]$ and let 
\[\underline{\chi}\coloneqq \lfloor \tfrac{d_1}{r}\rfloor,\quad \overline{\chi}\coloneqq \lceil\tfrac{d_1}{r}\rceil, \quad \zeta\coloneqq \min\{\underline{\chi},d_2-\overline{\chi}\}.\]
We note the following relations can be easily verified:
\begin{itemize}
\item $\underline{\chi}\geq 1$, and $d_2-\overline{\chi}\geq 1$ (since $d_1/r+1\leq d_2$),
\item $r(d_2-\overline{\chi})\geq rd_2-d_1-r\geq \tfrac{\sr}{1+\sr}rd_2-r$, thus $r(d_2-\overline{\chi})\geq \max\{r,\tfrac{\sr}{1+\sr}rd_2-r\}\geq \frac{\sr}{2+2\sr}rd_2\geq \frac{\sr}{4}rd_2$,
\item $r\underline{\chi}\geq d_1/2$ and $r(d_2-\overline{\chi})\geq \frac{\sr}{4}rd_2>  \frac{\sr}{4}d_1$, and thus $r\zeta\geq \frac{\sr}{4}d_1$.
\end{itemize}
Let $\varepsilon\in(0,1/2)$ and $\Sigma_\varepsilon\coloneqq \sqrt{1-\varepsilon^2}\sum_{i=1}^{r\zeta}\ketbra{i}{i}_\mathrm{A}+\sum_{i=r\zeta+1}^{d_1}\ketbra{i}{i}_\mathrm{A}$ be a diagonal matrix.
Let $g:[d_1]\rightarrow [d_2]\times [r]$ be a bijective function $g(i)=(\lfloor \frac{i-1}{r}\rfloor+1,i-r\lfloor\frac{i-1}{r}\rfloor)$, i.e., $g$ maps the integers in $[d_1]$ to $[d_2]\times [r]$ in row-major order. Then, we define the linear operator $V_0:\mathcal{H}_\mathrm{A}\rightarrow\mathcal{H}_\mathrm{B}[1:\overline{\chi}]\otimes \mathcal{H}_\mathrm{anc}$ as 
\begin{equation}\label{eq-3242139}
V_0\coloneqq \sum_{i=1}^{d_1} \Big(\ket{g(i)_1}_\mathrm{B}\otimes \ket{g(i)_2}_\mathrm{anc}\bra{i}_\mathrm{A}\Big)\cdot \Sigma_\varepsilon.
\end{equation}
One can easily check that $V_0$ can be written as
\begin{equation}\label{eq-3242146}
V_0= \sum_{i=1}^r \ket{i}_\mathrm{anc}\otimes K_i,
\end{equation}
for $K_i:\mathcal{H}_{\mathrm{A}}\rightarrow\mathcal{H}_\mathrm{B}[1:\overline{\chi}]$ satisfying
\begin{equation}\label{eq-3242152}
\sum_{i=1}^r K_i^\dag K_i=\Sigma_\varepsilon^2 \quad \textup{and}\quad \frac{1}{2}\underline{\chi}\cdot\mathbbm{1}_{i=j} \leq \left|\tr\!\left(K_i^\dag K_j\right)\right|\leq   \overline{\chi}\cdot\mathbbm{1}_{i=j}.
\end{equation}
Define $\Delta:\mathcal{H}_{\mathrm{A}}[1:r\zeta]\rightarrow \mathcal{H}_{\mathrm{B}}[\overline{\chi}+1:d_2]\otimes \mathcal{H}_\mathrm{anc}$ be an arbitrary (but fixed) isometry.
Then, for $U\in\mathbb{U}_{r(d_2-\overline{\chi})}$, we define the isometry $V_{\varepsilon,U}:\mathcal{H}_\mathrm{A}\rightarrow\mathcal{H}_\mathrm{B}\otimes\mathcal{H}_{\mathrm{anc}}$ as
\begin{equation}\label{eq-3191424}
V_{\varepsilon,U}\coloneqq V_0+\varepsilon U\Delta,
\end{equation}
where $U$ acts on $\mathcal{H}_{\mathrm{B}}[\overline{\chi}+1:d_2]\otimes\mathcal{H}_\mathrm{anc}$.
We can verify that $V_{\varepsilon,U}$ is indeed an isometry from \cref{eq-3242139} and the definition of $U\Delta$. 
Note that the image of $U\Delta$ is orthogonal to the image of $V_0$. 
For clarity, we also illustrate our construction in \cref{fig-3290256}.

\begin{figure}[t]
    \centering
    \begin{subfigure}[b]{0.7\linewidth}
    \centering
    \includegraphics[width=1.0\linewidth]{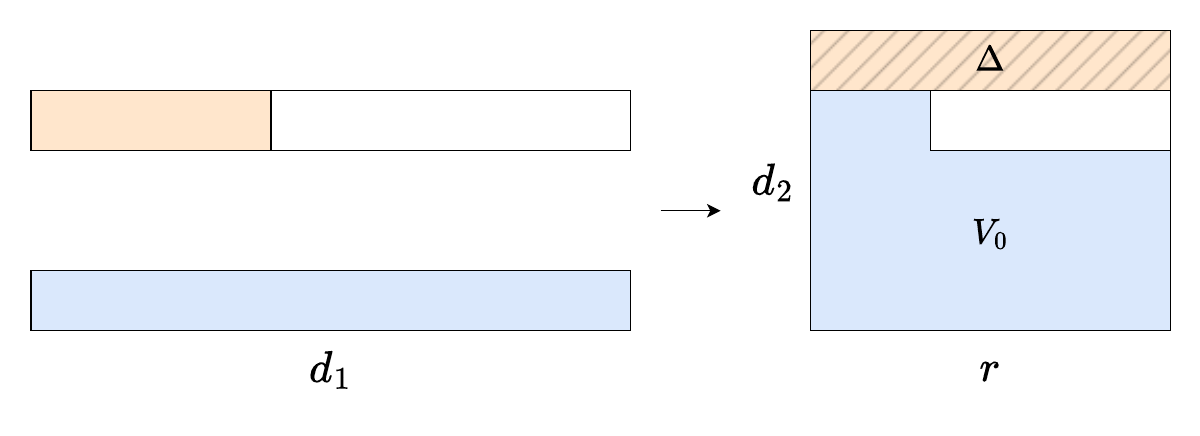}
    \vspace{-10mm}
    \caption{Case $1$}
    \end{subfigure}\\
    \vspace{3mm}
    \begin{subfigure}[b]{0.7\linewidth}
    \centering
    \includegraphics[width=1.0\linewidth]{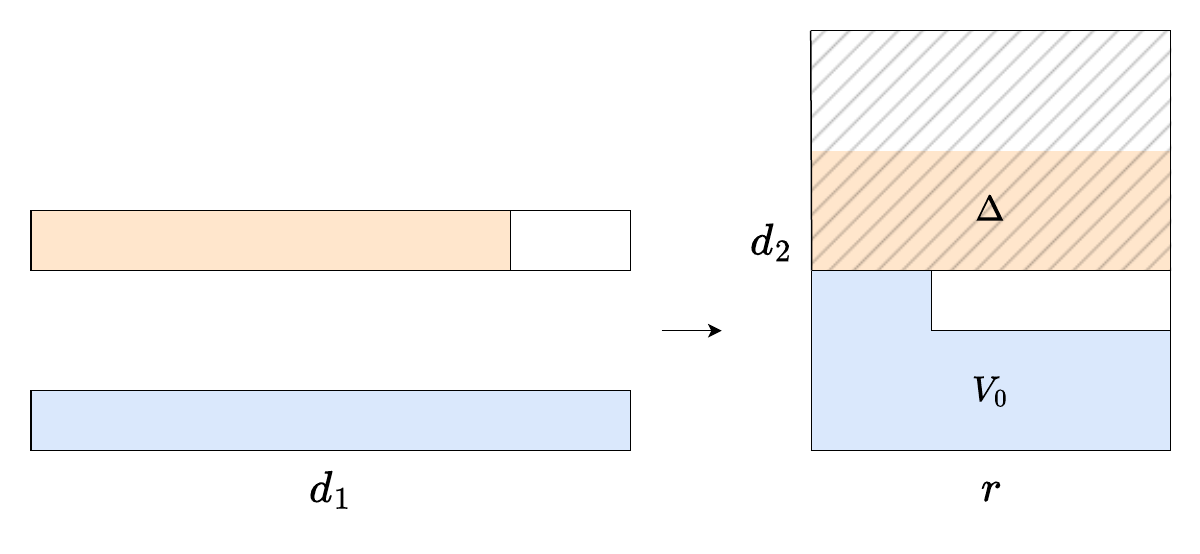}
    \vspace{-10mm}
    \caption{Case $2$}
    \end{subfigure}
    \caption{Illustration of our construction. There are two cases depending on whether Case $1$ : $\floor{\frac{d_1}{r}}\ge d_2- \ceil{\frac{d_1}{r}}$ or  Case $2$ : $\floor{\frac{d_1}{r}} < d_2- \ceil{\frac{d_1}{r}}$. We define linear operators $V_0$ and $\Delta$ from the $d_1$-dimensional space $\mathcal{H}_\mathrm{A}$ to the $d_2r$-dimensional space $\mathcal{H}_\mathrm{B}\otimes\mathcal{H}_\mathrm{anc}$. The Haar randomness is applied on the hatched area.}
    \label{fig-3290256}
\end{figure}

\subsubsection{Existence for Choi trace norm}

Then, we prove that there exists a large set of isometries $V_{\varepsilon,U}$ with good separation property.
\begin{theorem}\label{thm-3191542}
Suppose $\varepsilon\leq \sr^{3/2}/25600$. There exists a finite subset $\mathcal{N}$ of $\{V_{\varepsilon,U}\,|\, U\in\mathbb{U}_{r(d_2-\overline{\chi})}\}$ for $V_{\varepsilon,U}$ defined in \cref{eq-3191424} with cardinality 
\[|\mathcal{N}|\geq \exp\!\left(\frac{\sr^2(rd_2-d_1)\min\{d_1,rd_2-d_1\}}{3840001}\right),\]
such that for any $V_x\neq V_y\in\mathcal{N}$, if we set $\mathcal{E}_x=\tr_{\mathcal{H}_\mathrm{anc}}(V_x(\cdot)V_x^\dag)$  and $\mathcal{E}_y=\tr_{\mathcal{H}_\mathrm{anc}}(V_y(\cdot)V_y^\dag)$, then
\[\frac{1}{d_1}\|C_{\mathcal{E}_x}-C_{\mathcal{E}_y}\|_1\geq \frac{\sr^{3/2}}{12800}\varepsilon.\]
\end{theorem}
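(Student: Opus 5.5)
I will follow the template of \cref{thm-1140332}. First I reduce the separation to a cross term. For $U_x\neq U_y$, expand $C_{\mathcal{E}_x}-C_{\mathcal{E}_y}=\tr_{\mathcal{H}_\mathrm{anc}}(\kettbbra{V_{\varepsilon,U_x}}{V_{\varepsilon,U_x}}-\kettbbra{V_{\varepsilon,U_y}}{V_{\varepsilon,U_y}})$. The $V_0$--$V_0$ terms cancel. The $\varepsilon^2$ terms have trace norm at most $2\varepsilon^2\tr(\Delta^\dag\Delta)=2\varepsilon^2 r\zeta\le 2\varepsilon^2 d_1$. For the cross terms, $\tr_{\mathcal{H}_\mathrm{anc}}(\kettbbra{V_0}{(U_x-U_y)\Delta})$ has support in $\mathcal{H}_\mathrm{A}\otimes\mathcal{H}_\mathrm{B}[\overline{\chi}+1:d_2]$ and image in $\mathcal{H}_\mathrm{A}\otimes\mathcal{H}_\mathrm{B}[1:\overline{\chi}]$. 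These are orthogonal, so \cref{fact-3191602} gives that the Hermitian cross part has trace norm exactly twice that of this operator. Hence
\[\tfrac{1}{d_1}\|C_{\mathcal{E}_x}-C_{\mathcal{E}_y}\|_1\ge 2\varepsilon f(U_x,U_y)-2\varepsilon^2,\qquad f=\|F\|_1,\quad F=\tfrac{1}{d_1}\tr_{\mathcal{H}_\mathrm{anc}}\!\big(\kett{V_0}(\bbra{U_x\Delta}-\bbra{U_y\Delta})\big).\]
It therefore suffices to find a set $\mathcal{M}$ of the stated size with $f\ge \sr^{3/2}/6400$ on all distinct pairs. Then $\varepsilon\le\sr^{3/2}/25600$ yields the bound $\sr^{3/2}\varepsilon/12800$.

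Next I compute moments. Write $F=\frac1{d_1}\sum_{i}\kett{K_i}(\bbra{K_{x,i}}-\bbra{K_{y,i}})$ with $K_{x,i}=\bra{i}_\mathrm{anc}U_x\Delta$. For the second moment, \cref{eq-3242152} makes the $\kett{K_i}$ orthogonal with $\bbrakett{K_i}{K_i}\ge\underline{\chi}/2$. Schur's lemma on $\mathbb{U}_{r(d_2-\overline{\chi})}$ gives $\E\bbrakett{K_{t,i}}{K_{t',i}}=\mathbbm{1}_{t=t'}\,r\zeta/r=\mathbbm{1}_{t=t'}\zeta$. So
\[\E\tr|F|^2\ge \frac{2}{d_1^2}\cdot r\cdot\frac{\underline{\chi}}{2}\cdot\zeta\ge \frac{\underline{\chi}\, r\zeta}{d_1^2}\gtrsim\frac{\sr}{r},\]
using $r\underline{\chi}\ge d_1/2$ and $r\zeta\ge\sr d_1/4$. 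For the fourth moment I use $|\tr(K_i^\dag K_j)|\le\overline{\chi}\mathbbm{1}_{i=j}$ and the Cauchy--Schwarz step \cref{eq-3200006} to reduce to $\frac{16\overline{\chi}^2}{d_1^4}\sum_{i,j}\E|\tr(K_{x,i}^\dag K_{x,j})|^2$. I bound this sum by \cref{lemma-3190121} with $\Delta$ an isometry from an $r\zeta$-dimensional space into $r(d_2-\overline{\chi})$ dimensions, aiming for $O(r\zeta^2)$ or similar. Then $\overline{\chi}\le 2d_1/r$ gives $\E\tr|F|^4=O(1/r^3)$. The main obstacle is choosing the correct $k$ in \cref{lemma-3190121}, since here the output dimension is not tied to $r$ as cleanly as in \cref{sec-3211806}. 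I will try $k=\zeta$ or $k=d_2-\overline{\chi}$, whichever the lemma's hypothesis permits, and track constants so that the ratio stays $O(1)$ after normalizing by $r\zeta$.

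Finally I apply Hölder and concentration. Hölder gives $\E f\ge(\E\tr|F|^2)^{3/2}(\E\tr|F|^4)^{-1/2}\gtrsim\sr^{3/2}$, comfortably above $\sr^{3/2}/6400$ after constants. By \cref{lemma-3192212}, $f$ is $\sqrt{2/d_1}$-Lipschitz in $(U_x,U_y)$. Then \cref{lem:corollary17} on $\mathbb{U}_{r(d_2-\overline{\chi})}$ gives a tail bound $\exp(-c\,r(d_2-\overline{\chi})\,d_1\,\sr^3)$. Since $r(d_2-\overline{\chi})\ge\sr rd_2/4\ge\sr(rd_2-d_1)/4$ and $\sr d_1=\min\{d_1,rd_2-d_1\}$, the exponent is at least $c'\sr^2(rd_2-d_1)\min\{d_1,rd_2-d_1\}$. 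A union bound over $\exp(\cdot/3840001)$ independent Haar samples then yields $\mathcal{M}$ with positive probability, and $\mathcal{N}=\{V_{\varepsilon,U}:U\in\mathcal{M}\}$.
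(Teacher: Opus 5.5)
Your route is the paper's: reduce to the cross term via \cref{fact-3191602}, bound $\E\tr|F|^2$ and $\E\tr|F|^4$, apply H\"older, then use \cref{lemma-3192212}, \cref{lem:corollary17} and a union bound. However, two steps as written do not yet give the statement.

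\textbf{The fourth moment.} The step you flag as the main obstacle is not actually an obstacle, but both of your proposed choices of $k$ would fail. Apply \cref{lemma-3190121} with input dimension $r\zeta$, block $\mathcal{H}_\mathrm{B}[\overline{\chi}+1:d_2]$ and ancilla dimension $r$. Its hypotheses then read $r\zeta/(d_2-\overline{\chi})\le r\le k\,r\zeta(d_2-\overline{\chi})$. These hold with $k=1$, because $1\le\zeta\le d_2-\overline{\chi}$. The lemma then gives $\sum_{i,j}\E|\tr(K_{x,i}^\dag K_{x,j})|^2\le 4r\zeta^2$. Hence $\E\tr|F|^4\le 64r\overline{\chi}^2\zeta^2/d_1^4\le 256/r^3$, using $\zeta\le\underline{\chi}\le d_1/r$ and $\overline{\chi}\le 2d_1/r$; this is what the paper does. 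The choices $k=\zeta$ or $k=d_2-\overline{\chi}$ also satisfy the hypotheses, but the lemma's bound scales as $2(1+k)$. You would only get $\E\tr|F|^4=O(k/r^3)$, and through H\"older $\E f\gtrsim\sr^{3/2}/\sqrt{k}$. Since $k$ is then unbounded, the $\sr^{3/2}$ threshold is lost.

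\textbf{The cardinality.} Starting from $r(d_2-\overline{\chi})\ge\sr(rd_2-d_1)/4$, the tail exponent is only $\gtrsim\sr^4d_1(rd_2-d_1)=\sr^3(rd_2-d_1)\min\{d_1,rd_2-d_1\}$. That is one power of $\sr$ short of what you claim, and $\sr$ can be small in this regime. You need a lower bound on $r(d_2-\overline{\chi})$ that does not involve $\sr$.
\begin{itemize}
\item The paper uses $r(d_2-\overline{\chi})=r\lfloor(rd_2-d_1)/r\rfloor\ge(rd_2-d_1)/2$, which is valid since $rd_2-d_1\ge r$.
\item Your own intermediate bound $\sr rd_2/4$ also works, because $\sr rd_2\ge rd_2-d_1$ in both cases of the minimum. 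However, it only gives the constant $1/4$.
\end{itemize}
To reach the specific denominator $3840001$ you also need the paper's second-moment constant. The paper gets $\E\tr|F|^2\ge\zeta/d_1\ge\sr/(4r)$ from $\sum_i\tr(K_i^\dag K_i)=\tr(V_0^\dag V_0)\ge d_1/2$. Your termwise bound $\underline{\chi}/2$ loses a factor $2$ here. With $\sr/(4r)$ and $256/r^3$ one gets $\E f\ge\sr^{3/2}/128$. This gives a deviation $t\ge\sr^{3/2}/200$, whether you use your threshold $\sr^{3/2}/6400$ or the paper's $\sr^{3/2}/12800$. The tail is then $\exp(-\sr^3d_1(rd_2-d_1)/1920000)$, which yields the stated cardinality.
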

\begin{proof}
First, we need the following lemma.
\begin{lemma}\label{lemma-3191542}
    There exists a finite subset $\mathcal{M}\subseteq \mathbb{U}_{r(d_2-\overline{\chi})}$ with cardinality $|\mathcal{M}|\geq \exp(\sr^3d_1(rd_2-d_1)/3840001)$ such that for any $U_x\neq U_y\in\mathcal{M}$, 
\begin{equation}\label{eq-3191435}
\frac{1}{d_1}\left\|\tr_{\mathcal{H}_\mathrm{anc}}\!\left(\kett{V_0}\Big(\bbra{U_x\Delta}-\bbra{U_y\Delta}\Big)\right)\right\|_1\geq \frac{\sr^{3/2}}{12800}.
\end{equation}
\end{lemma}
The proof of \cref{lemma-3191542} is deferred.

Now, we are able to prove \cref{thm-3191542}. Let $\mathcal{M}$ be the set given in \cref{lemma-3191542}.
Then, for any $U_x\neq U_y\in\mathcal{M}$, if we set $\mathcal{E}_x=\tr_{\mathcal{H}_\mathrm{anc}}(V_{\varepsilon,U_x}(\cdot)V_{\varepsilon,U_{x}}^\dag)$  and $\mathcal{E}_y=\tr_{\mathcal{H}_\mathrm{anc}}(V_{\varepsilon,U_y}(\cdot)V_{\varepsilon,U_y}^\dag)$, we have
\begin{align}
&\|C_{\mathcal{E}_{x}}-C_{\mathcal{E}_{y}}\|_1 \nonumber \\
=&\Big\|\tr_{\mathcal{H}_{\mathrm{anc}}}\!\left(\kettbbra{V_{\varepsilon,U_x}}{V_{\varepsilon,U_x}}\right)-\tr_{\mathcal{H}_{\mathrm{anc}}}\!\left(\kettbbra{V_{\varepsilon,U_y}}{V_{\varepsilon,U_y}}\right)\Big\|_1 \nonumber\\
=&\bigg\|\varepsilon^2\Big(\tr_{\mathcal{H}_{\mathrm{anc}}}\!\left(\kettbbra{U_x\Delta}{U_x\Delta}\right)-\tr_{\mathcal{H}_{\mathrm{anc}}}\!\left(\kettbbra{U_y\Delta}{U_y\Delta}\right)\Big) \nonumber\\
&\quad\quad\quad +\varepsilon\tr_{\mathcal{H}_{\mathrm{anc}}}\!\Big(\kettbbra{V_0}{(U_x-U_y)\Delta}\Big) + \varepsilon\tr_{\mathcal{H}_{\mathrm{anc}}}\!\Big(\kettbbra{(U_x-U_y)\Delta}{V_0}\Big) \bigg\|_1 \nonumber \\
\geq & \varepsilon\bigg\|\tr_{\mathcal{H}_{\mathrm{anc}}}\!\Big(\kettbbra{V_0}{(U_x-U_y)\Delta}\Big) + \tr_{\mathcal{H}_{\mathrm{anc}}}\!\Big(\kettbbra{(U_x-U_y)\Delta}{V_0}\Big)\bigg\|_1 - 2 \varepsilon^2 d_1 \label{eq-3191437} \\
=& 2\varepsilon\bigg\|\tr_{\mathcal{H}_{\mathrm{anc}}}\!\Big(\kettbbra{V_0}{(U_x-U_y)\Delta}\Big)\bigg\|_1-2\varepsilon^2 d_1 \label{eq-3191544}\\
\geq & \frac{2\sr^{3/2}}{12800}\varepsilon d_1 -2\varepsilon^2d_1 \label{eq-3191545} \\
\geq & \frac{\sr^{3/2}}{12800}\varepsilon d_1. \label{eq-3191546}
\end{align}
In \cref{eq-3191437}, we used $\|\tr_{\mathcal{H}_{\mathrm{anc}}}(\kettbbra{U\Delta}{U\Delta})\|_1 = \tr(\kettbbra{U\Delta}{U\Delta})=r\zeta\leq r\underline{\chi}\leq d_1$. 
In \cref{eq-3191544} we used the fact that $\tr_{\mathcal{H}_{\mathrm{anc}}}\!\Big(\kettbbra{V_0}{(U_x-U_y)\Delta}\Big)$ is a linear operator with support in  $\mathcal{\mathcal{H}_\mathrm{A}}[1:r\zeta]\otimes\mathcal{H}_{\mathrm{B}}[\overline{\chi}+1:d_2]$ and image in $\mathcal{H}_\mathrm{A}\otimes\mathcal{H}_{\mathrm{B}}[1:\overline{\chi}]$ 
which is orthogonal to its support; and then we used \cref{fact-3191602}.
In \cref{eq-3191545} we used \cref{eq-3191435}.
In \cref{eq-3191546} we used that $\varepsilon\leq \sr^{3/2}/25600$.
Therefore, we can lower bound the Choi trace norm 
\[\frac{1}{d_1}\left\|C_{\mathcal{E}_{x}}-C_{\mathcal{E}_{y}}\right\|_1\geq \frac{\sr^{3/2}}{12800}\varepsilon.\]
Thus, the set $\mathcal{N}=\{V_{\varepsilon,U} \,|\, U\in\mathcal{M}\}$ is the desired set, and the cardinality is $\exp(\sr^3 d_1(rd_2-d_1)/3840001)=\exp(\sr^2 (rd_2-d_1)\min\{d_1,rd_2-d_1\}/3840001)$.
\end{proof}

Now, we give the proof of \cref{lemma-3191542}.
\begin{proof}[Proof of \cref{lemma-3191542}]
First, we need the following lemma:
\begin{lemma}\label{lemma-3191720}
For $U_x,U_y\in\mathbb{U}_{r(d_2-\overline{\chi})}$, let us define 
\[F(U_x,U_y)=\frac{1}{d_1}\tr_{\mathcal{H}_\mathrm{anc}}\!\left(\kett{V_0}\Big(\bbra{U_x\Delta}-\bbra{U_y\Delta}\Big)\right),\]
then the function $f(U_x,U_y)=\|F(U_x,U_y)\|_1=\tr(|F(U_x,U_y)|)$ is $\sqrt{\frac{2}{d_1}}$-Lipschitz with respect to the $\ell_2$-sum of the $2$-norms (Frobenius norm). 
Furthermore, for independent random $U_x,U_y\sim \mathbb{U}_{r(d_2-\overline{\chi})}$, we have $\E\!\left[\tr(|F(U_x,U_y)|^2)\right]\geq\frac{\sr}{4r}$, and $\E\!\left[\tr(|F(U_x,U_y)|^4)\right]\leq \frac{256}{r^3}$.
\end{lemma}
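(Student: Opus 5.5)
Our plan follows the template of \cref{lemma-1140127} and \cref{lemma-3240011}, adapted to the present construction. \textbf{Lipschitz continuity.} We invoke \cref{lemma-3192212} directly. Here $V_0$ and $U\Delta$ are operators on an input space of dimension $d_1$; $\Delta$ is an isometry on $\mathcal{H}_\mathrm{A}[1:r\zeta]$, and $\|V_0\|_{\rm op}\le 1$. The function $f$ is the trace norm of $\frac{1}{d_1}\tr_{\mathcal{H}_\mathrm{anc}}(\kett{V_0}(\bbra{U_x\Delta}-\bbra{U_y\Delta}))$, so it inherits the constant $\sqrt{2/d_1}$ exactly as before.

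\textbf{Second moment.} We write $K_{x,i}=\bra{i}_\mathrm{anc}U_x\Delta$, and similarly $K_{y,i}$, so that $F=\frac{1}{d_1}\sum_{i=1}^r\kett{K_i}(\bbra{K_{x,i}}-\bbra{K_{y,i}})$ with $K_i$ from \cref{eq-3242146}. The $\kett{K_i}$ are pairwise orthogonal by \cref{eq-3242152}, so
\[\tr(|F|^2)=\frac{1}{d_1^2}\sum_i\bbrakett{K_i}{K_i}\,\|\kett{K_{x,i}}-\kett{K_{y,i}}\|^2.\]
By Schur's lemma for $U\sim\mathbb{U}_{r(d_2-\overline{\chi})}$ acting on $\mathcal{H}_\mathrm{B}[\overline{\chi}+1:d_2]\otimes\mathcal{H}_\mathrm{anc}$, we get $\E[\tr(K_{t_1,i}^\dag K_{t_2,i})]=\mathbbm{1}_{t_1=t_2}\cdot\frac{(d_2-\overline{\chi})\,\tr(\Delta^\dag\Delta)}{r(d_2-\overline{\chi})}=\mathbbm{1}_{t_1=t_2}\,\frac{r\zeta}{r}$, since the ancilla slice $\ket{i}$ occupies a $(d_2-\overline{\chi})$-dimensional subspace. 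Hence
\[\E\tr(|F|^2)=\frac{2\zeta}{d_1^2}\sum_i\bbrakett{K_i}{K_i}\ge\frac{2\zeta}{d_1^2}\cdot r\cdot\frac{\underline{\chi}}{2}.\]
Using $r\zeta\ge\frac{\sr}{4}d_1$ and $\underline{\chi}\ge d_1/(2r)$ we obtain at least $\frac{\sr}{4r}$.

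\textbf{Fourth moment.} We expand as in \cref{eq-1132143}. Using $|\tr(K_i^\dag K_j)|\le\overline{\chi}\,\mathbbm{1}_{i=j}$, the sum is bounded by $\frac{\overline{\chi}^2}{d_1^4}\sum_{i,j}\E|(\bbra{K_{x,i}}-\bbra{K_{y,i}})(\kett{K_{x,j}}-\kett{K_{y,j}})|^2$. Next we split into the four cross terms and use the Cauchy–Schwarz trick \cref{eq-3200006} to reduce to $\frac{16\overline{\chi}^2}{d_1^4}\sum_{i,j}\E|\tr(K_{x,i}^\dag K_{x,j})|^2$.

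The main obstacle is bounding $\sum_{i,j}\E|\tr(K_{x,i}^\dag K_{x,j})|^2$. We would try \cref{lemma-3190121} first, choosing $k$ with $r\le k\cdot r\zeta\,(d_2-\overline{\chi})$; $k=1$ should suffice since $\zeta\ge1$. This should give a bound of order $(r\zeta)^2 k/r$ or similar, at most $O(r\zeta^2)$. If it instead yields something like $4(r\zeta)^2/r$, then plugging in $\overline{\chi}\le 2d_1/r$ and $r\zeta\le d_1$ gives
\[\frac{16\cdot4d_1^2}{r^2d_1^4}\cdot\frac{4d_1^2}{r}=\frac{256}{r^3},\]
which is the claimed constant. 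We would therefore verify carefully that the hypothesis of \cref{lemma-3190121} holds with the dimensions here. If the bound loses a factor, we would fall back on an explicit Weingarten computation via \cref{coro-3221651}, using $\Delta^\dag\Delta=I$ on its domain.
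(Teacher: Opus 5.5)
Your Lipschitz argument and your fourth-moment argument are the same as the paper's, and both are correct. For the fourth moment, your hedge is unnecessary. \cref{lemma-3190121} applies with input space $\mathcal{H}_\mathrm{A}[1:r\zeta]$, output $\mathcal{H}_\mathrm{B}[\overline{\chi}+1:d_2]$, and $k=1$. Both hypotheses hold, because $\zeta\le d_2-\overline{\chi}$ by the definition of $\zeta$, and $\zeta\ge 1$. The lemma then gives exactly $4(r\zeta)^2/r$, and your final arithmetic producing $256/r^3$ is precisely the paper's.

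The problem is the last step of the second moment. Your identity $\E\tr(|F|^2)=\frac{2\zeta}{d_1^2}\sum_i\bbrakett{K_i}{K_i}$ is correct. You then lower-bound each term by $\tfrac12\underline{\chi}$, which gives $\frac{r\zeta\,\underline{\chi}}{d_1^2}$. Inserting $r\zeta\ge\frac{\sr}{4}d_1$ and $\underline{\chi}\ge\frac{d_1}{2r}$ yields only $\frac{\sr}{8r}$, not the claimed $\frac{\sr}{4r}$.

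This is not a typo. The per-index bound from \cref{eq-3242152}, combined with $r\underline{\chi}\ge d_1/2$, only gives $\sum_i\bbrakett{K_i}{K_i}\ge d_1/4$. It ignores the ancilla indices carrying $\overline{\chi}$ rows, and $r\underline{\chi}$ really can be about $d_1/2$ (e.g.\ $d_1=2r-1$).

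The fix, which is what the paper does, is to bound the sum globally:
\[\sum_i\bbrakett{K_i}{K_i}=\tr(V_0^\dag V_0)=\tr(\Sigma_\varepsilon^2)\ge(1-\varepsilon^2)d_1\ge d_1/2.\]
This gives $\E\tr(|F|^2)\ge\zeta/d_1\ge\frac{\sr}{4r}$. The factor of $2$ matters downstream: \cref{lemma-3191542} uses $\frac{\sr}{4r}$ and $\frac{256}{r^3}$ to conclude $\E\tr(|F|)>\sr^{3/2}/128$, and that would fail with $\frac{\sr}{8r}$.
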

The proof of \cref{lemma-3191720} is deferred to \cref{sec-3211819}.

By the H\"older's inequality we have
\[\E\!\left[\tr\!\left(|F(U_x,U_y)|^2\right)\right]\leq \E\!\left[\tr\!\left(|F(U_x,U_y)|^4\right)\right]^{1/3}\E\!\left[\tr\!\left(|F(U_x,U_y)|\right)\right]^{2/3},\]
which, combined with \cref{lemma-3191720}, implies
\[\E\!\left[\tr\!\left(|F(U_x,U_y)|\right)\right]^2\geq \frac{\sr^3}{16384}.\]
Thus $\E\!\left[\tr\!\left(|F(U_x,U_y)|\right)\right]> \sr^{3/2}/128$.
Then, since the fucntion $f(U_x,U_y)=\tr(|F(U_x,U_y)|)$ is $\sqrt{\frac{2}{d_1}}$-Lipschitz, we can use \cref{lem:corollary17} to prove the concentration result:
\[\Pr\!\left[\tr\!\left(|F(U_x,U_y)|\right)\leq \frac{\sr^{3/2}}{12800}\right]\leq \exp\!\left(-\frac{d_1r(d_2-\overline{\chi})}{2}\cdot \frac{\sr^3}{40000\cdot 12}\right)\leq \exp\!\left(-\frac{\sr^3 d_1(rd_2-d_1)}{1920000}\right),\]
where we used $r(d_2-\overline{\chi})=r\lfloor \frac{rd_2-d_1}{r} \rfloor \geq \frac{rd_2-d_1}{2}$ (since $rd_2-d_1\geq r$).
Then, we independently sample $\exp(\sr^3d_1(rd_2-d_1)/3840001)$ Haar random unitaries in $\mathbb{U}_{r(d_2-\overline{\chi})}$ and the union bound shows that there exists a non-zero probability that for any pair $U_x, U_y$, we have $\tr(|F(U_x,U_y)|)\geq \sr^{3/2}/12800$.
Thus, there exists a set with cardinality $\geq \exp(\sr^3d_1(rd_2-d_1)/3840001)$ such that \cref{eq-3191435} holds.
\end{proof}

\subsubsection{Second and fourth moments: proof of Lemma \ref{lemma-3191720}}\label{sec-3211819}
The Lipschitz continuity can be seen from \cref{lemma-3192212}.

Define $K_{x,i}=\bra{i}_\mathrm{anc} U_x\Delta$ and $K_{y,i}=\bra{i}_\mathrm{anc} U_y\Delta$. Thus
$F(U_x,U_y)=\frac{1}{d_1}\sum_{i=1}^r\kett{K_i}\Big(\bbra{K_{x,i}}-\bbra{K_{y,i}}\Big)$.
Then, we note that
\begin{align}
\E\!\left[\tr(|F(U_x,U_y)|^2)\right]&=\frac{1}{d_1^2}\E\!\left[\tr\!\left(\sum_{i,j=1}^r \kett{K_i}\Big(\bbra{K_{x,i}}-\bbra{K_{y,i}}\Big)\Big(\kett{K_{x,j}}-\kett{K_{y,j}}\Big)\bbra{K_j}\right)\right] \nonumber\\
&=\frac{1}{d_1^2}\E\!\left[\sum_{i=1}^r \bbrakett{K_i}{K_i}\Big(\bbra{K_{x,i}}-\bbra{K_{y,i}}\Big)\Big(\kett{K_{x,i}}-\kett{K_{y,i}}\Big)\right]\nonumber \\
&=\frac{2\zeta}{d_1^2}\sum_{i=1}^r\bbrakett{K_i}{K_i}  \label{eq-3192313} \\
&\geq\frac{\zeta}{d_1}\geq \frac{\sr}{4r},\label{eq-3192315}
\end{align}
where \cref{eq-3192313} is because for $t_1,t_2\in\{x,y\}$, we have
\begin{align}
\E\!\left[\bbrakett{K_{t_1,i}}{K_{t_2,i}}\right]&=\E\!\left[\tr\!\left(K_{t_1,i}^\dag K_{t_2,i}\right)\right] =\tr\!\left(\Delta^\dag \E\!\left[U_{t_1}^\dag \ketbra{i}{i}_\mathrm{anc}U_{t_2} \right]\Delta\right) \nonumber\\
&=\mathbbm{1}_{t_1=t_2}\cdot \frac{1}{r}  \tr(\Delta^\dag \Delta) \label{eq-3192320}\\
&=\mathbbm{1}_{t_1=t_2}\cdot \zeta, \nonumber
\end{align}
where \cref{eq-3192320} is due to Schur's lemma, 
\cref{eq-3192315} is because $\sum_{i=1}^r\tr(K_i^\dag K_i)=\tr(V_0^\dag V_0)\geq \frac{1}{2}\tr(I_{\mathcal{H}_\mathrm{A}})=d_1/2$.

Then, we also have
\begin{align}
\E\!\left[\tr(|F(U_x,U_y)|^4)\right]&=\frac{1}{d_1^4}\E\!\Bigg[\sum_{i,j,k,l=1}^r \tr\!\bigg(\kett{K_i}\Big(\bbra{K_{x,i}}-\bbra{K_{y,i}}\Big)\Big(\kett{K_{x,j}}-\kett{K_{y,j}}\Big)\bbra{K_j} \nonumber\\
&\qquad\qquad\qquad\qquad\kett{K_{k}}\Big(\bbra{K_{x,k}}-\bbra{K_{y,k}}\Big)\Big(\kett{K_{x,l}}-\kett{K_{y,l}}\Big)\bbra{K_{l}}\bigg)\Bigg]  \nonumber \\ 
&\leq \frac{\overline{\chi}^2}{d_1^4}\sum_{i,j=1}^r\E\!\bigg[\left|\Big(\bbra{K_{x,i}}-\bbra{K_{y,i}}\Big)\Big(\kett{K_{x,j}}-\kett{K_{y,j}}\Big)\right|^2\bigg]\label{eq-3192355} \\
&\leq \frac{4\overline{\chi}^2}{d_1^4}\sum_{i,j=1}^r\E\!\left[|\bbrakett{K_{x,i}}{K_{x,j}}|^2+|\bbrakett{K_{y,i}}{K_{y,j}}|^2+|\bbrakett{K_{x,i}}{K_{y,j}}|^2+|\bbrakett{K_{y,i}}{K_{x,j}}|^2\right] \nonumber\\
&\leq \frac{8\overline{\chi}^2}{d_1^4}\sum_{i,j=1}^r\E\!\left[|\bbrakett{K_{x,i}}{K_{x,j}}|^2+|\bbrakett{K_{y,i}}{K_{y,j}}|^2\right]\label{eq-3192359}\\
&=\frac{16\overline{\chi}^2}{d_1^4}\sum_{i,j=1}^r \E\!\left[|\bbrakett{K_{x,i}}{K_{x,j}}|^2\right]=\frac{16\overline{\chi}^2}{d_1^4}\sum_{i,j=1}^r\E\!\left[\left|\tr(K_{x,i}^\dag K_{x,j})\right|^2\right]\nonumber\\
&\leq \frac{16 \overline{\chi}^2}{d_1^4}\cdot\frac{4r^2\zeta^2}{r}= \frac{64\cdot r\overline{\chi}^2\zeta^2}{d_1^4}\leq\frac{64\cdot r\overline{\chi}^2\underline{\chi}^2}{d_1^4} \leq \frac{256}{r^3},\label{eq-3200000}
\end{align}
where \cref{eq-3192355} is due to \cref{eq-3242152}, \cref{eq-3192359} due to the similar argument as that in \cref{eq-3200006}, 
and in \cref{eq-3200000}, the first inequality is due to \cref{lemma-3190121} and that $\Delta$ is an isometry from $\mathcal{H}_\mathrm{A}[1:r\zeta]$ to $\mathcal{H}_\mathrm{B}[\overline{\chi}+1:d_2]\otimes\mathcal{H}_{\mathrm{anc}}$ (further note that $\dim(\mathcal{H}_{\mathrm{anc}})\leq \dim(\mathcal{H}_\mathrm{A}[1:r\zeta])\dim(\mathcal{H}_\mathrm{B}[\overline{\chi}+1:d_2])$ so we choose $k=1$ in \cref{lemma-3190121}).

\subsubsection{Existence for diamond norm}

Note that in \cref{thm-3191542}, the cardinality and distance both depend on $\sr$, which can be arbitrarily close to $0$. 
Here, we give another packing nets w.r.t. diamond norm that does not depend on $\sr$.
We will use the same construction as that given in \cref{sec-3250054}.
Specifically, recall that $\underline{\chi}\coloneqq \lfloor\frac{d_1}{r}\rfloor$, $\overline{\chi}\coloneqq \lceil\frac{d_1}{r}\rceil$, and $\zeta\coloneqq\min\{\underline{\chi},d_2-\overline{\chi}\}$.

\begin{theorem}\label{thm-3250101}
Suppose $\varepsilon\leq 1/160$.
There exists a finite subset $\mathcal{N}$ of $\{V_{\varepsilon,U}\,|\, U\in\mathbb{U}_{r(d_2-\overline{\chi})}\}$ for $V_{\varepsilon,U}$ defined in \cref{eq-3191424} with cardinality 
\[|\mathcal{N}|\geq \exp\!\left(\frac{(rd_2-d_1)\min\{d_1,rd_2-d_1\}}{38404}\right),\]
such that for any $V_x\neq V_y\in\mathcal{N}$, if we set $\mathcal{E}_x=\tr_{\mathcal{H}_\mathrm{anc}}(V_x(\cdot)V_x^\dag)$  and $\mathcal{E}_y=\tr_{\mathcal{H}_\mathrm{anc}}(V_y(\cdot)V_y^\dag)$, then
\[\|\mathcal{E}_x-\mathcal{E}_y\|_\diamond\geq \frac{1}{80}\varepsilon.\]
\end{theorem}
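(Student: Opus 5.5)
We follow the proof of \cref{thm-3231724}, adapted to the construction of \cref{sec-3250054}. The plan has three steps: choose a test input on which the channel reduces to its perturbed block, prove a packing lemma for that block, and bound the diamond distance by the output trace distance on this input.

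\textbf{Test input.} Let $m\coloneqq r\zeta$ and
\[
\ket{\Psi}\coloneqq \frac{1}{\sqrt{m}}\sum_{i=1}^{m}\ket{i}_\mathrm{A}\ket{i}_\mathrm{A},
\]
which is supported on $\mathcal{H}_\mathrm{A}[1:r\zeta]$, the domain of $\Delta$. Define $W_0\coloneqq V_0|_{\mathcal{H}_\mathrm{A}[1:r\zeta]}$. On this subspace $\Sigma_\varepsilon=\sqrt{1-\varepsilon^2}\,I$, so $W_0=\sqrt{1-\varepsilon^2}\sum_{i\le m}\ket{g(i)}\bra{i}_\mathrm{A}$. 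This gives
\[
V_{\varepsilon,U}\ket{\Psi}=\tfrac{1}{\sqrt m}\bigl(\kett{W_0}+\varepsilon\kett{U\Delta}\bigr).
\]

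\textbf{Packing lemma.} The analogue of \cref{lemma-3231724} should state: there is $\mathcal{M}\subseteq\mathbb{U}_{r(d_2-\overline{\chi})}$ with $\ln|\mathcal{M}|=\Omega\bigl((rd_2-d_1)\min\{d_1,rd_2-d_1\}\bigr)$ such that
\[
\frac1m\Bigl\|\tr_{\mathrm{anc}}\bigl(\kett{W_0}(\bbra{U_x\Delta}-\bbra{U_y\Delta})\bigr)\Bigr\|_1\ge \frac1{80}
\]
for all distinct $U_x,U_y\in\mathcal{M}$. Let $F(U_x,U_y)$ denote the operator inside the norm, normalized by $1/m$. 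We need three facts about $F$.
\begin{itemize}
\item \emph{Lipschitz constant.} $\|F\|_1$ is $\sqrt{2/m}$-Lipschitz, by \cref{lemma-3192212} with input dimension $m$.
\item \emph{Second moment.} Write $K_i'=\bra{i}_\mathrm{anc}W_0/\sqrt{1-\varepsilon^2}$ and $K_{x,i}=\bra{i}_\mathrm{anc}U_x\Delta$. The $K_i'$ are orthogonal with $\bbrakett{K_i'}{K_i'}\in[\zeta/2,\zeta]$, since each ancilla index $i$ appears exactly $\zeta$ times among the first $r\zeta$ row-major positions. Schur's lemma gives $\E\bbrakett{K_{t_1,i}}{K_{t_2,i}}=\mathbbm 1_{t_1=t_2}\,m/r=\mathbbm 1_{t_1=t_2}\,\zeta$. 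Hence $\E\tr|F|^2=\Theta(1/m)$, just as in \cref{eq-3240226}.
\item \emph{Fourth moment.} Using $|\bbrakett{K_i'}{K_j'}|\le\zeta\mathbbm 1_{i=j}$, the chain \cref{eq-3200006} and \cref{lemma-3190121} with $k=1$, one gets $\E\tr|F|^4\le O(\zeta^2\cdot r\zeta^2/m^4)=O(1/m^3)$.
\end{itemize}
Hölder's inequality then gives $\E\|F\|_1\ge c$ for an absolute constant $c$ such as $1/8$.

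\textbf{Concentration.} Apply \cref{lem:corollary17} on $\mathbb{U}_{r(d_2-\overline{\chi})}$ with Lipschitz constant $\sqrt{2/m}$. The failure probability for a pair is $\exp\bigl(-\Omega(r(d_2-\overline{\chi})\,m)\bigr)$. Two inequalities control the exponent:
\begin{itemize}
\item $r(d_2-\overline{\chi})\ge (rd_2-d_1)/2$, since $rd_2-d_1\ge r$;
\item $m=r\zeta=\min\{r\underline{\chi},\,r(d_2-\overline{\chi})\}\ge \tfrac12\min\{d_1,rd_2-d_1\}$.
\end{itemize}
So the exponent is $\Omega\bigl((rd_2-d_1)\min\{d_1,rd_2-d_1\}\bigr)$, and a union bound over random samples yields $\mathcal{M}$ with the claimed cardinality, with $38404$ arising from the constant bookkeeping.

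\textbf{Diamond distance.} Bound $\|\mathcal{E}_x-\mathcal{E}_y\|_\diamond\ge\|\mathcal{E}_x(\Psi)-\mathcal{E}_y(\Psi)\|_1$ and expand as in \cref{eq-3232316}. The $\varepsilon^2$ terms contribute at most $2\varepsilon^2$, because $\tr\kettbbra{U\Delta}{U\Delta}=m$. The cross term is off-diagonal: its support is in $\mathcal{H}_\mathrm{B}[\overline\chi+1:d_2]$ and its image is in $\mathcal{H}_\mathrm{B}[1:\overline\chi]$. So \cref{fact-3191602} gives a factor $2$, and the total is at least $\varepsilon/40-2\varepsilon^2\ge\varepsilon/80$ for $\varepsilon\le1/160$.

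\textbf{Main obstacle.} The delicate step is the fourth-moment bound. It must scale as $1/m^3$ and not as $1/r^3$ or worse, so the normalization by $m$ has to be matched against the block structure of the $K_i'$, each of which has Hilbert–Schmidt weight $\zeta$. I would first verify that \cref{lemma-3190121} gives $\sum_{i,j}\E|\tr K_{x,i}^\dagger K_{x,j}|^2\le 4m^2/r$. Multiplying by $\zeta^2/m^4$ gives $4\zeta^2/(rm^2)=4/(r^3\zeta^0\cdot\ldots)$; since $m=r\zeta$, this equals $4/(r\,m^2/\zeta^2)\cdot 1/m^{0}$, which works out to $O(1/m^3)$ as needed.
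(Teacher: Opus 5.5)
Your architecture matches the paper's proof. You use the same test state $\ket{\Psi}$ on $\mathcal{H}_\mathrm{A}[1:r\zeta]$ and the same operator $W_0$. You prove a packing lemma via the $\sqrt{2/m}$-Lipschitz bound of \cref{lemma-3192212}, apply H\"older between the second and fourth moments, then \cref{lem:corollary17} with a union bound. The final diamond-norm expansion, using \cref{fact-3191602}, is also the same. Your estimate $m\ge\frac12\min\{d_1,rd_2-d_1\}$ is slightly sharper than the paper's factor $\frac14$.

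However, the moment scalings you state are wrong whenever $\zeta>1$, and the ``main obstacle'' you identify is a misdiagnosis.

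\emph{Second moment.} From your own ingredients,
$\E\tr|F|^2=\frac{1-\varepsilon^2}{m^2}\sum_{i=1}^r\bbrakett{K_i'}{K_i'}\cdot 2\zeta=\frac{2(1-\varepsilon^2)}{r}$.
This is $\Theta(1/r)=\Theta(\zeta/m)$, not $\Theta(1/m)$.

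\emph{Fourth moment.} Your expression $\zeta^2\cdot r\zeta^2/m^4$ simplifies to $1/r^3$, not $O(1/m^3)$. Likewise, $4\zeta^2/(rm^2)$ in your last paragraph equals $4/r^3$. The bound $\E\tr|F|^4=O(1/m^3)$ is in fact false. For $r=1$, which is allowed in this regime, $F$ has rank one, so $\E\tr|F|^4\ge(\E\tr|F|^2)^2=4(1-\varepsilon^2)^2$. Meanwhile $m=\zeta$ can be arbitrarily large.

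\emph{Why the argument still works.} H\"older only uses the ratio $(\E\tr|F|^2)^3/\E\tr|F|^4$. The correct bounds are $\E\tr|F|^2\ge 1/r$ and $\E\tr|F|^4\le 64/r^3$; these are exactly the bounds of \cref{lemma-3250133}. The fourth-moment bound is what your chain through \cref{lemma-3190121} with $k=1$ actually delivers, since $\frac{16}{r^4\zeta^2}\cdot 4r\zeta^2=64/r^3$. The ratio is then at least $1/64$, so $\E\|F\|_1\ge 1/8$. The fourth moment only needs to be $O(1/r^3)$, not $O(1/m^3)$. With these two scalings corrected, your proof goes through and coincides with the paper's.
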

\begin{proof}
We define 
$\ket{\Psi}\coloneqq \frac{1}{\sqrt{r\zeta}}\sum_{i=1}^{r\zeta}\ket{i}_\mathrm{A}\ket{i}_\mathrm{A}$
be an entangled state on $\mathcal{H}_\mathrm{A}\otimes\mathcal{H}_\mathrm{A}$ and $\Psi=\ketbra{\Psi}{\Psi}$.
We also define $W_0:\mathcal{H}_\mathrm{A}[1:r\zeta]\rightarrow\mathcal{H}_\mathrm{B}[1:\zeta]\otimes\mathcal{H}_{\mathrm{anc}}$ as:
\[W_0\coloneqq \sqrt{1-\varepsilon^2} \sum_{i=1}^{r\zeta} \ket{g(i)_1}_{\mathrm{B}}\otimes \ket{g(i)_2}_\mathrm{anc}\bra{i}_{\mathrm{A}}.\]
We can easily see that (c.f. \cref{eq-3242139}):
\begin{equation}\label{eq-3250107}
V_0\ket{\Psi}=W_0\ket{\Psi}=\frac{1}{\sqrt{r\zeta}}\kett{W_0}.
\end{equation}
Then, we need the following lemma.
\begin{lemma}\label{lemma-3250110}
    There exists a finite subset $\mathcal{M}\subseteq \mathbb{U}_{r(d_2-\overline{\chi})}$ with cardinality $|\mathcal{M}|\geq \exp(r\zeta(rd_2-d_1)/9601)$ such that for any $U_x\neq U_y\in\mathcal{M}$, 
\begin{equation}\label{eq-3250110}
\frac{1}{r\zeta}\left\|\tr_{\mathcal{H}_\mathrm{anc}}\!\left(\kett{W_0}\Big(\bbra{U_x\Delta}-\bbra{U_y\Delta}\Big)\right)\right\|_1\geq \frac{1}{80}.
\end{equation}
\end{lemma}
The proof of \cref{lemma-3250110} is deferred.

Now, we are able to prove \cref{thm-3231724}.
Let $\mathcal{M}$ be the set given in \cref{lemma-3231724}.
For any $U_x\neq U_y\in\mathcal{M}$, if we set $\mathcal{E}_x=\tr_{\mathcal{H}_\mathrm{anc}}(V_{\varepsilon,U_x}(\cdot)V_{\varepsilon,U_x}^\dag)$  and $\mathcal{E}_y=\tr_{\mathcal{H}_\mathrm{anc}}(V_{\varepsilon,U_y}(\cdot)V_{\varepsilon,U_y}^\dag)$, then we have
\begin{align}
&\|\mathcal{E}_x-\mathcal{E}_y\|_\diamond\geq \|\mathcal{E}_x(\Psi) - \mathcal{E}_y(\Psi)\|_1 \nonumber \\
=&\Big\|\tr_{\mathcal{H}_{\mathrm{anc}}}\!\left((W_0+\varepsilon U_x\Delta)\Psi (W_0+\varepsilon U_x\Delta)^\dag \right)-\tr_{\mathcal{H}_{\mathrm{anc}}}\!\left((W_0+\varepsilon U_y\Delta)\Psi (W_0+\varepsilon U_y\Delta)^\dag\right)\Big\|_1 \label{eq-3250117} \\
=&\frac{1}{r\zeta}\bigg\|\varepsilon^2\Big(\tr_{\mathcal{H}_{\mathrm{anc}}}\!\left(\kettbbra{U_x\Delta}{U_x\Delta}\right)-\tr_{\mathcal{H}_{\mathrm{anc}}}\!\left(\kettbbra{U_y\Delta}{U_y\Delta}\right)\Big) \nonumber\\
&\quad\quad\quad\qquad +\varepsilon \tr_{\mathcal{H}_{\mathrm{anc}}}\!\Big(\kettbbra{W_0}{(U_x-U_y)\Delta}\Big) + \varepsilon\tr_{\mathcal{H}_{\mathrm{anc}}}\!\Big(\kettbbra{(U_x-U_y)\Delta}{W_0}\Big) \bigg\|_1 \label{eq-3250118} \\
\geq & \frac{\varepsilon}{r\zeta}\bigg\|\tr_{\mathcal{H}_{\mathrm{anc}}}\!\Big(\kettbbra{W_0}{(U_x-U_y)\Delta}\Big) + \tr_{\mathcal{H}_{\mathrm{anc}}}\!\Big(\kettbbra{(U_x-U_y)\Delta}{W_0}\Big)\bigg\|_1 - 2 \varepsilon^2 \label{eq-3250119} \\
= & \frac{2\varepsilon}{r\zeta} \bigg\|\tr_{\mathcal{H}_{\mathrm{anc}}}\!\Big(\kettbbra{W_0}{(U_x-U_y)\Delta}\Big)\bigg\|_1-2\varepsilon^2 \label{eq-3250120}\\
\geq &\frac{1}{40}\varepsilon -2\varepsilon^2 \label{eq-3250121} \\
\geq &\frac{1}{80}\varepsilon. \label{eq-3250122}
\end{align}
In \cref{eq-3250117}, we used \cref{eq-3250107}.
In \cref{eq-3250118}, we used $U\Delta\ket{\Psi}=\frac{1}{\sqrt{r\zeta}}\kett{U\Delta}$, $W_0\ket{\Psi}=\frac{1}{\sqrt{r\zeta}}\kett{W_0}$.
In \cref{eq-3250119}, we used $\|\tr_{\mathcal{H}_{\mathrm{anc}}}(\kettbbra{U\Delta}{U\Delta})\|_1 = \tr(\kettbbra{U\Delta}{U\Delta})=r\zeta$.
In \cref{eq-3250120} we used the fact that $\tr_{\mathcal{H}_{\mathrm{anc}}}\!\Big(\kettbbra{W_0}{(U_x-U_y)\Delta}\Big)$ is a linear operator with support in  $\mathcal{\mathcal{H}_\mathrm{A}}[1:r\zeta]\otimes\mathcal{H}_{\mathrm{B}}[\overline{\chi}+1:d_2]$ and image in 
$\mathcal{H}_\mathrm{A}[1:r\zeta]\otimes \mathcal{H}_{\mathrm{B}}[1:\zeta]$,
which is orthogonal to its support; and then we used \cref{fact-3191602}.
In \cref{eq-3250121} we used \cref{eq-3250110}.
In \cref{eq-3250122} we used that $\varepsilon\leq 1/160$.
Therefore, we can lower bound the diamond norm
\[\|\mathcal{E}_x-\mathcal{E}_y\|_\diamond\geq \frac{1}{80}\varepsilon.\]
Thus, the set $\mathcal{N}=\{V_{\varepsilon,U} \,|\, U\in\mathcal{M}\}$ is the desired set, and the cardinality is at least $\exp\!\left(\frac{r\zeta(rd_2-d_1)}{9601}\right)\geq \exp\!\left(\frac{(rd_2-d_1)\min\{d_1,rd_2-d_1\}}{38404}\right)$,
since $r\zeta\geq \frac{\sr}{4}d_1=\frac{1}{4}\min\{d_1,rd_2-d_1\}$.
\end{proof}

Now, we give the proof of \cref{lemma-3250110}.
\begin{proof}[Proof of \cref{lemma-3250110}]
We need the following lemma:
\begin{lemma}\label{lemma-3250133}
For $U_x,U_y\in\mathbb{U}_{r(d_2-\overline{\chi})}$, let us define 
\[F(U_x,U_y)=\frac{1}{r\zeta}\tr_{\mathcal{H}_\mathrm{anc}}\!\left(\kett{W_0}\Big(\bbra{U_x\Delta}-\bbra{U_y\Delta}\Big)\right),\]
then the function $f(U_x,U_y)=\|F(U_x,U_y)\|_1=\tr(|F(U_x,U_y)|)$ is $\sqrt{\frac{2}{r\zeta}}$-Lipschitz with respect to the $\ell_2$-sum of the $2$-norms (Frobenius norm). 
Furthermore, for independent random $U_x,U_y\sim \mathbb{U}_{r(d_2-\overline{\chi})}$, we have $\E\!\left[\tr(|F(U_x,U_y)|^2)\right]\geq\frac{1}{r}$, and $\E\!\left[\tr(|F(U_x,U_y)|^4)\right]\leq \frac{64}{r^3}$.
\end{lemma}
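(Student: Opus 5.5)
The plan follows the template of the proofs of \cref{lemma-3240011} and \cref{lemma-3191720}, adapted to $W_0$ and the normalization $1/(r\zeta)$.

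\textbf{Lipschitz bound.} This comes from \cref{lemma-3192212}. We view $W_0$ and $U\Delta$ as operators on the $r\zeta$-dimensional input space $\mathcal{H}_\mathrm{A}[1:r\zeta]$. Changing $U_x$ to $U_x'$ changes $F$ by $\frac{1}{r\zeta}\tr_{\mathcal{H}_\mathrm{anc}}(\kett{W_0}\bbra{(U_x-U_x')\Delta})$. Its trace norm is at most
\[\frac{1}{r\zeta}\,\|\kett{W_0}\|\,\|(U_x-U_x')\Delta\|_F\le \frac{\sqrt{r\zeta}}{r\zeta}\,\|U_x-U_x'\|_F,\]
using $\|\kett{W_0}\|\le\sqrt{r\zeta}$ and $\|\Delta\|_{\rm op}\le 1$. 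The same holds for $U_y$, and combining the two in $\ell_2$ gives the constant $\sqrt{2/(r\zeta)}$.

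\textbf{Second moment.} Write $K_{x,i}=\bra{i}_\mathrm{anc}U_x\Delta$, $K_{y,i}=\bra{i}_\mathrm{anc}U_y\Delta$, and $K_i'=\frac{1}{\sqrt{1-\varepsilon^2}}\bra{i}_\mathrm{anc}W_0$. Then
\[F=\frac{\sqrt{1-\varepsilon^2}}{r\zeta}\sum_{i=1}^r\kett{K_i'}\big(\bbra{K_{x,i}}-\bbra{K_{y,i}}\big).\]
The row-major structure of $g$ makes the $K_i'$ mutually orthogonal, with $\bbrakett{K_i'}{K_i'}=\zeta$ for each $i$, since each ancilla index appears exactly $\zeta$ times among $g(1),\dots,g(r\zeta)$. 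Hence
\[\tr|F|^2=\frac{(1-\varepsilon^2)\zeta}{(r\zeta)^2}\sum_i\big\|\kett{K_{x,i}}-\kett{K_{y,i}}\big\|^2 .\]
By Schur's lemma for $U\in\mathbb{U}_{r(d_2-\overline{\chi})}$, $\E\bbrakett{K_{t_1,i}}{K_{t_2,i}}=\mathbbm{1}_{t_1=t_2}\tr(\Delta^\dag\Delta)/r=\mathbbm{1}_{t_1=t_2}\zeta$. This step uses that $U$ acts on $\mathcal{H}_\mathrm{B}[\overline{\chi}+1:d_2]\otimes\mathcal{H}_\mathrm{anc}$ and that $\E[U^\dag(\ketbra{i}{i}_\mathrm{anc}\otimes I)U]$ is proportional to the identity with weight $(d_2-\overline{\chi})/(r(d_2-\overline{\chi}))=1/r$. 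Summing over $i$ gives
\[\E\tr|F|^2=\frac{(1-\varepsilon^2)\zeta}{(r\zeta)^2}\cdot 2r\zeta=\frac{2(1-\varepsilon^2)}{r}\ge\frac1r .\]

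\textbf{Fourth moment.} Expand $\tr|F|^4$ as a sum over $i,j,k,l$. Orthogonality of the $K_i'$ forces $j=k$ and $l=i$ and contributes a factor $\zeta^2$. What remains is
\[\frac{\zeta^2}{(r\zeta)^4}\sum_{i,j}\E\Big|\big(\bbra{K_{x,i}}-\bbra{K_{y,i}}\big)\big(\kett{K_{x,j}}-\kett{K_{y,j}}\big)\Big|^2 .\]
We bound this as in \cref{eq-3200006}: expand the square, bound the cross terms $x,y$ by the diagonal ones through the Cauchy–Schwarz argument, and reduce to $\frac{16\zeta^2}{(r\zeta)^4}\sum_{i,j}\E|\tr(K_{x,i}^\dag K_{x,j})|^2$. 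Then \cref{lemma-3190121} with $k=1$ applies, since $\Delta$ is an isometry from an $r\zeta$-dimensional space and $r\le r\zeta\cdot(d_2-\overline{\chi})$. It bounds the sum by $4(r\zeta)^2/r$, which gives
\[\frac{64\,\zeta^2(r\zeta)^2}{r(r\zeta)^4}=\frac{64}{r^3}.\]

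The main obstacle is this last application. We must check the hypotheses of \cref{lemma-3190121} in this geometry and make sure its bound scales as $(\tr\Delta^\dag\Delta)^2/r$, so that the $\zeta$ factors cancel exactly.
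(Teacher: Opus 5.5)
Your proposal is correct and matches the paper's proof step for step. Both use \cref{lemma-3192212} for the Lipschitz bound, and Schur's lemma to get $\E\bbrakett{K_{t_1,i}}{K_{t_2,i}}=\mathbbm{1}_{t_1=t_2}\zeta$. For the fourth moment, both use orthogonality of the $K_i'$, the Cauchy--Schwarz reduction of \cref{eq-3200006}, and \cref{lemma-3190121} with $k=1$.

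The only difference is cosmetic. You factor out $\sqrt{1-\varepsilon^2}$ and use the exact relation $\bbrakett{K_i'}{K_j'}=\zeta\,\mathbbm{1}_{i=j}$, which gives $\E\tr|F|^2=2(1-\varepsilon^2)/r$. The paper keeps $K_i'=\bra{i}_\mathrm{anc}W_0$ and uses $\tfrac12\zeta\,\mathbbm{1}_{i=j}\le|\tr(K_i'^\dag K_j')|\le\zeta\,\mathbbm{1}_{i=j}$. The factor $(1-\varepsilon^2)^2\le 1$ you silently drop in the fourth moment is harmless, since that step is an upper bound.
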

The proof of \cref{lemma-3250133} is deferred to \cref{sec-3250111}.

By the H\"older's inequality we have
\[\E\!\left[\tr\!\left(|F(U_x,U_y)|^2\right)\right]\leq \E\!\left[\tr\!\left(|F(U_x,U_y)|^4\right)\right]^{1/3}\E\!\left[\tr\!\left(|F(U_x,U_y)|\right)\right]^{2/3},\]
which, combined with \cref{lemma-3250133}, implies
\[\E\!\left[\tr\!\left(|F(U_x,U_y)|\right)\right]^2\geq \frac{1}{64}.\]
Thus $\E\!\left[\tr\!\left(|F(U_x,U_y)|\right)\right]> 1/8$.
Then, since the function $f(U_x,U_y)=\tr(|F(U_x,U_y)|)$ is $\sqrt{\frac{2}{r\zeta}}$-Lipschitz, we can use \cref{lem:corollary17} to prove the concentration result:
\[\Pr\!\left[\tr\!\left(|F(U_x,U_y)|\right)\leq \frac{1}{80}\right]\leq \exp\!\left(-\frac{r\zeta\cdot r(d_2-\overline{\chi})}{2}\cdot \frac{1}{100\cdot 12}\right)\leq \exp\!\left(-\frac{r\zeta(rd_2-d_1)}{4800}\right),\]
where in the last inequality we used $r(d_2-\overline{\chi})=r\lfloor\frac{rd_2-d_1}{r}\rfloor\geq \frac{rd_2-d_1}{2}$ (since $rd_2-d_1\geq r$).
Then, we independently sample $\exp(r\zeta(rd_2-d_1)/9601)$ Haar random unitaries in $\mathbb{U}_{r(d_2-\overline{\chi})}$ and the union bound shows that there exists a non-zero probability that for any pair $U_x, U_y$, we have $\tr(|F(U_x,U_y)|)\geq 1/80$.
Thus, there exists a set with cardinality $\geq \exp(r\zeta(rd_2-d_1)/9601)$ such that \cref{eq-3250110} holds.
\end{proof}

\subsubsection{Second and fourth moments: proof of Lemma \ref{lemma-3250133}}\label{sec-3250111}
The Lipschitz continuity can be seen from \cref{lemma-3192212}, where we treat $W_0$ and $U\Delta$ as linear operators acting on the input space with dimension $r\zeta$.

Define $K_{x,i}=\bra{i}_\mathrm{anc} U_x\Delta$, $K_{y,i}=\bra{i}_\mathrm{anc} U_y\Delta$, and $K_i'=\bra{i}_\mathrm{anc} W_0$.
We can easily see that 
\begin{equation}\label{eq-3250158}
\frac{1}{2}\zeta \cdot \mathbbm{1}_{i=j}\leq \left|\tr(K_i'^\dag K'_j)\right|\leq \zeta\cdot\mathbbm{1}_{i=j},
\end{equation}
and also
\[F(U_x,U_y)=\frac{1}{r\zeta}\sum_{i=1}^{r} \kett{K'_i}\Big(\bbra{K_{x,i}}-\bbra{K_{y,i}}\Big).\]
Then, we note that
\begin{align}
\E\!\left[\tr(|F(U_x,U_y)|^2)\right]&=\frac{1}{r^2\zeta^2}\E\!\left[\tr\!\left(\sum_{i,j=1}^{r} \kett{K'_i}\Big(\bbra{K_{x,i}}-\bbra{K_{y,i}}\Big)\Big(\kett{K_{x,j}}-\kett{K_{y,j}}\Big)\bbra{K'_j}\right)\right] \nonumber\\
&\geq \frac{1}{2r^2\zeta}\E\!\left[\sum_{i=1}^{r} \Big(\bbra{K_{x,i}}-\bbra{K_{y,i}}\Big)\Big(\kett{K_{x,i}}-\kett{K_{y,i}}\Big)\right]\label{eq-3250219} \\
&=\frac{1}{2r^2\zeta} \cdot 2r\zeta= \frac{1}{r} \label{eq-3250224}
\end{align}
where \cref{eq-3250219} is due to \cref{eq-3250158}, and \cref{eq-3250224} is because for $t_1,t_2\in\{x,y\}$, we have
\begin{align}
\E\!\left[\bbrakett{K_{t_1,i}}{K_{t_2,i}}\right]&=\E\!\left[\tr\!\left(K_{t_1,i}^\dag K_{t_2,i}\right)\right] =\tr\!\left(\Delta^\dag \E\!\left[U_{t_1}^\dag \ketbra{i}{i}_\mathrm{anc}U_{t_2} \right]\Delta\right) \nonumber\\
&=\mathbbm{1}_{t_1=t_2}\cdot \frac{1}{r}  \tr(\Delta^\dag \Delta) \label{eq-3250225}\\
&=\mathbbm{1}_{t_1=t_2}\zeta, \nonumber
\end{align}
where \cref{eq-3250225} is due to Schur's lemma..

Then, we also have
\begin{align}
\E\!\left[\tr(|F(U_x,U_y)|^4)\right]&=\frac{1}{r^4\zeta^4}\E\!\Bigg[\sum_{i,j,k,l=1}^{r} \tr\!\bigg(\kett{K_i'}\Big(\bbra{K_{x,i}}-\bbra{K_{y,i}}\Big)\Big(\kett{K_{x,j}}-\kett{K_{y,j}}\Big)\bbra{K_j'} \nonumber \\
&\qquad\qquad\qquad\qquad\kett{K_{k}'}\Big(\bbra{K_{x,k}}-\bbra{K_{y,k}}\Big)\Big(\kett{K_{x,l}}-\kett{K_{y,l}}\Big)\bbra{K_{l}'}\bigg)\Bigg]  \nonumber \\ 
&\leq \frac{1}{r^4\zeta^2}\sum_{i,j=1}^{r}\E\!\bigg[\left|\Big(\bbra{K_{x,i}}-\bbra{K_{y,i}}\Big)\Big(\kett{K_{x,j}}-\kett{K_{y,j}}\Big)\right|^2\bigg] \label{eq-3250230} \\
&\leq \frac{4}{r^4\zeta^2}\sum_{i,j=1}^{r}\E\!\left[|\bbrakett{K_{x,i}}{K_{x,j}}|^2+|\bbrakett{K_{y,i}}{K_{y,j}}|^2+|\bbrakett{K_{x,i}}{K_{y,j}}|^2+|\bbrakett{K_{y,i}}{K_{x,j}}|^2\right] \nonumber\\
&\leq \frac{8}{r^4\zeta^2}\sum_{i,j=1}^{r}\E\!\left[|\bbrakett{K_{x,i}}{K_{x,j}}|^2+|\bbrakett{K_{y,i}}{K_{y,j}}|^2\right]\label{eq-3250233}\\
&=\frac{16}{r^4\zeta^2}\sum_{i,j=1}^{r} \E\!\left[|\bbrakett{K_{x,i}}{K_{x,j}}|^2\right]=\frac{16}{r^4\zeta^2}\sum_{i,j=1}^r\E\!\left[\left|\tr(K_{x,i}^\dag K_{x,j})\right|^2\right],\nonumber\\
&\leq \frac{16}{r^4\zeta^2}\cdot\frac{4 (r\zeta)^2}{r}= \frac{64}{r^3},\label{eq-3250234}
\end{align}
where \cref{eq-3250230} is due to \cref{eq-3250158},
\cref{eq-3250233} is due to a similar argument as that in \cref{eq-3200006}, and in \cref{eq-3250234}, the first inequality is due to \cref{lemma-3190121} and that $\Delta$ is an isometry from $\mathcal{H}_\mathrm{A}[1:r\zeta]$ to $\mathcal{H}_\mathrm{B}[\overline{\chi}+1:d_2]\otimes\mathcal{H}_{\mathrm{anc}}$ (further note that $\dim(\mathcal{H}_{\mathrm{anc}})\leq \dim(\mathcal{H}_\mathrm{A}[1:r\zeta])\dim(\mathcal{H}_\mathrm{B}[\overline{\chi}+1:d_2])$ so we choose $k=1$ in \cref{lemma-3190121}).

\subsection{Type II instance: $d_1< rd_2$ with $d_1+r\leq rd_2$, $d_1<r$}\label{sec-3260100}

\subsubsection{Construction}
In this subsection, we will use the definition in \cref{sec-1110250}, where the parameter $d, D$ in \cref{sec-1110250} correspond to $d_1$ and $rd_2$ here.

Suppose $d_2>1$.
We further assume $d_1+r\leq rd_2$ and $d_1<r$. 
Let $\chi=\lceil \frac{r}{d_1}\rceil$. Thus $\chi\geq 1$ and $\chi d_1\geq r$. Moreover, since $d_1d_2\geq 2r$ (see \cref{eq-3200605}), we have $\chi\leq \lceil\frac{d_2}{2}\rceil$ and thus $d_2-\chi\geq \lfloor \frac{d_2}{2}\rfloor$.

Note that $\chi d_1\geq r$. Thus by \cref{lemma-3180158}, there exists a set of linear operators $\{K_i\}_{i=1}^r$ for $K_i:\mathcal{H}_{\mathrm{A}}\rightarrow\mathcal{H}_{\mathrm{B}}[1:\chi]$ such that
\begin{equation}\label{eq-3200358}
\sum_{i=1}^r K_i^\dag K_i=(1-\varepsilon^2)I_{\mathcal{H}_\mathrm{A}} \quad \textup{and}\quad \left|\tr\!\left(K_i^\dag K_j\right)\right|\leq  \frac{2\dim(\mathcal{H}_\mathrm{A})}{r}\cdot\mathbbm{1}_{i=j}= \frac{2d_1}{r}\cdot\mathbbm{1}_{i=j}.
\end{equation}
Then we define $V_0:\mathcal{H}_{\mathrm{A}}\rightarrow  \mathcal{H}_\mathrm{B}[1:\chi]\otimes \mathcal{H}_\mathrm{anc}$ as 
\begin{equation*}
V_0= \sum_{i=1}^{r} \ket{i}_\mathrm{anc}\otimes K_i.
\end{equation*}
Note that $V_0$ can also be written as $V_0=\sqrt{1-\varepsilon^2}V$ for some isometry $V:\mathcal{H}_{\mathrm{A}}\rightarrow  \mathcal{H}_\mathrm{B}[1:\chi]\otimes \mathcal{H}_\mathrm{anc}$.
Define $\Delta:\mathcal{H}_{\mathrm{A}}\rightarrow \mathcal{H}_{\mathrm{B}}[\chi+1:d_2]\otimes \mathcal{H}_\mathrm{anc}$ be an arbitrary (but fixed) isometry.
Then, for $\varepsilon\in(0,1/2)$ and $U\in\mathbb{U}_{r(d_2-\chi)}$, we define the isometry $V_{\varepsilon,U}:\mathcal{H}_\mathrm{A}\rightarrow\mathcal{H}_\mathrm{B}\otimes\mathcal{H}_{\mathrm{anc}}$ as
\begin{equation}\label{eq-3200308}
V_{\varepsilon,U}\coloneqq V_0+\varepsilon U\Delta,
\end{equation}
where $U$ acts on $\mathcal{H}_{\mathrm{B}}[\chi+1:d_2]\otimes\mathcal{H}_\mathrm{anc}$.
We can easily verify that $V_{\varepsilon,U}$ is indeed an isometry.
We also note that the image of $U\Delta$ is orthogonal to the image of $V_0$. 
For clarity, we illustrate our construction in \cref{fig-3290258}.

\begin{figure}[t]
    \centering
    \includegraphics[width=0.55\linewidth]{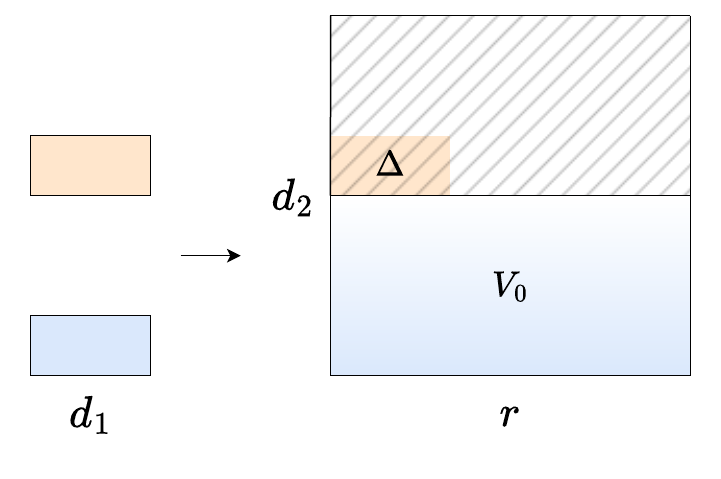}
    \caption{Illustration of our construction. We define linear operators $V_0$ and $\Delta$ from the $d_1$-dimensional space $\mathcal{H}_\mathrm{A}$ to the $d_2r$-dimensional space $\mathcal{H}_\mathrm{B}\otimes\mathcal{H}_\mathrm{anc}$. The Haar randomness is applied on the hatched area. The gradient-colored area means the image of $V_0$ is approximately ``uniformly'' distributed along the $r$-axis, using \cref{lemma-3180158}.}
    \label{fig-3290258}
\end{figure}

\subsubsection{Existence}

Then, we prove that there exists a large set of isometries $V_{\varepsilon,U}$ with good separation property.
\begin{theorem}\label{thm-3200309}
Suppose $\varepsilon\leq 1/4000$. There exists a finite subset $\mathcal{N}$ of $\{V_{\varepsilon,U}\,|\, U\in\mathbb{U}_{r(d_2-\chi)}\}$ for $V_{\varepsilon,U}$ defined in \cref{eq-3200308} with cardinality $|\mathcal{N}|\geq \exp(rd_1d_2/307201)$, such that for any $V_x\neq V_y\in\mathcal{N}$, if we set $\mathcal{E}_x=\tr_{\mathcal{H}_\mathrm{anc}}(V_x(\cdot)V_x^\dag)$  and $\mathcal{E}_y=\tr_{\mathcal{H}_\mathrm{anc}}(V_y(\cdot)V_y^\dag)$, then
\[\frac{1}{d_1}\|C_{\mathcal{E}_x}-C_{\mathcal{E}_y}\|_1\geq \frac{1}{2000}\varepsilon.\]
\end{theorem}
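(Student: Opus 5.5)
I will follow the same template as \cref{thm-1140332} and \cref{thm-3191542}. First, I reduce the Choi-distance bound to a bound on the cross term. For $U_x\neq U_y$, expand $\tr_{\mathcal{H}_\mathrm{anc}}(\kettbbra{V_{\varepsilon,U}}{V_{\varepsilon,U}})$. The $\varepsilon^2$ terms contribute at most $2\varepsilon^2 d_1$ in trace norm, since $\tr(\Delta^\dag\Delta)=d_1$. The cross term $\tr_{\mathcal{H}_\mathrm{anc}}(\kettbbra{V_0}{(U_x-U_y)\Delta})$ has support in $\mathcal{H}_\mathrm{A}\otimes\mathcal{H}_\mathrm{B}[\chi+1:d_2]$ and image in $\mathcal{H}_\mathrm{A}\otimes\mathcal{H}_\mathrm{B}[1:\chi]$, and these are orthogonal. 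So \cref{fact-3191602} turns the Hermitian sum into twice its trace norm. It therefore suffices to find a set $\mathcal{M}\subseteq\mathbb{U}_{r(d_2-\chi)}$ with $|\mathcal{M}|\geq\exp(rd_1d_2/307201)$ such that, for $F(U_x,U_y)=\frac{1}{d_1}\tr_{\mathcal{H}_\mathrm{anc}}(\kett{V_0}(\bbra{U_x\Delta}-\bbra{U_y\Delta}))$, every pair satisfies $\|F\|_1\geq 1/1000$. Then the distance is at least $\frac{2\varepsilon}{1000}-2\varepsilon^2\geq\frac{\varepsilon}{2000}$, using $\varepsilon\leq 1/4000$.

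Next come the moment estimates. Write $F=\frac{1}{d_1}\sum_{i=1}^r\kett{K_i}(\bbra{K_{x,i}}-\bbra{K_{y,i}})$ with $K_{t,i}=\bra{i}_\mathrm{anc}U_t\Delta$.

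For the second moment, the $\kett{K_i}$ are orthogonal by \cref{eq-3200358}, and Schur's lemma gives $\E[\bbrakett{K_{t_1,i}}{K_{t_2,i}}]=\mathbbm{1}_{t_1=t_2}\,d_1/r$. Hence
\[\E[\tr|F|^2]=\frac{2}{d_1^2}\cdot\frac{d_1}{r}\sum_i\tr(K_i^\dag K_i)=\frac{2(1-\varepsilon^2)}{r}\geq\frac{1}{r}.\]

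For the fourth moment, I follow \cref{eq-3192355}. Use $|\tr(K_i^\dag K_j)|\leq\frac{2d_1}{r}\mathbbm{1}_{i=j}$ to collapse the sum, then the argument of \cref{eq-3200006}, then \cref{lemma-3190121}. The last applies with $k=1$ because $\Delta$ is an isometry from $\mathcal{H}_\mathrm{A}$ into $\mathcal{H}_\mathrm{B}[\chi+1:d_2]\otimes\mathcal{H}_\mathrm{anc}$, and $r\leq d_1(d_2-\chi)$ since $\chi\leq\lceil d_2/2\rceil$ and $d_1d_2\geq 2r$. This gives $\E\sum_{i,j}|\tr(K_{x,i}^\dag K_{x,j})|^2\leq 4d_1^2/r$, and so
\[\E[\tr|F|^4]\leq\frac{16}{d_1^4}\cdot\frac{4d_1^2}{r^2}\cdot\frac{4d_1^2}{r}=\frac{256}{r^3}.\]
Hölder's inequality $\E\tr|F|^2\leq(\E\tr|F|^4)^{1/3}(\E\tr|F|)^{2/3}$ then yields $\E\|F\|_1\geq\sqrt{1/256}=1/16$.

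Finally, concentration and the union bound. By \cref{lemma-3192212}, $f=\|F\|_1$ is $\sqrt{2/d_1}$-Lipschitz in Frobenius norm. Apply \cref{lem:corollary17} on $\mathbb{U}_{r(d_2-\chi)}\times\mathbb{U}_{r(d_2-\chi)}$ to $-f$ with deviation $1/16-1/1000$. This gives
\[\Pr[f\leq 1/1000]\leq\exp\!\left(-c\,d_1\,r(d_2-\chi)\right)\leq\exp\!\left(-rd_1d_2/153600\right),\]
using $d_2-\chi\geq\lfloor d_2/2\rfloor\geq d_2/4$ for $d_2\geq 2$. Sampling $\exp(rd_1d_2/307201)$ independent Haar unitaries, the union bound over pairs leaves positive probability that all pairs are separated. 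Set $\mathcal{N}=\{V_{\varepsilon,U}:U\in\mathcal{M}\}$.

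The main obstacle is the fourth-moment bound. It relies on the near-uniform Kraus norms from \cref{lemma-3180158} and on verifying the dimension hypothesis of \cref{lemma-3190121}. The constants must also be tracked carefully so that the target separation $1/1000$ sits below the mean $1/16$ with enough margin for the stated cardinality.
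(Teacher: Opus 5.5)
Your overall route is the paper's: the reduction via \cref{fact-3191602}, the second and fourth moments, H\"older, Lipschitz concentration via \cref{lem:corollary17}, and the union bound all check out. The one incorrect step is your reason for applying \cref{lemma-3190121} with $k=1$.

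The inequality $r\le d_1(d_2-\chi)$ does not follow from $\chi\le\lceil d_2/2\rceil$ and $d_1d_2\ge 2r$. For odd $d_2$ you only get $d_2-\chi\ge (d_2-1)/2$. For example, $d_1=2$, $r=3$, $d_2=3$ satisfies every hypothesis of this regime ($d_1<r$, $d_1+r\le rd_2$, $r\le d_1d_2/2$), yet $\chi=2$ and $d_1(d_2-\chi)=2<r$.

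What does hold is $r\le d_1d_2/2\le 2d_1\lfloor d_2/2\rfloor\le 2d_1(d_2-\chi)$. So \cref{lemma-3190121} applies with $k=2$, as in the paper, giving $\sum_{i,j}\E|\tr(K_{x,i}^\dag K_{x,j})|^2\le 6d_1^2/r$. Hence $\E\tr|F|^4\le 384/r^3$ and $\E\|F\|_1\ge 1/\sqrt{384}>1/20$.

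That still leaves enough room. With deviation $t\ge 1/20-1/1000$, the concentration exponent is at least $d_1r(d_2-\chi)t^2/24\ge rd_1d_2t^2/96\ge rd_1d_2/40000$. This beats the $rd_1d_2/153600$ you need, so the rest of your proof goes through unchanged.

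Your numerical bound $4d_1^2/r$ is in fact true, but not for the reason you gave. The exact Weingarten value in the proof of \cref{lemma-3190121}, with $d_2$ replaced by $d_2-\chi$, is at most $d_1/(d_2-\chi)+d_1^2/r$. Since $d_2-\chi\ge\lfloor d_2/2\rfloor\ge d_2/3$, we have $r\le\tfrac32 d_1(d_2-\chi)$, which gives $4d_1^2/r$. Either repair is fine; just replace the false claim $r\le d_1(d_2-\chi)$.
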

\begin{proof}
First, we need the following lemma.
\begin{lemma}\label{lemma-3200315}
    There exists a finite subset $\mathcal{M}\subseteq \mathbb{U}_{r(d_2-\chi)}$ with cardinality $|\mathcal{M}|\geq \exp(rd_1d_2/307201)$ such that for any $U_x\neq U_y\in\mathcal{M}$, 
\begin{equation}\label{eq-3200315}
\frac{1}{d_1}\left\|\tr_{\mathcal{H}_\mathrm{anc}}\!\left(\kett{V_0}\Big(\bbra{U_x\Delta}-\bbra{U_y\Delta}\Big)\right)\right\|_1\geq \frac{1}{2000}.
\end{equation}
\end{lemma}
The proof of \cref{lemma-3200315} is deferred. 

Now we are able to prove \cref{thm-3200309}.
Let $\mathcal{M}$ be the set given in \cref{lemma-3200315}.
Then, for any $U_x\neq U_y\in\mathcal{M}$, if set $\mathcal{E}_x=\tr_{\mathcal{H}_{\mathrm{anc}}}(V_{\varepsilon,U_x}(\cdot) V_{\varepsilon,U_x}^\dag)$ and $\mathcal{E}_y=\tr_{\mathcal{H}_{\mathrm{anc}}}(V_{\varepsilon,U_y}(\cdot) V_{\varepsilon,U_y}^\dag)$, we have
\begin{align}
&\|C_{\mathcal{E}_{x}}-C_{\mathcal{E}_{y}}\|_1 \nonumber \\
=&\Big\|\tr_{\mathcal{H}_{\mathrm{anc}}}\!\left(\kettbbra{V_{\varepsilon,U_x}}{V_{\varepsilon,U_x}}\right)-\tr_{\mathcal{H}_{\mathrm{anc}}}\!\left(\kettbbra{V_{\varepsilon,U_y}}{V_{\varepsilon,U_y}}\right)\Big\|_1 \nonumber\\
=&\bigg\|\varepsilon^2\Big(\tr_{\mathcal{H}_{\mathrm{anc}}}\!\left(\kettbbra{U_x\Delta}{U_x\Delta}\right)-\tr_{\mathcal{H}_{\mathrm{anc}}}\!\left(\kettbbra{U_y\Delta}{U_y\Delta}\right)\Big) \nonumber \\
&\qquad\qquad\qquad\qquad+\varepsilon\tr_{\mathcal{H}_{\mathrm{anc}}}\!\Big(\kettbbra{V_0}{(U_x-U_y)\Delta}\Big) + \varepsilon\tr_{\mathcal{H}_{\mathrm{anc}}}\!\Big(\kettbbra{(U_x-U_y)\Delta}{V_0}\Big) \bigg\|_1 \nonumber \\
\geq & \varepsilon\bigg\|\tr_{\mathcal{H}_{\mathrm{anc}}}\!\Big(\kettbbra{V_0}{(U_x-U_y)\Delta}\Big) + \tr_{\mathcal{H}_{\mathrm{anc}}}\!\Big(\kettbbra{(U_x-U_y)\Delta}{V_0}\Big)\bigg\|_1 - 2 \varepsilon^2 d_1 \label{eq-3200317} \\
= & 2\varepsilon\bigg\|\tr_{\mathcal{H}_{\mathrm{anc}}}\!\Big(\kettbbra{V_0}{(U_x-U_y)\Delta}\Big)\bigg\|_1-2\varepsilon^2 d_1 \label{eq-3200318}\\
\geq & \frac{1}{1000}\varepsilon d_1 -2\varepsilon^2d_1 \label{eq-3200319} \\
\geq & \frac{1}{2000}\varepsilon d_1. \label{eq-3200320}
\end{align}
In \cref{eq-3200317}, we used $\|\tr_{\mathcal{H}_{\mathrm{anc}}}(\kettbbra{U\Delta}{U\Delta})\|_1 = \tr(\kettbbra{U\Delta}{U\Delta})=d_1$. 
In \cref{eq-3200318} we used the fact that $\tr_{\mathcal{H}_{\mathrm{anc}}}\!\Big(\kettbbra{V_0}{(U_x-U_y)\Delta}\Big)$ is a linear operator with support in  $\mathcal{\mathcal{H}_\mathrm{A}}\otimes\mathcal{H}_{\mathrm{B}}[\chi+1:d_2]$ and image in $\mathcal{H}_\mathrm{A}\otimes\mathcal{H}_{\mathrm{B}}[1:\chi]$ 
which is orthogonal to its support; and then we used \cref{fact-3191602}.
In \cref{eq-3200319} we used \cref{eq-3200315}.
In \cref{eq-3200320} we used that $\varepsilon\leq 1/4000$.
Therefore, we can lower bound the Choi trace norm 
\[\frac{1}{d_1}\left\|C_{\mathcal{E}_{x}}-C_{\mathcal{E}_{y}}\right\|_1\geq \frac{1}{2000}\varepsilon.\]
Thus, the set $\mathcal{N}=\{V_{\varepsilon,U} \,|\, U\in\mathcal{M}\}$ is the desired set.
\end{proof}

Now, we give the proof of \cref{lemma-3200315}.
\begin{proof}[Proof of \cref{lemma-3200315}]
First, we need the following lemma:
\begin{lemma}\label{lemma-3200327}
For $U_x,U_y\in\mathbb{U}_{r(d_2-\chi)}$, let us define 
\[F(U_x,U_y)=\frac{1}{d_1}\tr_{\mathcal{H}_\mathrm{anc}}\!\left(\kett{V}\Big(\bbra{U_x\Delta}-\bbra{U_y\Delta}\Big)\right),\]
then the function $f(U_x,U_y)=\|F(U_x,U_y)\|_1=\tr(|F(U_x,U_y)|)$ is $\sqrt{\frac{2}{d_1}}$-Lipschitz with respect to the $\ell_2$-sum of the $2$-norms (Frobenius norm). 
Furthermore, for independent random $U_x,U_y\sim \mathbb{U}_{r(d_2-\chi)}$, we have $\E\!\left[\tr(|F(U_x,U_y)|^2)\right]\geq\frac{1}{r}$, and $\E\!\left[\tr(|F(U_x,U_y)|^4)\right]\leq \frac{384}{r^3}$.
\end{lemma}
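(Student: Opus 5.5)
I will follow the template already used for \cref{lemma-3191720} and \cref{lemma-3250133}, working with the Kraus-type decomposition of $V$. Write $V=\frac{1}{\sqrt{1-\varepsilon^2}}V_0=\sum_{i=1}^r\ket{i}_\mathrm{anc}\otimes \widehat K_i$ with $\widehat K_i=K_i/\sqrt{1-\varepsilon^2}$. Then \cref{eq-3200358} gives $\sum_i\widehat K_i^\dag\widehat K_i=I_{\mathcal{H}_\mathrm{A}}$ and $|\tr(\widehat K_i^\dag\widehat K_j)|\le \frac{2d_1}{(1-\varepsilon^2)r}\mathbbm{1}_{i=j}\le\frac{8d_1}{3r}\mathbbm{1}_{i=j}$. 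Set $K_{x,i}=\bra{i}_\mathrm{anc}U_x\Delta$ and $K_{y,i}=\bra{i}_\mathrm{anc}U_y\Delta$, so that
\[F(U_x,U_y)=\frac{1}{d_1}\sum_{i=1}^r\kett{\widehat K_i}\big(\bbra{K_{x,i}}-\bbra{K_{y,i}}\big).\]

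\textbf{Lipschitz property.} This follows directly from \cref{lemma-3192212}, applied to the isometry $V$ and the isometry $U\Delta$, both with $d_1$-dimensional input. This yields the constant $\sqrt{2/d_1}$.

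\textbf{Second moment.} Expanding $\tr(|F|^2)$ and using the orthogonality $\bbrakett{\widehat K_i}{\widehat K_j}=0$ for $i\neq j$ leaves
\[\frac{1}{d_1^2}\sum_i\bbrakett{\widehat K_i}{\widehat K_i}\,\E\big\|\kett{K_{x,i}}-\kett{K_{y,i}}\big\|^2 .\]
By Schur's lemma on $\mathbb{U}_{r(d_2-\chi)}$, where $\ketbra{i}{i}_\mathrm{anc}$ restricted to $\mathcal{H}_\mathrm{B}[\chi+1:d_2]\otimes\mathcal{H}_\mathrm{anc}$ has trace $d_2-\chi$, we get $\E[\tr(K_{t_1,i}^\dag K_{t_2,i})]=\mathbbm{1}_{t_1=t_2}\,\tr(\Delta^\dag\Delta)/r=\mathbbm{1}_{t_1=t_2}\,d_1/r$. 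Hence the expectation above equals $2d_1/r$ for each $i$. Combined with $\sum_i\bbrakett{\widehat K_i}{\widehat K_i}=\tr(V^\dag V)=d_1$, this gives $\E\tr(|F|^2)=\frac{1}{d_1^2}\cdot d_1\cdot\frac{2d_1}{r}=\frac{2}{r}\ge\frac1r$.

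\textbf{Fourth moment.} Expand $\tr(|F|^4)$ as a four-index sum. The inner factors $\bbrakett{\widehat K_j}{\widehat K_k}$ force $j=k$ and are bounded by $\frac{8d_1}{3r}$; the outer pair $\bbrakett{\widehat K_l}{\widehat K_i}$ similarly forces $l=i$. This gives
\[\E\tr(|F|^4)\le\frac{(8d_1/3r)^2}{d_1^4}\sum_{i,j}\E\big|(\bbra{K_{x,i}}-\bbra{K_{y,i}})(\kett{K_{x,j}}-\kett{K_{y,j}})\big|^2 .\]
Next, apply $|a+b+c+d|^2\le4\sum|\cdot|^2$ to split the cross terms, then use the Cauchy--Schwarz argument of \cref{eq-3200006} to reduce to $16\sum_{i,j}\E|\tr(K_{x,i}^\dag K_{x,j})|^2$. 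Finally, invoke \cref{lemma-3190121} with $k=1$. This is admissible because $\Delta$ is an isometry from $\mathcal{H}_\mathrm{A}$ into $\mathcal{H}_\mathrm{B}[\chi+1:d_2]\otimes\mathcal{H}_\mathrm{anc}$ and $r\le d_1(d_2-\chi)$, which holds since $d_2-\chi\ge\lfloor d_2/2\rfloor\ge1$ and $r\le d_1\chi\le d_1(d_2-\chi)$ up to the rounding guaranteed by $d_1d_2\ge 2r$. The lemma bounds the sum by $4d_1^2/r$. Altogether,
\[\E\tr(|F|^4)\le \frac{64d_1^2}{9r^2d_1^4}\cdot16\cdot\frac{4d_1^2}{r}=\frac{4096}{9r^3}.\]

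\textbf{Main obstacle.} The constants are where I expect trouble. The crude bound above gives $\approx455/r^3$, slightly above the claimed $384/r^3$. I would tighten it by using $\varepsilon\le 1/4000$, so that $(1-\varepsilon^2)^{-1}\approx1$ and the factor becomes $(2d_1/r)^2(1+o(1))$, giving $\approx 256\cdot\frac{4}{...}$. Concretely, $4\cdot16\cdot4=256$ times $(1-\varepsilon^2)^{-2}\le 1.01$ yields $\le 384/r^3$. I will also double-check the dimension condition needed for \cref{lemma-3190121} with $k=1$ in the edge case $d_2=2$, $\chi=1$.
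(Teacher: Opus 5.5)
Your overall route is the paper's: expand $V$ in Kraus form, use Schur's lemma for the second moment, split $|a+b+c+d|^2$ and apply the Cauchy--Schwarz step of \cref{eq-3200006}, then invoke \cref{lemma-3190121}. Your Lipschitz and second-moment arguments are fine. The gap is in the fourth moment. You apply \cref{lemma-3190121} with $k=1$, which requires $r\le d_1(d_2-\chi)$. Your justification $r\le d_1\chi\le d_1(d_2-\chi)$ needs $2\chi\le d_2$, and this fails whenever $d_2$ is odd and $\chi=\lceil d_2/2\rceil$. For example, $d_1=10$, $d_2=5$, $r=25$ satisfies every standing assumption of this section ($d_1<r$, $d_1+r\le rd_2$, $r\le d_1d_2/2$). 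Yet $\chi=3$ and $d_1(d_2-\chi)=20<r$; the same happens for $d_1=2$, $d_2=3$, $r=3$. The edge case $d_2=2$ you planned to check cannot occur at all, since it would force $r\le d_1<r$. What the assumptions do give is $r\le d_1d_2/2\le 2d_1\lfloor d_2/2\rfloor\le 2d_1(d_2-\chi)$, i.e.\ $k=2$, as the paper uses. The lemma then yields $6d_1^2/r$, not $4d_1^2/r$.

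With $k=2$, your chain gives $96c^2/r^3$, where $c$ is the constant in $\tr(\widehat K_i^\dag\widehat K_i)\le c\,d_1/r$. With your $c=2/(1-\varepsilon^2)$ this is $384/((1-\varepsilon^2)^2r^3)$, strictly above the claimed $384/r^3$. Your proposed repair only worked because of the spurious slack from $k=1$. It also imports $\varepsilon\le 1/4000$, which is not a hypothesis of this lemma; the construction only has $\varepsilon<1/2$, where $(1-\varepsilon^2)^{-2}$ approaches $16/9$.

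The fix is to avoid losing the factor $(1-\varepsilon^2)^{-1}$. The $K_i$ in \cref{eq-3200358} are $\sqrt{1-\varepsilon^2}$ times the operators supplied by \cref{lemma-3180158}. So the Kraus operators of the isometry $V$ are exactly those operators, and they satisfy $\tr(\widehat K_i^\dag\widehat K_i)\le 2d_1/r$ directly. Then $c=2$, and you get $\frac{4}{r^2d_1^2}\cdot 16\cdot\frac{6d_1^2}{r}=\frac{384}{r^3}$, which is exactly how the paper's constant arises.

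Alternatively, bound the exact expression from the proof of \cref{lemma-3190121}, with $d_2$ replaced by $d_2-\chi$. It is at most $\frac{d_1}{d_2-\chi}+\frac{d_1^2}{r}\le\frac{5}{2}\cdot\frac{d_1^2}{r}$, using $d_2-\chi\ge\lfloor d_2/2\rfloor$ and $r\le d_1d_2/2$. This leaves enough room even with $c=8/3$.
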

The proof of \cref{lemma-3200327} is deferred to \cref{sec-3211843}.

By the H\"older's inequality we have
\[\E\!\left[\tr\!\left(|F(U_x,U_y)|^2\right)\right]\leq \E\!\left[\tr\!\left(|F(U_x,U_y)|^4\right)\right]^{1/3}\E\!\left[\tr\!\left(|F(U_x,U_y)|\right)\right]^{2/3},\]
which, combined with \cref{lemma-3200327}, implies
\[\E\!\left[\tr\!\left(|F(U_x,U_y)|\right)\right]^2\geq \frac{1}{384}.\]
Thus $\E\!\left[\tr\!\left(|F(U_x,U_y)|\right)\right]> 1/20$.
Then, since the function $f(U_x,U_y)=\tr(|F(U_x,U_y)|)$ is $\sqrt{\frac{2}{d_1}}$-Lipschitz, we can use \cref{lem:corollary17} to prove the concentration result:
\[\Pr\!\left[\tr\!\left(|F(U_x,U_y)|\right)\leq \frac{1}{2000}\right]\leq \exp\!\left(-\frac{d_1r(d_2-\chi)}{2}\cdot \frac{1}{1600\cdot 12}\right)\leq \exp\!\left(-\frac{rd_1d_2}{153600}\right),\]
where we used $d_2-\chi\geq \lfloor d_2/2 \rfloor\geq d_2/4$.
Then, we independently sample $\exp(rd_1d_2/307201)$ Haar random unitaries in $\mathbb{U}_{r(d_2-\chi)}$ and the union bound shows that there exists a non-zero probability that for any pair $U_x, U_y$, we have $\tr(|F(U_x,U_y)|)\geq 1/2000$.
Thus, there exists a set with cardinality $\geq \exp(rd_1d_2/307201)$ such that \cref{eq-3200315} holds.
\end{proof}

\subsubsection{Second and fourth moments: proof of Lemma \ref{lemma-3200327}}\label{sec-3211843}
The Lipschitz continuity can be seen from \cref{lemma-3192212}.

Define $K_{x,i}=\bra{i}_\mathrm{anc} U_x\Delta$ and $K_{y,i}=\bra{i}_\mathrm{anc} U_y\Delta$. Thus
$F(U_x,U_y)=\frac{1}{d_1}\sum_{i=1}^r\kett{K_i}\Big(\bbra{K_{x,i}}-\bbra{K_{y,i}}\Big)$.
Then, we note that
\begin{align}
\E\!\left[\tr(|F(U_x,U_y)|^2)\right]&=\frac{1}{d_1^2}\E\!\left[\tr\!\left(\sum_{i,j=1}^r \kett{K_i}\Big(\bbra{K_{x,i}}-\bbra{K_{y,i}}\Big)\Big(\kett{K_{x,j}}-\kett{K_{y,j}}\Big)\bbra{K_j}\right)\right] \nonumber\\
&=\frac{1}{d_1^2}\E\!\left[\sum_{i=1}^r \bbrakett{K_i}{K_i}\Big(\bbra{K_{x,i}}-\bbra{K_{y,i}}\Big)\Big(\kett{K_{x,i}}-\kett{K_{y,i}}\Big)\right]\nonumber \\
&=\frac{2}{rd_1}\sum_{i=1}^r\bbrakett{K_i}{K_i}  \label{eq-3200338} \\
&\geq\frac{1}{r},\label{eq-3200339}
\end{align}
where \cref{eq-3200338} is because for $z_1,z_2\in\{x,y\}$, we have
\begin{align}
\E\!\left[\bbrakett{K_{z_1,i}}{K_{z_2,i}}\right]&=\E\!\left[\tr\!\left(K_{z_1,i}^\dag K_{z_2,i}\right)\right] =\tr\!\left(\Delta^\dag \E\!\left[U_{z_1}^\dag \ketbra{i}{i}_\mathrm{anc}U_{z_2} \right]\Delta\right) \nonumber\\
&=\mathbbm{1}_{z_1=z_2}\cdot \frac{1}{r}  \tr(\Delta^\dag \Delta) \label{eq-3200341}\\
&=\mathbbm{1}_{z_1=z_2}\cdot \frac{d_1}{r}, \nonumber
\end{align}
where \cref{eq-3200341} is due to Schur's lemma, 
\cref{eq-3200339} is because $\sum_{i=1}^r\tr(K_i^\dag K_i)=\tr(V_0^\dag V_0)\geq \frac{1}{2}\tr(I_{\mathcal{H}_\mathrm{A}})=d_1/2$.

Then, we also have
\begin{align}
\E\!\left[\tr(|F(U_x,U_y)|^4)\right]&=\frac{1}{d_1^4}\E\!\Bigg[\sum_{i,j,k,l=1}^r \tr\!\bigg(\kett{K_i}\Big(\bbra{K_{x,i}}-\bbra{K_{y,i}}\Big)\Big(\kett{K_{x,j}}-\kett{K_{y,j}}\Big)\bbra{K_j} \nonumber\\
&\qquad\qquad\qquad\qquad\kett{K_{k}}\Big(\bbra{K_{x,k}}-\bbra{K_{y,k}}\Big)\Big(\kett{K_{x,l}}-\kett{K_{y,l}}\Big)\bbra{K_{l}}\bigg)\Bigg]  \nonumber \\ 
&\leq \frac{4}{r^2d_1^2}\sum_{i,j=1}^r\E\!\bigg[\left|\Big(\bbra{K_{x,i}}-\bbra{K_{y,i}}\Big)\Big(\kett{K_{x,j}}-\kett{K_{y,j}}\Big)\right|^2\bigg]\label{eq-3200346} \\
&\leq \frac{16}{r^2 d_1^2}\sum_{i,j=1}^r\E\!\left[|\bbrakett{K_{x,i}}{K_{x,j}}|^2+|\bbrakett{K_{y,i}}{K_{y,j}}|^2+|\bbrakett{K_{x,i}}{K_{y,j}}|^2+|\bbrakett{K_{y,i}}{K_{x,j}}|^2\right] \nonumber\\
&\leq \frac{32}{r^2 d_1^2}\sum_{i,j=1}^r\E\!\left[|\bbrakett{K_{x,i}}{K_{x,j}}|^2+|\bbrakett{K_{y,i}}{K_{y,j}}|^2\right]\label{eq-3200347}\\
&=\frac{64}{r^2 d_1^2}\sum_{i,j=1}^r \E\!\left[|\bbrakett{K_{x,i}}{K_{x,j}}|^2\right]=\frac{64}{ r^2 d_1^2}\sum_{i,j=1}^r\E\!\left[\left|\tr(K_{x,i}^\dag K_{x,j})\right|^2\right],\nonumber\\
&\leq \frac{384}{r^3},\label{eq-3200348}
\end{align}
where \cref{eq-3200346} is due to \cref{eq-3200358}, \cref{eq-3200347} due to the similar argument as that in \cref{eq-3200006}, 
and in \cref{eq-3200348}, the first inequality is due to \cref{lemma-3190121} and that $\Delta$ is an isometry from $\mathcal{H}_\mathrm{A}$ to $\mathcal{H}_\mathrm{B}[\chi+1:d_2]\otimes\mathcal{H}_{\mathrm{anc}}$ (further note that $\dim(\mathcal{H}_{\mathrm{anc}})=r\leq d_1d_2/2\leq 2d_1\lfloor \frac{d_2}{2}\rfloor\leq 2d_1(d_2-\chi)=2\dim(\mathcal{H}_\mathrm{A}) \dim(\mathcal{H}_{\mathrm{B}}[\chi+1:d_2])$ so we choose $k=2$ in \cref{lemma-3190121}).

\section{Query lower bounds}\label{sec-3200555}
Now, we combine the results from \cref{sec-3200450} and \cref{sec-3290355} to give the lower bounds for quantum channel tomography.

\begin{theorem}\label{thm-390442}
Let $d_1,d_2,r$ be positive integers such that $d_1\leq rd_2\leq \frac{4}{3}d_1$. 
Suppose $\varepsilon\leq \frac{1}{39660}$ and $rd_1d_2\geq 7\cdot 10^{25}$. Then, tomography of an unknown channel $\mathcal{E}\in\qchannel_{d_1,d_2}^r$ to within Choi trace norm or diamond norm error $\frac{\varepsilon}{3966000}$ with probability at least $2/3$ must use at least $\frac{1}{2\cdot 10^{19}}\cdot \frac{d_1^2}{\varepsilon}$ 
queries to $\mathcal{E}$.
\end{theorem}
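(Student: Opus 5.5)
The proof reduces tomography to discriminating a packing net, combining the Type I construction of \cref{thm:existence-d1=rd2} with the hardness result \cref{thm-1111809}.

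\textbf{Step 1 (the net).} Since $d_1\le rd_2\le \frac43 d_1$ and $\varepsilon\le 1/39660$, \cref{thm:existence-d1=rd2} gives a set $\mathcal{N}$ of Type I isometries $V_{\varepsilon,U}$ (with $d=d_1$, $D=rd_2$) such that
\[|\mathcal{N}|\ge \exp(d_1^2/750197760001)\]
and the induced channels are pairwise $\varepsilon/1983000$-separated in normalized Choi trace norm. By \cref{eq-3302201} they are also separated by at least that much in diamond norm. The hypothesis $rd_1d_2\ge 7\cdot10^{25}$ together with $rd_2\le\frac43 d_1$ gives $d_1^2\ge \frac34 rd_1d_2$, so $d_1$ is huge. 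This covers the requirement $d_1\ge 162$ used inside that proof, and the requirement $d\ge 8/C$ for $C=1/750197760001$ in \cref{thm-1111809}.

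\textbf{Step 2 (tomography implies discrimination).} Suppose a tester learns any $\mathcal{E}\in\qchannel^r_{d_1,d_2}$ to error $\varepsilon/3966000$, which is half the separation, with probability $2/3$. Composing it with rounding to the nearest element of the net identifies the channel $\mathcal{E}_V$ correctly with probability at least $2/3$. The triangle inequality in either norm guarantees the nearest net point is the true one.

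\textbf{Step 3 (lift to isometries).} Any algorithm querying $\mathcal{E}_V=\tr_{\mathrm{anc}}(V\cdot V^\dag)$ can be simulated with queries to $V$ by discarding the ancilla after each query. This works for sequential testers as well, since tracing out is a free operation between queries. We therefore obtain an algorithm that distinguishes the isometries in $\mathcal{N}$ with probability at least $2/3$ using the same number $n$ of queries.

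\textbf{Step 4 (apply hardness).} By \cref{thm-1111809} with $C=1/750197760001$,
\[n\ge \min\Big\{\tfrac{C^{3/2}}{21},\tfrac1{\sqrt{2e^5}}\Big\}\frac{d_1^2}{\varepsilon}.\]
Here $C^{3/2}/21\approx (1.33\cdot10^{-12})^{3/2}/21\approx 1.5\cdot10^{-18}/21\approx 7\cdot 10^{-20}$, which is at least $\frac{1}{2\cdot10^{19}}$, hmm: $1/(2\cdot10^{19})=5\cdot10^{-20}$, so the inequality holds.

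The main obstacle is bookkeeping rather than new ideas. One has to check that $V_{\varepsilon,U}$ as defined in \cref{eq-3211722} is literally an element of the family from \cref{sec:hard-d1=rd2}, with $D=rd_2$ via the identification $\mathcal{H}_{\mathrm{B}'}\cong\mathcal{H}_\mathrm{B}\otimes\mathcal{H}_\mathrm{anc}$. One also has to confirm that all the size hypotheses ($d_1\ge162$, $d\ge 8/C$) follow from $rd_1d_2\ge 7\cdot10^{25}$, and that the constants combine to the claimed $2\cdot10^{19}$.
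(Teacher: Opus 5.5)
Your proposal is correct and is essentially the paper's own proof. It combines the Type I packing net of \cref{thm:existence-d1=rd2}, the lift from channel discrimination to isometry discrimination by discarding the ancilla, and \cref{thm-1111809} with $C=1/750197760001$; your constant checks ($d_1\ge\sqrt{\tfrac34\cdot 7\cdot 10^{25}}\approx 7.2\cdot 10^{12}\ge 8/C\approx 6.0\cdot 10^{12}$ and $C^{3/2}/21\approx 7.3\cdot 10^{-20}\ge 5\cdot 10^{-20}$) are exactly what is needed.

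One caveat you share with the paper: \cref{thm:existence-d1=rd2} relies on the standing assumption $r\le d_1d_2/2$ of \cref{sec-3290355}, and its proof uses $2r\le rd_2$, neither of which follows from the hypotheses when $d_2=1$. In that case the statement actually fails, since $\qchannel^r_{d_1,1}$ contains only the map $\rho\mapsto\tr(\rho)$ and no queries are needed. So you should add the hypothesis $d_2\ge 2$, under which $r\le \frac{4d_1}{3d_2}\le \frac{d_1d_2}{2}$ holds automatically.
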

\begin{proof}
This can be obtained by combining \cref{thm-1111809} with \cref{thm:existence-d1=rd2}. 
Specifically, a tomography algorithm to within Choi trace norm error $\frac{\varepsilon}{2\cdot 1983000}=\frac{\varepsilon}{3966000}$ can discriminate between the quantum channels constructed by \cref{thm:existence-d1=rd2}.
Therefore, this algorithm can be used to discriminate between the corresponding Stinespring dilation isometries (simply discard the ancilla system and then apply this algorithm).
Note that these isometries coincide with those defined in \cref{sec:hard-d1=rd2} (up to a change of basis).
Furthermore, note that $\frac{4}{3}d_1^2\geq rd_1d_2\geq 7\cdot 10^{25}$ implies $d_1\geq 7\cdot 10^{12}$. 
Therefore, \cref{thm-1111809} applies and shows that any such algorithm must use at least $\frac{1}{2\cdot 10^{19}}\cdot \frac{d_1^2}{\varepsilon}$ queries.
Since diamond norm is always larger than or equal to the Choi trace norm, this lower bound also applies to diamond norm tomography algorithms.
\end{proof}

\begin{theorem}\label{thm-390512}
Let $c\in(0,1]$ be constant, and $d_1,d_2,r$ be positive integers such that $(1+c)d_1\le rd_2$.
Suppose $\varepsilon\leq \frac{c^{3/2}}{25600}$ and $rd_1d_2\geq \frac{2\cdot 10^{30}}{c^{10}}$.
Then, tomography of an unknown channel $\mathcal{E}\in\qchannel_{d_1,d_2}^r$ to within Choi trace norm or diamond norm error $\frac{c^{3/2}}{25600}\varepsilon$ with probability at least $2/3$ must use at least $\frac{c^6}{10^{16}}\cdot \frac{rd_1d_2}{\varepsilon^2}$ queries to $\mathcal{E}$.
\end{theorem}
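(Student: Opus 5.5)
The plan mirrors the proof of \cref{thm-390442}: a tomography algorithm with precision half the packing distance of a type II packing net distinguishes the corresponding Stinespring isometries, and \cref{thm-1110122} lower bounds the number of queries any such discriminator needs. Since $(1+c)d_1\le rd_2$, we have $\sr=\min\{(rd_2-d_1)/d_1,1\}\ge c$ and $rd_2>d_1$. Also $rd_2-d_1\ge \frac{c}{1+c}rd_2\ge \frac{c}{2}rd_2$, so $(rd_2-d_1)\min\{d_1,rd_2-d_1\}\ge \frac{c}{2}rd_2\cdot c\,d_1/2=\Omega(c^2 rd_1d_2)$. The discrimination bound of \cref{thm-1110122}, with $d=d_1$ and $D=rd_2$, is therefore of order $rd_1d_2/\varepsilon^2$ times a power of $c$. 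Diamond norm dominates the Choi trace norm, so it suffices to treat Choi trace norm.

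The proof splits into cases according to \cref{tab-3290336}.
\begin{itemize}
\item \emph{Case $r\le d_1$ (and $d_2>1$).} Using $(1+c)d_1\le rd_2$, check whether $d_1+r\le rd_2$ holds (case of \cref{sec-3230333}) or whether $rd_2<d_1+r$ (case of \cref{sec-3230332}). In the first subcase, \cref{thm-3191542} gives a net of size $\exp(\sr^2(rd_2-d_1)\min\{d_1,rd_2-d_1\}/3840001)$ with Choi distance $\sr^{3/2}\varepsilon/12800\ge c^{3/2}\varepsilon/12800$. Tomography to error $\frac{c^{3/2}}{25600}\varepsilon$ then distinguishes the channels. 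By discarding the ancilla, it also distinguishes the dilations $V_{\varepsilon,U}$, which are type II isometries (up to basis change) with $D-d=rd_2-d_1$ and $d'=r\zeta\le\min\{d,D-d\}$. Apply \cref{thm-1110122} with $C=\sr^2/3840001\ge c^2/3840001$. This requires $D-d\ge 64/C^2$, which follows from $rd_1d_2\ge 2\cdot 10^{30}/c^{10}$ together with $rd_2-d_1\ge \frac{c}{2}rd_2$ and $d_1\le rd_2$. The result is $n\ge\min\{C^2/32,1/(2e^4)\}\cdot(rd_2-d_1)\min\{d_1,rd_2-d_1\}/\varepsilon^2\ge \Omega(c^4)\cdot c^2 rd_1d_2/\varepsilon^2$, which gives the claimed $c^6/10^{16}$ after tracking constants. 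In the second subcase, \cref{sec-3230332} has $\sr<1$, so $c\le\sr$, and the same argument with \cref{thm-1140332} applies; the resulting cardinality $\sr^2(rd_2-d_1)^2$ is compared against $(rd_2-d_1)\min\{d_1,rd_2-d_1\}$, where $rd_2-d_1\le d_1$ here.
\item \emph{Case $d_1<r$.} Here $rd_2\ge d_1+r$ automatically when $d_2\ge2$. Use \cref{thm-3200309}, whose net has size $\exp(rd_1d_2/307201)$ and distance $\varepsilon/2000$, so $C$ is an absolute constant. Its $\varepsilon$ condition $\varepsilon\le1/4000$ holds since $c^{3/2}/25600\le1/4000$.
\item \emph{Case $d_2=1$.} Then $r\ge(1+c)d_1$, so it falls into the previous case; check that the constructions still make sense. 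If necessary, reduce to Kraus rank $r'\le d_1d_2/2$ via \cref{eq-3200605}, noting that a larger Kraus rank only makes tomography harder.
\end{itemize}

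The main obstacle is the bookkeeping in the first case. One must verify that $(rd_2-d_1)\min\{d_1,rd_2-d_1\}\gtrsim c^2 rd_1d_2$, that the size condition $D-d\ge 64/C^2$ with $C\sim c^2$ follows from $rd_1d_2\ge 2\cdot10^{30}/c^{10}$, and that the constants $C^2/32\sim c^4/10^{13}$, multiplied by $c^2/4$, stay above $c^6/10^{16}$. Note also that the Kraus-rank reduction $r\le d_1d_2/2$ may need to be handled if $r$ is close to $d_1d_2$. Finally, confirm that the precision $\frac{c^{3/2}}{25600}\varepsilon$ is at most half of each packing distance used: $\sr^{3/2}\varepsilon/12800$, $\sr^{3/2}\varepsilon/4600$, and $\varepsilon/2000$.
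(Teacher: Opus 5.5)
Your proposal follows the paper's proof: the same three-way split ($rd_2<d_1+r$; $d_1+r\le rd_2$ with $r\le d_1$; $d_1<r$), handled by \cref{thm-1140332}, \cref{thm-3191542} and \cref{thm-3200309} respectively and fed into \cref{thm-1110122} via tomography at half the packing distance, using $\sr\ge c$ and $(rd_2-d_1)\min\{d_1,rd_2-d_1\}\gtrsim c^2 rd_1d_2$. Two small corrections: first, $d_2=1$ does not reduce to the $d_1<r$ case, since every construction in \cref{sec-3290355} needs $d_2>1$ and for $d_2=1$ the trace map is the only channel, which is why the paper sets this case aside as trivial; second, in the regime of \cref{thm-3191542} the size condition $rd_2-d_1\ge 64\cdot 3840001^2/c^4\approx 9.4\cdot 10^{14}/c^4$ does not follow from your crude estimate $d_1\le rd_2$, which only gives about $7\cdot 10^{14}/c^4$, but it does follow from the sharper bound $rd_1d_2\le \frac{2}{c^2}(rd_2-d_1)^2$, obtained from $d_1\le (rd_2-d_1)/c$.
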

\begin{proof}
This can be obtained by combining \cref{thm-1110122} with \cref{thm-1140332}, \cref{thm-3191542} and \cref{thm-3200309}.
Specifically, the regime $d_1<rd_2$ can be divided into three regimes: 
\begin{enumerate}
  \renewcommand{\labelenumi}{(\roman{enumi})}
    \item $rd_2< d_1+r$,
    \item $d_1+r\leq rd_2$ and $r\leq d_1$,
    \item $d_1+r\leq rd_2$ and $d_1<r$,
\end{enumerate}
which are covered by \cref{thm-1140332}, \cref{thm-3191542} and \cref{thm-3200309}, respectively.
We define $\sr=\min\{\frac{rd_2-d_1}{d_1},1\}$. By assumption, we know that $\sr\geq c$. Note that $rd_2\leq \frac{2}{c}(rd_2-d_1)$, $d_1\leq \frac{1}{c} (rd_2-d_1)$ and $rd_1d_2\leq \frac{2}{c^2} (rd_2-d_1)^2$.
Since $rd_1d_2\geq \frac{2\cdot 10^{30}}{c^{10}}$, we know $rd_2-d_1\geq \frac{10^{15}}{c^4}$.

In regime (i), first note that $2r \leq rd_2< d_1+r$ (since $d_2=1$ is trivial), which means $d_1> r$. Thus $rd_2-d_1 \leq r <d_1$.
A tomography algorithm to within Choi trace norm error $\frac{c^{3/2}}{9200}\varepsilon$ can discriminate between $\exp(\frac{c^2}{480001}(rd_2-d_1)^2)$ quantum channels constructed by \cref{thm-1140332}.
Therefore, this algorithm can be used to discriminate between the corresponding Stinespring dilation isometries (simply discard the ancilla system and then apply this algorithm).
Note that these isometries coincide with those defined in \cref{sec-1110250} (up to a change of basis).
Furthermore, note that $rd_2-d_1\geq \frac{2\cdot 10^{13}}{c^4}$.
Therefore, \cref{thm-1110122} applies and shows that any such algorithm must use at least
\[\frac{c^4}{8\cdot 10^{12}}\cdot \frac{(rd_2-d_1)^2}{\varepsilon^2}\geq \frac{c^6}{16\cdot 10^{12}}\cdot \frac{rd_2d_1}{\varepsilon^2}\]
queries.

In regime (ii), we define $d'=\min\{d_1,rd_2-d_1\}$. 
A tomography algorithm to within Choi trace norm error $\frac{c^{3/2}}{25600}\varepsilon$ can discriminate between $\exp(\frac{c^2}{3840001}(rd_2-d_1)d')$ quantum channels constructed by \cref{thm-3191542}.
Therefore, this algorithm can be used to discriminate between the corresponding Stinespring dilation isometries (simply discard the ancilla system and then apply this algorithm).
Note that these isometries coincide with those defined in \cref{sec-1110250} (up to a change of basis). 
Furthermore, note that $rd_2-d_1\geq \frac{10^{15}}{c^4}$.
Therefore, \cref{thm-1110122} applies and shows that any such algorithm must use at least
\[\frac{c^4}{5\cdot 10^{14}}\cdot \frac{(rd_2-d_1)d'}{\varepsilon^2}\geq \frac{c^6}{10^{16}}\cdot \frac{rd_1d_2}{\varepsilon^2}\]
queries.

In regime (iii), note that $rd_2-d_1\geq r>d_1$. Furthermore, since $rd_2\geq 2r>2d_1$ (since $d_2=1$ is trivial), we have $rd_2-d_1\geq rd_2/2$.
A tomography algorithm to within Choi trace norm error $\frac{1}{4000}\varepsilon$ can discriminate between $\exp(\frac{1}{307201}rd_1d_2)\geq \exp(\frac{1}{307201}(rd_2-d_1)d_1)$ quantum channels constructed by \cref{thm-3200309}.
Therefore, this algorithm can be used to discriminate between the corresponding Stinespring dilation isometries (simply discard the ancilla system and then apply this algorithm).
Note that these isometries coincide with those defined in \cref{sec-1110250} (up to a change of basis).
Furthermore, note that $rd_2-d_1\geq 7\cdot 10^{12}$.
Therefore, \cref{thm-1110122} applies and shows that any such algorithm must use at least
\[\frac{1}{4\cdot 10^{12}}\cdot \frac{(rd_2-d_1)d_1}{\varepsilon^2}\geq \frac{1}{8\cdot 10^{12}}\cdot \frac{rd_1d_2}{\varepsilon^2}\]
queries.

Then, combining the results in these three regimes, we obtain the desired lower bound.
Moreover, the Choi trace norm lower bound directly implies the same lower bound for diamond norm.
\end{proof}

\begin{theorem}\label{thm-390511}
Let $d_1,d_2,r$ be positive integers such that $d_1 < rd_2 \le  2d_1$.
Suppose $\varepsilon\leq \frac{1}{160}$ and $rd_2-d_1\geq 64\cdot 38404^2$.
Then, tomography of an unknown channel $\mathcal{E}\in\qchannel_{d_1,d_2}^r$ to within diamond norm error $\frac{\varepsilon}{160}$ with probability at least $2/3$ must use at least $\frac{1}{32\cdot 38404^2}\cdot \frac{(rd_2-d_1)^2}{\varepsilon^2}$ queries to $\mathcal{E}$.
\end{theorem}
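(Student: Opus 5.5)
The proof combines the diamond-norm packing nets of \cref{thm-3231724,thm-3250101} with the Type II hardness result \cref{thm-1110122}, in the same way as the proof of \cref{thm-390512}. The assumption $d_1<rd_2\le 2d_1$ means $rd_2-d_1\le d_1$. Hence $\min\{d_1,rd_2-d_1\}=rd_2-d_1$, and the Type II hardness count $(D-d)\min\{d,D-d\}$, with $d=d_1$ and $D=rd_2$, becomes exactly $(rd_2-d_1)^2$. We may assume $d_2\ge 2$, since for $d_2=1$ the channel is trivial. We then split into the two regimes that the diamond-norm packing nets cover.

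\textbf{Regime $rd_2<d_1+r$.} Here \cref{thm-3231724} gives, for $\varepsilon\le 1/160$, a set $\mathcal{N}$ of type II isometries $V_{\varepsilon,U}$ with $|\mathcal{N}|\ge\exp((rd_2-d_1)^2/4801)$. Their contracted channels are pairwise $\varepsilon/80$-far in diamond norm.

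\textbf{Regime $d_1+r\le rd_2$.} Together with $rd_2\le 2d_1$, this gives $r\le d_1$. Then \cref{thm-3250101} gives a set of cardinality at least
\[
\exp\bigl((rd_2-d_1)\min\{d_1,rd_2-d_1\}/38404\bigr)=\exp\bigl((rd_2-d_1)^2/38404\bigr),
\]
again with pairwise diamond distance at least $\varepsilon/80$.

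In both regimes we have $|\mathcal{N}|\ge\exp(C(rd_2-d_1)^2)$ with $C=1/38404$. A tomography algorithm achieving diamond error $\varepsilon/160$ with probability $2/3$ therefore identifies the channel in $\mathcal{N}$. Indeed, the estimate is strictly closer to the true channel than to any other element, by the triangle inequality: a strict margin, or rounding to the nearest net element, is harmless up to constants. It therefore discriminates the channels of $\mathcal{N}$ with success probability at least $2/3$. Applying it to the dilation isometries after discarding the ancilla discriminates the isometries themselves. Up to a change of basis, these are the isometries $V_{\varepsilon,U}$ of \cref{sec-1110250} with $d=d_1$ and $D=rd_2$. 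The $V_0'$ part in the first regime and the padding in the second only alter $V_0$. We still need $\Delta$ to have image orthogonal to that of $V_0$, $V_0+\varepsilon U\Delta$ to be an isometry, and $\operatorname{tr}_B$ of each piece to be at most $I_A$; these are exactly what \cref{lemma-12270148,lemma-12281107} use, so the Type II argument goes through verbatim.

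Next we invoke \cref{thm-1110122} with $C=1/38404$. Its hypothesis $D-d\ge 64/C^2=64\cdot 38404^2$ is exactly the stated assumption $rd_2-d_1\ge 64\cdot 38404^2$. The conclusion is a lower bound of
\[
\min\Bigl\{\frac{C^2}{32},\frac{1}{2e^4}\Bigr\}\cdot\frac{(rd_2-d_1)^2}{\varepsilon^2}=\frac{1}{32\cdot 38404^2}\cdot\frac{(rd_2-d_1)^2}{\varepsilon^2}
\]
queries. In the first regime the constant $1/4801$ is larger than $1/38404$, so the same bound holds a fortiori.

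The main obstacle is checking that the isometries in the two diamond-norm nets fall within the hypotheses of \cref{thm-1110122}. The perturbation there is the specific $\Delta=\sum_{i\le d'}\ket{d+i}\bra{i}$, and the Haar unitary acts on the entire complement of dimension $D-d$. In the nets, by contrast, $U$ acts on a subspace of dimension $r-\eta=rd_2-d_1$ in the first regime, or $r(d_2-\overline{\chi})$ in the second, and $\Delta$ is a partial isometry of rank $r-\eta$ or $r\zeta$. The hardness proof only needs the $U$-invariant subspace $\Pi$ in \cref{eq-1301405}, of dimension $(\text{Haar dimension})\cdot(\operatorname{rank}\Delta)\le (rd_2-d_1)^2$ up to constants. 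So I would re-run \cref{lemma-12281107} with $Q$ replaced by this product. Since the product is $O((rd_2-d_1)^2)$, the choice of constants above still closes. Should this constant bookkeeping fail, I would fall back to restricting attention to the smaller Haar space and redefining $D$ accordingly.
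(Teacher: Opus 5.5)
Your proposal is correct and follows the paper's proof. You split into $rd_2<d_1+r$ and $d_1+r\le rd_2$ (which forces $r\le d_1$), use the diamond-norm nets of \cref{thm-3231724,thm-3250101} of size at least $\exp((rd_2-d_1)^2/38404)$, and apply \cref{thm-1110122} with $C=1/38404$, $d=d_1$, $D=rd_2$. One remark: $d_2\ge 2$ is really an implicit hypothesis of the statement rather than a reduction, since for $d_2=1$ the only channel is the trace map and zero queries suffice.

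The obstacle in your last paragraph does not arise, because \cref{thm-1110122} only needs $\mathcal{N}\subseteq\{V_{\varepsilon,U}\mid U\in\mathbb{U}_{D-d}\}$. Choose a basis of the complement of the image of $V_0$ (resp.\ $V_0+V_0'$) whose first vectors are the $\Delta\ket{i}$. Then every net element equals $V_0+\varepsilon(U\oplus I)\Delta$ with $U\oplus I\in\mathbb{U}_{rd_2-d_1}$ and $d'=\operatorname{rank}\Delta\le\min\{d_1,rd_2-d_1\}$, which is exactly the family of \cref{sec-1110250}. So no re-run of \cref{lemma-12281107} is required, and in the first regime the Haar space is already the whole complement.
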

\begin{proof}
This can be easily obtained by combining \cref{thm-1110122} with \cref{thm-3231724} and \cref{thm-3250101}.
Specifically, the regime $d_1<rd_2$ can be divided into three regimes: (i) $rd_2\leq d_1+r$, (ii) $rd_2\geq d_1+r$ and $d_1\geq r$, (iii) $rd_2\geq d_1+r$ and $d_1<r$. Since the case (iii) implies $rd_2> 2d_1$, so we know that the regime $d_1<rd_2\leq 2d_1$ is in the union of regimes (i) and (ii), which are covered by \cref{thm-3231724} and \cref{thm-3250101}, respectively.

In regime (i), a tomography algorithm to within diamond norm error $\varepsilon/160$ can discriminate between $\exp((rd_2-d_1)^2/4801)$ quantum channels constructed by \cref{thm-3231724}.
Therefore, this algorithm can be used to discriminate between the corresponding Stinespring dilation isometries (simply discard the ancilla system and then apply this algorithm).
Note that these isometries coincide with those defined in \cref{sec-1110250} (up to a change of basis).
Therefore, \cref{thm-1110122} applies and shows that any such algorithm must use at least $\frac{1}{32\cdot 4801^2}\cdot \frac{(rd_2-d_1)^2}{\varepsilon^2}$ queries (note that $rd_2-d_1\leq d_1$).

In regime (ii), we use a similar argument. Note that $rd_2-d_1\leq d_1$. Thus \cref{thm-3250101} provides $\exp((rd_2-d_1)^2/38404)$ quantum channels, which also have Stinespring dilation isometries that coincide with those defined in \cref{sec-1110250} (up to a change of basis).
Therefore, \cref{thm-1110122} applies and shows that any such algorithm must use at least $\frac{1}{32\cdot 38404^2}\cdot \frac{(rd_2-d_1)^2}{\varepsilon^2}$ queries.

Then, combining the results in these two regimes, we obtain the desired lower bound.
\end{proof}

\begin{corollary}\label{coro-3270443}
Let $d_1,d_2,r$ be positive integers such that $d_1 < rd_2 \le  2d_1$.
Suppose $\varepsilon\leq \frac{1}{160d_1}$ and $rd_2-d_1\geq 64\cdot 38404^2$.
Then, tomography of an unknown channel $\mathcal{E}\in\qchannel_{d_1,d_2}^r$ to within Choi trace norm error $\frac{\varepsilon}{160}$ with probability at least $2/3$ must use at least $\frac{1}{32\cdot 38404^2}\cdot \frac{(rd_2/d_1-1)^2}{\varepsilon^2}$ queries to $\mathcal{E}$.
\end{corollary}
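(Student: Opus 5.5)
The plan is to derive this from \cref{thm-390511} by rescaling the accuracy parameter, using the comparison \cref{eq-3302201} between the two distance measures. The relevant inequality is $\|\mathcal{E}-\mathcal{F}\|_\diamond\leq \|C_\mathcal{E}-C_\mathcal{F}\|_1 = d_1\cdot \frac{1}{d_1}\|C_\mathcal{E}-C_\mathcal{F}\|_1$. So a tomography algorithm achieving Choi trace norm error $\delta$ automatically achieves diamond norm error $d_1\delta$.

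Concretely, set $\varepsilon' \coloneqq d_1\varepsilon$. The hypothesis $\varepsilon\leq \frac{1}{160 d_1}$ gives exactly $\varepsilon'\leq \frac{1}{160}$, which is the accuracy condition needed in \cref{thm-390511}. The dimensional hypotheses $d_1<rd_2\leq 2d_1$ and $rd_2-d_1\geq 64\cdot 38404^2$ are identical to those of \cref{thm-390511}.

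Now suppose an algorithm outputs $\mathcal{F}$ with $\frac{1}{d_1}\|C_\mathcal{F}-C_\mathcal{E}\|_1\leq \frac{\varepsilon}{160}$ with probability at least $2/3$. By \cref{eq-3302201}, the same output satisfies $\|\mathcal{F}-\mathcal{E}\|_\diamond\leq \frac{d_1\varepsilon}{160}=\frac{\varepsilon'}{160}$ with the same probability. It is therefore a diamond-norm tomography algorithm at error $\varepsilon'/160$. Applying \cref{thm-390511} with $\varepsilon'$ in place of $\varepsilon$, it must use at least
\[\frac{1}{32\cdot 38404^2}\cdot\frac{(rd_2-d_1)^2}{\varepsilon'^2}=\frac{1}{32\cdot 38404^2}\cdot\frac{(rd_2-d_1)^2}{d_1^2\varepsilon^2}=\frac{1}{32\cdot 38404^2}\cdot\frac{(rd_2/d_1-1)^2}{\varepsilon^2}\]
queries, which is the claimed bound.

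There is no real obstacle here. The only points to check are bookkeeping: that the factor $d_1$ in \cref{eq-3302201} goes in the right direction (Choi-norm closeness implies diamond-norm closeness only up to a loss of $d_1$, and that loss is what produces the $1/d_1^2$), and that the rescaled $\varepsilon'$ satisfies the range restriction of \cref{thm-390511}. Both were verified above. This reduction loses the $d_1^2$ factor, which is why the Choi-norm bound obtained this way is $\Omega((\tau-1)^2/\varepsilon^2)$ with no dimension factor.
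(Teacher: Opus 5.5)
Your proposal is correct and follows the paper's proof exactly: you set $\varepsilon'=d_1\varepsilon$ and use the upper bound in \cref{eq-3302201} to turn a Choi-trace-norm learner with error $\varepsilon/160$ into a diamond-norm learner with error $\varepsilon'/160$. Since $\varepsilon'\le 1/160$, invoking \cref{thm-390511} with $\varepsilon'$ then gives the stated bound, and your bookkeeping of the $1/d_1^2$ factor matches the paper's.
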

\begin{proof}
Let $\varepsilon'=d_1 \cdot \varepsilon$.
By \cref{eq-3302201}, a tomography algorithm to within Choi trace norm error $\frac{\varepsilon}{160}$ is also a tomography algorithm to within diamond norm error $\frac{1}{160}\varepsilon'$. Note that $\varepsilon'\leq \frac{1}{160}$. Thus \cref{thm-390511} applies and shows that this algorithm must use at least $\frac{1}{32\cdot 38404^2}\cdot\frac{(rd_2-d_1)^2}{\varepsilon'^2}=\frac{1}{32\cdot 38404^2}\cdot\frac{(rd_2/d_1-1)^2}{\varepsilon^2}$ queries.
\end{proof}

\section{Deferred lemmas}\label{sec-770141}

\subsection{Technical lemmas}

\begin{lemma}\label{lemma-12271509}
Suppose $G$ is a compact Lie group and $\rho(\cdot)$ is an action of $G$ on a finite-dimensional Hilbert space $\mathcal{H}$.
Suppose $X\in \mathcal{L}(\mathcal{H})$ is positive semidefinite. Then,
\[\tr\left(\left(\EE{g\sim G}\left[\rho(g) X\rho(g)^{-1}\right]\right)^{-1} X\right)\leq \dim(\mathcal{H}),\]
where $\EE{g\sim G}$ denotes the expectation over the Haar measure of $G$ and $(\cdot)^{-1}$ denotes pseudo-inverse.
\end{lemma}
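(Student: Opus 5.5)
Set $\bar X\coloneqq \EE{g\sim G}[\rho(g)X\rho(g)^{-1}]$; by compactness we may take $\rho$ unitary (averaging the inner product), so $\rho(g)^{-1}=\rho(g)^\dag$ and $\bar X\ge 0$ commutes with every $\rho(g)$. The plan is to decompose $\mathcal{H}$ into isotypic components, $\mathcal{H}\cong\bigoplus_\alpha \mathcal{V}_\alpha\otimes\mathbb{C}^{m_\alpha}$, with $\mathcal{V}_\alpha$ irreducible of dimension $d_\alpha$ and multiplicity $m_\alpha$. By Schur's lemma the twirl acts blockwise. Off-diagonal blocks between inequivalent irreps vanish. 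Each diagonal block is sent to
\[
\frac{1}{d_\alpha}\, I_{\mathcal{V}_\alpha}\otimes \tr_{\mathcal{V}_\alpha}(X_{\alpha\alpha}),
\]
where $X_{\alpha\alpha}$ is the compression of $X$ to the $\alpha$-isotypic component. Hence $\bar X=\bigoplus_\alpha \frac{1}{d_\alpha} I_{d_\alpha}\otimes Y_\alpha$ with $Y_\alpha\coloneqq\tr_{\mathcal{V}_\alpha}(X_{\alpha\alpha})\ge 0$.

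Next compute $\tr(\bar X^{-1}X)$. Since $\bar X^{-1}$ is block diagonal, only the diagonal blocks of $X$ contribute:
\[
\tr(\bar X^{-1}X)=\sum_\alpha d_\alpha \tr\big((I\otimes Y_\alpha^{-1})X_{\alpha\alpha}\big)=\sum_\alpha d_\alpha\tr\big(Y_\alpha^{-1}Y_\alpha\big)=\sum_\alpha d_\alpha\operatorname{rank}(Y_\alpha).
\]
The middle equality requires $\supp X_{\alpha\alpha}\subseteq \mathcal{V}_\alpha\otimes\supp Y_\alpha$. This holds because $X_{\alpha\alpha}\ge0$: a vector in the kernel of $Y_\alpha$ tensored with anything is annihilated by $X_{\alpha\alpha}$, by the standard positivity argument for partial traces. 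Finally, $\operatorname{rank}(Y_\alpha)\le m_\alpha$, so the sum is at most $\sum_\alpha d_\alpha m_\alpha=\dim\mathcal{H}$.

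The only delicate step is the pseudo-inverse bookkeeping, namely verifying the support inclusion so that $Y_\alpha^{-1}Y_\alpha$ is exactly the projector onto $\supp Y_\alpha$. I would handle it by writing $X_{\alpha\alpha}=\sum_k\ketbra{v_k}{v_k}$ and observing that each $v_k$ lies in $\mathcal{V}_\alpha\otimes\supp Y_\alpha$. Alternatively, one can bypass Schur entirely. Note $\bar X\ge \mu(S)\,\rho(g)X\rho(g)^\dag$ is not directly usable, so instead use the operator inequality $\tr(\bar X^{-1}X)=\EE{g}\tr(\bar X^{-1}\rho(g)X\rho(g)^\dag)=\tr(\bar X^{-1}\bar X)=\operatorname{rank}\bar X\le\dim\mathcal{H}$, using invariance of $\bar X^{-1}$ under conjugation by $\rho(g)$. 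This shorter route is what I would actually write, with the Schur decomposition as a fallback.
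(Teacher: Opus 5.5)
Your shorter route is exactly the paper's proof. For each $g$, cyclicity plus the fact that $\overline{X}$ commutes with $\rho(g)$ (hence so does its pseudo-inverse, which is a polynomial in $\overline{X}$) gives $\tr(\overline{X}^{-1}X)=\tr(\overline{X}^{-1}\rho(g)X\rho(g)^{-1})$, and averaging over $g$ yields $\tr(\overline{X}^{-1}X)=\tr(\overline{X}^{-1}\overline{X})\le\dim\mathcal{H}$.

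The Schur fallback is also valid but unnecessary. Its support-inclusion step is not even needed, since $\tr((I\otimes Y_\alpha^{-1})X_{\alpha\alpha})=\tr(Y_\alpha^{-1}\tr_{\mathcal{V}_\alpha}X_{\alpha\alpha})$ holds for any operator. Read your unitarity reduction as an assumption rather than a without-loss-of-generality step, because changing the inner product changes what positivity and pseudo-inverse mean. This matches the unitary actions the paper actually uses.
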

\begin{proof}
Define $\overline{X}=\EE{g\sim G}\left[\rho(g) X\rho(g)^{-1}\right]$. Then
\begin{align}
\tr\left(\overline{X}^{-1} X\right)
&= \E_{g\sim G}\!\left[\tr\!\left(\rho(g) \overline{X}^{-1}  X\rho(g)^{-1}\right)\right] \nonumber \\
&= \E_{g\sim G}\!\left[\tr\!\left( \overline{X}^{-1} \rho(g) X\rho(g)^{-1}\right)\right] \nonumber\\
&= \tr\left(\overline{X}^{-1} \overline{X}\right) \nonumber \\
& \leq \dim(\mathcal{H}),\nonumber 
\end{align}
where we used that $\overline{X}$ commutes with $\rho(g)$.
\end{proof}

\begin{lemma}\label{lemma-12271712}
Suppose $n, m$ are positive integers such that $n\geq m$. For $i\in[n]$, let $\mathcal{H}_i\cong \mathbb{C}^{d}$ be a Hilbert space. 
We use $\{\ket{0},\ldots,\ket{d-1}\}$ to denote an orthonormal basis of $\mathbb{C}^d$.
We define the following subspace of $\bigotimes_{i=1}^n \mathcal{H}_i$:
\[A=\spanspace\left(\left\{\sum_{\substack{S\subseteq [n]\\ |S|=m }}\ket{\psi}^{\otimes S}\otimes \ket{0}^{\otimes [n]\setminus S}\,\, \bigg|\,\, \ket{\psi}\in \Pi \right\}\right),\]
where $\Pi\subseteq \spanspace\{\ket{1},\ldots,\ket{d-1}\}$ is a subspace orthogonal to $\ket{0}$.
Then, 
\[\dim(A)=\binom{\dim(\Pi)+m-1}{m}.\]
\end{lemma}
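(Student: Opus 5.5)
The plan is to identify $A$ with the symmetric subspace $\lor^m \Pi$ via an explicit isometric linear map, and then use the standard formula $\dim \lor^m \mathbb{C}^k=\binom{k+m-1}{m}$ with $k=\dim(\Pi)$.

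Concretely, define the linear map $\Phi:\Pi^{\otimes m}\to\bigotimes_{i=1}^n\mathcal{H}_i$ on product vectors by
\[
\Phi(\ket{\phi_1}\otimes\cdots\otimes\ket{\phi_m})=\sum_{\substack{S\subseteq[n]\\ |S|=m}} \ket{\phi_1,\ldots,\phi_m}_S\otimes\ket{0}^{\otimes[n]\setminus S},
\]
where $\ket{\phi_1,\ldots,\phi_m}_S$ places $\ket{\phi_j}$ on the $j$-th smallest element of $S$. This extends linearly. On the symmetric subspace the ordering convention is irrelevant. Moreover, $\Phi(\ket{\psi}^{\otimes m})$ is exactly the generator appearing in the definition of $A$.

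Since $\lor^m\Pi=\spanspace\{\ket{\psi}^{\otimes m}:\ket{\psi}\in\Pi\}$ (the standard polarization fact), we get $A=\Phi(\lor^m\Pi)$. It therefore remains to show that $\Phi$ is injective on $\lor^m\Pi$. This is the only step with any content.

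For injectivity, compute inner products. Let $\ket{u},\ket{v}\in\Pi^{\otimes m}$. In $\langle\Phi(u),\Phi(v)\rangle$, the cross terms with $S\neq S'$ vanish: some position lies in $S\setminus S'$, where one side carries a vector of $\Pi\perp\ket{0}$ and the other carries $\ket{0}$. The diagonal terms with $S=S'$ each contribute $\langle u,v\rangle$. Hence
\[
\langle\Phi(u),\Phi(v)\rangle=\binom{n}{m}\langle u,v\rangle,
\]
so $\binom{n}{m}^{-1/2}\Phi$ is an isometry. It is therefore injective, using $n\geq m$ so that $\binom{n}{m}\ge1$.

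Consequently $\dim A=\dim\lor^m\Pi=\binom{\dim(\Pi)+m-1}{m}$. The main (mild) obstacle is simply verifying the orthogonality of the cross terms and recalling the spanning property of $\{\ket{\psi}^{\otimes m}\}$ for the symmetric subspace. Both are standard.
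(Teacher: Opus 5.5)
Your proof is correct and takes the same route as the paper. The paper applies the symmetrizer $P=\sum_{\pi\in\mathfrak{S}_n}\texttt{p}(\pi)$ to $\spanspace\{\ket{\psi}^{\otimes m}\otimes\ket{0}^{\otimes n-m}\}$, where it acts as $m!(n-m)!$ times your $\Phi$, and identifies $A$ with an injective image of $\lor^m\Pi$. The paper only asserts that this map is injective, whereas your computation $\langle\Phi(u),\Phi(v)\rangle=\binom{n}{m}\langle u,v\rangle$, using the orthogonality of the $S\neq S'$ terms, actually proves it.
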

\begin{proof}
We define
\[P=\sum_{\pi\in\mathfrak{S}_{n}} \texttt{p}(\pi),\]
where $\texttt{p}(\cdot)$ denotes the action of $\mathfrak{S}_n$ on $\bigotimes_{i=1}^n\mathcal{H}_i$ (that is, for $\pi\in\mathfrak{S}_n$, $\texttt{p}(\pi)\ket{\psi_1}\otimes\cdots\otimes\ket{\psi_{n}}=\ket{\psi_{\pi^{-1}(1)}}\otimes\cdots\otimes\ket{\psi_{\pi^{-1}(n)}}$).
We can see that, when restricting to the subspace $\spanspace(\{\ket{\psi}^{\otimes m}\otimes \ket{0}^{\otimes n-m} \,\, | \,\, \ket{\psi}\in \Pi\})$, $P$ is injective and $A$ is exactly the image of $P$ on this subspace. 
Therefore, $A$ has the same dimension as $\spanspace(\{\ket{\psi}^{\otimes m}\,\,|\,\, \ket{\psi}\in\Pi\})\cong \lor^m \Pi$. The latter is the symmetric subspace of $\Pi^{\otimes m}$, which is of dimension $\binom{\dim(\Pi)+m-1}{m}$~\cite{harrow2013church}.
\end{proof}

\begin{lemma}\label{lemma-3192212}
Let $\mathcal{H}_1$, $\mathcal{H}_2$ and $\mathcal{H}_3$ be Hilbert spaces with dimension $d_1,d_2$ and $d_3$, respectively.
Let $U_x,U_y\in\mathbb{U}_{d_2d_3}$ be unitaries acting on $\mathcal{H}_2\otimes\mathcal{H}_3$. Let $V,W:\mathcal{H}_1\rightarrow\mathcal{H}_2\otimes\mathcal{H}_3$ be sub-isometries (i.e., $V^\dag V,\, W^\dag W \leq I_{\mathcal{H}_1}$). Then, the following function is $\sqrt{\frac{2}{d_1}}$-Lipschitz
\[f(U_x,U_y)=\frac{1}{d_1}\Big\|\tr_{\mathcal{H}_3}\!\left(\kett{V}\big(\bbra{U_x W}-\bbra{U_y W}\big)\right)\Big\|_1,\]
i.e., $|f(U_x,U_y)-f(U'_x,U'_y)|\leq\sqrt{\frac{2}{d_1}}\sqrt{\|U_x-U_x'\|_F^2+\|U_y-U_y'\|_F^2}$. Here $\|\cdot\|_1$ denotes the trace norm and $\|\cdot\|_F$ denotes the Frobenius norm.
\end{lemma}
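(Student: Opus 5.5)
The plan is to show that $f$ is Lipschitz in each argument separately with constant $1/\sqrt{d_1}$, and then combine the two directions. By the reverse triangle inequality for $\|\cdot\|_1$ and linearity of $F(U_x,U_y)\coloneqq \tr_{\mathcal{H}_3}\!\big(\kett{V}(\bbra{U_xW}-\bbra{U_yW})\big)$ in the pair, we get
\[
|f(U_x,U_y)-f(U_x',U_y')|\le \frac{1}{d_1}\Big\|\tr_{\mathcal{H}_3}\!\big(\kett{V}\bbra{(U_x-U_x')W}\big)\Big\|_1+\frac{1}{d_1}\Big\|\tr_{\mathcal{H}_3}\!\big(\kett{V}\bbra{(U_y-U_y')W}\big)\Big\|_1 .
\]
So it suffices to bound a single term $\frac{1}{d_1}\|\tr_{\mathcal{H}_3}(\kett{V}\bbra{XW})\|_1$ for an arbitrary operator $X$ on $\mathcal{H}_2\otimes\mathcal{H}_3$, by $\frac{1}{\sqrt{d_1}}\|X\|_F$.

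For that key estimate, use contractivity of the trace norm under partial trace: $\|\tr_{\mathcal{H}_3}(Y)\|_1\le \|Y\|_1$ for any operator $Y$ (partial trace is a CPTP map, hence trace-norm contractive, also on non-Hermitian inputs via the variational formula $\|Z\|_1=\max_{\|B\|_{\rm op}\le1}|\tr(BZ)|$ and $\|B\otimes I\|_{\rm op}=\|B\|_{\rm op}$). Then $\|\kett{V}\bbra{XW}\|_1=\|\kett{V}\|\cdot\|\kett{XW}\|=\|V\|_F\|XW\|_F$. Since $V^\dag V\le I_{\mathcal{H}_1}$, $\|V\|_F=\sqrt{\tr(V^\dag V)}\le\sqrt{d_1}$, and since $WW^\dag\le I$ (as $W$ is a sub-isometry), $\|XW\|_F^2=\tr(X WW^\dag X^\dag)\le \tr(XX^\dag)=\|X\|_F^2$. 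Hence each term is at most $\frac{1}{d_1}\sqrt{d_1}\|X\|_F=\frac{1}{\sqrt{d_1}}\|X\|_F$.

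Finally combine: $|f(U_x,U_y)-f(U_x',U_y')|\le \frac{1}{\sqrt{d_1}}(\|U_x-U_x'\|_F+\|U_y-U_y'\|_F)\le\sqrt{\frac{2}{d_1}}\sqrt{\|U_x-U_x'\|_F^2+\|U_y-U_y'\|_F^2}$ by Cauchy--Schwarz. The only step needing care is the contractivity of partial trace for the non-Hermitian rank-one operator, which the duality argument above settles; everything else is routine.
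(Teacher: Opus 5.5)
Your proposal is correct and follows the paper's proof essentially step for step: the reverse triangle inequality splits the difference into the $x$- and $y$-increments; contractivity of the partial trace and the rank-one identity $\|\kett{V}\bbra{XW}\|_1=\|V\|_F\|XW\|_F$ handle each increment, using $\|V\|_F\le\sqrt{d_1}$ and $WW^\dag\le I$; a final Cauchy--Schwarz step gives the constant $\sqrt{2/d_1}$. Your duality remark, which justifies trace-norm contractivity of $\tr_{\mathcal{H}_3}$ on non-Hermitian operators, is a small extra detail that the paper leaves implicit.
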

\begin{proof}
First, note that
\begin{align}
\Big\|\tr_{\mathcal{H}_3}\!\left(\kett{V}\big(\bbra{U_x W}-\bbra{U_y W}\big)\right)\Big\|_1&\leq \Big\|\kett{V}(\bbra{U_xW}-\bbra{U_y W})\Big\|_1 \label{eq-3192049} \\
&= \big\|\kett{V}\big\| \cdot \big\|\kett{U_xW}-\kett{U_y W}\big\| \nonumber\\
&\leq \sqrt{d_1}\cdot \big\|\kett{U_xW}-\kett{U_y W}\big\|\label{eq-3192050} \\
&= \sqrt{d_1}\cdot \sqrt{\tr\!\left(W^\dag (U_x-U_y)^\dag (U_x-U_y)W\right)} \nonumber \\
&\leq \sqrt{d_1} \sqrt{\tr\!\left((U_x-U_y)^\dag (U_x-U_y)\right)}=\sqrt{d_1} \cdot \|U_x-U_y\|_F,\label{eq-3192059}
\end{align}
where \cref{eq-3192049} is because partial trace is contractive in trace norm, \cref{eq-3192050} is due to $V^\dag V\leq I_{H_1}$, \cref{eq-3192059} is because $WW^\dag \leq I_{H_2}\otimes I_{H_3}$ since $W^\dag W\leq I_{H_1}$.
Then, we can show that 
\begin{align}
&|f(U_x,U_y)-f(U'_x,U'_y)| \nonumber \\
\leq &\frac{1}{d_1}\Big\|\tr_{\mathcal{H}_3}\!\left(\kett{V}\big(\bbra{U_x W}-\bbra{U_y W}\big)\right)-\tr_{\mathcal{H}_3}\!\left(\kett{V}\big(\bbra{U'_x W}-\bbra{U'_y W}\big)\right)\Big\|_1 \nonumber\\
\leq&\frac{1}{d_1}\Big\|\tr_{\mathcal{H}_3}\!\left(\kett{V}\big(\bbra{U_x W}-\bbra{U_x' W}\big)\right)\Big\|_1+\Big\|\tr_{\mathcal{H}_3}\!\left(\kett{V}\big(\bbra{U_y W}-\bbra{U'_y W}\big)\right)\Big\|_1 \nonumber\\
\leq &\frac{1}{\sqrt{d_1}}(\|U_x-U_x'\|_F+\|U_y-U_y'\|_F) \label{eq-3192211}\\
\leq & \sqrt{\frac{2}{d_1}}\sqrt{\|U_x-U_x'\|_F^2+\|U_y-U_y'\|_F^2},\nonumber 
\end{align}
where \cref{eq-3192211} is due to \cref{eq-3192059}.
\end{proof}

\begin{lemma}\label{lemma-3190121}
Suppose $\mathcal{H}_\mathrm{A}$, $\mathcal{H}_\mathrm{B}$ and $\mathcal{H}_\mathrm{anc}$ are Hilbert spaces with dimensions $d_1$, $d_2$ and r, respectively. Further assume that $d_1/d_2\leq r\leq kd_1d_2$ for some $k\geq 1$.
Suppose $U\in\mathbb{U}_{rd_2}$ is a Haar-random unitary acting on $\mathcal{H}_\mathrm{B}\otimes\mathcal{H}_\mathrm{anc}$ and $\Delta:\mathcal{H}_\mathrm{A}\rightarrow\mathcal{H}_\mathrm{B}\otimes \mathcal{H}_\mathrm{anc}$ is an isometry.
Then
\[\sum_{i,j=1}^r \E\!\left[\left|\tr\!\left(\Delta^\dag U^\dag \ketbra{i}{j}_\mathrm{anc} U\Delta\right)\right|^2\right]\leq \frac{2(1+k)d_1^2}{r}.\]
\end{lemma}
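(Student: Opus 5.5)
The plan is to reduce the sum to a second moment of Haar-random operators and then apply Weingarten calculus. Set $N\coloneqq rd_2$ and $M_{ij}\coloneqq \tr(\Delta^\dag U^\dag \ketbra{i}{j}_\mathrm{anc}U\Delta)=\tr(U P U^\dag (I_\mathrm{B}\otimes\ketbra{i}{j}_\mathrm{anc}))$. Here $P\coloneqq\Delta\Delta^\dag$ is a rank-$d_1$ projector on $\mathcal{H}_\mathrm{B}\otimes\mathcal{H}_\mathrm{anc}$, and the identification uses cyclicity of the trace together with the implicit tensor with $I_\mathrm{B}$. Writing $E_{ij}\coloneqq I_\mathrm{B}\otimes\ketbra{i}{j}_\mathrm{anc}$, we get $|M_{ij}|^2=\tr\big((UPU^\dag)^{\otimes 2}(E_{ij}\otimes E_{ji})\big)$, because $\overline{M_{ij}}=\tr(UPU^\dag E_{ji})$. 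Summing over $i,j$, the operator $\sum_{ij}E_{ij}\otimes E_{ji}=I_{\mathrm{B}}\otimes I_{\mathrm{B}}\otimes \mathbb{F}_{\mathrm{anc}}$ appears, where $\mathbb{F}_\mathrm{anc}$ is the swap on the two ancilla copies.

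The next step is to evaluate $\E[(UPU^\dag)^{\otimes 2}]$ by the standard second-moment formula, which gives $\alpha I+\beta\mathbb{F}$ with $\mathbb{F}$ the full swap on $(\mathcal{H}_\mathrm{B}\otimes\mathcal{H}_\mathrm{anc})^{\otimes 2}$. The coefficients follow from the traces $\tr(P)^2=d_1^2$ and $\tr(P^2)=d_1$:
\[
\alpha=\frac{d_1^2-d_1/N}{N^2-1},\qquad \beta=\frac{d_1-d_1^2/N}{N^2-1}.
\]
Pairing this with $X\coloneqq I_{\mathrm{B}}^{\otimes 2}\otimes\mathbb{F}_\mathrm{anc}$ requires $\tr(X)=d_2^2r$ and $\tr(\mathbb{F}X)=\tr(\mathbb{F}_\mathrm{B})\tr(I_{\mathrm{anc}}^{\otimes 2})=d_2r^2$. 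Hence the quantity to bound is
\[
\sum_{ij}\E|M_{ij}|^2=\alpha d_2^2 r+\beta d_2 r^2 .
\]

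It remains to bound the two terms. Using $N^2-1\ge N^2/2$ when $N\ge 2$ (the case $N=1$ is trivial), the first term satisfies $\alpha d_2^2r\le \frac{2d_1^2 d_2^2 r}{r^2d_2^2}=\frac{2d_1^2}{r}$. In the second term $\beta$ may be negative, in which case we drop it; otherwise $\beta d_2r^2\le \frac{2d_1 d_2 r^2}{r^2d_2^2}=\frac{2d_1}{d_2}$. The hypothesis $r\le kd_1d_2$ gives $\frac{d_1}{d_2}\le \frac{kd_1^2}{r}$, so this term is at most $\frac{2kd_1^2}{r}$. Adding the two bounds yields $\frac{2(1+k)d_1^2}{r}$.

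The main obstacle is mostly bookkeeping. The swap must be factorized correctly across $\mathrm{B}$ and $\mathrm{anc}$, and the Weingarten coefficients must be right. The lower bound $d_1/d_2\le r$ only ensures $d_1\le N$, so that $\Delta$ exists. The tightest part is the crude bound $N^2-1\ge N^2/2$, which is what fixes the constant $2$ in the statement.
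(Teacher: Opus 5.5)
Your proposal is correct and follows essentially the same route as the paper. The paper also evaluates the second moment by Weingarten calculus, but entrywise in matrix units via its second-order Weingarten corollary rather than through your operator identity $\E[(UPU^\dag)^{\otimes 2}]=\alpha I+\beta\mathbb{F}$ with the swap factorized as $\mathbb{F}_\mathrm{B}\otimes\mathbb{F}_\mathrm{anc}$. It arrives at exactly your expression $\alpha d_2^2r+\beta d_2r^2$, drops the negative terms, and finishes with $r^2d_2^2-1\ge r^2d_2^2/2$ and $r\le kd_1d_2$.

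Your packaging is somewhat cleaner. Your explicit handling of $N=1$, where the Weingarten denominators vanish, covers a degenerate case the paper leaves implicit. Note also that $\beta\ge 0$ automatically since $d_1\le N$, so no case split on its sign is needed.
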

\begin{proof}
For any $i,j\in[r]$,
\begin{align}
&\E\!\left[\left|\tr\!\left(\Delta^\dag U^\dag \ketbra{i}{j}_\mathrm{anc} U\Delta\right)\right|^2\right] \nonumber\\
=& \sum_{k,l=1}^{d_1} \E\!\left[\tr\!\left(U^\dag \ketbra{i}{j}_{\mathrm{anc}} U\Delta\ketbra{l}{k}_\mathrm{A}\Delta^\dag U^\dag\ketbra{j}{i}_\mathrm{anc}U \Delta\ketbra{k}{l}_\mathrm{A}\Delta^\dag\right)\right]\nonumber \\
=&\sum_{k,l=1}^{d_1}\frac{1}{r^2d_2^2-1}\bigg[\tr\!\left(\Delta \ketbra{l}{k}_\mathrm{A}\Delta^\dag\right)\cdot \tr\!\left(\Delta \ketbra{k}{l}_\mathrm{A}\Delta^\dag\right) \cdot \tr\big(\ketbra{i}{j}_\mathrm{anc}\otimes I_\mathrm{B} \cdot \ketbra{j}{i}_\mathrm{anc}\otimes I_\mathrm{B}\big) \nonumber \\
&\qquad\qquad\qquad \quad + \tr\!\left(\Delta \ketbra{l}{k}_\mathrm{A}\Delta^\dag\cdot \Delta \ketbra{k}{l}_\mathrm{A}\Delta^\dag \right)\cdot \tr\big(\ketbra{i}{j}_\mathrm{anc}\otimes I_\mathrm{B}\big)\cdot \tr\big(\ketbra{j}{i}_\mathrm{anc}\otimes I_\mathrm{B}\big) \bigg]\nonumber \\
&\quad- \frac{1}{rd_2(r^2d_2^2-1)}\bigg[
\tr\!\left(\Delta \ketbra{l}{k}_\mathrm{A}\Delta^\dag\right)\cdot \tr\!\left(\Delta \ketbra{k}{l}_\mathrm{A}\Delta^\dag\right) \cdot \tr\big(\ketbra{i}{j}_\mathrm{anc}\otimes I_\mathrm{B}\big)\cdot \tr\big(\ketbra{j}{i}_\mathrm{anc}\otimes I_\mathrm{B}\big) \nonumber \\
&\qquad\qquad\qquad\qquad \quad + \tr\!\left(\Delta \ketbra{l}{k}_\mathrm{A}\Delta^\dag\cdot \Delta \ketbra{k}{l}_\mathrm{A}\Delta^\dag \right) \cdot \tr\big(\ketbra{i}{j}_\mathrm{anc}\otimes I_\mathrm{B} \cdot \ketbra{j}{i}_\mathrm{anc}\otimes I_\mathrm{B}\big) \bigg] \label{eq-3190331} \\
=&\sum_{k,l=1}^{d_1} \left(\frac{1}{r^2d_2^2-1}(\mathbbm{1}_{k=l}d_2+\mathbbm{1}_{i=j}d_2^2)-\frac{1}{rd_2(r^2d_2^2-1)}(\mathbbm{1}_{k=l}\mathbbm{1}_{i=j}d_2^2+d_2)\right)\nonumber\\
=&\frac{d_1d_2+\mathbbm{1}_{i=j}d_1^2d_2^2}{r^2d_2^2-1}-\frac{\mathbbm{1}_{i=j}d_1d_2^2+d_1^2d_2}{r^3d_2^3-rd_2},\nonumber
\end{align}
where \cref{eq-3190331} is by \cref{coro-3221651}.
Hence,
\begin{align}
\sum_{i,j=1}^r \E\!\left[\left|\tr\!\left(\Delta^\dag U^\dag \ketbra{i}{j}_\mathrm{anc} U\Delta\right)\right|^2\right] &=\frac{r^2d_1d_2+rd_1^2d_2^2}{r^2d_2^2-1}-\frac{rd_1d_2^2+r^2d_1^2d_2}{r^3d_2^3-rd_2}\nonumber \\
&\leq \frac{r^2d_1d_2+rd_1^2d_2^2}{r^2d_2^2-1} \nonumber\\
&\leq 2\left(\frac{d_1}{d_2}+\frac{d_1^2}{r}\right)\nonumber \\
&\leq \frac{2(1+k)d_1^2}{r},\label{eq-3182254}
\end{align}
where \cref{eq-3182254} uses $r\leq kd_1d_2$.
\end{proof}

\begin{lemma}\label{lemma-3180158}
Let $d_1, d_2$ and $r$ be integers such that $d_1/d_2\leq r\leq d_1d_2$. 
Let $\mathcal{H}_{\mathrm{A}}$ be a Hilbert space of dimension $d_1$ and  $\mathcal{H}_{\mathrm{B}}$ be a Hilbert space of dimension $d_2$. 
Then, there exists a set of linear operators $\{K_i\}_{i=1}^r$ with $K_i:\mathcal{H}_\mathrm{A}\rightarrow\mathcal{H}_\mathrm{B}$ satisfying
\begin{itemize}
\item $\tr(K_i^\dag K_j)=0$ for any $i\neq j$,
\item $\tr(K_i^\dag K_i)\leq 2d_1/r$ for any $i\in [r]$,
\item $\sum_{i=1}^r K_i^\dag K_i=I_\mathrm{A}$.
\end{itemize}
\end{lemma}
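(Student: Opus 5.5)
We construct the $K_i$ explicitly, by spreading the $d_1d_2$ "matrix units" $\ket{b}_\mathrm{B}\bra{a}_\mathrm{A}$ as evenly as possible over $r$ labels. Row weights are then balanced by a unitary change of basis, the same way a Fourier basis flattens diagonal weights.

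The first step is the partition. Since $d_1/d_2\le r\le d_1d_2$, we would like to choose a partition of $[d_1]\times[d_2]$ into $r$ nonempty cells $S_1,\dots,S_r$ with $|S_i|\in\{\lfloor d_1d_2/r\rfloor,\lceil d_1d_2/r\rceil\}$. Set $K_i=\sum_{(a,b)\in S_i} c_{ab}\ket{b}_\mathrm{B}\bra{a}_\mathrm{A}$. Disjoint supports give $\tr(K_i^\dag K_j)=0$ for $i\ne j$ automatically. With all $c_{ab}=d_2^{-1/2}$ we get $\sum_i K_i^\dag K_i=\sum_{a,b}|c_{ab}|^2\ket{a}\bra{a}=I_\mathrm{A}$, because the cross terms $\ket{a}\bra{a'}$ with $a\neq a'$ only arise within a fixed $b$, and each pair $(a,b)$ appears in exactly one cell. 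The norms are $\tr(K_i^\dag K_i)=|S_i|/d_2\le \lceil d_1d_2/r\rceil/d_2$.

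The second step checks the norm bound. We need $\lceil d_1d_2/r\rceil/d_2\le 2d_1/r$. This holds when $d_1d_2/r\ge 1$, since then $\lceil x\rceil\le 2x$. The condition $r\le d_1d_2$ gives exactly that, so the claim follows.

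The expected obstacle is the cross terms. I should double-check that $\sum_i K_i^\dag K_i=\sum_{i}\sum_{(a,b),(a',b')\in S_i}\bar c_{ab}c_{a'b'}\ket{a}\braket{b}{b'}\bra{a'}$ is diagonal. Off-diagonal terms appear whenever a cell contains $(a,b)$ and $(a',b)$ with $a\ne a'$. To avoid them, I would choose the partition so that each cell contains at most one entry per column $b$. Cells are then "transversal" in $a$ for fixed $b$, which requires $|S_i|\le d_2$. That fails when $r<d_1$.

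The fallback is a Fourier phase trick, and this is the point I expect to need care. For each $b$, let the row block $\{(a,b)\}_a$ be split among the cells that share it. Replace the standard basis on $\mathcal{H}_\mathrm{A}$ by choosing, for each $b$, a unitary $F_b$ on $\mathcal{H}_\mathrm{A}$, and set $K_i=d_2^{-1/2}\sum_b \ket{b}\bra{b}\otimes(\text{projector of }F_b\text{ onto the cell's index set})$. More precisely, take $K_i=d_2^{-1/2}\sum_{b}\ket{b}_\mathrm{B}\bra{b}\,\Pi_{i,b}F_b$ viewed as maps $\mathcal{H}_\mathrm{A}\to\mathcal{H}_\mathrm{B}$ through the rank-one pieces $\ket{b}\bra{f_{b,a}}$, where $\{f_{b,a}\}_a$ is an orthonormal basis of $\mathcal{H}_\mathrm{A}$. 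Then $\sum_i K_i^\dag K_i=d_2^{-1}\sum_b\sum_{a}\ket{f_{b,a}}\bra{f_{b,a}}=I_\mathrm{A}$ holds regardless of how the cells group the vectors. The trick is that for fixed $b$ the vectors $\{f_{b,a}\}_a$ are orthonormal, so each cell's contribution $\sum_{a\in S_i^b}\ket{f_{b,a}}\bra{f_{b,a}}$ is a projector and the total is $I$. Orthogonality of distinct $K_i$ still holds, since $\tr(K_i^\dag K_j)=d_2^{-1}\sum_b\sum_{a\in S_i^b,a'\in S_j^b}\braket{f_{b,a}}{f_{b,a'}}=0$. So in fact the standard basis works directly: the cross terms $\ket{a}\bra{a'}$ vanish because $K_i^\dag K_i=d_2^{-1}\sum_b P_{S_i^b}$ is diagonal. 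The worry was unfounded, and the balanced partition alone suffices.
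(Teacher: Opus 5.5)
There is a genuine gap: the completeness relation $\sum_i K_i^\dagger K_i=I_{\mathrm A}$ fails for your construction. The final paragraph's ``resolution'' rests on an algebra error. Write $S_i^b=\{a:(a,b)\in S_i\}$ and $\ket{u_{i,b}}=\sum_{a\in S_i^b}\ket{a}$. Then $K_i=d_2^{-1/2}\sum_b\ket{b}\bra{u_{i,b}}$, so
$K_i^\dagger K_i=d_2^{-1}\sum_b\ketbra{u_{i,b}}{u_{i,b}}$.
Every $a\in S_i^b$ is sent to the same vector $\ket{b}$, so the contributions add coherently and give a rank-one all-ones block, not the projector $P_{S_i^b}$. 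Your $f_{b,a}$ variant has the same defect: the contribution is $\ketbra{g}{g}$ with $g=\sum_{a\in S_i^b}\ket{f_{b,a}}$. Orthonormality of the $f_{b,a}$ does not help, because they are all paired with the same output vector $\ket{b}$.

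Concretely, take $d_1=d_2=2$, $r=1$, which is allowed by the hypotheses. Your single cell gives $K_1=\frac{1}{\sqrt2}\begin{pmatrix}1&1\\1&1\end{pmatrix}$ and $K_1^\dagger K_1=\begin{pmatrix}1&1\\1&1\end{pmatrix}\neq I$. So your ``obstacle'' paragraph was right. The construction works only if no cell contains two entries with the same output index $b$. That forces $|S_i|\le d_2$, which a balanced partition of all $d_1d_2$ matrix units cannot achieve when $r<d_1$.

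What is missing is a mechanism that makes each piece's Gram operator diagonal. The paper gets this from Hilbert--Schmidt-orthogonal unitaries (generalized Pauli operators) placed on blocks. For $d_1\le d_2$ it embeds $\lceil r/2\rceil$ such unitaries $U_{i,j}:\mathcal H_{\mathrm A}\to\mathcal H_{\mathrm A_i}\subseteq\mathcal H_{\mathrm B}$ and sets $K=U/\sqrt{|S|}$, so $K^\dagger K=I/|S|$ exactly and cross terms never arise. For $d_1>d_2$ it splits $\mathcal H_{\mathrm A}$ into $d_2$-dimensional blocks and does the same on each block.

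Alternatively, you can repair your matrix-unit approach by dropping the insistence on using all $d_1d_2$ units with equal weight.
\begin{itemize}
\item For $r\ge d_1$: cut the list $(1,1),\dots,(1,d_2),(2,1),\dots,(d_1,d_2)$ into $r$ consecutive windows of size at most $\lceil d_1d_2/r\rceil\le d_2$. These windows automatically have distinct $b$'s, and your computation then goes through.
\item For $r<d_1$: partition $[d_1]$ into $r$ groups $G_i$ of size at most $\lceil d_1/r\rceil\le d_2$, using $r\ge d_1/d_2$. Set $K_i=\sum_{a\in G_i}\ket{\beta_i(a)}\bra{a}$ with $\beta_i$ injective into $[d_2]$. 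Then $K_i^\dagger K_i=P_{G_i}$, the $K_i$ are orthogonal by disjointness, and $\tr(K_i^\dagger K_i)\le\lceil d_1/r\rceil\le 2d_1/r$.
\end{itemize}
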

\begin{proof}
We distinguish between two cases depending on whether $d_1 \le d_2$ or not. 

\textbf{Case $1$:}  $d_1 \le d_2$, let $k = \bigl\lfloor \frac{d_2}{d_1} \bigr\rfloor \ge 1$. 
We can write  $\mathcal{H}_{\mathrm{B}} \cong \bigl(\bigoplus_{i=1}^k \mathcal{H}_{\mathrm{A}_i}\bigr) \oplus \mathbb{C}^{d_3}$, 
with $\mathcal{H}_{\mathrm{A}_i}\cong \mathbb{C}^{d_1}$ for each $i$ and $d_3 = d_2 - kd_1 < d_1$.

For each $i\in [k]$, there exists  $l = d_1^2$ orthogonal 
$d_1 \times d_1$ unitary matrices $\{U_{i,j}\}_{j \in [l]}$ (one can take  the generalized Pauli operators). 
We may 
find a subset $S \subseteq[k] \times [l]$ with $|S| = \bigl\lceil \frac{r}{2}\bigr\rceil$ because 
$kl = \bigl\lfloor \frac{d_2}{d_1} \bigr\rfloor d_1^2 \ge \bigl\lceil\frac{d_1d_2}{2}\bigl\rceil\ge\bigl\lceil \frac{r}{2}\bigr\rceil $. For $(i,j) \in S$, we view $U_{i,j}:\mathcal{H}_\mathrm{A}\rightarrow\mathcal{H}_{\mathrm{A}_i}$ as a linear operator that maps from $\mathcal{H}_\mathrm{A}$ to the $i$-th block $\mathcal{H}_{\mathrm{A}_i}\subseteq \mathcal{H}_\mathrm{B}$, and we define the 
linear operator $K_{i,j}$ as
\[
K_{i,j} = \frac{1}{\sqrt{|S|}}\, U_{i,j}.
\]

We then check:
\begin{itemize}
     \item[(a)]  For all $(i,j), (i',j') \in S$,
    \[
    \left|\Tr\!\big( K_{i,j}^\dagger K_{i',j'} \big)\right|
    = \frac{d_1}{|S|} \, \mathbbm{1}_{i=i'} \mathbbm{1}_{j=j'} \le \frac{2d_1}{r} \, \mathbbm{1}_{i=i'} \mathbbm{1}_{j=j'}.
    \]
    \item[(b)] 
    \[
    \sum_{(i,j) \in S} K_{i,j}^\dagger K_{i,j}
    = \sum_{(i,j) \in S} \frac{1}{|S|} \, U_{i,j}^\dagger U_{i,j}
    = I_\mathrm{A}.
    \]
    \item[(c)] The total number of constructed operators is  $|S|\le r$.
\end{itemize}

\textbf{Case $2$:} $d_1 > d_2$, let $k = \bigl\lfloor \frac{d_1}{d_2} \bigr\rfloor \in [1, r]$ and write $d_1 = k d_2 + d_3$ with $0 \le d_3 < d_2$. We can then write $\mathcal{H}_\mathrm{A} = \left(\bigoplus_{i=1}^k \mathcal{H}_{\mathrm{B}_i} \right) \oplus \mathcal{H}_{\mathrm{C}}$, with each $\mathcal{H}_{\mathrm{B}_i} \cong \mathbb{C}^{d_2}$.

    For each $i\in [k]$, there exists a family $\{U_{i,j}\}_{j \in [l]}$ of  $l = \bigl\lceil \frac{r}{2k} \bigr\rceil \in [1, d_2^2]$ orthogonal $d_2 \times d_\mathrm{2}$ unitary matrices.
    For the block $\mathcal{H}_{\mathrm{B}_i}$ ($i=1,\dots,k$), we view   
    $U_{i,j}:\mathcal{H}_{\mathrm{B}_i}\rightarrow\mathcal{H}_\mathrm{B}$ as a  linear operator that acts nontrivially only on the $i$-th block $\mathcal{H}_{\mathrm{B}_i}\subseteq\mathcal{H}_\mathrm{A}$, and we 
    define the linear operator $K_{i,j}$ as
    \[K_{i,j} = \frac{1}{\sqrt{l}}\, U_{i,j}. \]
    
    For the remaining block $\mathcal{H}_\mathrm{C}$, since $d_3 < d_2$ we can use the construction similar to that in Case $1$.
    Specifically, we split $\mathcal{H}_{\mathrm{B}}$ into $\lfloor \frac{d_2}{d_3}\rfloor$ equal-dimension blocks (possibly leaving a remainder).
    For each $i\leq \lfloor \frac{d_2}{d_3}\rfloor$, there are $d_3^2$ orthogonal $d_3\times d_3$ unitary matrices that maps from $\mathcal{H}_\mathrm{C}$ to the $i$-th block in $\mathcal{H}_\mathrm{B}$.
    These unitaries can be viewed as orthogonal isometries from $\mathcal{H}_\mathrm{C}$ to $\mathcal{H}_\mathrm{B}$ and there are $\bigl\lfloor \frac{d_2}{d_3} \bigr\rfloor\cdot d_3^2$ such isometries in total.
    We choose $r' = \bigl\lceil \frac{r d_3}{2d_1} \bigr\rceil\leq \lfloor\frac{d_2}{d_3}\rfloor \cdot d_3^2$ of them, say $V'_1,\ldots,V'_{r'}$, and define
    \[
    K_{k+1, i'} = \frac{1}{\sqrt{r'}}V'_{i'},
    \]
    for $i'\in[r']$. We can check
    \begin{itemize}
     
        \item[(a)]  For all $i, i' \in [k]$ and $j, j' \in [l]$,
        \begin{align*}
            \Tr\big(K_{i,j}^\dagger K_{i',j'}\big) 
            &= \mathbbm{1}_{i=i'}\mathbbm{1}_{j=j'} \frac{d_2}{l} 
            \le \mathbbm{1}_{i=i'}\mathbbm{1}_{j=j'} \frac{2d_1}{r},
        \end{align*}
        where we used $\frac{d_2}{l}\leq \frac{2kd_2}{r}\leq \frac{2d_1}{r}$, and for $i', i'' \in [r']$,
        \[
            \Tr\big(K_{k+1,i'}^\dagger K_{k+1,i''}\big) 
            = \mathbbm{1}_{i'=i''} \frac{d_3}{r'} 
            \le \mathbbm{1}_{i'=i''} \frac{2d_1}{r},
        \]
        and obviously $\tr(K_{i,j}^\dag K_{k+1,i'})=0$ for $i\in[k]$.
        \item[(b)] 
        \begin{align*}
            \sum_{i=1}^k \sum_{j=1}^l K_{i,j}^\dagger K_{i,j} 
            + \sum_{i'=1}^{r'} K_{k+1,i'}^\dagger K_{k+1,i'} 
            &= \sum_{i=1}^k \sum_{j=1}^l \frac{1}{l}\,I_{\mathrm{B}_i} 
               +     \sum_{i'=1}^{r'} \frac{1}{r'} I_{\mathrm{C}} \\
            &= I_\mathrm{A}.
        \end{align*}

        \item[(c)] The total number of constructed operators is
        \[
            lk + r' 
            = \Bigl\lfloor\frac{r}{2k}\Bigr\rfloor k 
            + \Bigl\lceil \frac{r d_3}{2d_1} \Bigr\rceil
            \le \Bigl\lfloor\frac{r}{2}\Bigr\rfloor + \Bigl\lceil \frac{r}{2}-\frac{r kd_2}{2d_1} \Bigr\rceil \leq r.
        \]
    \end{itemize}
\end{proof}

\subsection{Auxiliary facts}

We also need the following facts.
\begin{fact}\label{lemma-12272204}
Let $p\in [0,1]$, and $n$ and $k$ be two positive integers such that $n\ge k$. We have that 
\[\binom{n}{k} \leq \exp\!\left(n H\!\left(k/n\right)\right),\]
where $H(\cdot)$ denotes the binary entropy function.
\end{fact}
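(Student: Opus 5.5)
We have to prove Fact \ref{lemma-12272204}: for integers $n\ge k\ge 1$, $\binom{n}{k}\le \exp(nH(k/n))$, where $H(p)=-p\ln p-(1-p)\ln(1-p)$ is the binary entropy in natural logarithm. The approach is the standard probabilistic (binomial-sum) argument.

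Set $p=k/n\in(0,1]$ and expand
\[
1=(p+(1-p))^n=\sum_{j=0}^n\binom{n}{j}p^j(1-p)^{n-j}.
\]
All terms are nonnegative, so keeping only the term $j=k$ gives
\[
\binom{n}{k}p^k(1-p)^{n-k}\le 1.
\]
Now compute the weight directly:
\[
p^k(1-p)^{n-k}=\exp\bigl(k\ln p+(n-k)\ln(1-p)\bigr)=\exp\bigl(n[p\ln p+(1-p)\ln(1-p)]\bigr)=\exp(-nH(p)).
\]
Rearranging yields $\binom{n}{k}\le\exp(nH(k/n))$.

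The only point needing care is the edge case $k=n$, i.e.\ $p=1$. There we adopt the convention $0^0=1$ and $0\ln 0=0$, so $H(1)=0$. The inequality then reads $\binom{n}{n}=1\le 1$, which holds trivially; the expansion argument also covers it, since the $j=n$ term equals $1$.

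No real obstacle arises. The only thing to confirm is that the logarithm base of $H$ matches the $\exp$, which it does, since the paper uses natural logarithms elsewhere (e.g.\ $H(i/n)$ expanded with $\ln$ in the proofs of Lemma \ref{lemma-1200312} and Lemma \ref{lemma-12281107}).
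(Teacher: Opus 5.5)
Your proof is correct and is essentially the paper's argument: the paper also bounds $1=(k/n+1-k/n)^n$ from below by the single term $\binom{n}{k}(k/n)^k(1-k/n)^{n-k}$ and rearranges to get $\binom{n}{k}\le n^n/\bigl(k^k(n-k)^{n-k}\bigr)=\exp(nH(k/n))$. The paper leaves the $k=n$ edge case implicit, and your explicit treatment of it via $0^0=1$ and $H(1)=0$ is a small but welcome addition.
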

\begin{proof}
We observe that 
\[\binom{n}{k} \left(\frac{k}{n}\right)^k\left(1-\frac{k}{n}\right)^{n-k}\leq \left(\frac{k}{n}+1-\frac{k}{n}\right)^n =1,\]
so  the inequality $\binom{n}{k}\leq \frac{n^n}{k^k (n-k)^{n-k}}$ follows. 
\end{proof}

\begin{fact}\label{fact-5122103}
Let $\ket{\psi}$ be a vector in the support of a positive semidefinite operator $M$.  We have that  
\[M\geq \ketbra{\psi}{\psi}\Longleftrightarrow 1\geq \bra{\psi}M^{-1}\ket{\psi},\]
where $M^{-1}$ is the pseudoinverse of $M$.
\end{fact}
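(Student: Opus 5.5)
The plan is to reduce both directions to a Schur-complement statement for rank-one perturbations. First restrict everything to $\supp(M)$. Since $\ket{\psi}$ lies in this support, both $M-\ketbra{\psi}{\psi}$ and $M^{-1}$ act nontrivially only there. On the orthogonal complement of $\supp(M)$ the operator $M-\ketbra{\psi}{\psi}$ vanishes identically, so positivity only needs to be checked on $\supp(M)$. On that subspace $M$ is invertible, and the pseudoinverse coincides with the true inverse. We may therefore assume without loss of generality that $M>0$.

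Next, congruence by $M^{-1/2}$ preserves the L\"owner order. Hence $M\geq\ketbra{\psi}{\psi}$ is equivalent to
\[
I\geq M^{-1/2}\ketbra{\psi}{\psi}M^{-1/2}=\ketbra{\phi}{\phi},\qquad \ket{\phi}\coloneqq M^{-1/2}\ket{\psi}.
\]
The operator $\ketbra{\phi}{\phi}$ is rank one and has the single nonzero eigenvalue $\braket{\phi}{\phi}=\bra{\psi}M^{-1}\ket{\psi}$. So $I\geq\ketbra{\phi}{\phi}$ holds exactly when $\bra{\psi}M^{-1}\ket{\psi}\leq 1$, which gives both implications at once.

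The only step needing care is the support reduction. One must check that $M^{-1/2}$, taken as the pseudoinverse square root, maps $\ket{\psi}$ correctly, and that $M^{1/2}M^{-1/2}$ is the projector onto $\supp(M)$, which fixes $\ket{\psi}$. This follows from the spectral decomposition of $M$, and once it is in place the argument is routine.
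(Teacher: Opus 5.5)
Your proof is correct and follows essentially the same route as the paper: restrict to $\supp(M)$, conjugate by $M^{-1/2}$ to get $I_{\supp(M)}\geq M^{-1/2}\ketbra{\psi}{\psi}M^{-1/2}$, and use that this rank-one operator's only nonzero eigenvalue is $\bra{\psi}M^{-1}\ket{\psi}$. You treat the support reduction more explicitly than the paper's one-line chain of equivalences does; the ``Schur-complement'' phrase in your opening sentence is only framing, since the argument you actually carry out is this congruence.
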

\begin{proof}
We have the following chain of equivalences
\begin{align*}
    M\geq \ketbra{\psi}{\psi}&\Longleftrightarrow  I_{\supp(M)} \geq M^{-1/2}\ketbra{\psi}{\psi} M^{-1/2}
    \\&\Longleftrightarrow 1\geq \tr(M^{-1/2}\ketbra{\psi}{\psi} M^{-1/2})
    \\&\Longleftrightarrow 1\geq \bra{\psi}M^{-1}\ket{\psi},
\end{align*}
where $M^{-1/2}$ is the pseudoinverse of $M^{1/2}$.
\end{proof}

\begin{fact}\label{lamma-12281043}
We have that for all positive numbers $x$ and $M$
\[x\ln (M/x) \leq M/e.\]
\end{fact}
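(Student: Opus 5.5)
The plan is to prove the inequality $x\ln(M/x)\leq M/e$ for all $x,M>0$ by elementary one-variable calculus: fix $M$ and maximize the left-hand side over $x>0$. Define $h(x)\coloneqq x\ln(M/x)=x\ln M-x\ln x$ on $(0,\infty)$.

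First, compute $h'(x)=\ln M-\ln x-1=\ln\!\left(\frac{M}{ex}\right)$. This vanishes exactly at $x^\ast=M/e$, and $h'$ is strictly decreasing (since $h''(x)=-1/x<0$). Hence $h$ is concave on $(0,\infty)$, and $x^\ast$ is its global maximizer.

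Next, evaluate $h$ at the maximizer: $h(M/e)=\frac{M}{e}\ln(e)=\frac{M}{e}$. Concavity then gives $h(x)\leq h(x^\ast)=M/e$ for every $x>0$, which is the claim.

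Alternatively, one can avoid derivatives by substituting $t=M/x>0$. The inequality becomes $\frac{\ln t}{t}\leq \frac1e$, i.e. $\ln t\leq t/e$. This is the tangent-line inequality $\ln s\leq s-1$ applied with $s=t/e$: it gives $\ln t-1\leq t/e-1$.

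There is no real obstacle here. The only point to note is that both the concavity argument and the tangent-line inequality hold on all of $(0,\infty)$, so no case split is needed, including for $x>M$ where the left-hand side is negative.
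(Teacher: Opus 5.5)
Your proof is correct and follows essentially the same route as the paper: both differentiate $x\ln(M/x)$, note the derivative $\ln(M/x)-1$ is strictly decreasing and vanishes at $x=M/e$, and conclude the maximum value is $M/e$. Your alternative via $\ln s\leq s-1$ with $s=M/(ex)$ is also valid and avoids calculus entirely.
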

\begin{proof}
The derivative of the function $f(x)=x\ln(M/x)$ is $f'(x)=\ln(M/x)-1$. So $f'$ is strictly monotonically decreasing and  $f'(M/e)=0$. Hence  $f(x)\leq f(M/e)=M/e$.
\end{proof}

\begin{fact}\label{fact-3191602}
For any square matrix $M$ such that $M^2=0$, we have 
\[\|M+M^\dag\|_1=2\|M\|_1.\]
\end{fact}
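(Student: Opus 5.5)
The statement is Fact~\ref{fact-3191602}: if $M^2=0$ then $\|M+M^\dag\|_1=2\|M\|_1$. I will reduce it to a block-matrix computation using the structure of nilpotent-of-order-two operators.

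First I show that the image of $M$ and the image of $M^\dag$ are orthogonal. Since $M^2=0$, we have $\operatorname{im}(M)\subseteq\ker(M)$. For any vectors $x,y$,
\[\langle M^\dag x, My\rangle=\langle x, M^2 y\rangle=0,\]
so $\operatorname{im}(M^\dag)\perp\operatorname{im}(M)$. Let $P$ be the orthogonal projector onto $\operatorname{im}(M)$ and $Q=I-P$. Because $\operatorname{im}(M)\subseteq\ker(M)$, we get $MP=0$, hence $M=MQ$. Also $PM=M$. Therefore $M=PMQ$ is an operator from the range of $Q$ into the range of $P$.

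Next, in the decomposition $\mathcal{H}=\operatorname{im}(P)\oplus\operatorname{im}(Q)$, write $M$ as the off-diagonal block matrix
\[M=\begin{pmatrix}0&B\\0&0\end{pmatrix},\qquad B=PMQ.\]
Then
\[M+M^\dag=\begin{pmatrix}0&B\\B^\dag&0\end{pmatrix}.\]
The square of this matrix is $\operatorname{diag}(BB^\dag,B^\dag B)$. Consequently $|M+M^\dag|=\operatorname{diag}(|B^\dag|,|B|)$, and its trace is $\tr|B^\dag|+\tr|B|=2\|B\|_1$, since $BB^\dag$ and $B^\dag B$ have the same nonzero eigenvalues.

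Finally, $|M|=\sqrt{M^\dag M}=\operatorname{diag}(0,|B|)$, so $\|M\|_1=\|B\|_1$. Combining the two computations gives $\|M+M^\dag\|_1=2\|M\|_1$.

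The only step that needs care is the orthogonality $\operatorname{im}(M)\perp\operatorname{im}(M^\dag)$, equivalently $M=PMQ$, and it follows directly from $M^2=0$ as shown above. The rest is routine.
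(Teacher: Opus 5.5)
Your proof is correct and takes essentially the same route as the paper. Both arguments rest on $(M+M^\dag)^2 = MM^\dag + M^\dag M$ and on the orthogonality $\operatorname{im}(M)\perp\operatorname{im}(M^\dag)$, that is, $\supp(MM^\dag)\perp\supp(M^\dag M)$, which lets the square root split. Your block-matrix form makes this explicit and also supplies the one-line justification of that orthogonality, which the paper only asserts.
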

\begin{proof}
Note that
\[\|M+M^\dag\|_1=\tr\!\left(\sqrt{(M+M^\dag)^2}\right)=\tr\!\left(\sqrt{MM^\dag+M^\dag M}\right)=\tr\!\left(\sqrt{MM^\dag}\right)+\tr\!\left(\sqrt{M^\dag M}\right),\]
where the last equality is by using $\supp(MM^\dag)\perp\supp(M^\dag M)$.
\end{proof}

\begin{theorem}[{\cite[Corollary~17]{meckes2013spectral}}]\label{lem:corollary17} Let $k,d\geq 1$. Suppose that $f: \big( \mathbb{U}_d\big)^k\to \mathbb{R}$ is $L$-Lipschitz with respect to the $\ell_2$-sum of the $2$-norms (Frobenius norm), i.e.
\begin{equation}
   \big|f(U_1,\dots, U_k)-f(U'_1,\dots, U'_k)\big|\leq L\sqrt{\sum_{i=1}^k\|U_i-U_i'\|_F^2} 
\end{equation}
for all $U_i,U_i'\in \mathbb{U}_d$, with $i=1,\dots, k$. Then, if we independently sample $U_1,\dots, U_k$ according to the Haar measure on $\mathbb{U}_d$, the following inequality holds for each $t>0$:
\begin{equation}
    \Pr\!\left[f(U_1,\dots, U_k)  \geq \mathbb{E}[f(U_1,\dots, U_k)] +t\right] \le  \exp\left(-\frac{dt^2}{12L^2}\right).
\end{equation}
    
\end{theorem}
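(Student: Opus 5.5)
The statement is the concentration inequality for Lipschitz functions on products of unitary groups (Meckes--Meckes). I would derive it from a log-Sobolev inequality together with Herbst's argument.

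\textbf{Step 1: log-Sobolev constant on $\mathbb{U}_d$.} The Haar measure on $\mathbb{U}_d$, with the Hilbert--Schmidt (Frobenius) Riemannian metric, satisfies a log-Sobolev inequality with constant of order $6/d$. To see this, write $\mathbb{U}_d$ as a quotient of $\mathbb{U}(1)\times \mathrm{SU}(d)$ via $(z,V)\mapsto zV$.
\begin{itemize}
\item On $\mathrm{SU}(d)$, the Ricci curvature in the Hilbert--Schmidt metric is $d/2$. The Bakry--\'Emery criterion then gives a log-Sobolev constant $2/d$.
\item On the circle factor, after rescaling so that the covering map is locally bi-Lipschitz, one gets a log-Sobolev constant of order $1/d$ as well. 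This uses that the fibre $\{\omega I : \omega^d=1\}$ is discrete, and Meckes--Meckes handle it by a coupling/averaging over the $d$-th roots of unity.
\item Log-Sobolev inequalities tensorize and pass to Lipschitz images (pushforwards under $1$-Lipschitz maps, up to constants). Together these give the constant $6/d$ on $\mathbb{U}_d$.
\end{itemize}

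\textbf{Step 2: tensorization.} The product measure on $(\mathbb{U}_d)^k$, equipped with the $\ell_2$-sum of the Frobenius metrics, satisfies a log-Sobolev inequality with the same constant $6/d$. This is dimension-free in $k$, which is why $k$ does not appear in the bound.

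\textbf{Step 3: Herbst argument.} For an $L$-Lipschitz $f$ we have $|\nabla f|\le L$ almost everywhere. Let $H(\lambda)=\mathbb{E}e^{\lambda f}$. Applying the log-Sobolev inequality with constant $C=6/d$ to $e^{\lambda f/2}$ gives
\[
\lambda H'(\lambda)-H(\lambda)\ln H(\lambda)\le \frac{C\lambda^2L^2}{4}H(\lambda).
\]
Integrating in $\lambda$ yields
\[
\mathbb{E}\,e^{\lambda(f-\mathbb{E}f)}\le \exp\!\left(\frac{C L^2\lambda^2}{4}\right).
\]
By Chebyshev's inequality with the optimal $\lambda=2t/(CL^2)$,
\[
\Pr\bigl[f\ge \mathbb{E}f+t\bigr]\le \exp\!\left(-\frac{t^2}{CL^2}\right)=\exp\!\left(-\frac{dt^2}{6L^2}\right).
\]
This is stronger than the stated bound, which has $12$ in place of $6$, so the stated form follows with slack.

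\textbf{Main obstacle.} The hard step is Step 1, getting the correct order $1/d$ log-Sobolev constant on the full unitary group. The curvature-based Bakry--\'Emery argument fails directly because $\mathbb{U}_d$ has a flat $\mathbb{U}(1)$ direction with zero Ricci curvature, so the central circle must be treated separately. If that decomposition proves too delicate, I would instead cite \cite[Corollary~17]{meckes2013spectral} as stated, since it is exactly this result and the paper already invokes it as an external theorem.
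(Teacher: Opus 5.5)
Your route (a log-Sobolev inequality for Haar measure on $\mathbb{U}_d$, dimension-free tensorization over the $k$ factors, then Herbst) is the standard one and is what lies behind the cited Corollary~17. The paper gives no proof of its own; it only cites Meckes--Meckes. Two steps, as you wrote them, do not go through.

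\textbf{The constants are off by a factor of 2.} The value $2/d$ from Bakry--\'Emery with $\mathrm{Ric}=d/2$ on $\mathrm{SU}(d)$, and the value $6/d$ for $\mathbb{U}_d$, are constants in the normalization $\mathrm{Ent}(g^2)\le 2C\,\mathbb{E}|\nabla g|^2$. Indeed, $\mathrm{Ric}\ge\rho$ gives $\mathrm{Ent}(g^2)\le \frac{2}{\rho}\mathbb{E}|\nabla g|^2$. In Step~3 you used them as if $\mathrm{Ent}(g^2)\le C\,\mathbb{E}|\nabla g|^2$. With $g=e^{\lambda f/2}$ the correct differential inequality is $\lambda H'-H\ln H\le \tfrac{C\lambda^2L^2}{2}H$. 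This gives $\mathbb{E}\,e^{\lambda(f-\mathbb{E}f)}\le e^{CL^2\lambda^2/2}$ and the tail $\exp(-t^2/(2CL^2))=\exp(-dt^2/(12L^2))$. So the constant $6/d$ yields exactly the stated bound, not one with slack. Your claimed $\exp(-dt^2/(6L^2))$ is unsupported.

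\textbf{The circle factor is not handled.} Suppose the scale on $S^1$ makes $(w,V)\mapsto wV$ a local isometry $S^1\times\mathrm{SU}(d)\to\mathbb{U}_d$. Then the circle has circumference $2\pi\sqrt d$ and log-Sobolev constant of order $d$. Because the map is $d$-to-$1$, no rescaling fixes this: the LSI constant times the squared Lipschitz constant in the $\theta$-direction stays of order $d$. So ``rescale the covering map'' cannot give order $1/d$.

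What works is a fundamental domain. Let $\Phi(\theta,V)=e^{\mathrm{i}\theta/d}V$ on $[0,2\pi)\times\mathrm{SU}(d)$.
\begin{itemize}
\item $\Phi$ is a bijection, since $\det\Phi=e^{\mathrm{i}\theta}$, and it pushes uniform$\otimes$Haar forward to Haar on $\mathbb{U}_d$.
\item $\partial_\theta\Phi=\frac{\mathrm{i}}{d}U$ has Hilbert--Schmidt norm $1/\sqrt d$ and is orthogonal to $U\cdot\mathfrak{su}(d)$.
\item The uniform measure on an interval of length $2\pi$ has Neumann log-Sobolev constant $4$ in this normalization; no periodicity is needed, so the seam does not matter.
\end{itemize}
Combine this with the $\mathrm{SU}(d)$ constant $2/d$ via subadditivity of entropy. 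This gives $\mathrm{Ent}(f^2)\le \frac{8}{d}\mathbb{E}|\nabla f|^2$, i.e.\ constant $4/d\le 6/d$, which suffices. The $12$ can in fact be improved to $8$ this way, but not by the computation you wrote. Falling back on citing the Meckes--Meckes log-Sobolev inequality for $\mathbb{U}_d$, as you suggest, is in effect what the paper does.
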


The following facts from Weingarten calculus are needed in this work.  
For $\pi\in \mathfrak{S}_n$  a permutation of $[n]$, we denote by $\operatorname{Wg}(\pi,d)$  the  Weingarten function of dimension $d$. 
\begin{lemma}\label{lem:Wg} 
Let $U\in\mathbb{U}_d$ be a Haar-random unitary and let $\{A_i,B_i\}_{i=1}^n$ be a sequence of complex $(d\times d)$-matrices. We have the following formula for the expectation value:
\begin{equation}\label{eq: Wg-product}
    \begin{split}
&\ex{\Tr(UB_1U^\dagger A_1U\dots UB_nU^\dagger A_n)}
\\&\qquad =\sum_{\alpha,\beta \in \mathfrak{S}_n}\operatorname{Wg}(\beta\alpha^{-1},d)\tr_{\beta^{-1}}(B_1,\dots,B_n)\tr_{\alpha\gamma_n}(A_1,\dots,A_n),
\end{split}
\end{equation}
where $\gamma_n=(12\dots n)\in\mathfrak{S}_n$ and, writing $\sigma$ in terms of cycles $\{C_j\}$ as $\sigma=\prod_j C_j $,
\begin{equation*}
    \tr_{\sigma}(M_1,\dots,M_n)\coloneqq\prod_j \tr\prod_{i\in C_j} M_i.
\end{equation*}
\end{lemma}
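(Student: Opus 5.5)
This is the standard Weingarten expansion of a single-orbit trace, and I would prove it by reducing it to the moment formula for Haar unitaries. The first step expands the trace in a fixed orthonormal basis of $\mathbb{C}^d$ with index tuples $\mathbf{i},\mathbf{j},\mathbf{i}',\mathbf{j}'\in[d]^n$:
\[
\Tr(UB_1U^\dagger A_1\cdots UB_nU^\dagger A_n)=\sum U_{i_1 j_1}(B_1)_{j_1 j_1'}\overline{U_{i_1' j_1'}}(A_1)_{i_1' i_2}\cdots U_{i_n j_n}(B_n)_{j_n j_n'}\overline{U_{i_n' j_n'}}(A_n)_{i_n' i_1},
\]
so the $A$-indices close cyclically, with $i'_k$ linked to $i_{k+1}=i_{\gamma_n(k)}$.

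The second step substitutes the Collins--Śniady formula
\[
\E\Big[\prod_k U_{i_kj_k}\overline{U_{i'_kj'_k}}\Big]=\sum_{\alpha,\beta\in\mathfrak{S}_n}\prod_k\delta_{i_k i'_{\alpha(k)}}\delta_{j_k j'_{\beta(k)}}\operatorname{Wg}(\beta\alpha^{-1},d).
\]
I take this as known input, since it follows from Schur--Weyl duality on $(\mathbb{C}^d)^{\otimes n}$.

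The third step contracts the deltas separately on each side. On the $B$ side, the constraint $j_k=j'_{\beta(k)}$ together with the factors $(B_k)_{j_kj'_k}$ chains the matrices along the cycles of $\beta^{-1}$, giving a product of traces of products of the $B$'s, namely $\tr_{\beta^{-1}}(B_1,\dots,B_n)$. On the $A$ side, $A_k$ carries indices $(i'_k,i_{\gamma_n(k)})$, and $i_{\gamma_n(k)}=i'_{\alpha\gamma_n(k)}$. So $A_k$ is followed by $A_{\alpha\gamma_n(k)}$, which yields $\tr_{\alpha\gamma_n}(A_1,\dots,A_n)$.

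The main obstacle is purely bookkeeping: getting the convention right for whether the cycles are traversed in the order given by $\sigma$ or by $\sigma^{-1}$, and whether the Weingarten argument is $\beta\alpha^{-1}$ or $\alpha^{-1}\beta$. Since $\operatorname{Wg}$ is a class function and $\beta\alpha^{-1}$ and $\alpha^{-1}\beta$ are conjugate, the second ambiguity is harmless. For the first, I would fix the direction by checking the case $n=1$, where both sides must equal $\tr(A)\tr(B)/d$, and the case $n=2$ against the known second-moment formula. I would then relabel $\beta\to\beta^{-1}$ if needed so the result matches the stated form exactly.
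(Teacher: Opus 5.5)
Your proposal is correct and follows the paper's proof: both expand the trace in indices, invoke the Collins--\'Sniady moment formula with weight $\operatorname{Wg}(\beta\alpha^{-1},d)$, and contract the deltas to obtain $\tr_{\beta^{-1}}(B_1,\dots,B_n)$ and $\tr_{\alpha\gamma_n}(A_1,\dots,A_n)$. The paper phrases the index expansion as checking the identity on elementary matrices $A_k=\ketbra{i_k'}{i_{k+1}}$, $B_k=\ketbra{j_k}{j_k'}$ and extending by linearity. Your step-3 bookkeeping already gives exactly the stated form, so the proposed $n=1,2$ checks and the possible relabeling are unnecessary, and they could not have helped anyway: every element of $\mathfrak{S}_2$ is an involution, so orientation errors only appear for $n\ge 3$, and $\beta\mapsto\beta^{-1}$ alone is not a symmetry of the formula.
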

\begin{proof} For elementary matrices $A_1, \dots, A_n, B_1, \dots, B_n$, the lemma is exactly 
    \cite[Corollary 2.4]{collins2006integration}. 
    To see this, let us assume that $A_1 = \ketbra{i_1'}{i_2}, A_2 = \ketbra{i_2'}{i_3}, \dots,  A_n = \ketbra{i_n'}{i_1}$ and 
    $B_1 = \ketbra{j_1}{j_1'}, B_2 = \ketbra{j_2}{j_2'}, \dots,  B_n = \ketbra{j_{n}}{j_n'}$ for some $i_1, \dots, i_n, i_1', \dots, i_n' \in [d]$
 and     $j_1, \dots, j_n, j_1', \dots, j_n' \in [d]$. 
The LHS of \eqref{eq: Wg-product} can be computed using \cite[Corollary 2.4]{collins2006integration}:
\begin{align*}
    \ex{\Tr(UB_1U^\dagger A_1U\dots UB_nU^\dagger A_n)}&=\ex{U_{i_1, j_1}\cdots U_{i_n, j_n} \cdot \bar{U}_{i_1', j_1'}\cdots \bar{U}_{i_n', j_n'} }
    \\&= \sum_{\alpha,\beta \in \mathfrak{S}_n}\operatorname{Wg}(\beta\alpha^{-1},d) \mathbbm{1}_{i_1=i'_{\alpha(1)}, \dots ,\,  i_n=i'_{\alpha(n)}}\cdot \mathbbm{1}_{j_1=j'_{\beta(1)}, \dots ,\, j_n=j'_{\beta(n)}}.
\end{align*}
 Since the matrices $A_1, \dots, A_n$ and $B_1, \dots, B_n$ are elementary, the RHS of \eqref{eq: Wg-product} can be expressed as follows:

\begin{align*}
   &\sum_{\alpha,\beta \in \mathfrak{S}_n}\operatorname{Wg}(\beta\alpha^{-1},d)\tr_{\beta^{-1}}(B_1,\dots,B_n)\tr_{\alpha\gamma_n}(A_1,\dots,A_n) 
   \\\quad &= \sum_{\alpha,\beta \in \mathfrak{S}_n}\operatorname{Wg}(\beta\alpha^{-1},d) \mathbbm{1}_{j_{\beta^{-1}(1)}=j'_{1}, \dots ,\, j_{\beta^{-1}(n)}=j'_{n}}\cdot \mathbbm{1}_{i_2=i'_{\alpha\gamma_n(1)}, \dots ,\,  i_1=i'_{\alpha\gamma_n(n-1)}}
   \\&= \sum_{\alpha,\beta \in \mathfrak{S}_n}\operatorname{Wg}(\beta\alpha^{-1},d)\mathbbm{1}_{j_1=j'_{\beta(1)}, \dots ,\, j_n=j'_{\beta(n)}}\cdot  \mathbbm{1}_{i_1=i'_{\alpha(1)}, \dots ,\,  i_n=i'_{\alpha(n)}}.
\end{align*}
Hence, \eqref{eq: Wg-product} holds for elementary matrices. 
    The generalization is obtained by linearity. 
\end{proof}

The following values of the Weingarten function are known \cite[Section 6]{collins2006integration}.
\begin{lemma}\label{lem:wg2}
The function $\operatorname{Wg}(\pi,d)$ has the following values:
\begin{itemize}
    \item $\operatorname{Wg}((1),d)=\frac{1}{d}$,
    \item $\operatorname{Wg}((12),d)=\frac{-1}{d(d^2-1)}$,
    \item $\operatorname{Wg}((1)(2),d)=\frac{1}{d^2-1}$.
\end{itemize}
\end{lemma}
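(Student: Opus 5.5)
The approach is to characterize $\operatorname{Wg}(\cdot,d)$ on $\mathfrak{S}_n$ (for $n\le d$) as the inverse of the Gram matrix $G_{\sigma\tau}=d^{\#(\sigma^{-1}\tau)}$, where $\#(\pi)$ is the number of cycles of $\pi$. This is the standard definition underlying \cite[Corollary 2.4]{collins2006integration}, the result used in \cref{lem:Wg}. Equivalently, $\sum_{\tau}\operatorname{Wg}(\sigma\tau^{-1},d)\,d^{\#(\tau)}=\delta_{\sigma,e}$ in the group algebra. The three values then come from inverting a $1\times1$ and a $2\times2$ matrix. In \cref{lem:wg2}, $(1)$ denotes the identity of $\mathfrak{S}_1$, and $(1)(2)$, $(12)$ denote the identity and the transposition of $\mathfrak{S}_2$. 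We assume $d\ge2$, so that $d^2-1\neq0$.

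For $n=1$, $G=(d)$, so $\operatorname{Wg}((1),d)=1/d$. As a consistency check, \cref{lem:Wg} with $n=1$ gives $\E[\tr(UBU^\dag A)]=\tr(A)\tr(B)/d$. This matches Schur's lemma: $\E[UBU^\dag]=\tr(B)I/d$.

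For $n=2$, order the group as $\{e,(12)\}$. The Gram matrix is
\[G=\begin{pmatrix} d^2 & d\\ d & d^2\end{pmatrix},\qquad \det G=d^2(d^2-1),\]
and its inverse is
\[G^{-1}=\frac{1}{d^2(d^2-1)}\begin{pmatrix} d^2 & -d\\ -d & d^2\end{pmatrix}.\]
Reading off the entries gives $\operatorname{Wg}((1)(2),d)=1/(d^2-1)$ and $\operatorname{Wg}((12),d)=-1/(d(d^2-1))$.

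To justify the Gram-inverse characterization without relying on the citation, I would derive it from the twirl. By Schur--Weyl duality, $\Phi(X)=\E[U^{\otimes n}XU^{\dag\otimes n}]$ is the orthogonal projection, in the Hilbert--Schmidt inner product, onto $\spanspace\{\texttt{p}(\sigma)\}$. Writing $\Phi(X)=\sum_\sigma c_\sigma \texttt{p}(\sigma)$ and imposing $\tr(\texttt{p}(\tau)^\dag\Phi(X))=\tr(\texttt{p}(\tau)^\dag X)$ for all $\tau$ gives $c=G^{-1}v$, with $v_\tau=\tr(\texttt{p}(\tau)^\dag X)$. For $n=2$ and $d\ge2$, the identity and the swap are linearly independent, so $G$ is invertible. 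Expanding matrix entries then identifies the coefficients with $\operatorname{Wg}$, in the same form as \cref{lem:Wg}.

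The only delicate point is matching the convention $\operatorname{Wg}(\beta\alpha^{-1},d)$ in \cref{lem:Wg} to the entries of $G^{-1}$. Because $G$ and $G^{-1}$ depend only on conjugacy classes, and every permutation in $\mathfrak{S}_2$ is its own inverse, no ambiguity arises here.
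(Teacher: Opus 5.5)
Your proposal is correct. It differs from the paper mainly in that the paper gives no argument: it simply quotes the values tabulated in \cite[Section 6]{collins2006integration}.

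You instead take the standard characterization of $\operatorname{Wg}(\cdot,d)$ on $\mathfrak{S}_n$ as the inverse of the Gram matrix $G_{\sigma\tau}=d^{\#(\sigma^{-1}\tau)}$. You then invert the $1\times 1$ matrix $(d)$ and the $2\times 2$ matrix with diagonal entries $d^2$ and off-diagonal entries $d$. The resulting entries are right, as one can confirm from $d^2\operatorname{Wg}((1)(2),d)+d\operatorname{Wg}((12),d)=1$ and $d\operatorname{Wg}((1)(2),d)+d^2\operatorname{Wg}((12),d)=0$.

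Your route gives a self-contained computation. It also makes the implicit hypothesis $d\ge 2$ visible, as invertibility of $G$ (i.e.\ linear independence of the identity and the swap); the citation leaves this unstated.

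The one thing your route relies on is that the Gram-inverse object is the same $\operatorname{Wg}$ that appears in \cref{lem:Wg}. It is, because that lemma is imported from \cite[Corollary 2.4]{collins2006integration}, where $\operatorname{Wg}$ is characterized as exactly this (pseudo-)inverse whenever $d\ge n$. So your twirl/Schur--Weyl sketch is needed only if you want to avoid that reference entirely. Your observation that the $\beta\alpha^{-1}$ versus $\sigma^{-1}\tau$ convention is immaterial for $n\le 2$ correctly handles the only bookkeeping issue.
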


Then, we can easily see the following result.
\begin{corollary}\label{coro-3221651}
Let $U\in\mathbb{U}_d$ be a Haar-random unitary and $A_1,B_1,A_2,B_2$ are complex $(d\times d)$-matrices. We have:
\begin{align}
\E\!\left[\tr(UB_1 U^\dag A_1 U B_2U^\dag A_2)\right]&=\frac{1}{d^2-1}\Big[\tr(B_1)\tr(B_2)\tr(A_1A_2)+\tr(B_1B_2)\tr(A_1)\tr(A_2)\Big] \nonumber\\
&\qquad\quad -\frac{1}{d(d^2-1)}\Big[\tr(B_1B_2)\tr(A_1A_2)+\tr(B_1)\tr(B_2)\tr(A_1)\tr(A_2)\Big]. \nonumber
\end{align}
\end{corollary}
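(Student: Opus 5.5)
The plan is to specialise \cref{lem:Wg} to $n=2$ and insert the Weingarten values of \cref{lem:wg2}. The double sum then runs over $\alpha,\beta\in\mathfrak{S}_2=\{e,(12)\}$, so there are four terms. Here $\gamma_2=(12)$, and every element of $\mathfrak{S}_2$ is its own inverse.

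\begin{enumerate}
\item Tabulate, for each $\beta$, the trace factor $\tr_{\beta^{-1}}(B_1,B_2)$:
\begin{itemize}
\item $\beta=e$ gives $\tr(B_1)\tr(B_2)$;
\item $\beta=(12)$ gives $\tr(B_1B_2)$.
\end{itemize}
\item Tabulate, for each $\alpha$, the trace factor $\tr_{\alpha\gamma_2}(A_1,A_2)$:
\begin{itemize}
\item $\alpha=e$ gives $\alpha\gamma_2=(12)$, hence $\tr(A_1A_2)$;
\item $\alpha=(12)$ gives $\alpha\gamma_2=e$, hence $\tr(A_1)\tr(A_2)$.
\end{itemize}
\item Record the Weingarten weight of each pair:
\begin{itemize}
\item $\beta\alpha^{-1}=e$ exactly when $\alpha=\beta$, with weight $\operatorname{Wg}((1)(2),d)=\frac{1}{d^2-1}$;
\item $\beta\alpha^{-1}=(12)$ exactly when $\alpha\neq\beta$, with weight $-\frac{1}{d(d^2-1)}$.
\end{itemize}
Note that $\operatorname{Wg}((1)(2),d)$ is the value on the identity of $\mathfrak{S}_2$. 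The entry $\operatorname{Wg}((1),d)=1/d$ in \cref{lem:wg2} is the $n=1$ value and is not used.
\end{enumerate}

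Assembling the four terms:
\begin{itemize}
\item $\alpha=\beta=e$ gives $\frac{1}{d^2-1}\tr(B_1)\tr(B_2)\tr(A_1A_2)$.
\item $\alpha=\beta=(12)$ gives $\frac{1}{d^2-1}\tr(B_1B_2)\tr(A_1)\tr(A_2)$.
\item $\alpha=e,\beta=(12)$ gives $-\frac{1}{d(d^2-1)}\tr(B_1B_2)\tr(A_1A_2)$.
\item $\alpha=(12),\beta=e$ gives $-\frac{1}{d(d^2-1)}\tr(B_1)\tr(B_2)\tr(A_1)\tr(A_2)$.
\end{itemize}
This is exactly the claimed formula.

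The only real risk is bookkeeping: confirming the cycle conventions for $\tr_\sigma$ and the product $\alpha\gamma_n$ as used in \cref{lem:Wg}. As a sanity check, I would take $A_i=B_i=I$. The left side is then $\tr(I)=d$. The right side is
\[
\frac{2d^3}{d^2-1}-\frac{d^2+d^4}{d(d^2-1)}=\frac{2d^3-d-d^3}{d^2-1}=d,
\]
which matches.
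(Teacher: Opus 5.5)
Your proposal is correct and is exactly the paper's argument: specialise \cref{lem:Wg} to $n=2$, evaluate the four $(\alpha,\beta)\in\mathfrak{S}_2\times\mathfrak{S}_2$ terms, and substitute $\operatorname{Wg}((1)(2),d)=\frac{1}{d^2-1}$ and $\operatorname{Wg}((12),d)=-\frac{1}{d(d^2-1)}$ from \cref{lem:wg2}. Your term-by-term bookkeeping is right, and so is the $A_i=B_i=I$ sanity check.
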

\begin{proof}
By \cref{lem:Wg}, we have
\begin{align}
&\E\!\left[\tr(UB_1 U^\dag A_1 U B_2U^\dag A_2)\right] \nonumber\\
=&\operatorname{Wg}((1)(2),d)\Big[\tr_{(1)(2)}(B_1,B_2)\tr_{(12)}(A_1,A_2)+\tr_{(12)}(B_1,B_2)\tr_{(1)(2)}(A_1,A_2)\Big] \nonumber \\
&\qquad\qquad\qquad +\operatorname{Wg}((12),d)\Big[\tr_{(12)}(B_1,B_2)\tr_{(12)}(A_1,A_2)+\tr_{(1)(2)}(B_1,B_2)\tr_{(1)(2)}(A_1,A_2)\Big]\nonumber \\
=&\frac{1}{d^2-1}\Big[\tr(B_1)\tr(B_2)\tr(A_1A_2)+\tr(B_1B_2)\tr(A_1)\tr(A_2)\Big] \nonumber\\
&\qquad\qquad\qquad -\frac{1}{d(d^2-1)}\Big[\tr(B_1B_2)\tr(A_1A_2)+\tr(B_1)\tr(B_2)\tr(A_1)\tr(A_2)\Big], \nonumber
\end{align}
where the last equality is due to \cref{lem:wg2}.
\end{proof}

\section{Isometry channel tomography}
\label{sec:isometry_channel_tomography}

\subsection{Diamond norm tomography}\label{sec-3102118}
In this section, we prove a lemma that extends the unitary channel tomography algorithm that uses $O(d^2/\varepsilon^2)$ queries in \cite{haah2023query}  
to isometry channel tomography.

\begin{lemma}[Isometry channel tomography in diamond norm]\label{thm-1240018}
Suppose that $d_1\leq d_2$ are two positive integers and $\varepsilon\in (0,1)$. Let $V:\mathbb{C}^{d_1}\rightarrow\mathbb{C}^{d_2}$ be an isometry and $\mathcal{V}=V(\cdot)V^\dag\in \isochannel_{d_1,d_2}$ be the associated isometry channel.
Then, there exists an algorithm using $O(d_1d_2/\varepsilon^2)$ queries to channel $\mathcal{V}$ and it outputs an isometry channel estimate $\widehat{\mathcal{V}}$ such that $\|\mathcal{V}-\widehat{\mathcal{V}}\|_\diamond\leq \varepsilon$ with probability $\geq 2/3$. Further, the algorithhm uses these queries in parallel.
\end{lemma}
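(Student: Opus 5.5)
The plan is to reduce isometry tomography to unitary tomography by the standard embedding trick, and then invoke the parallel $O(d^2/\varepsilon^2)$ diamond-norm unitary tomography algorithm of \cite{haah2023query}. That algorithm is built from pure-state tomography of columns together with a phase-alignment step. A naive embedding $\mathbb{C}^{d_1}\hookrightarrow\mathbb{C}^{d_2}$ would only give $O(d_2^2/\varepsilon^2)$ queries. To reach $O(d_1d_2/\varepsilon^2)$, I would instead generalize the base algorithm of \cite{haah2023query} directly, exploiting that $V$ has only $d_1$ columns, each a unit vector in $\mathbb{C}^{d_2}$.

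\textbf{Step 1: parallel Choi-state estimation.} Query $\mathcal{V}$ in parallel on halves of maximally entangled states $\frac{1}{\sqrt{d_1}}\kett{I_{d_1}}$. This produces $n$ copies of the pure state $\frac{1}{\sqrt{d_1}}\kett{V}\in\mathbb{C}^{d_2}\otimes\mathbb{C}^{d_1}$, a pure state of dimension $d_1d_2$. Optimal pure-state tomography \cite{GKKT20} with $O(d_1d_2/\delta^2)$ copies returns $\ket{\hat\psi}$ with $|\langle\hat\psi|\tfrac{1}{\sqrt{d_1}}\kett{V}\rangle|^2\ge 1-\delta^2$. Reshaping gives a matrix $\hat{X}$ that is close to $e^{\mathrm{i}\theta}V$ in normalized Frobenius norm, $\frac{1}{d_1}\|\hat X-e^{\mathrm{i}\theta}V\|_F^2\lesssim\delta^2$. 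Rounding $\hat X$ to its nearest isometry $\hat V$ (polar part) at most doubles this error. That yields a Choi-trace/average-case guarantee only.

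\textbf{Step 2: upgrade average-case to diamond norm.} This is the main obstacle. A Frobenius error $\sqrt{d_1}\delta$ translates to operator-norm error only $\sqrt{d_1}\delta$, and the diamond distance between isometry channels is controlled by $\min_\phi\|V-e^{\mathrm{i}\phi}\hat V\|_{\mathrm{op}}$. I would follow the strategy of \cite{haah2023query}. The key tool is a bootstrapping/twirling lemma: precompose $V$ with a Haar-random unitary $W$ on $\mathbb{C}^{d_1}$. Then the error $E=\hat V^\dag V$ (a $d_1\times d_1$ near-unitary, since both are isometries with nearly equal range) is estimated in a basis-independent way. The error matrix is unitarily invariant in distribution, so its operator norm is $O(\delta)$ rather than $O(\sqrt{d_1}\delta)$, up to a global phase, with constant probability.

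Concretely, I would first estimate $\hat V$ with a constant fraction of queries. I would then apply the unitary-case analysis of \cite{haah2023query} to the $d_1\times d_1$ unitary $\hat V^\dag V$, restricted to the range of $\hat V$. The leakage out of the range of $\hat V$ is handled by the same Frobenius estimate as in Step 1, now costing $O(d_1d_2/\varepsilon^2)$ queries because the component of $V$ orthogonal to $\mathrm{range}(\hat V)$ has Frobenius norm $O(\sqrt{d_1}\delta)$ and must be learned to operator-norm precision $\varepsilon$. The remaining $d_1\times d_1$ unitary part needs $O(d_1^2/\varepsilon^2)\le O(d_1d_2/\varepsilon^2)$ queries. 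All queries remain parallel because $\hat V$ is only used in classical post-processing of jointly prepared probe states. The alternative is to note that the base algorithm of \cite{haah2023query} already works for rectangular $V$ verbatim, with state tomography in dimension $d_1d_2$ replacing $d^2$.

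Finally, I would combine via the triangle inequality. Isometry channels satisfy $\|\mathcal{V}-\hat{\mathcal{V}}\|_\diamond\le 2\min_\phi\|V-e^{\mathrm{i}\phi}\hat V\|_{\mathrm{op}}$, and setting $\delta=\Theta(\varepsilon)$ and boosting the success probability to $2/3$ gives the claim.
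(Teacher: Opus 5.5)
\textbf{There is a genuine gap, in your Step 2.} That step carries the entire content of the statement, and it does not work.

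\emph{The twirl does not help.} Invariance of the error distribution under $V\mapsto VW$ with Haar $W\in\mathbb{U}_{d_1}$ leaves the singular values of $\hat V-e^{\mathrm{i}\theta}V$ unchanged, and likewise the eigenvalues of the near-unitary $\hat V^\dagger V$. The same holds for any left/right unitary multiplication. A bare fidelity guarantee, such as the one you quote from \cite{GKKT20}, permits a Frobenius error of size $\sqrt{d_1}\delta$ concentrated in a single direction. That error still has operator norm $\sqrt{d_1}\delta$ after twirling.

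\emph{The cited bootstrapping is a different tool.} The bootstrapping in \cite{haah2023query} obtains Heisenberg scaling via powers of the unknown unitary. It presupposes a base estimator that is already accurate in operator norm, so it cannot supply the Frobenius-to-operator conversion.

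\emph{The concrete version is circular.} The leakage out of $\mathrm{range}(\hat V)$ is to be learned to operator-norm precision $\varepsilon$ from a Frobenius-type estimate, which is the original problem. The $d_1\times d_1$ part invokes the base algorithm of \cite{haah2023query}, whose operator-norm guarantee rests on exactly the mechanism you have not supplied. Once that mechanism is supplied, it applies directly to the rectangular $V$. Your closing remark that the base algorithm ``works verbatim'' is where the proof actually lies, but it is asserted without argument.

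\textbf{What is missing} is randomness of the \emph{direction} of the tomography error over a large sphere, combined with a random-matrix bound. The paper proceeds as follows.
\begin{itemize}
\item It runs the covariant pure-state tomography of \cref{lmm:pure-state-tomo} on each column $V\ket{j}$ with $O(d_2/\varepsilon^2)$ copies. This gives $\widetilde V=V\Phi B_2+WB_1$, where $\|B_1\|_{\mathrm{op}}\le\sqrt{\varepsilon_{\max}}$ and $W=\sum_j\ket{w_j}\!\bra{j}$ has independent Haar-random columns.
\item Because $d_1\le d_2$, sub-Gaussian random-matrix bounds give $\|W\|_{\mathrm{op}}=O(1)$. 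Hence $\|V\Phi-\widetilde V\|_{\mathrm{op}}=O(\varepsilon)$ for an unknown diagonal unitary $\Phi$.
\item The column phases are removed by a second parallel run on $\mathcal{V}\circ\mathcal{F}$ (\cref{lmm:weak-tomo}), using $\widehat{V_1}^\dagger\widehat{V_2}\approx\Phi_1^\dagger F\Phi_2$; this needs only $V^\dagger V=I_{d_1}$.
\end{itemize}

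\textbf{Your Step 1 can be repaired in the same spirit}, with no $\mathbb{U}_{d_1}$-twirl and no phase fixing. Apply \cref{lmm:pure-state-tomo} with $d=d_1d_2$ to the Choi state instead of relying on a bare fidelity bound. Then $\hat X=\phi\sqrt{1-\gamma}\,V+\sqrt{\gamma d_1}\,M$, where $M$ is the $d_2\times d_1$ reshaping of a Haar vector. $M$ behaves like a normalized Ginibre matrix with $\|M\|_{\mathrm{op}}=O(1/\sqrt{d_1})$. This gives $\|\hat X-\phi V\|_{\mathrm{op}}=O(\sqrt{\gamma})$ up to a global phase only, using $O(d_1d_2/\varepsilon^2)$ parallel queries.
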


To prove \Cref{thm-1240018}, one of the key techniques is the following lemma~\cite{CL14,KRT17,GKKT20,haah2023query} for pure state tomography.

\begin{lemma}[Pure state tomography, cf.\ {\cite[Proposition 2.2]{haah2023query}}]
    \label{lmm:pure-state-tomo}
    Suppose $d$ is a positive integer.
    Then, there exists an algorithm for pure state tomography using $O(d/\varepsilon_{\max})$ copies of the unknown quantum state $\ket{v}\in \Co^d$, and it outputs a pure state estimate (by a classical description) $\ket{\widehat{v}}$ such that
    \begin{equation*}
        \ket{\widehat{v}}= \phi \sqrt{1-\varepsilon} \ket{v} + \sqrt{\varepsilon} \ket{w},
    \end{equation*}
    where $\phi$ is a random phase, $\varepsilon\in [0,1]$ is a random number with $\Pr[\varepsilon\leq \varepsilon_{\max}]\geq 1- \exp(-5d)$, and $\ket{w}$ is a Haar random state orthogonal to $\ket{v}$.
\end{lemma}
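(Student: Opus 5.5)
The plan is to use the standard covariant measurement on the symmetric subspace and then compute the law of the output exactly. Given $n$ copies $\ket{v}^{\otimes n}$, which lie in $\lor^n\Co^d$, we measure with the continuous POVM
\[
\left\{\binom{n+d-1}{n}\,\ketbra{u}{u}^{\otimes n}\,\mathrm{d}u\right\}_{\ket{u}},
\]
where $\mathrm{d}u$ is the Haar measure on unit vectors. This is a valid POVM on $\lor^n\Co^d$ by Schur's lemma, since $\E_u[\ketbra{u}{u}^{\otimes n}]$ is the normalized projector onto the symmetric subspace, whose dimension is $\binom{n+d-1}{n}$ as in \cite{harrow2013church}. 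The algorithm outputs $\ket{\widehat v}\coloneqq\ket{u}$. The outcome has density
\[
p(u)=\binom{n+d-1}{n}\,|\braket{u}{v}|^{2n}
\]
with respect to $\mathrm{d}u$.

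\textbf{Structural form.} The density $p(u)$ depends only on $|\braket{u}{v}|$. Hence the output law is invariant under the phase map $\ket{u}\mapsto e^{\mathrm{i}\theta}\ket{u}$ and under every unitary $W$ with $W\ket{v}=\ket{v}$. Decompose
\[
\ket{u}=\phi\sqrt{1-\varepsilon}\,\ket{v}+\sqrt{\varepsilon}\,\ket{w},
\]
with $|\phi|=1$, $\varepsilon\in[0,1]$, and $\ket{w}\perp\ket{v}$ a unit vector. The invariances then give two facts. First, conditioned on $\varepsilon$, the vector $\ket{w}$ is Haar-distributed on the unit sphere of $\ket{v}^\perp$, since the stabilizer of $\ket{v}$ acts transitively there. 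Second, $\phi$ is uniform and independent of $\ket{w}$. This gives exactly the claimed form.

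\textbf{Law of $\varepsilon$.} For Haar-random $\ket{u}$, the overlap $t=|\braket{u}{v}|^2$ has density $(d-1)(1-t)^{d-2}$ on $[0,1]$. Reweighting by $p(u)\propto t^n$ shows that under the measurement $t\sim\mathrm{Beta}(n+1,d-1)$. Therefore $\varepsilon=1-t\sim\mathrm{Beta}(d-1,n+1)$.

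\textbf{Tail bound.} This is the only step with real content, because the constants must give $\exp(-5d)$. I will use the Beta–binomial identity
\[
\Pr[\mathrm{Beta}(d-1,n+1)>a]=\Pr[\mathrm{Bin}(n+d-1,a)<d-1],
\]
with $a=\varepsilon_{\max}$ and $n=\lceil Cd/\varepsilon_{\max}\rceil$. The binomial mean is then at least $Cd$, and we need it to drop below $d$. A multiplicative Chernoff lower-tail bound, with $\delta=1-1/C$, gives probability at most $\exp(-(1-1/C)^2Cd/2)$. Choosing a constant $C$ of about $14$ makes this at most $\exp(-5d)$.

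If the identity feels opaque, the fallback is a direct estimate of the integral. Bound
\[
\int_a^1 \frac{x^{d-2}(1-x)^n}{B(d-1,n+1)}\,\mathrm{d}x\le \frac{e^{-na}}{B(d-1,n+1)\,n},
\]
using $x^{d-2}\le 1$ and $(1-x)^n\le e^{-nx}$. Then apply $1/B(d-1,n+1)\le (n+d)^{d-1}$, which yields $\exp\!\big((d-1)\ln(n+d)-Cd\big)$. This route is messier, since $\ln n$ grows with $1/\varepsilon_{\max}$. That is why I would rely on the binomial identity, which avoids the $\ln n$ factor.

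The query count is $n=O(d/\varepsilon_{\max})$, which completes the lemma.
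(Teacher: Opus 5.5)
Your proposal is correct, and it is essentially the standard argument behind \cite[Proposition 2.2]{haah2023query}, which the paper imports without reproving. You measure with the covariant POVM on $\lor^n\Co^d$, read the phase/Haar structure off the $\mathrm{U}(1)\times\mathrm{U}(d-1)$ symmetry of the density $|\braket{u}{v}|^{2n}$, and bound the $\mathrm{Beta}(d-1,n+1)$ tail via the binomial identity and Chernoff; with $n=\lceil 14d/\varepsilon_{\max}\rceil$ this indeed gives $\exp(-5d)$. One small point: the independent uniform phase comes from composing your two invariances, since $\ket{u}\mapsto e^{\mathrm{i}\theta}\ket{u}$ alone also rotates $\ket{w}$.
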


We also need the following lemma that allows us to convert a weak tomography algorithm into a standard tomography algorithm, as required by \Cref{thm-1240018}.

\begin{lemma}
    \label{lmm:weak-tomo}
    Suppose $d_1\leq d_2$ are two positive integers. Let $V:\mathbb{C}^{d_1}\rightarrow\mathbb{C}^{d_2}$ be an isometry and $\mathcal{V}=V(\cdot)V^\dag\in \isochannel_{d_1,d_2}$ be the associated isometry channel.
    Let $\calA$ be an algorithm for weak isometry channel tomography that satisfies the following condition: using queries to $\calV$, it outputs an isometry estimate $\widehat{V}:\mathbb{C}^{d_1}\rightarrow\mathbb{C}^{d_2}$ such that 
    \begin{equation}
        \label{eq:weak-tomo-precond}
        \Pr\bracks*{\exists \textup{ diagonal unitary } \Phi:\Co^{d_1}\rightarrow \Co^{d_1}, \norm*{V\Phi- \widehat{V}}_{\textup{op}}\leq \varepsilon\leq \frac{1}{8}} \geq 1-\eta,
    \end{equation}
    where $\norm*{\cdot}_{\textup{op}}$ denotes the operator norm.
    Then, there exists an algorithm for isometry channel tomography that uses $\calA$ twice in parallel and outputs an isometry estimate $\widehat{\calV'}$ such that
    \begin{equation*}
        \Pr\bracks*{\norm*{\calV -\widehat{\calV'}}_{\diamond}\leq 98\varepsilon}\geq 1-2\eta.
    \end{equation*}
\end{lemma}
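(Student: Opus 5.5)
Running $\calA$ twice gives two estimates whose individual phase ambiguities are different, so the plan is to use the columns of one run as probes that let us fix the phases of the other, and then convert an operator-norm error into a diamond-norm error.

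\textbf{Step 1: two parallel runs.} Run $\calA$ twice in parallel, obtaining $\widehat V_1,\widehat V_2$. By a union bound, with probability at least $1-2\eta$ there are diagonal unitaries $\Phi_1,\Phi_2$ with $\|V\Phi_k-\widehat V_k\|_{\textup{op}}\le\varepsilon$ for $k=1,2$. The remaining task is to resolve the relative phases between the columns of $\widehat V_1$.

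\textbf{Step 2: fixing the relative phases.} A column-by-column phase fix cannot work. A diagonal phase $\Phi$ is invisible on each basis input $\ket{j}$, but it is not invisible in diamond norm, because it acts on superpositions. The second run must therefore be made with respect to a rotated basis. Concretely, the second call applies $\calA$ to the channel $\calV\circ\mathcal{F}$, where $F$ is the Fourier unitary (or any basis unbiased with the computational one). Preprocessing every query with a fixed unitary preserves parallelism. This yields $\widehat V_2\approx VF\Phi_2$.

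We then search for a diagonal unitary $\Theta$ and a diagonal unitary $\Psi$ minimizing $\|\widehat V_1\Theta-\widehat V_2F^\dag\Psi'\|$ in some form. More simply, we set
\[
\widehat V'=\text{polar}\bigl(\widehat V_1\widehat V_1^\dag\widehat V_2 F^\dag\bigr),
\]
that is, the nearest isometry. We then argue that it is within $O(\varepsilon)$ of $V\Phi_2'$ for a global phase only. The reason is that $V\Phi_1$ and $VF\Phi_2F^\dag$ can be simultaneously close to one estimate only up to a common scalar phase: a matrix that is both diagonal and Fourier-conjugated-diagonal, and close to a scalar in the appropriate sense, is near-scalar.

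An alternative that avoids this issue, and the one I would try first, is a cleaner estimator:
\[
\widehat V'=\widehat V_1 D,\qquad D=\operatorname{diag}\Bigl(\arg\bigl(\bra{j}\widehat V_1^\dag\widehat V_2F^\dag\ket{0}\bigr)\Bigr).
\]
This aligns each column to the $0$-th Fourier direction. Because $F^\dag\ket{0}$ has uniform amplitude $1/\sqrt{d_1}$ on all columns, it couples every column with equal weight.

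\textbf{Step 3: operator norm to diamond norm.} Once we have $\|V e^{\mathrm{i}\theta}-\widehat V'\|_{\textup{op}}\le c\varepsilon$, the standard bound
\[
\|\calV-\widehat{\calV'}\|_\diamond\le 2\|V e^{\mathrm{i}\theta}-\widehat V'\|_{\textup{op}}
\]
gives the result, after adjusting constants to reach $98$. We also project $\widehat V'$ to the nearest isometry, which at most doubles the error.

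\textbf{Main obstacle.} The hard part is Step 2: showing that the per-column phase errors are controlled in operator norm, not merely entrywise. Entrywise control costs a factor $\sqrt{d_1}$. The Fourier overlaps $\bra{j}\cdots\ket{0}$ have magnitude only about $1/\sqrt{d_1}$, so naive phase estimation from them is too noisy. I would instead formulate Step 2 as a Procrustes-type problem. Define
\[
M=\widehat V_1^\dag\widehat V_2F^\dag\approx\Phi_1^\dag F\Phi_2F^\dag,
\]
which is within $2\varepsilon$ in operator norm. Then find diagonal unitaries $\Theta,\Lambda$ with $\Theta M\Theta^\dag$ closest to $F\Lambda F^\dag$. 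Finally, use a stability lemma: if $\|\Theta-F\Lambda F^\dag\|_{\textup{op}}\le\delta$ with $\Theta,\Lambda$ diagonal unitaries, then $\Theta$ is within $O(\delta)$ of a scalar. This lemma follows because a circulant matrix has its diagonal equal to the average of the eigenvalues.
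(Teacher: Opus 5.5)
Your Steps 1 and 3, and the choice to run $\calA$ on $\calV\circ\calF$, match the paper. However, none of the concrete estimators in Step 2 works as written.

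\emph{The polar estimator.} It discards exactly the information it needs. Since $\widehat V_1\widehat V_1^\dagger\approx VV^\dagger$ cancels $\Phi_1$, you get $\widehat V_1\widehat V_1^\dagger\widehat V_2F^\dagger\approx VF\Phi_2F^\dagger$. This is $V$ followed by an arbitrary circulant unitary (a cyclic shift if $\Phi_2=\mathrm{diag}(\omega^{j})_j$), not $V$ times a scalar.

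\emph{The estimator you would try first.} It fails even without noise. We have $\bra{j}\widehat V_1^\dagger\widehat V_2F^\dagger\ket{0}\approx\overline{\phi_{1,j}}\,\hat c_j$, where $\hat c_j=(F\Phi_2F^\dagger)_{j0}$ is a Fourier coefficient of the phases of $\Phi_2$. So $\widehat V_1D\approx V\,\mathrm{diag}(\hat c_j/|\hat c_j|)_j$, whose phases vary with $j$. Moreover $\hat c_j$ can vanish (for $\Phi_2=I$, $\hat c_j=\delta_{j0}$).

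\emph{The Procrustes objective.} It must be one-sided. With $M=\widehat V_1^\dagger\widehat V_2F^\dagger\approx\Phi_1^\dagger F\Phi_2F^\dagger$, the conjugated form $\Theta M\Theta^\dagger$ at $\Theta=\Phi_1$ is $F\Phi_2F^\dagger\Phi_1^\dagger$, which is not circulant in general. Worse, diagonal conjugation leaves the diagonal of $M$ unchanged, so the true phases need not give a small objective value, and nothing follows about the minimizer. You also never state the step linking the minimizer to the truth.

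With these repairs your route closes and is genuinely different from the paper's. Minimize $\|\Theta M-F\Lambda F^\dagger\|_{\textup{op}}$ over pairs of diagonal unitaries.
\begin{itemize}
\item A minimizer $(\Theta^*,\Lambda^*)$ exists by compactness. Its value is at most that of $(\Phi_1,\Phi_2)$, namely $2\varepsilon$.
\item The triangle inequality then gives $\|\Theta^*\Phi_1^\dagger-F\Lambda^*\Phi_2^\dagger F^\dagger\|_{\textup{op}}\le 4\varepsilon$.
\item Apply your stability lemma: the diagonal of $F\Lambda F^\dagger$ is the constant $c=\frac{1}{d_1}\sum_j\lambda_j$, and diagonal entries are dominated by the operator norm. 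This yields $\|\Theta^*\Phi_1^\dagger-uI\|_{\textup{op}}\le 8\varepsilon$ with $u=c/|c|$; note $|c|\ge 1-4\varepsilon>0$.
\item Hence the isometry $\widehat V_1\Theta^{*\dagger}$ is within $9\varepsilon$ of $\bar uV$, and the diamond error is at most $18\varepsilon$.
\end{itemize}

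The paper instead works entrywise on $\widehat V_1^\dagger\widehat V_2\approx\Phi_1^\dagger F\Phi_2$. Each error column has $\ell_2$-norm at most $2\varepsilon$. By pigeonhole, at least $3d_1/4$ rows per column then have error at most $4\varepsilon/\sqrt{d_1}$, i.e.\ relative error $4\varepsilon$ against $|F_{kj}|=1/\sqrt{d_1}$. Taking ratios of column $j$ to column $0$ cancels $\phi_{1,k}$. Coordinatewise medians over $k$ then recover every $\phi_{2,j}/\phi_{2,0}$ to within $48\varepsilon$.

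This also shows your stated obstacle is slightly misdiagnosed. The residual error is a diagonal phase matrix, so entrywise control already is operator-norm control. The only difficulty is that a minority of entries of $\widehat V_1^\dagger\widehat V_2$ can be badly corrupted, and the median absorbs that.

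On the trade-off: the paper's estimator is explicit and cheap to compute. Yours needs a global optimization over a $2d_1$-dimensional torus, which is irrelevant for query complexity, but it has a shorter analysis and a better constant. Both use only that $|\bra{k}F\ket{j}|=1/\sqrt{d_1}$.
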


\begin{proof}
    Our proof extends that of \cite[Proposition 2.3]{haah2023query} for unitary channel tomography to the setting of isometry channel tomography.
    Let $\calA$ be an algorithm for weak isometry channel tomography as described in \Cref{lmm:weak-tomo}.
    We first apply $\calA$ that uses queries to the original channel $\calV$ to obtain an isometry estimate $\widehat{V_1}:\Co^{d_1}\rightarrow \Co^{d_2}$.
    In parallel, we can apply $\calA$ that uses queries to the modified channel $\calV \circ \calF$
    to obtain another isometry estimate $\widehat{V_2}:\Co^{d_1}\rightarrow \Co^{d_2}$,
    where $\calF$ is the quantum channel for implementing the quantum Fourier transform $\mathit{F}:\Co^{d_1}\rightarrow \Co^{d_1}$.

    Combining our condition of $\calA$ and the union bound, we obtain
    \begin{equation}
        \label{eq:two-weak-tomo}
        \norm*{V\Phi_1 - \widehat{V_1}}_{\textup{op}}\leq \varepsilon\quad \textup{and}\quad \norm*{V F \Phi_2 -\widehat{V_2}}_{\textup{op}}\leq \varepsilon 
    \end{equation}
    for some diagonal unitaries $\Phi_1,\Phi_2:\Co^{d_1}\rightarrow \Co^{d_1}$,
    with probability $\geq 1-2\eta$.
    Because $V$ is an isometry, we have $V^\dagger V = \sum_{j=0}^{d_1-1} \ket{j}\!\bra{j}=I_{d_1}$. As a result,
    \begin{align*}
        \norm*{\widehat{V_1}^\dagger \widehat{V_2}- \Phi_1^\dagger F \Phi_2}_{\textup{op}}
        &\leq \norm*{\parens*{\widehat{V_1}^\dagger -  \Phi_1^\dagger V^\dagger} \widehat{V_2}}_{\textup{op}} + \norm*{ \Phi_1^\dagger V^\dagger\parens*{\widehat{V_2} - VF \Phi_2}}_{\textup{op}}\\
        &\leq \norm*{V\Phi_1 - \widehat{V_1}}_{\textup{op}} + \norm*{V F \Phi_2 -\widehat{V_2}}_{\textup{op}}\\
        &\leq 2\varepsilon.
    \end{align*}
    Now we define $p(k,j)$ to be the proposition 
    \begin{equation}
        \label{eq:def-p-k-j}
        \abs*{\bra{k}\parens*{\widehat{V_1}^\dagger \widehat{V_2}- \Phi_1^\dagger F \Phi_2}\ket{j}}\leq \frac{4\varepsilon}{\sqrt{d_1}}.
    \end{equation}
    Using the pigeonhole principle, one can obtain
    \begin{equation}\label{eq:condition-k-j}
        \textup{for any } j=0 \textup{ to } d_1-1,\quad \#\braces*{k: p(k,j)} \geq \frac{3d_1}{4},
    \end{equation}
    with probability $\geq 1-2\eta$, 
    where $\#\braces*{k: p(k,j)}$
    stands for the number of $k$ such that $p(k,j)$ is satisfied.
    
    Define $\Phi_3= \sum_{k,j=0}^{d_1-1}\frac{\bra{k}\widehat{V_1}^\dagger \widehat{V_2} \ket{j}}{\bra{k}F\ket{j}}\ket{k}\!\bra{j}$.
    If $p(k,j)$ is satisfied, we obtain, from \Cref{eq:def-p-k-j},
    \begin{equation}
        \label{eq:phi321}
        \abs*{\bra{k}\Phi_3\ket{j} - \bra{k}\Phi_1^\dagger\ket{k}\cdot \bra{j}\Phi_2\ket{j}} \leq 4\varepsilon,
    \end{equation}
     where we use the fact that $\abs*{\bra{k}F\ket{j}}=\frac{1}{\sqrt{d_1}}$ for any $k,j$.
    Moreover, if both $p(k,0)$ and $p(k,j)$ are satisfied, then from \Cref{eq:phi321} one can derive
    \begin{equation}
        \label{eq:estimate-Phi3}
        \abs*{\frac{\bra{k}\Phi_3\ket{j}}{\bra{k}\Phi_3\ket{0}}-\frac{\bra{j}\Phi_2\ket{j}}{\bra{0}\Phi_2\ket{0}}} \leq \frac{2\cdot 4\varepsilon}{1-4\varepsilon} \leq 16 \varepsilon.
    \end{equation}
    Note that from \Cref{eq:condition-k-j}, we have
    \begin{equation}
        \label{eq:both-pkj-pk0}
        \textup{for any } j=0 \textup{ to } d_1-1,\quad \#\braces*{k: p(k,j) \wedge p(k,0)} \geq \frac{d_1}{2},
    \end{equation}
    with probability $\geq 1-2\eta$.
    For each $j=0$ to $d_1-1$, define $a_j$ and $b_j$ to be the medians of the real parts and the imaginary parts of the set $\braces*{\frac{\bra{k}\Phi_3\ket{j}}{\bra{k}\Phi_3\ket{0}}}_k$, respectively. 
    In this case, \Cref{eq:estimate-Phi3,eq:both-pkj-pk0} together lead to
    \begin{equation*}
        \abs*{\parens*{a_j + i b_j} - \frac{\bra{j}\Phi_2\ket{j}}{\bra{0}\Phi_2\ket{0}}}\leq \sqrt{(16\varepsilon)^2 + (16\varepsilon)^2},
    \end{equation*}
    which further means $\phi_j =\frac{a_j + i b_j}{\abs*{a_j + i b_j}}$ satisfies
    $\abs*{\phi_j-\frac{\bra{j}\Phi_2\ket{j}}{\bra{0}\Phi_2\ket{0}}}\leq 48\varepsilon$.
    Let $\Phi=\sum_{j=0}^{d_1-1} \phi_j\ket{j}\!\bra{j}$,
    then we have
    \begin{equation}
        \label{eq:Phi-estimate}
        \norm*{\bra{0}\Phi_2\ket{0}\cdot\Phi - \Phi_2}_{\textup{op}}\leq 48\varepsilon
    \end{equation}
    with probability $\geq 1-2\eta$.
    Consequently,
    \begin{align}
        \norm*{\calV-\widehat{\calV_2} \Phi^\dagger \calF^\dagger}_{\diamond}&\leq 2
        \norm*{V\bra{0}\Phi_2\ket{0} -\widehat{V_2} \Phi ^\dagger F^\dagger }_{\textup{op}} \label{eq-221646}\\
        &\leq 
        2\norm*{V\bra{0}\Phi_2\ket{0}- \widehat{V_2} \Phi_2^\dagger F^\dagger \bra{0}\Phi_2\ket{0}}_{\textup{op}}
        + 2\norm*{\widehat{V_2} \Phi_2^\dagger F^\dagger \bra{0}\Phi_2\ket{0} - \widehat{V_2} \Phi^\dagger F^\dagger}_{\textup{op}} \nonumber\\
        &=
        2\norm*{V- \widehat{V_2} \Phi_2^\dagger F^\dagger}_{\textup{op}}
        +
        2\norm*{\bra{0}\Phi_2\ket{0}\cdot\Phi - \Phi_2}_{\textup{op}} \label{eq-221647}\\
        &\leq 98\varepsilon,\label{eq-221648}
    \end{align}
    with probability $\geq 1-2\eta$.
    Here, \cref{eq-221646} comes from \cite[Lemma 12]{AKN98} (see also \cite{kretschmann2008information}). \cref{eq-221647} exploits the fact that $\Phi_2$ is a diagonal unitary, $\widehat{V_2}$ is an isometry and $F$ is a unitary. \cref{eq-221648} is due to \Cref{eq:two-weak-tomo,eq:Phi-estimate}.
    The algorithm can output the isometry channel corresponding to $\widehat{V'}=\widehat{V_2} \Phi^\dagger F^\dagger$ as the final estimate.
\end{proof}

Given the above lemma,
one can prove \Cref{thm-1240018} for isometry channel tomography.

\begin{proof}[Proof of \Cref{thm-1240018}]
    The proof extends that of \cite[Theorem 2.1]{haah2023query} to isometry channel tomography.
    From \Cref{lmm:weak-tomo} (with appropriate rescaling of $\varepsilon$), it suffices to construct an algorithm for weak isometry channel tomography that satisfies \Cref{eq:weak-tomo-precond} with $\eta= \frac{1}{6}$.
    Our algorithm works as follows:
    \begin{enumerate}
        \item 
            Given queries to $\calV$, we first use the algorithm in~\Cref{lmm:pure-state-tomo} for pure state tomography (taking $d=d_2$ and $\varepsilon_{\max}=\Theta(\varepsilon^2)$ to be determined later) on computational basis input states $\ket{0},\ket{1},\ldots, \ket{d_1-1}$ to get estimates $\ket{\widetilde{v_j}}$ of
            $\ket{v_j}=V\ket{j}$ for all $j$ in parallel. Then, the following holds
            \begin{equation}
                \label{eq:cond-each-vj}
                \ket{\widetilde{v_j}}= \phi_j \sqrt{1-\varepsilon_j} \ket{v_j} + \sqrt{\varepsilon_j} \ket{w_j},
            \end{equation}
            where for each $j=0$ to $d_1-1$, the random variables $\phi_j, \varepsilon_j, \ket{w_j}$ are as in \Cref{lmm:pure-state-tomo}.
        \item 
            Let $\widetilde{V}=\sum_j \ket{\widetilde{v_j}}\!\bra{j}$.
            Let $\widetilde{V}=U_2 \Lambda U_1$ be the singular value decomposition of  $\widetilde{V}$ with $U_1\in \mathbb{U}_{d_1}$ and $U_2\in \mathbb{U}_{d_2}$, respectively.
            We then output the quantum channel $\widehat{\calV}$ corresponding to the isometry $\widehat{V}=U_2 \sum_{j=0}^{d_1-1} \ket{j}\!\bra{j} U_1$.
    \end{enumerate}
    One can easily compute that the number of queries to $\calV$ in the algorithm above is $O\parens*{d_1d_2 /\varepsilon^2}$. Next, to see that $\widehat{V}$ satisfies \Cref{eq:weak-tomo-precond} in \Cref{lmm:weak-tomo},
    we prove 
    \begin{equation}
        \label{eq:target-iso-tomo}
        \norm*{V \Phi- \widetilde{V}}_{\textup{op}}\leq \varepsilon/2
    \end{equation}
    for some diagonal unitary $\Phi:\Co^{d_1}\rightarrow \Co^{d_1}$,
    with probability $\geq 0.97 \geq \frac{5}{6}$.
    Whenever \Cref{eq:target-iso-tomo} holds,
    the estimate $\widehat{V}$
    in the algorithm above satisfies
    \begin{equation*}
        \norm*{V\Phi -\widehat{V}}_{\textup{op}}
        \leq \norm*{V\Phi- \widetilde{V}}_{\textup{op}}
        + \norm*{\widetilde{V}-\widehat{V}}_{\textup{op}}
        \leq \varepsilon.
    \end{equation*}
    Here, we use $\|\widetilde{V}-\widehat{V}\|_{\mathrm{op}} \leq \varepsilon/2$, because once \cref{eq:target-iso-tomo} holds, the operator norm between $\widetilde{V}$ and an isometry is at most $\varepsilon/2$ and therefore the differences between the singular values of $\widetilde{V}$ and $1$ are at most $\varepsilon/2$.
    
    Suppose $W=\sum_{j=0}^{d_1-1} \ket{w_j}\!\bra{j}$, $\Phi=\sum_{j=0}^{d_1-1} \phi_j \ket{j}\!\bra{j}$, $B_1=\sum_{j=0}^{d_1-1} \sqrt{\varepsilon_j}\ket{j}\!\bra{j}$, and $B_2=\sum_{j=0}^{d_1-1} \sqrt{1-\varepsilon_j}\ket{j}\!\bra{j}$.
    Then, $\abs*{\sqrt{1-\varepsilon_j}- 1}\leq \sqrt{\varepsilon_j}$ for all $j=0$ to $d_1-1$ implies
    $\norm*{B_2-I_{d_1}}_{\textup{op}}\leq \norm*{B_1}_{\textup{op}}$,
    where $I_{d_1}=\sum_{j=0}^{d_1-1} \ket{j}\!\bra{j}$.
    Using \Cref{lmm:pure-state-tomo} (taking $d=d_2$, where $d_2$ is sufficiently large), we obtain
    \begin{equation*}
        \norm*{B_1}_{\textup{op}} \leq \sqrt{\varepsilon_{\max}},
    \end{equation*}
    with probability $\geq 0.99$.
    From the triangle inequality, we have
    \begin{equation*}
        \norm*{V\Phi- \widetilde{V}}_{\textup{op}} = \norm*{V\Phi (B_2- I_{d_1}) + W B_1}_{\textup{op}}
        \leq \norm*{V}_{\textup{op}}\cdot\norm*{\Phi}_{\textup{op}}\cdot\norm*{B_2- I_{d_1}}_{\textup{op}} + \norm*{W}_{\textup{op}}\cdot\norm*{B_1}_{\textup{op}},
    \end{equation*}
    which implies
    \begin{equation}
        \label{eq:bound-V-tildeV}
        \norm*{V\Phi- \widetilde{V}}_{\textup{op}}\leq \sqrt{\varepsilon_{\max}} (1+\norm*{W}_{\textup{op}}),
    \end{equation}
    with probability $\geq 0.99$.

    Now we prove
    \begin{equation}
        \label{eq:bound-W-cw}
        \norm*{W}_{\textup{op}}\leq c_W
    \end{equation}
    for some constant $c_W>0$,
    with probability $\geq 0.98$.
    This will imply \Cref{eq:target-iso-tomo} by combining with \Cref{eq:bound-V-tildeV} and taking $\varepsilon_{\max}=\Theta(\varepsilon^2)$ to be sufficiently small.
    
    We define the following quantum states: for each $j=0$ to $d_1-1$, 
    \begin{equation}
        \label{eq:def-yj}
        \ket{y_j} = \sqrt{\delta_j} \psi_j \ket{v_j} + \sqrt{1-\delta_j} \ket{w_j},
    \end{equation}
    where $\sqrt{\delta_j}= \abs*{\braket{0}{x_j}}$ is the overlap between a Haar random state $\ket{x_j}\sim \Co^{d_2}$ and the state $\ket{0}$, and $\psi_j\sim [0,2\pi)$ is an uniformly random phase.
    Here, we require that $\ket{x_0},\ldots,\ket{x_{d_1-1}}$ are independent
    and $\psi_0,\ldots, \psi_{d_1-1}$ are independent.
    In this case, $\ket{y_j}\sim \Co^{d_2}$ and $\ket{y_0},\ldots,\ket{y_{d_1-1}}$ are independent. 
    Let $Y=\sum_{j=0}^{d_1-1} \ket{y_j}\!\bra{j}$.
    According to \cite[Theorem 3.4.6, complex version]{Ver18}, $\sqrt{d_2} Y$ has its column vectors being independent sub-gaussian isotropic random in $\Co^{d_2}$,
    and we can upper bound the maximal singular value of $Y$ with high probability by \cite[Theorem 4.6.1, complex version]{Ver18}:
    \begin{equation}
        \label{eq:bound-Y}
        \norm*{Y}_{\textup{op}}\leq c_Y
    \end{equation}
    for some constant $c_Y>0$, with probability $\geq 0.99$.
    Suppose $E_1=\sum_{j=0}^{d_1-1} \sqrt{\delta_j}\psi_j \ket{j}\!\bra{j}$ and
    $E_2=\sum_{j=0}^{d_1-1} \sqrt{1-\delta_j} \ket{j}\!\bra{j}$.
    As 
    \begin{equation*}
        \norm*{W}_{\textup{op}}= \norm*{\parens*{Y - V E_1} E_2^{-1}}_{\textup{op}}\leq
        \parens*{\norm*{Y}_{\textup{op}}+\norm*{V}_{\textup{op}}\cdot \norm*{E_1}_{\textup{op}}}\cdot \norm*{E_2^{-1}}_{\textup{op}},
    \end{equation*}
    by combining with \Cref{eq:def-yj,eq:bound-Y}, we obtain
    \begin{equation}
        \label{eq:bound-W}
        \norm*{W}_{\textup{op}}
        \leq \parens*{c_Y + 1} \cdot \parens{1-\max_{j} \delta_j}^{-1/2}
    \end{equation}
    with probability $\geq 0.99$.
    Since $\sqrt{d_2}\ket{x_j}$ are sub-Gaussian (similar to the case of $\sqrt{d_2}\ket{y_j}$, by \cite[Theorem 3.4.6, complex version]{Ver18}), $\sqrt{d_2\delta_j}$ are also sub-gaussian by definition, which yields
    \begin{equation*}
        \Pr\bracks*{\sqrt{\delta_j} \leq 0.1} \geq 1- e^{-\Theta(d_2)}.
    \end{equation*}
    Using the union bound and $d_1\leq d_2$, we have $\Pr\bracks*{(1-\max_j \delta_j)^{-1/2}\leq 2}\geq 0.99$ for sufficiently large $d_2$. Then, we can establish \Cref{eq:bound-W-cw}.
\end{proof}

\subsection{Choi trace norm tomography}
In this section, we prove the following lemma,
which essentially builds on an algorithm for isometry channel tomography in \cite{yoshida2025quantum}, with a slightly different analysis.

\begin{lemma}[Isometry channel tomography in Choi trace norm, adapted from \cite{yoshida2025quantum}]\label{lemma-2232322}
Suppose $d\leq D$ are two positive integers and $\varepsilon\in (0,1)$. Let $V:\mathbb{C}^{d}\rightarrow\mathbb{C}^{D}$ be an isometry and $\mathcal{V}=V(\cdot)V^\dag\in \isochannel_{d,D}$ be the associated isometry channel.
Then, there exists an algorithm that uses $n=O((D-d)d/\varepsilon^2+d^2/\varepsilon)$ queries to $\mathcal{V}$ and outputs an isometry channel estimate $\widehat{\mathcal{V}}$ such that $\left\|\frac{1}{d}C_\mathcal{V}-\frac{1}{d}C_{\widehat{\mathcal{V}}}\right\|_1\leq \varepsilon$ with probability $\geq 2/3$. Further, the algorithm uses these queries in parallel.
\end{lemma}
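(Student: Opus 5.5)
We would follow the Yoshida--Miyazaki--Murao approach: reduce isometry tomography to estimating the (normalized) Choi state $\frac{1}{d}\kettbbra{V}{V}$, a pure state of dimension $dD$, but exploit that $\kett{V}$ has a maximally mixed marginal on the input system. The natural algorithm is parallel: feed $n$ halves of maximally entangled states $\kett{I}/\sqrt{d}$ into $\mathcal{V}^{\otimes n}$, obtaining $n$ copies of $\ket{\psi_V}=\frac{1}{\sqrt d}\kett{V}\in\mathbb{C}^D\otimes\mathbb{C}^d$, and then run a covariant (Schur--Weyl based) estimation measurement whose estimate is an isometry $\widehat V$. Since $\frac{1}{d}\|C_\mathcal{V}-C_{\widehat{\mathcal{V}}}\|_1=2\sqrt{1-|\langle\psi_V|\psi_{\widehat V}\rangle|^2}$, it suffices to guarantee infidelity $1-\mathrm{F}_{\mathrm{ch}}(\mathcal{V},\widehat{\mathcal{V}})\le\varepsilon^2/4$ with probability $\geq 2/3$, and by Markov it suffices to bound the expected infidelity by $O(\varepsilon^2)$.

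Key steps, in order. First, parametrize the isometry manifold: $\mathrm{ISO}_{d,D}$ has real dimension $2dD-d^2=2(D-d)d+d^2$, splitting into a ``unitary part'' ($d^2$ directions, $V\mapsto VW$ for $W\in\mathbb{U}_d$) and a ``complement part'' ($2(D-d)d$ directions, rotations mixing $\mathrm{Im}(V)$ with its orthogonal complement). Second, use the estimator of \cite{yoshida2025quantum} (or its optimal covariant form) and cite its average channel-fidelity guarantee, which we expect in the form $\mathbb{E}[1-\mathrm{F}_{\mathrm{ch}}]=O\big((D-d)d/n+d^4/n^2\big)$: the complement directions behave like standard (classical-scaling) pure-state parameters, since their perturbations produce components orthogonal to $\mathrm{Im}(V)\otimes\mathbb{C}^d$, while the unitary directions admit Heisenberg-scaling estimation as in Yang--Renner--Chiribella. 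Third, plug in $n=C\big((D-d)d/\varepsilon^2+d^2/\varepsilon\big)$: the first term gives $(D-d)d/n\le\varepsilon^2/C$ and the second gives $d^4/n^2\le\varepsilon^2/C^2$, so the expected infidelity is at most $\varepsilon^2/12$ for large $C$, and Markov plus the fidelity--trace-norm relation yields the claim. Parallelism is immediate since all queries act on independent halves of entangled inputs.

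The main obstacle is the second step: the cited work states its guarantee in a specific averaged/asymptotic form, and we need a clean non-asymptotic bound separating the two contributions, which is the ``slightly different analysis''. If their fidelity formula is only given via a representation-theoretic optimization, we would instead construct the estimator explicitly in two stages: split queries, first use $O(d^2/\varepsilon)$ queries (in parallel) with a Yang--Renner--Chiribella-type estimator acting on the unitary part after coarse learning of the image, and $O((D-d)d/\varepsilon^2)$ copies of $\ket{\psi_V}$ for pure-state tomography restricted to the complement directions, bounding the two infidelity contributions additively via a first-order expansion of the fidelity around $V$. Controlling the cross terms of this expansion (second order in $\varepsilon$) is where we expect the genuine work.
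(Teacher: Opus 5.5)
Your logical chain is exactly the paper's proof: take the Yoshida--Miyazaki--Murao fidelity guarantee in the form $1-\mathrm{F}=O\big((D-d)d/n+d^4/n^2\big)$, plug in $n$, then apply Markov and $\frac{1}{d}\|C_\mathcal{V}-C_{\widehat{\mathcal{V}}}\|_1=2\sqrt{1-\mathrm{F}_{\mathrm{ch}}}$. However, the concrete algorithm in your first paragraph would not achieve that bound. Feeding $n$ independent maximally entangled states yields $\ket{\psi_V}^{\otimes n}$. Any estimator built on $n$ i.i.d.\ copies of a state has infidelity decaying only like $1/n$ (the standard quantum limit, e.g.\ by the quantum Cram\'er--Rao bound). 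In particular, for $D=d$ you could not reach the claimed $O(d^2/\varepsilon)$, because the $d^4/n^2$ term is a Heisenberg-type contribution. That term requires a probe state entangled across the $n$ queries, as in \cite{yang2020optimal}, and this is what the estimator of \cite{yoshida2025quantum} uses. So parallelism holds because that estimator is a parallel strategy, not because ``all queries act on independent halves of entangled inputs''. Your fallback two-stage plan (coarse learning of the image, then a Yang--Renner--Chiribella step) is adaptive as stated. It would forfeit the parallel claim, so it is not a valid substitute either.

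The step you flag as the main obstacle is in fact routine. \cite[Lemma S5]{yoshida2025quantum} gives an explicit finite-$n$ bound with a free parameter $g(n)$:
\[\mathrm{F}\geq 1-\frac{\pi^2(d-1)^2}{d^2g(n)^2}-\frac{D-d}{\frac{n}{d}+\frac{d-1}{2}g(n)+D-d}.\]
The paper chooses $g(n)=\lfloor \frac{2n}{3d^2}\rfloor\geq \frac{n}{3d^2}$. This is valid for $n\geq 3d^2/2$, which holds for your $n$ since $\varepsilon<1$, and it gives $1-\mathrm{F}\leq 144\big(d^4/n^2+(D-d)d/n\big)$. You should also state explicitly that this $\mathrm{F}$ is averaged over Haar-random $V$. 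It is the covariance of the estimator that turns it into the same bound on $1-\mathbb{E}_{\widehat{\mathcal{V}}}[\mathrm{F}_{\mathrm{ch}}(\mathcal{V},\widehat{\mathcal{V}})]$ for every fixed $V$, which is what your Markov step needs.
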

\begin{proof}
Suppose $n\geq 3d^2/2$. Using \cite[Lemma S5]{yoshida2025quantum} together with
$g(n)=\left\lfloor\frac{2}{3d^2}n\right\rfloor\geq \frac{1}{3d^2}n$,
 there exists an algorithm for isometry channel tomography that uses $n$ queries to achieve the average channel fidelity:
\begin{align}
\mathrm{F}&\geq 1-\frac{\pi^2(d-1)^2}{d^2g(n)^2}-\frac{D-d}{\frac{n}{d}+\frac{d-1}{2}g(n)+D-d} \geq 1- \frac{144d^4}{n^2}-\frac{D-d}{\frac{n}{d}}\geq 1-144\left(\frac{d^4}{n^2}+\frac{(D-d)d}{n}\right), \nonumber 
\end{align}
where the average channel fidelity $\mathrm{F}$ is defined as follows:
\[\mathrm{F}\coloneqq \EE{\mathcal{V}\sim \isochannel_{d,D}}\left[\EE{\widehat{\mathcal{V}}}\left[\mathrm{F}_\textup{ch}(\mathcal{V},\widehat{\mathcal{V}})\right]\right].\]
Here, $\widehat{\mathcal{V}}$ denotes the output of the tomography algorithm when the input channel is $\mathcal{V}$. 
It is worth noticing that this algorithm is also covariant. Thus, for any $\mathcal{V}\in\isochannel_{d,D}$, we have
\[1-\EE{\widehat{\mathcal{V}}}\left[\mathrm{F}_{\textup{ch}}(\mathcal{V},\widehat{\mathcal{V}})\right]=1-\mathrm{F}\leq 144\left(\frac{d^4}{n^2}+\frac{(D-d)d}{n}\right)\leq \frac{\varepsilon^2}{12},\]
if we set 
\[n=\Theta\!\left(\frac{(D-d)d}{\varepsilon^2}+\frac{d^2}{\varepsilon}\right).\]
Moreover, this implies that with probability at least $2/3$, we have
\[\left\|\frac{1}{d}C_{\mathcal{V}}-\frac{1}{d}C_{\widehat{\mathcal{V}}}\right\|_1=2\sqrt{1-\mathrm{F}_\textup{ch}(\mathcal{V},\widehat{\mathcal{V}})}\leq2\sqrt{3\cdot \frac{\varepsilon^2}{12}}= \varepsilon,\]
where the inequality comes from the Markov's inequality.
Further, note that this algorithm uses queries to $\mathcal{V}$ in parallel.
\end{proof}

\section*{Acknowledgments}
We thank Antonio Anna Mele and Lennart Bittel for helpful discussions regarding the results presented in their paper~\cite{mele2025optimallearningquantumchannels}.
FG acknowledges financial support from the European Union under the European Research Council (ERC Grant Agreement No.~101165230).
The work of ZZ was supported in part by the Australian Research Council under Grant DP250102952.
Part of the work of ZZ was done when the author was with the University of Technology Sydney, Australia.

\bibliographystyle{alpha}
\bibliography{main}
\end{document}